\documentclass[11pt,a4paper]{article}
\usepackage{epsfig}
\usepackage[T1]{fontenc}    
\usepackage{graphics}
\usepackage{graphicx}
\usepackage{pstricks,pst-coil,pst-fill,pst-plot}
\usepackage[fleqn]{amsmath}    
\usepackage{amssymb}    
\usepackage{amsfonts}   
\usepackage{verbatim}   
\usepackage{mathrsfs}   
\usepackage{dsfont}
\usepackage{euscript}
\usepackage{yfonts}
\usepackage{enumerate}     
\usepackage{amsthm}         
\usepackage{txfonts}
\usepackage{marvosym}
\usepackage{stmaryrd}
\usepackage{vmargin}        
\usepackage{wasysym}		

\setmarginsrb{1.8cm}{2cm}{1.8cm}{2cm}{1cm}{1cm}{1cm}{1.6cm}
\makeatletter
\@addtoreset{equation}{section}
\makeatother

\providecommand{\bysame}{\leavevmode\hbox to3em{\hrulefill}\thinspace}
\providecommand{\MR}{\relax\ifhmode\unskip\space\fi MR }

\providecommand{\href}[2]{#2}

\DeclareMathAlphabet{\mathdutchcal}{U}{dutchcal}{m}{n}

\let\tend=\rightarrow

\long\def\symbolfootnote[#1]#2{\begingroup%
	\def\thefootnote{\fnsymbol{footnote}}\footnote[#1]{#2}\endgroup}

\newtheorem{theorem}{Theorem}[section]
\newtheorem{prop}[theorem]{Proposition}
\newtheorem*{theorem*}{Theorem}
\newtheorem{cor}[theorem]{Corollary}

\newtheorem{conj}[theorem]{Conjecture}

\newtheorem{lemme}[theorem]{Lemma}

\newtheorem*{bootstrapaxiomsbound}{Bootstrap axioms with bound states}

\def\Proof{\medskip\noindent {\it Proof --- \ }}

\def\qed{\hfill\rule{2mm}{2mm}}

\usepackage{tikz}
\usetikzlibrary{arrows.meta}
\usetikzlibrary{decorations.markings}

\newcommand\beq{\begin{equation}}
	\newcommand\enq{\end{equation}}
\newcommand\bem{\begin{multline}}
	\newcommand\enm{\end{multline}}

\def\beqa{\begin{eqnarray}}
	\def\eeqa{\end{eqnarray}}
\def\ba{\begin{array}}
	\def\ea{\end{array}}
\def\det{\operatorname{det}}

\newcommand{\f}[2]{{\ensuremath{%
			\mathchoice%
			{\dfrac{#1}{#2}}
			{\dfrac{#1}{#2}}
			{\frac{#1}{#2}}
			{\frac{#1}{#2}}
}}}
\newcommand{\tf}[2]{\ensuremath{#1/#2}}

\def\a{\alpha}

\def\be{\beta}
\def\ga{\gamma}
\def\Ga{\Gamma}

\def\de{\delta}

\def\eps{\epsilon}
\def\veps{\varepsilon}
\def\la{\lambda}
\def\La{\Lambda}

\def\sg{\sigma}

\def\Sg{\Sigma}

\def\ups{\upsilon}
\def\th{\theta}

\def\Om{\Omega}
\def\om{\omega}
\def\vp{\varphi}

\newcommand{\mc}[1]{\ensuremath{\mathcal{#1}}}
\newcommand{\mf}[1]{\ensuremath{\mathfrak{#1}}}
\newcommand{\msc}[1]{\ensuremath{\mathscr{#1}}}

\newcommand{\bs}[1]{\ensuremath{\boldsymbol{#1}}}

\DeclareFontFamily{OT1}{pzc}{}
\DeclareFontShape{OT1}{pzc}{m}{it}{<-> s * [1.10] pzcmi7t}{}
\DeclareMathAlphabet{\mathpzc}{OT1}{pzc}{m}{it}

\def \i{ \mathrm i}

\newcommand{\ov}[1]{\ensuremath{\overline{#1}}}
\newcommand{\wt}[1]{\ensuremath{\widetilde{#1}}}
\newcommand{\wh}[1]{\ensuremath{\widehat{#1}}}

\newcommand{\Int}[2]{\ensuremath{\int\limits_{#1}^{#2}}}

\newcommand{\sul}[2]{\ensuremath{\sum\limits_{#1}^{#2}}}
\newcommand{\pl}[2]{\ensuremath{\prod\limits_{#1}^{#2}}}

\newcommand{\R}{\ensuremath{\mathbb{R}}}
\newcommand{\Cx}{\ensuremath{\mathbb{C}}}
\newcommand{\Bx}{\ensuremath{\mathbb{B}}}

\newcommand{\Dp}[1]{\ensuremath{\partial_{#1}}}

\newcommand{\ex}[1]{\ensuremath{\e{e}^{#1}}}

\def\Res{\operatorname{Res}}

\newcommand{\op}[1]{ \boldsymbol{ \texttt{#1} } }

\newcommand{\abs}[1]{\ensuremath{\left| #1 \right|}}

\newcommand{\dd}{\mathrm{d}}
\newcommand{\e}[1]{\ensuremath{\mathrm{#1}}}

\newcommand{\intff}[2]{\ensuremath{ [  #1 \,; #2 ] }}

\newcommand{\intn}[2]{\ensuremath{[\![ \, #1 \,;\, #2 \,]\!]}}

\begin{document}

	\begin{center}
		\begin{LARGE}
			{\bf Form factors of field exponentials in the 1+1 dimensional Sine-Gordon model}
		\end{LARGE}

		\vspace{1cm}
		
		\vspace{4mm}
		{\large Karol K. Kozlowski \footnote{e-mail: karol.kozlowski@ens-lyon.fr}}%
		\\[1ex]
		Univ Lyon, ENS de Lyon, Univ Claude Bernard Lyon 1, CNRS, Laboratoire de Physique, F-69342 Lyon, France \\[2.5ex]

		{\large Alex Simon \footnote{e-mail: alex.simon@ens-lyon.fr}}%
		\\[1ex]
		Univ Lyon, ENS de Lyon, Univ Claude Bernard Lyon 1, CNRS, Laboratoire de Physique, F-69342 Lyon, France \\[2.5ex]

		\par

		\vspace{40pt}

		\centerline{\bf Abstract} \vspace{1cm}
		\parbox{12cm}{\small   This work provides explicit expressions, in all sectors of the Fock space: soliton, antisoliton and breather, for the fundamental building blocks -the form factors-
		of the generalised integral kernels which represent the exponential of the field operator in the 1+1 dimensional Sine-Gordon quantum field theory. The obtained form factors reproduce the various results obtained earlier in the literature using, in particular, the free field approach.

		}
		
	\end{center}
	
	

	\vspace{40pt}
	
	\tableofcontents

	\section{Introduction}
	
 The bootstrap program for  massive integrable quantum field theories in 1+1 dimensions
may be seen as one possible path, alternative to constructive renormalisation, stochastic quantization or Gaussian multiplicative chaos,
for constructing the correlation function this specific class of low-dimensional quantum field theories. This approach  was initiated in the work of
Karowski and Weisz \cite{KarowskiWeiszFormFactorsFromSymetryAndSMatrices}, later developed by Smirnov
using the quantum Gel'fand-Levitan-Marchenko equations \cite{SmirnovUseGLMEqnsTocomputeSineGordonFF,SmirnovGLMEqnsDerivationForSineGordon,SmirnovUseGLMEqnsForSineGordonFF}
until reaching its present form in the works of Smirnov \cite{SmirnovIntegralRepSolitonFFSineGordonBootstrap}, Kirillov and Smirnov \cite{KirillovSmirnovFirstCompleteSetBootstrapAxiomsForQIFT}.

 The bootstrap program is rooted with the $\op{S}$-matrix program \cite{HeisenbergSomeAspectsofSMatrixIdeas,WheelerFirstIntroConceptSmatrix} and relies on the possibility to
 compute exactly and in closed form the $\op{S}$-matrix of such theories. Indeed, it turns out that, in 1+1 dimensions, the existence of infinitely many local conservation laws
is compatible with the existence of non-trivial $\op{S}$-matrices. On top of the usual properties enjoyed in quantum field theory such as unitarity or
crossing symmetry, $\op{S}$-matrices pertaining to integrable models only involve elastic scattering -no particle production- and
have to comply with the factorizability of the $n$-particle scattering into a concatenation of two-body processes. This implies that such $\op{S}$-matrices
have to satisfy the celebrated Yang-Baxter equation which, in this setting expresses the independence of the decomposition of a three-body scattering into
a sequence of two-body ones.  This exact integrable $\op{S}$-matrix was proposed by Gryanik and Vergeles \cite{GryanikVergelesSMatrixAndOtherStuffForSinhGordon}
for the quantum Sinh-Gordon model where only one type of asymptotic particle occurs, thus resulting in a scalar scattering. The quantum Sine-Gordon model,
corresponding to the quantization of the  $1+1$ dimensional classical Sine-Gordon Lagrangian density
\beq
\mc{L}_{\e{SG}} = \f{1}{2} \Dp{\mu}\vp \Dp{}^{\mu}\vp \, + \, \f{M^2}{g^2} \cos(g \vp) \;, M>0 \quad \e{and} \quad g \in \intff{ 0 }{ \sqrt{8\pi} } \;,
\label{ecriture Lagrangien Sine-Gordon}
\enq
is known to have a much richer
variety of asymptotic particles: solitons and antisolitons \cite{KorepinFaddeevQuantisationOfSolitions}, along with their bound states: the breathers
\cite{DashenHasslacherNeveuIdentificationFromPertThFullPartSPectrSineGordon}. The $\op{S}$-matrix of the soliton/antisoliton sector
was obtained by Zamolodchikov \cite{ZalmolodchikovSMatrixSolitonAntiSolitonSineGordon} and this was extended to the full $\op{S}$-matrix
by Karowski and Thun \cite{KarowskiThunCompleteSMatrixThirring}. Nowadays, $\op{S}$-matrices of a large class of  massive integrable quantum field theories in 1+1 dimensions
have been constructed \cite{ArinshteinFateyevZamolodchikovSMatrixTodaChain,ZalZalBrosFactorizedSMatricesIn(1+1)QFT}.

The $\op{S}$-matrix plays a central role in the bootstrap approach. More precisely, within this approach one constructs the quantum field operators
as linear operators on the model's Fock space built over the set of asymptotic particles. These operators can be described by the generalised functions describing their
distributional integral kernels. These kernels are built from one central object: the operator's form factors. On physical grounds, these are interpreted as matrix
elements of the given quantum field operator taken between the vacuum and some asymptotic particle excited state. The form factors are meromorphic functions
on the physical strip 
\begin{equation}
	\label{def physical strip}
	\msc{S}_{\e{phys}} = \Big\{   z \in \mathbb{C} \, : \, 0 < \Im(z) < \pi \Big\}
\end{equation}
attached to the rapidities of the asymptotic particles and satisfy vector valued coupled Riemann--Hilbert problems in $0, 1, \dots$
variables. This Riemann--Hilbert problem constitutes the founding axiom of the bootstrap program. A given model's $\op{S}$-matrix arises as
one of the data on which the Riemann--Hilbert problem is formulated. We point out that, for theories having a scalar $\op{S}$-matrix, the bootstrap program
Riemann-Hilbert problem was directly recovered by Bostelman and Cadamuro \cite{BostelmannCadamuroFFEqnsInIQFTsWithScalarS}
from the algebraic construction of such theories \cite{LechnerAlgebraicConstructionIQFTSscalarS}.

Over the years, significant efforts have been made to devise techniques allowing
one to solve explicitly the mentioned Riemann--Hilbert problem using its underlying integrability.
First progress has been achieved by Kirillov and Smirnov \cite{KirillovSmirnovFirstCompleteSetBootstrapAxiomsForQIFT,KirillovSmirnovUseOfBootstrapAxiomsForQIFTToGetMassiveThirringFF}
relatively to the Massive Thirring model. Representations of form factors for various operators in the Sine-Gordon model
were obtained
\cite{SmirnovIntegralRepSolitonFFSineGordonBootstrap,SmirnovProofIntegralRepSolitonFFSineGordonBootstrapFollowUpJPhysAPaper,SmirnovReductionsAndClusterPropertyInSineGordonPlusSomeDiscussionsUVLimit},
see also the monograph \cite{SmirnovFormFactors}. In particular, results were given for the form factors of $\ex{\i \f{g}{2} \Psi}$, where $g$
is the coupling constant arising in \eqref{ecriture Lagrangien Sine-Gordon}.
Those works developed techniques allowing one to represent form factors in terms of highly combinatorial expressions involving
entries of an associated six-vertex like monodromy matrix and ingeniously constructed special functions.
More compact and structured expressions for the form factors could have been obtained for models having scalar scattering matrices (Lee-Yang, Sinh-Gordon).
In that case, there appears a significant reduction in the complexity of the
bootstrap equations for the associated form factors. This allowed Zamolodchikov to produce determinant based representations \cite{ZalmolodchikovTwoPointFctsLeeYang} for the form factors,
an approach later generalised in \cite{FringMussardoSimonettiFFSOmeLocalObsSinhGordon,KoubekMussardoFFForMoreOpInSinhGordon}.

Subsequently, various efforts have been made to develop new techniques allowing one to produce much more manageable representations for the form factors.
Here, one should definitely mention the free field, aka. angular quantization, approach introduced by Lukyanov \cite{LukyanovFirstIntroFreeField}
which provided a formalism allowing one to construct the form factors using representation theory. This led to rather very simple
expressions for the form factors of the exponentials of the field operators  $\ex{\i \f{g}{2} \ga \Psi}$, $\ga$ generic,
in the Sinh-Gordon and the Bullough-Dodd models \cite{BrazhnikovLukyanovFreeFieldRepMassiveFFIntegrable}. Furthermore, the free field
techniques also produced \cite{LukyanovConjectureFFExponentialFieldSineGSolASolAndBReather} explicit expressions for the first
few soliton/antisoliton form factors of the quantum Sine-Gordon
operators $\ex{\i \f{g}{2} \ga \Psi}$, while providing all the necessary setting allowing one to also compute the form factors pertaining to
higher soliton/antisoliton excitation sectors. This was later achieved in \cite{PalmaiMultiSolitionFFSineGField}. The resulting expressions were much more explicit
that those obtained earlier on. Later on, it was showed that the free field approach also allows one to produce representations, of a similar kind,
for the form factors of topologically charged, \textit{i.e.} soliton creating, operators in the Sine-Gordon model \cite{LukyanovZamolodchikovSolitonCreatingOpsSineG}.
An alternative construction of form factors was proposed in \cite{BabujianFringKarowskiZapletalExactFFSineGordonBootsstrapI} and further developed in
\cite{BabujianKarowskiExactFFSineGordonBootsstrapII}, ultimately providing form factors of local operators closely connected with the conserved quantities of the theory,
\textit{e.g.} the energy-momentum tensor. The idea of the approach was to use a different basis
for the decomposition of the form factors and to introduce a certain integral transform which eventually allows one to
simplify considerably the form of the coupled Riemann--Hilbert problem one starts with.
All-in-all,  this alternative approach led to rather compact multiple integral-based representations for the form factors.
We shall refer to this technique as the $\mc{K}$-transform method. The approach was also conformed to produce form factors of $\ex{\i \f{g}{2} \ga \Psi}$ in the quantum Sinh-Gordon
model
\cite{BabujianKarowskiBreatherFFSineGordon}, providing eventually the same expressions as those obtained by the free field approach \cite{BrazhnikovLukyanovFreeFieldRepMassiveFFIntegrable}.
We would like to point out that one of the advantages of the $\mc{K}$-transform construction is that
it produces a representation for the form factors in terms of integrals (be it versus a discreet or a continuous measure)
only involving simple fully explicit functions which are symmetric in the integration variables arising in the representation.
We stress that such an explicit symmetric representation is an important advantage of the approach relatively to the possibility of
analysing various other properties of the model, in particular to proving the convergence of form factor expansions for two-point, and possibly multi-point,
functions. Indeed existing concentration of measure techniques are only efficient in estimating, at large $N$, $N$-fold integrals with
explicitly symmetric integrands. We refer to the work \cite{KozConvergenceFFSeriesSinhGordon2ptFcts}.
Finally, more recently, another construction of the Sine-Gordon form factors was proposed in \cite{JimboMiwaSmirnovFormFactorsSineGNewCOnstructionViaHiddenGrassmann}
and built on the use of ingenious auxiliary free fermionic operators, aka. hidden Grassmann structure of the XXZ chain.

The aim of this work is to develop further the $\mc{K}$-transform formalism for the calculation of the Sine-Gordon form factors.
It was observed in \cite{LukyanovConjectureFFExponentialFieldSineGSolASolAndBReather} (although not stated explicitly) that
$\ex{\i \f{g}{2} \ga \Psi}$ has a non-trivial matrix valued mutual locality index \cite{YurovZalmolodchikovFirstIntroMutualLocalityIndex}
in respect to the asymptotic soliton/antisoliton fields. This modifies the form of the bootstrap equations, see \textit{e.g.}
\cite{JimboMiwaSmirnovFormFactorsSineGNewCOnstructionViaHiddenGrassmann}. We conform the $\mc{K}$-transform to solving this alternative
form of the bootstrap equations. Remarkably, the simplest solution that we obtain produces the form factors of   $\ex{\i \f{g}{2} \ga \Psi}$
and of the associated charged operators. The expressions we find are of fundamentally different nature that those obtained within the free field or the
hidden Grassmann structure based approaches. Still, after several handlings, we show that these coincide with the results obtained earlier in the litterature:
the soliton/antisoliton form factors of  $\ex{\i \f{g}{2} \ga \Psi}$ \cite{LukyanovConjectureFFExponentialFieldSineGSolASolAndBReather},
and the $n$-solitonic along with the $n+1$-solitonic/1-antisolitonic  form factors of the associated charged operator proposed in
 \cite{LukyanovZamolodchikovSolitonCreatingOpsSineG}. Furthermore, we demonstrate how to recover the well-known form of the
$n$ soliton/antisoliton form factors of $\ex{\i \f{g}{2} \ga \Psi}$ at the free fermion point of the theory \cite{SchroeTruongFFExpPhyInFreeFermionPtSineG,SmirnovFormFactors}.

The paper is organized as follows. In Section \ref{Section Programme Bootstrap},
we recall the expression of the $\op{S}$-matrix of the Sine-Gordon model, and the construction of quantum field operators within
the form factor based approach. Then, we formulate the twisted bootstrap equations satisfied by the form factors, first in the
purely soliton/antisoliton sector of the theory and then, in full generality, in all sectors of the theory including the presence
of bound states (breathers) for $g$ small enough.  In particular, form factors involving breather contributions are obtained as residues of the
soliton/antisoliton ones at specific points corresponding to fusion angles.
Section \ref{Section Main Results} gathers the main results of this work. Off-shell Bethe Ansatz based representations solving $\gamma$-deformed
form factor bootstrap equations in the soliton/antisoliton sector are presented in Sub-Section \ref{SubSection FF for solitons antisolitons}.
The formulae valid throughout the full Hilbert space of the theory, \textit{i.e.} for the form factors mixing breathers and solitons/antisolitons,
are gathered in Subsection \ref{SubSection FF Full sector with breathers}. Finally, the specialisation of these formulae to the type $1$ breather sector is
outlined in Subsection \ref{Subsection General FF rerstriction to type 1 breathers}.
Lastly, in Section \ref{Section FF exp champ}, we discuss a particular choice of $p$ functions arising in the representations given in Section \ref{Section Main Results} which
gives rise to form factors of the  exponentials of the Sine-Gordon field $\op{O}^{(\ga)} = \ex{\i \f{g\ga}{2}\Psi} $ and its soliton creating analogues.
We compare the resulting form
factors with the literature \cite{BabujianKarowskiBreatherFFSineGordon,BrazhnikovLukyanovFreeFieldRepMassiveFFIntegrable,
BernardLeclairDiffEqnForPIII,LukyanovConjectureFFExponentialFieldSineGSolASolAndBReather,LukyanovZamolodchikovSolitonCreatingOpsSineG},
showing full agreement with the earlier results on the matter, this up to normalization.
The Appendices gather several technical results. Appendix \ref{Appendice Special Functions} recalls the definition of the
special functions used throughout the paper. In particular Sub-Appendix \ref{Appendice SousSection Quantum dilog} reviews all of the useful properties of the
quantum dilogarithm. Appendix \ref{Appendice Preuve des eqns de FF} contains various details relative to the proofs of the bootstrap equation.
Sub-Appendix \ref{Appendice SS section eqn symmetrie i} gives details about the proof of the $\op{S}$-matrix permutation property,
Sub-Appendix \ref{Appendice SS section eqn Watson ii} establishes Watson's equation,
Sub-Appendix \ref{Appendice SS section eqn inv Lorentz iii} establishes the Lorentz boost property while
Sub-Appendix \ref{subsection axiom iv} establishes the kinematic pole axiom. Appendix \ref{Appendix preuve decomposition breathers} contains the derivation of the expression for the breather/soliton-antisoliton form factors, \textit{i.e.} Proposition \ref{Proposition FF dans le secteur breather}.
Further, Appendix \ref{appendix vanish} provides some of the technical details that arise in the course of proving the reduction to
the expressions at the free fermion point. Finally, Appendix \ref{Appendice Resultats auxiliaires} gathers some technical properties
satisfied by various combinations of the operator entries of the monodromy matrix.

		\section{The bootstrap program for the Sine-Gordon model}
\label{Section Programme Bootstrap}
	
	\subsection{The Sine-Gordon model and its $\op{S}$-matrix}

It is convenient to reparameterise the Sine-Gordon coupling constant $g \in \intff{0}{\sqrt{8\pi} }$ \eqref{ecriture Lagrangien Sine-Gordon}
in terms of  $\xi \in \R^{+}$ defined as
	\beq
	\xi \, =  \, \f{ g^2 \pi   }{ 8\pi - g^2} \;. 
	\enq

The quantum field theory in infinite volume associated with the above Lagrangian density contains several asymptotic particles
whose number and kind depends on $\xi$. The model strongly simplifies when $\tf{\pi}{\xi} \in \mathbb{N}$ as then its scattering becomes purely diagonal.
However, despite this apparent simplification, this case demands a quite delicate analysis. We shall thus not consider this situation directly
but rather understand it as a limit, to be taken on the level of the final formulae. Hence, we shall always focus on the case of generic coupling constant when
$\tf{\pi}{\xi}\not\in \mathbb{N}$.  We remind that the model admits two different regimes.
\begin{itemize}
\item The repulsive regime: $\xi>\pi$.  Then the model contains only two
species of asymptotic particles: the soliton ($\mf{s}$) and the antisoliton ($\ov{\mf{s}}$). These particles have equal mass $M$ and are charge conjugates of each other.
Since these particles have equal masses, integrability is compatible with the existence of a non-trivial scattering.

\item The attractive regime: $0<\xi<\pi$. On top of the soliton/antisoliton asymptotic degrees of freedom mentioned above, the model
contains $n_{\xi} = \lfloor \tf{\pi}{\xi} \rfloor$ additional species of asymptotic particles called breathers.
These degrees of freedom will henceforth be denoted as $\mf{b}_1,\dots, \mf{b}_{ n_{\xi} }$.
The breather $\mf{b}_k$ may be interpreted \cite{KarowskiThunCompleteSMatrixThirring} as a soliton/antisoliton bound state, but also as a $\mf{b}_{m}$/$\mf{b}_{\ell}$ bound state,
with $m+\ell=k$. The breathers $\mf{b}_{k}$ have mass $M_k = 2M\sin \left( \frac{k\xi}{2}\right)$. Since all breathers have different masses
and their masses also differ from the one of the soliton and antisoliton, they all have a scalar scattering between each other as well as between the
the solitons  or antisolitons.
\end{itemize}

At given value of the coupling constant $\xi$, the total number of degrees of freedom will be denoted
\beq
N_{\xi} \, = \,  2 \, + \, n_{\xi} \,, \quad n_{\xi} \, = \,  \lfloor \tfrac{\pi}{\xi} \rfloor  \; , \qquad \e{with} \qquad \f{\pi}{\xi}\not\in \mathbb{N} \;.
\label{definition N xi n xi etc}
\enq
It is then useful to gather these internal degrees of freedom into the vector space  $\mf{h}_{\e{tot}} \simeq \Cx^{ N_{\xi} }$ endowed with the canonical basis
$\op{v}^{\mf{s}},\op{v}^{\ov{\mf{s}}},\op{v}^{\mf{b}_1},\cdots, \op{v}^{\mf{b}_{ n_{\xi} } }$. This indexing of vectors will be
also used to label the matrix entries. The Hilbert space associated with the asymptotic particle content of the model is given by the below Fock space
\beq
\mf{h}_{\textrm{SG}}\, = \, \bigoplus\limits_{n =  0}^{+\infty} L^2\Big(\R_{>}^n ,  \mf{h}_{\e{tot}}^{\otimes n} \Big)
\, , \qquad \e{with}  \qquad \R_{>}^n \, = \, \Big\{ \bs{\be}_n = \big(\be_1,\dots,\be_n\big) \in \R^n \, : \, \be_1>\dots > \be_n  \Big\} \;.
\label{definition espace Hilbert Sine G}
\enq

The integrability of the model ensures that the scattering preserves the rapidities of the particles and that it can be seen as
a concatenation of two-body processes in which conservation of momenta holds. Thus, the two-body $\op{S}$-matrix describing the scattering of asymptotic particles with
rapidities $\be_1$ and $\be_2$ only depends on the difference $\th=\be_1-\be_2$ and is thus a matrix
$\op{S}_{\e{tot}}(\th): \mf{h}_{\e{tot}}\otimes\mf{h}_{\e{tot}}\rightarrow  \mf{h}_{\e{tot}}\otimes \mf{h}_{\e{tot}}$.
In the basis
\bem
\bigg\{ \op{v}^{\mf{s}}\otimes \op{v}^{\mf{s}} ,\op{v}^{\mf{s}}\otimes \op{v}^{\ov{\mf{s}}} ,\op{v}^{\ov{\mf{s}}}\otimes \op{v}^{\mf{s}} ,\op{v}^{\ov{\mf{s}}}\otimes \op{v}^{\ov{\mf{s}}} ,
\op{v}^{\mf{s}}\otimes  \op{v}^{\mf{b}_1}, \op{v}^{\ov{\mf{s}}}\otimes  \op{v}^{\mf{b}_1},  \op{v}^{\mf{b}_1}\otimes \op{v}^{\mf{s}}  , \op{v}^{\mf{b}_1}\otimes \op{v}^{\ov{\mf{s}}}
, \cdots,   \op{v}^{\mf{b}_{ n_{\xi} }} \otimes \op{v}^{ \ov{\mf{s}} },  \\
\op{v}^{ \mf{b}_1 } \otimes  \op{v}^{ \mf{b}_1 } , \op{v}^{ \mf{b}_2 } \otimes  \op{v}^{ \mf{b}_1 },  \cdots,
\op{v}^{ \mf{b}_{ n_{\xi} } } \otimes  \op{v}^{ \mf{b}_{ n_{\xi} }  }  \bigg\} \; ,
\end{multline}
$\op{S}_{\e{tot}}(\th)$ takes a block diagonal form:
\beq
\op{S}_{\e{tot}}(\th) \; = \; \left( \ba{ccc}  \op{S}(\th) & 0   \\
0      & \op{S}_{\e{diag}}(\th)
\ea \right)
\label{ecriture matrice S totale}
\enq
where
\beq
\op{S}_{\e{diag}}(\th) \, = \, \e{Diag}\Big(  S_{\mf{s},\mf{b}_1}(\th),  S_{\ov{\mf{s}},\mf{b}_1}(\th),  S_{\mf{b}_1,\mf{s}}(\th),  S_{\mf{b}_1,\ov{\mf{s}}}(\th),
\dots , S_{ \mf{b}_{ n_{\xi}},\ov{\mf{s}} } (\th) ,
S_{\mf{b}_1, \mf{b}_1}(\th), \cdots,  S_{ \mf{b}_{ n_{\xi} }, \mf{b}_{ n_{\xi} } } (\th)   \Big)
\enq
gathers all the diagonal scattering while $\op{S}(\th)$ only describes the soliton-antisoliton scattering
\cite{KarowskiThunCompleteSMatrixThirring,ZalmolodchikovSMatrixSolitonAntiSolitonSineGordon}. We shall now describe these objects in detail.

\subsubsection{The soliton/antisoliton $\op{S}$-matrix}

The soliton/antisoliton $\op{S}$-matrix was first obtained in \cite{ZalmolodchikovSMatrixSolitonAntiSolitonSineGordon}. It will appear useful
to interpret $\op{S}$ as the matrix of an operator acting on the tensor product space $\mf{h}\otimes \mf{h}$, $\mf{h}\simeq \Cx^2$.
It takes the form
\beq
\label{S matrix SG}
\op{S}(\th) \; = \; \left(\ba{cccc}  a(\th) & 0 & 0 & 0  \\ 
0  & b(\th)  & c(\th) & 0  \\ 
0 &  c(\th) &  b(\th)  &  0  \\ 
0 &  0      &   0       &   a(\th)  \ea \right) \;. 
\enq
There, are simply related to $a$
\beq
b(\th) \; = \; \f{ \sinh \big[\tfrac{\pi \th }{\xi} \big]  }{ \sinh\big[ \tfrac{\pi }{\xi} \big(\i\pi - \th  \big) \big] } a(\th) \quad , \quad
c(\th) \; = \; \f{ \sinh \big[ \tfrac{ \i \pi^2 }{\xi} \big] }{ \sinh\big[ \tfrac{\pi }{\xi} \big(\i\pi - \th  \big) \big]  } a(\th) \;.
\enq
In its turn, $a$ admits an integral representation   \cite{FaddeevIrreducibiliteModularDouble,
Kashaev3termIntegralRelationDfcts,KurokawaDoubleSineIntro,RuijsenaarsFirstIntroQuantumRelatToda,WoronowiczQuantumExpFcts}. In fact, it turns out that
the latter may also be expressed in terms of the quantum dilogarithm
\beq
a(\th) \, = \,  - \exp \Bigg\{- \i \Int{0}{+\infty}  \f{\dd t }{ t } \cdot \f{ \sin(\th t) \sinh\big[ \tfrac{\pi-\xi}{2} t \big] }{ \sinh\big[ \tfrac{\xi t}{2} \big] \cosh\big[ \tfrac{\pi t }{2} \big]  }   \Bigg\}
\; = \; - \varpi\Bigg( \ba{cc}  \th +\i \tfrac{\xi}{2} \, , &   \th - \i \tfrac{\xi}{2}  \\ 
\th +\i (\pi - \tfrac{\xi}{2}) \, , &   \th - \i (\pi -\tfrac{\xi}{2})    \ea \Bigg) \;. 
\enq
Above, $\varpi$ stands for the quantum dilogarithm with periods $\xi$ and $2\pi$, see
Appendix \ref{Appendice SousSection Quantum dilog} for more details. The formula is written by using
hypergeometric notations for ratios of products, \textit{viz}.
\beq
\varpi\Bigg( \ba{c} a_1,\dots, a_n \\ b_1,\dots, b_m \ea \Bigg) \; = \; \f{ \pl{k=1}{n} \varpi(a_k)  }{  \pl{k=1}{m} \varpi(b_k)  } \;. 
\enq
The specific choice of periods leads to the following first order shift identities
\beq
\varpi(\la + 2\i\pi) \; = \; 2\i \sinh\Big[ \tfrac{\pi}{\xi} \big( \la + \i [\pi - \tfrac{\xi}{2}] \big) \Big] \cdot \varpi(\la) \qquad \e{and} \qquad
\varpi(\la + \i\xi) \; = \; 2\i \sinh\Big[ \tfrac{1}{2} \big(\la - \i [\pi - \tfrac{\xi}{2}] \big) \Big] \cdot \varpi(\la) \;. 
\enq
The soliton-antisoliton $\op{S}$-matrix enjoys unitarity and crossing symmetry
\beq
\op{S}^{\dagger}(\th) \, = \, \op{S}^{-1}(\th) \, = \, \op{S}(-\th) \qquad \e{and} \qquad
\op{S}(\th) \, = \, \big(\sg^x \otimes \e{id} \big) \op{S}(\i\pi -\th)^{\op{t}_2} \big(\sg^x \otimes \e{id} \big) \;,
\label{ecriture unitarite crossing pour partie soliton antisoliton S matrix}
\enq
where $\op{t}_2$ refers to taking the transposition $\op{t}$ only with respect to the second copy of $\mf{h}$ in the tensor product space $\mf{h}\otimes \mf{h}$. These symmetries are present in any reasonable quantum field theory. On top of these $\op{S}$ also satisfies the celebrated Yang-Baxter equation
\cite{BaxterPartitionfunction8Vertex-FreeEnergy,YangFactorizingDiffusionWithPermutations} which translates
the compatibility of decomposition schemes entering in the factorisability of the multi-particle scattering into two-body process.
It is one of the benchmarks of integrability. To write down this equation, we need to introduce appropriate tensor notations.
Consider the triple tensor product of copies of $\mf{h}$: $\mf{h}_1\otimes \mf{h}_2\otimes \mf{h}_3$, with $\mf{h}_a \simeq \mf{h}$.
Let $\op{S}_{12}(\th) \, = \, \op{S}(\th) \otimes \e{id}$, $\op{S}_{23}(\th) \, = \, \e{id} \otimes \op{S}(\th)$
and $\op{S}_{13}(\th) \, = \, \big( \op{P}\otimes \e{id} \big) \op{S}_{23}(\th)   \big( \op{P}\otimes \e{id} \big)$ with $\op{P}: \mf{h} \otimes \mf{h} \tend  \mf{h} \otimes \mf{h}$
being the permutation operator $\op{P}(\bs{v} \otimes \bs{w}) \,  = \, \bs{w} \otimes \bs{v}$.
Within these notations, it holds
\beq
\op{S}_{12}(\la-\mu)\op{S}_{13}(\la-\nu)\op{S}_{23}(\mu-\nu) \; = \;\op{S}_{23}(\mu-\nu)  \op{S}_{13}(\la-\nu) \op{S}_{12}(\la-\mu) \;.
\label{equation de Yang-Baxter pour S}
\enq
 The above equation should  be understood as an equality
on $\mc{L}\big( \mf{h} \otimes \mf{h} \otimes \mf{h} \big)$.

It will sometimes be useful to use the representation $\op{S}(\th) \; = \;a(\th) \, \wt{\op{S}}(\th)  $ where
\beq
\wt{\op{S}}(\th) \; = \; \left(\ba{cccc}  1 & 0 & 0 & 0  \\
0  & \wt{b}(\th)  & \wt{c}(\th) & 0  \\
0 &  \wt{c}(\th) &  \wt{b}(\th)  &  0  \\
0 &  0      &   0       &   1  \ea \right) \;,
\label{definition matrice S tildee}
\enq
in which
\beq
\wt{b}(\th) \; = \; \f{ \sinh \big[\tfrac{\pi \th }{\xi} \big]  }{ \sinh\big[ \tfrac{\pi }{\xi} \big(\i\pi - \th  \big) \big] }   \quad , \quad
\wt{c}(\th) \; = \; \f{ \sinh \big[ \tfrac{ \i \pi^2 }{\xi} \big] }{ \sinh\big[ \tfrac{\pi }{\xi} \big(\i\pi - \th  \big) \big]  }  \;.
\label{definition tilde b et c}
\enq

Finally, we shall also require to use a twisted variant of the soliton-antisoliton
$\op{S}$-matrix:
\beq
\op{S}^{(\ga)}(\th) \, = \, \ex{ \f{\ga\th}{2} \sg^{z}\otimes \e{id}  } \op{S}(\th) \ex{ -\f{\ga\th}{2}  \sg^{z} \otimes \e{id} } \, = \,
\left(\ba{cccc}  a(\th) & 0 & 0 & 0  \\
0  & b(\th)  & c^{(\ga)}(\th) & 0  \\
0 &  c^{(-\ga)}(\th) &  b(\th)  &  0  \\
0 &  0      &   0       &   a(\th) \ea \right)  \;.
\enq
There, we have introduced
\beq
c^{(\pm \ga)}(\th) \, = \,  \ex{\pm \ga \th }\,  c(\th) \;.
\enq
Also, analogously to the previous notations, we agree that
\beq
\wt{\op{S}}^{(\ga)}_{k\ell}(\th)  \, = \, \f{1}{ a(\th) }\op{S}^{(\ga)}_{k\ell}(\th) \, = \,
\left(\ba{cccc}  1 & 0 & 0 & 0  \\
0  & \wt{b}(\th)  & \wt{c}^{\,(\ga)}(\th) & 0  \\
0 &  \wt{c}^{\, (-\ga)}(\th) &  \wt{b}(\th)  &  0  \\
0 &  0      &   0       &   1 \ea \right)_{[k\ell]} \;.
\enq
Note that it follows readily from the Yang-Baxter equation \eqref{equation de Yang-Baxter pour S} satisfied by the soliton/antisoliton $\op{S}$-matrix
that the $\ga$-twisted $\op{S}$-matrix also satisfies the Yang-Baxter equation
\beq
\op{S}_{12}^{(\ga)}(\la-\mu)\op{S}_{13}^{(\ga)}(\la-\nu)\op{S}_{23}^{(\ga)}(\mu-\nu) \; = \;\op{S}_{23}^{(\ga)}(\mu-\nu)  \op{S}_{13}^{(\ga)}(\la-\nu) \op{S}_{12}^{(\ga)}(\la-\mu) \;.
\label{equation de Yang-Baxter pour S twiste}
\enq

\subsubsection{The soliton/breather and breather/breather $\op{S}$-matrices}

We now write down all of the remaining, diagonal, entries of the total $\op{S}$-matrix. These were first obtained in \cite{KarowskiThunCompleteSMatrixThirring}.
The soliton and antisoliton/breather scattering take the explicit form
\beqa
\label{scattering sol brea}
S_{\ups, \mf{b}_n}(\th) \, = \, S_{\mf{b}_n,\ups}(\th)
%
%
%
& =&(-1)^n \exp\Bigg\{ -2 \i \Int{0}{+\infty} \f{\dd t }{t} \f{ \cosh\big[\tfrac{\xi t }{2 } \big]   \sinh\big[\tfrac{\xi n t }{2 } \big]  }
{ \cosh\big[\tfrac{\pi t }{2 } \big]   \sinh\big[\tfrac{\xi t }{2 } \big] } \sin\big[ \th t \big] \Bigg\} \\
&=& \f{ \tanh\big[\tfrac{1}{2} ( \th + \i \tfrac{\pi}{2} - \i \tfrac{n \xi }{2}) \big]   }{ \tanh\big[\tfrac{1}{2} ( \th - \i \tfrac{\pi}{2} + \i \tfrac{n \xi }{2}) \big] }
\cdot \pl{p=1}{n-1} \tanh^2\big[\tfrac{1}{2} ( \th + \i \tfrac{\pi}{2} + \i (p-\tfrac{n}{2}) \xi) \big] \;. 
\eeqa
%
%
%
We stress that the result does not depend on $\ups \in \{ \mf{s}, \ov{\mf{s}} \}$.
Finally, the breather/breather scattering reads
\beqa
\label{scattering brea brea}
S_{\mf{b}_n, \mf{b}_m}(\th)= S_{\mf{b}_m, \mf{b}_n}(\th) & = &  \exp\Bigg\{ -4 \i \Int{0}{+\infty} \f{\dd t }{t} \f{ \cosh\big[\tfrac{\xi t }{2 } \big]   \sinh\big[\tfrac{\xi n t }{2 } \big]  \cosh\big[\tfrac{\pi - m \xi   }{2 } t \big]  }
{ \cosh\big[\tfrac{\pi t }{2 } \big]   \sinh\big[\tfrac{\xi t }{2 } \big] } \sin\big( \th t \big) \Bigg\} \\
&=& \f{ \tanh\big[\tfrac{1}{2} ( \th  + \i \tfrac{n +m }{2}\xi ) \big]   }{ \tanh\big[\tfrac{1}{2} ( \th - \i \tfrac{n +m }{2}\xi ) \big] }
\cdot \pl{p=1}{n-1}  \f{ \tanh\big[\tfrac{1}{2} ( \th  + \i \tfrac{n +m -2p}{2}\xi ) \big]   }{ \tanh\big[\tfrac{1}{2} ( \th - \i \tfrac{n +m -2p }{2}\xi ) \big] }  
\cdot \pl{p=1}{m-1}  \f{ \tanh\big[\tfrac{1}{2} ( \th  + \i \tfrac{n +m -2p}{2}\xi ) \big]   }{ \tanh\big[\tfrac{1}{2} ( \th - \i \tfrac{n +m -2p }{2}\xi ) \big] } \;. 
\nonumber
\eeqa
The integral representation given in the first line of \eqref{scattering brea brea} is only well defined when $n<m$ and $\th \in \R^{*}$.
However, the finite product representation which follows from it,
already holds for all $n$ and $m$ and $\th$ away from the poles. Furthermore, the product formula is explicitly invariant under the replacement $n \leftrightarrow m$.
Again, the soliton-antisoliton/breather and the breather/breather $\op{S}$-matrices also satisfy unitarity and crossing symmetry (breathers are their own anti-particles):
\beq
S_{\tau, \mf{b}_m}(\th) \,  S_{\tau, \mf{b}_m}( - \th ) \, = \, 1  \qquad \e{and} \qquad S_{ \tau , \mf{b}_m}(\th) \, = \,  S_{\tau, \mf{b}_m}(\i\pi -\th)
\quad \e{with} \quad \tau \in \big\{ \mf{s}, \ov{\mf{s}}, \mf{b}_1,\dots, \mf{b}_{n_{\xi}}  \big\}  \;.
\label{ecriture unitarite crossing pour partie scalaire S matrix}
\enq

It is important to note that the expressions \eqref{scattering sol brea} and \eqref{scattering brea brea} are both obtained from the residues of the regular soliton-antisoliton
scattering matrix \eqref{S matrix SG} by the so-called fusion procedure whose details were given in \cite{KarowskiThunCompleteSMatrixThirring}.
To state the latter, it is convenient to introduce the canonical basis of $\Cx^2$:
\beq
\op{v}^+ \, = \, \left( \ba{c} 1 \\ 0 \ea \right) \quad \e{and} \quad \op{v}^- \, = \, \left( \ba{c} 0 \\ 1 \ea \right) \;,
\enq
which, in view of what will follow, should be interpreted as the rank $2$ projection of $\op{v}^{\mf{s}},  \op{v}^{\ov{\mf{s}}}$ onto $\Cx^2$ whose kernel is spanned by
the vector from the breather sector. Further, let
\beq
\bs{\vp}^{(\mf{b}_k)}\;=\; \f{ 1 }{ \sqrt{2} } \Big\{ \op{v}^+ \otimes \op{v}^- \, +\, (-1)^k    \op{v}^- \otimes \op{v}^+  \Big\}
\enq
A direct calculation shows that $\op{S}$ admits simple poles in the physical strip \eqref{def physical strip} at $\i \mf{u}_k$ with
\beq
\mf{u}_k\, = \, \pi - k \xi\, , \quad k=1,\dots, n_{\xi} \,
\enq
whose matrix residues take the form
\beq
\e{Res}\Big( \op{S}(\th) \, \cdot \, \dd \th , \th = \i \mf{u}_k \Big) \; = \; \f{ R_k }{ \th - \i \mf{u}_k  } \, \bs{\vp}^{(\mf{b}_k)} \cdot  \big( \bs{\vp}^{(\mf{b}_k)} \big)^{\op{t} } \;,
\label{ecriture equation pour poles matrices S}
\enq 
with
\beq
R_k \, = \, - \f{2 \xi}{\pi} a(\i \mf{u}_k ) \sinh\Big( \i \f{\pi^2}{\xi}\Big)\; =\;
(-1)^k 4 \i \cot\Big( \f{ k \xi }{ 2 } \big) \cdot \pl{p=1}{k-1} \cot^2 \Big( \f{p \xi }{ 2 } \Big) \; .
\enq
One can connect these residues to the soliton/antisoliton-breather $\op{S}$-matrix \eqref{scattering sol brea} through the formula
\beq
\bs{\vp}^{(\mf{b}_k)}_{23} \cdot \big( \bs{\vp}^{(\mf{b}_k)}_{23} \big)^{\op{t}} \,  \op{S}_{13}\big( \th + \i \tfrac{\mf{u}_k}{2} \big)
\,  \op{S}_{12}\big( \th - \i \tfrac{\mf{u}_k}{2} \big) \; = \;
 \op{S}_{12}\big( \th - \i \tfrac{\mf{u}_k}{2} \big)
 \,  \op{S}_{13}\big( \th + \i \tfrac{\mf{u}_k}{2} \big) \, \bs{\vp}^{(\mf{b}_k)}_{23} \cdot \big( \bs{\vp}^{(\mf{b}_k)}_{23} \big)^{\op{t}}
\, = \, S_{\! \ups,  \mf{b}_k}(\th) \,   \bs{\vp}^{(\mf{b}_k)}_{23} \cdot \big( \bs{\vp}^{(\mf{b}_k)}_{23} \big)^{\op{t}}\;,
\enq
with $\ups \in \big\{ \mf{s}, \ov{\mf{s}} \big\}$. Moreover, one also has
\beq
 S_{\! \mf{b}_n,  \mf{b}_m}(\th)  \, = \,  S_{\! \ups,  \mf{b}_m}\big( \th + \i \tfrac{\mf{u}_n}{2} \big)  \,
 S_{\! \ups,  \mf{b}_m}\big( \th - \i \tfrac{\mf{u}_n}{2} \big) \;.
\enq

\vspace{3mm}

All of the unitarity, crossing symmetry and Yang-Baxter equations satisfied by the individual $\op{S}$-matrices outlined above
may be lifted to the level of the total $\op{S}$-matrix as given in \eqref{ecriture matrice S totale}.
Indeed, it is a matter of a direct check that
\beq
\op{S}_{\e{tot}}^{\dagger}(\th) \, = \,  \op{S}_{\e{tot}}^{-1}(\th) \, = \, \op{S}_{\e{tot}}(-\th)  \quad
\e{and} \quad
\op{S}_{\e{tot}}(\th)  \, = \, \big( \mf{S}_{1}^x \otimes \e{id} \big)  \op{S}_{\e{tot}}(\i \pi - \th)^{\op{t}_2} \big( \mf{S}_{1}^x \otimes \e{id} \big)
\enq
where $\mf{S}^x=\e{Diag}\big( \sg^x, \op{I}_{n_{\xi}} \big)$ is block diagonal.

Further, by understanding the space indices as now referring to spaces $\mf{h}_{\e{tot}}$ occurring in the triple tensor product
$\mf{h}_{\e{tot}}\otimes \mf{h}_{\e{tot}} \otimes \mf{h}_{\e{tot}}$, one has the Yang-Baxter equation
\beq
\big(\op{S}_{\e{tot}} \big)_{12}(\la-\mu)\big(\op{S}_{\e{tot}} \big)_{13}(\la-\nu)\big(\op{S}_{\e{tot}} \big)_{23}(\mu-\nu) \; = \;
\big(\op{S}_{\e{tot}} \big)_{23}(\mu-\nu)  \big(\op{S}_{\e{tot}} \big)_{13}(\la-\nu) _{12}(\la-\mu) \;.
\enq
Finally,  upon substituting $\op{v}^{+/-} \hookrightarrow  \op{v}^{\mf{s}/\ov{\mf{s}}}$
in $\bs{\vp}^{(\mf{b}_k)}$ one has the fusion equation on the level of the total $\op{S}$-matrix
\beq
\bs{\vp}^{(\mf{b}_k)}_{23} \cdot \big( \bs{\vp}^{(\mf{b}_k)}_{23} \big)^{\op{t}} \,  \big( \op{S}_{\e{tot}} \big)_{13}\big( \th + \i \tfrac{\mf{u}_k}{2} \big)
\,   \big( \op{S}_{\e{tot}} \big)_{12}\big( \th - \i \tfrac{\mf{u}_k}{2} \big) \; = \;
 \op{S}^{(\mf{b}_k)}_{ 1 }(\th) \,   \bs{\vp}^{(\mf{b}_k)}_{23} \cdot \big( \bs{\vp}^{(\mf{b}_k)}_{23} \big)^{\op{t}}
\enq
with
\beq
 \op{S}^{(\mf{b}_k)}(\th)\, = \, \e{Diag}\Big(  S_{\! \mf{s},  \mf{b}_k}(\th),   S_{\! \ov{\mf{s}},  \mf{b}_k}(\th),
 S_{\! \mf{b}_1,  \mf{b}_k}(\th), \dots,  S_{\! \mf{b}_{n_{\xi}},  \mf{b}_k}(\th)   \Big)   \;.
\label{definition matrice S breather k avec le reste}
\enq

\subsection{The space of operators}

We now describe the space of operators of the Sine-Gordon model, what will allow us to put all the premices necessary for stating the bootstrap program equations.
We start by introducing the translation operator by $\bs{y}=(y_0,y_1)\in \R^{1,1}$. A given vector of the Sine-Gordon Hilbert space $\mf{h}_{\e{SG}}$ \eqref{definition espace Hilbert Sine G}
may be written in terms of its components in each of the Fock space $n$-particle sectors
\beq
\op{f} \, = \, \big( \op{f}^{(0)},\dots, \op{f}^{(n)},\dots \big)\;, \quad \e{with} \quad   \op{f}^{(n)} \in L^2\Big(\R_{>}^n ,  \big( \Cx^{N_{\xi}} \big)^{\otimes n} \Big) \;.
\enq
Then, the translation operator $\op{T}_{\bs{y}}$ acts diagonally on each of the Fock space components:
\beq
\op{T}_{\bs{y}}[\op{f}] \, = \, \Big( \op{T}_{\bs{y}}^{(0)}\big[\op{f}^{(0)}\big],\dots, \op{T}_{\bs{y}}^{(n)} \big[\op{f}^{(n)}\big],\dots \Big)
\enq
where, given $\bs{\be}_n\, = \, \big(\be_1,\dots, \be_n\big)$
\beq
\op{T}_{\bs{y}}^{(n)} \big[ \op{f}^{(n)} \big](\bs{\be}_n) \, = \,  \ex{\i \ov{\bs{P}}(\bs{\be}_n) * \bs{y} } \, \op{f}^{(n)}  (\bs{\be}_n) \quad \e{with} \quad
\ov{\bs{P}}(\bs{\be}_n) \; = \;\sul{k=1}{n} \bs{P}_k(\be_k) \;.
\enq
There are several ingredients to the formula. First $*$ is the Minkowski scalar product $\bs{x}*\bs{y}=x_0y_0-x_1y_1$. Next, the momentum matrix takes the form
\beq
\bs{P}(\be) \, = \,\left(\ba{cc} \bs{p}(\be) \op{I}_2 & 0 \\
											0    &    \e{Diag}\big( \bs{p}_1(\be),\dots,\bs{p}_{n_{\xi}}(\be)  \big)  \ea \right)
\quad \e{while} \quad \bs{P}_k(\be)= \underbrace{\e{id}\otimes \dots \otimes \e{id}}_{k-1 \, \e{times} }
\otimes \bs{P}(\be) \otimes  \underbrace{\e{id}\otimes \dots\otimes \e{id}}_{n-k \, \e{times} } \;.
\enq
Above, $ \bs{p}(\be)=M\big(\cosh(\be),\sinh(\be)\big)$ is the soliton/antisoliton $2$-momentum while  $ \bs{p}_k(\be)=M_k\big(\cosh(\be),\sinh(\be)\big)$ is the $k^{\e{th}}$ breather one.

More generally, given any, possibly unbounded, operator $\op{O}(\bs{x})$ on $\mf{h}_{\e{SG}}$, one may represent\symbolfootnote[2]{In principle, such operators are also generalised
function in the space-time variables $\bs{x}$, and so all formulae below should be understood in the distributional sense, \textit{i.e.} up to smearing by a Schwartz test function.}
its action as $\op{O}(\bs{x})[\op{f}] \, = \, \Big( \op{O}^{(0)}(\bs{x})\big[\op{f}\big],\dots, \op{O}^{(m)}(\bs{x}) \big[\op{f}\big],\dots \Big)$, where
\beq
 \op{O}^{(m)}(\bs{x}) \big[\op{f}\big](\bs{\ga}_m) \; = \; \sul{n \geq 0}{} \op{M}_{ m n}^{(\op{O})}\big[\op{f}^{(n)}\big](\bs{x}\mid \bs{\ga}_m) \;.
\enq
Above,  $\op{M}_{ m n}^{(\op{O})}$ are most conveniently described in terms of their distributional integral kernels
\beq
\op{M}_{ m n}^{(\op{O})}\big[\op{f}^{(n)}\big] \big( \bs{x} \, \mid     \bs{\ga}_m \big)  \; =  \Int{\R^n_{>} }{} \hspace{-1mm}  \f{ \dd^n \be  }{ (2\pi)^n } \;
   \ex{ \i \,  \ov{\bs{P}}( \bs{\ga}_m) * \bs{x} } \,
   \mc{M}^{(\op{O})}_{m;n}\big(     \bs{\ga}_{m}  ; \bs{\be}_{n} \big) \,    \ex{ - \i \,  \ov{\bs{P}}( \bs{\be}_n) * \bs{x} }  \cdot
 \op{f}^{(n)}\big( \bs{\be}_n \big) \;.
\enq
The distributional kernels $ \mc{M}^{(\op{O})}_{m;n}\big(     \bs{\ga}_{m}  ; \bs{\be}_{n} \big)$ are $N_{\xi}^{m}\times N_{\xi}^{n}$ matrices with entries taking values in
generalised functions associated with distributions on  $\mc{S}(\R^{m})\times \mc{S}(\R^{n})$.

The bootstrap program consists in setting up an axiomatic set of equations, whose solutions allow one to construct all of the generalised functions
$ \mc{M}^{(\op{O})}_{m;n}\big(     \bs{\ga}_{m}  ; \bs{\be}_{n} \big) $ describing all of the quantum field operators in the model.
In fact, $ \mc{M}^{(\op{O})}_{m;n}\big(     \bs{\ga}_{m}  ; \bs{\be}_{n} \big) $ for any $m$ can be expressed solely in terms of
the fundamental kernels $ \mc{M}^{(\op{O})}_{0;n}\big(  \emptyset ; \bs{\be}_{n} \big) $ owing to the LSZ bootstrap axiom.
We shall not describe this procedure here, and refer to \cite{KirillovSmirnovFirstCompleteSetBootstrapAxiomsForQIFT,KirillovSmirnovUseOfBootstrapAxiomsForQIFTToGetMassiveThirringFF}
for more details. The general principles of $\op{S}$-matrix theory indicate that fundamental building blocks  $ \mc{M}^{(\op{O})}_{0;n}\big(  \emptyset ; \bs{\be}_{n} \big) $
should correspond to ordered $+$ boundary values
\beq
 \mc{M}^{(\op{O})}_{0;n}\big(  \emptyset ; \bs{\be}_{n} \big) \, = \, \lim_{ \substack{ \eps_1>\dots > \eps_n \\ \eps_a \tend 0^+}  }
 \msc{F}_{n}^{(\op{O})}(\bs{\be}_n+\i \bs{\eps}_n) \; , \quad
 \e{for} \quad \bs{\be}_n\in \R^n_{>}
\label{form factor BV}
\enq
of a meromorphic, linear form on $\mf{h}_{\e{tot}}^{\otimes n}$,  $\mf{h}_{\e{tot}} \simeq \Cx^{N_{\xi}}$ valued, function $\msc{F}_{n}^{(\op{O})}$.
This function can thus be represented as
\beq
\msc{F}_{n}^{(\op{O})}\big( \bs{\be}_n \big) \, = \, \sul{ \bs{\tau}_n \in \mf{X}^n }{} \Big[ \msc{F}_{n}^{(\op{O})}\big( \bs{\be}_n \big)  \Big]^{\bs{\tau}_n}
\big( \op{v}^{\bs{\tau}_n} \big)^{\mathtt{t}}
\qquad \e{with} \qquad  \left\{ \ba{ccc}
\mf{X} & = & \big\{ \mf{s}, \ov{\mf{s}}, \mf{b}_1, \dots, \mf{b}_{n_{\xi}} \big\}  \vspace{2mm} \\
 \op{v}^{\bs{\tau}_n}   & = & \op{v}^{\tau_1} \otimes \cdots \otimes \op{v}^{\tau_{n}} \ea \right.  \;,
\label{ecriture decomposition FF total sur base de vecteurs}
\enq
and $\mathtt{t}$ being the transposition.
The scalar generalised functions $\Big[ \msc{F}_{n}^{(\op{O})}\big( \bs{\be}_n \big)  \Big]^{\bs{\tau}_n}$
are called form factors in the physics litterature and are interpreted as formal matrix elements of the quantum field operator $\op{O}(\bs{0})$
taken between the vacuum and appropriate asymptotic states.

\subsection{The bootstrap equations in the soliton/antisoliton sector}

In this subsection, we will state the bootstrap equations in the soliton/antisoliton sector. Note that, in the repulsive regime $\xi > \pi$,
this constitutes already the full bootstrap, since there are no additional poles present issuing from the breathers.

Still, in order to state the bootstrap equations, it is necessary to introduce a few notations.
First of all, consider $n$-copies of $\Cx^2$: $\mf{h}_a\simeq \Cx^2$, $a=1,\dots,n$. The canonical basis of $\mf{h}_{k}$ will be denoted
\beq
\op{v}^{+}_k=\left( \ba{c} 1 \\ 0 \ea \right) \quad \e{and} \quad \op{v}^{-}_k=\left( \ba{c} 0 \\ 1 \ea \right).
\enq
Upon agreeing on the identification in the labels $+=\mf{s}$ and $-=\ov{\mf{s}}$, one then defines
\beq
\label{form factor coordinates}
\mc{F}_{n}^{(\op{O})}(\bs{\be}_n) \; = \; \sul{ \bs{\veps}_n\in \{\pm\}^n }{} \Big[ \msc{F}_{n}^{(\op{O})}\big( \bs{\be}_n \big)  \Big]^{\bs{\veps}_n}
\cdot \big(\op{v}^{\bs{\veps}_n}  \big)^{\mathtt{t}}  \;,
\qquad \op{v}^{\bs{\veps}_n}   \,  = \,  \op{v}^{\veps_1} \otimes  \cdots \otimes \op{v}^{\veps_{n}}
\enq
where we recall that $\mathtt{t}$ is the transposition. In order to lighten the notations, from now on, we shall drop the number of variables index from the form
factors; the latter may always be inferred from the dimensionality of the vector appearing in the argument, \textit{i.e.},
\beq
\mc{F}^{(\op{O})}(\bs{A}) \, \equiv \,  \mc{F}_n^{(\op{O})}(\bs{A}) \quad \e{for} \quad \bs{A} \in \Cx^n \;.
\enq

%
%
%
Further, with $\op{P}$ the permutation operator, we denote $\op{P}_{ij}$ its embedding into $\mc{L}\big(\mf{h}_1\otimes \cdots \otimes \mf{h}_n\big)$
which permutes the $i^{\e{th}}$ and $j^{\e{th}}$ tensor components  of any vector built as a pure tensor product, \textit{viz}.
\beq
\op{P}_{ij} \, \op{v}^{(1)}\otimes \cdots \otimes  \op{v}^{(n)} \; = \; \op{v}^{(1)}\otimes \cdots \otimes  \op{v}^{(i-1) } \otimes \op{v}^{(j)}\otimes  \op{v}^{(i+1) }\otimes \cdots \otimes \op{v}^{(j-1)}\otimes  \op{v}^{(i)} \otimes 
\op{v}^{(j+1)}\otimes \cdots \otimes  \op{v}^{(n)} 
\enq
for $i<j$ and similarly when $i>j$. 

Generally, given an operator $\op{L}$ on $\Cx^2\otimes \Cx^2$, one denotes by $\op{L}_{ij}$ the embedding of $\op{L}$
into $\mc{L}\big(\mf{h}_1\otimes \cdots \otimes \mf{h}_n\big)$  which acts as $\op{L}$ on the components belonging to the tensor product
$\mf{h}_i\otimes  \mf{h}_j$ and as the identity on all the other components of the full tensor product Hilbert space. More precisely,
\beq
\op{L}_{12}=\op{L}\otimes \underbrace{ \e{id}\otimes \cdots \otimes \e{id} }_{n-2\, \e{times} } \qquad \e{and} \qquad \op{L}_{ij} \, = \, \op{P}_{1i}\, \op{P}_{2j} \,  \op{L}_{12} \,  \op{P}_{1i} \, \op{P}_{2j} \;. 
\enq
We also introduce the canonical basis in $\Cx^n$:
\beq
\bs{e}_k \; = \; \big( \underbrace{0,\dots , 0}_{k-1 \; \e{entries} } , 1 , 0,\dots, 0)^{\mathtt{t}} \in \Cx^n \;,
\enq
and the fundamental strip
\beq
\msc{S}\;= \; \Big\{ z \in \Cx  \, : \, 0 < \Im(z) < 2\pi  \Big\} \;.
\enq
Note that the physical strip $\msc{S}_{\e{phys}}$ only corresponds to $0 < \Im(z) < \pi$.

The bootstrap axioms postulate that the form factors $\mc{F}^{(\op{O})}(\bs{\be}_n)$ belong to the space of
linear form valued functions  such that, for each $k \in \intn{1}{n}$ and fixed $\be_a \in \msc{S}$, $a \not=k$,
the maps $\be_k \mapsto \mc{F}^{(\op{O})}(\bs{\be}_n)$ are
\begin{itemize}

\item meromorphic on the fundamental strip $\msc{S}$;

\item admit $+$, resp. $-$, boundary values $\mc{F}_{+}^{(\op{O})}$ on $\R$, resp. $\mc{F}_{-}^{(\op{O})}$ on  $\R+2\i\pi$;

\item bounded, coordinate-wise, at infinity by $ C\cdot \cosh\big( \ell \Re(\be_k) \big)$ for some $n$ and $k$ independent  $\ell$.

\end{itemize}

The axiomatic bootstrap equation correspond to the multi-variable Riemann--Hilbert problem
\begin{itemize}

\item[i)]  $\op{S}$-matrix symmetry
\beq
\mc{F}^{(\op{O})}\big( \bs{\be}_n \big) \;=\;   \mc{F}^{(\op{O})}\big( \bs{\be}_n^{(i+1 i)} \big) \, \op{P}_{i i+1} \,  \op{S}_{i i+1}\big(\be_{i i+1} \big) \;,
\enq
 where
\beq
\be_{ab} = \be_a - \be_b \qquad \e{and} \qquad  \bs{\be}_n^{(i+1 i)} \,= \, \big( \be_1,\dots, \be_{i-1},\be_{i+1},\be_{i},\be_{i+2},\dots, \be_{n} \big) \;. 
\enq
\item[ii)] Watson equation
\beqa
\mc{F}^{(\op{O})}\big( \bs{\be}_n+2\i\pi \bs{e}_1 \big) &= &
\mc{F}^{(\op{O})}\big( \be_2,\dots, \be_n,\be_1 \big) \op{P}_{2 \dots n 1}  \ex{2\i\pi \om_{\op{O}}  +  \i\pi \ga \sg_1^{z} }  \\
& = & \mc{F}^{(\op{O})}\big( \bs{\be}_n \big) \op{S}_{21}(\be_{21})\cdots \op{S}_{n1}(\be_{n1}) \, \ex{2\i\pi \om_{\op{O}} + \i\pi \ga \sg_1^{z} } \, ,
\eeqa
where 
\beq
\op{P}_{2 \dots n 1} \; = \; \op{P}_{n-1n}\,\op{P}_{n-2n-1}\cdots \op{P}_{12} \;, 
\enq
while $\om_{\op{O}}$ is the so-called index of mutual locality of the operator $\op{O}$ in respect to the elementary soliton/antisoliton field $\Phi$.
Also, the second term containing the $\i\pi \ga\sg_1^{z}$ corresponds to taking into account the case where the mutual locality of the operator $\op{O}$ in respect to the elementary soliton/antisoliton field $\Phi$
is different.

\item[iii)] Lorentz boost equation
\beq
\mc{F}^{(\op{O})}\big(\bs{\be}_n+ \th \, \ov{\bs{e}}_n\big) \;=\; \mc{F}^{(\op{O})}\big( \bs{\be}_n \big)\ex{ \th \op{s}_{\op{O}} + \f{\ga}{2}q_{\op{O}} \th}
\enq
where $\op{s}_{\op{O}}$ is the spin of the operator $\op{O}$, $q_{\op{O}} \in \mathbb{Z}$ the operator's charge,
while 
\beq
\ov{\bs{e}}_n \; = \; \sul{k=1}{n} \bs{e}_k \;.
\enq

\item[iv)] Kinematic pole equation
\bem
-\i \e{Res}\Big( \mc{F}^{(\op{O})}\big( \a+\i\pi , \bs{\be}_n, \be \big) \dd \a , \a=\be \Big)  \\
\;= \; \op{C}_{1n+2}\otimes \mc{F}^{(\op{O})}\big( \bs{\be}_n \big)_{2\dots n+1}
\cdot \Big\{ \op{S}_{2\,  n+2}(\be_1-\be)\cdots  \op{S}_{ n+1 n+2}(\be_{n}-\be)  \, - \, \ex{2\i\pi \om_{\op{O}} + \i\pi \ga \sg_1^{z}  } \Big\}
\label{ecriture 1ere eqn cinematique}
\end{multline}
where the labels indicate the local spaces to which the covectors are attached non-trivially while
\beq
\op{C}\; =\; \Big( \op{v}^+\otimes \op{v}^{-} \, + \, \op{v}^{-}\otimes \op{v}^{+} \Big)^{\mathtt{t}} \;. 
\label{definition vecteur conjugaison de charge}
\enq
An equivalent useful form is given by:
\beq
 -\i \e{Res}\Big( \mc{F}^{(\op{O})}\big( \a+\i\pi , \be, \bs{\be}_n \big) \dd \a , \a=\be \Big)\;= \; \op{C} \, \otimes \, \mc{F}^{(\op{O})}\big( \bs{\be}_n \big)
\cdot \Big\{ 1 \, - \, \ex{2\i\pi \om_{\op{O}} + \i\pi \ga\sg_1^{z}  }  \op{S}_{2\,  n+2}(\be-\be_n)\cdots  \op{S}_{ 2 3 }(\be-\be_1)  \Big\} \;.
\enq
\end{itemize}
Note that the indices appearing in the \textit{lhs} of \eqref{ecriture 1ere eqn cinematique} correspond to the indices of the spaces $\mf{h}_{a}^{*}$
on which the given covectors belong.

\subsection{The full bootstrap equation in the attractive regime}
\label{Section Programme Bootstrap avec etats lies}

From now on, we consider the case of a generic coupling constant $\xi < \pi$ such that $\f{\pi}{\xi}\not\in \mathbb{N}$.
The presence of poles at $\i u_k$ in the soliton/antisoliton $\op{S}$-matrix \eqref{ecriture equation pour poles matrices S}
imposes the existence of $n_{\xi}$, \textit{c.f.} \eqref{definition N xi n xi etc}, additional asymptotic particles, the breathers, building up the model's Hilbert spaces.
The bootstrap equation now pertain to the full form factor $\msc{F}^{(\op{O})}$, \textit{c.f.} \eqref{form factor BV}.

\begin{bootstrapaxiomsbound}

Given $q_{\op{O}} \in \mathbb{Z}$ fixed, the form factors $\msc{F}^{(\op{O})}(\bs{\la}_{\ell})$ are such that, for each $k \in \intn{1}{\ell}$,
the maps $\la_k \mapsto \msc{F}^{(\op{O})}(\bs{\la}_{\ell})$ with the variables $\la_a$, $a \not=k$, kept fixed, are
\begin{itemize}

	\item meromorphic on the fundamental strip $\msc{S}$;

	\item admit $+$ boundary values  on $\R$, and  $-$ boundary values  on  $\R+2\i\pi$;

	\item bounded, coordinate-wise, at infinity by $ C\cdot \cosh\big( L \Re(\la_k) \big)$ for some $n$, $q_{\op{O}}$ and $k$ independent  $L$.

\end{itemize}

They satisfy the multi-variable Riemann--Hilbert problem
\begin{itemize}

\item[I)]  $\msc{F}^{(\op{O})}\big(\bs{\la}_{\ell}\big) \;=\;   \msc{F}^{(\op{O})}\big( \bs{\la}_{\ell}^{(i+1 i)} \big)
\, \op{P}_{i i+1} \,   \big(\op{S}_{\e{tot}}\big)_{i i+1}\big(\la_{i i+1} \big)$;
	\item[II)] $\msc{F}^{(\op{O})}\big(\bs{\la}_{\ell}+2\i\pi \bs{e}_1 \big) \; = \;
\msc{F}^{(\op{O})}\big( \la_2,\dots, \la_{\ell},\la_1 \big) \op{P}_{2 \dots \ell 1}  \ex{2\i\pi \om_{\op{O}} \mf{D}_1 + \i\pi \ga \mf{S}_1^{z} }$,

	with $\mf{S}^{z} =  \e{Diag}\big(1,-1,0,\dots,0) $
	where the zero matrix is of size $n_{\xi} \times n_{\xi}$ and translates the fact that the twist is only acting on the (anti)solitonic part,
	$\mf{D}  =  \e{Diag}\big(1,1,0,\dots,0)$;

\item[III)]  $\msc{F}^{(\op{O})}\big(\bs{\la}_{\ell}+ \th \, \ov{\bs{e}}_{\ell}\big) \;=\; \msc{F}^{(\op{O})}\big( \bs{\la}_{\ell} \big)
\ex{ \th \op{s}_{\op{O}} + \f{\ga}{2} q_{\op{O}} \th } $;
\item[IV)] one has 
\bem
-\i \e{Res}\Big( \msc{F}^{(\op{O})}\big( \a+\i\pi , \bs{\la}_{\ell}, \be \big) \dd \a , \a=\be \Big)  \\
 \, = \,  \big( \op{C}_{\e{tot}}  \big)_{1\ell+2} \otimes \msc{F}^{(\op{O})}
\big( \bs{\la}_{\ell} \big)_{2\dots \ell+1}
\cdot \Big\{ \big(\op{S}_{\e{tot}}\big)_{2\,  \ell+2}(\la_1-\be)\cdots  \big(\op{S}_{\e{tot}}\big)_{ \ell+1 \ell+2}(\la_{\ell}-\be)
-  \ex{2\i\pi  \om_{\op{O}} \mf{D}_1 + \i\pi \ga \mf{S}_1^{z}   } \Big\}
\nonumber
\end{multline}

where
\beq
\op{C}_{\e{tot}} \; =\; \bigg( \op{v}^{\mf{s} }\otimes \op{v}^{ \ov{\mf{s}} } \, + \, \op{v}^{ \ov{\mf{s}} }\otimes \op{v}^{\mf{s}}  \, + \, \sul{k=1}{n_{\xi} }
 \op{v}^{\mf{b}_k }\otimes \op{v}^{ \mf{b}_k }   \bigg)^{\mathtt{t}} \;;
\nonumber
\enq
\item[V)] for any  $p, n \geq 0$ one has that the $(p+1)^{\e{st}}$ coordinate projection on the breather sector is obtained by a residue calculation procedure:
\begin{equation}
\label{recursion breather 2}
\msc{F}^{(\op{O})}(\bs{\alpha}_{p+1}, \bs{\be}_n ) \cdot \wh{\op{v}}^{\, \mf{b}_{k_{p+1}}}_{p+1} \, = \,
  \Res \Big( \msc{F}^{(\op{O})}(\bs{\alpha}_p , \bs{\de}_2, \bs{\be}_n) \, \dd \de_{1}
, \, \de_{12}=    \i \mf{u}_{k_{p+1}} \Big)_{\mid  \ov{\de}_{12} = 2 \alpha_{p+1} } \cdot \wh{ {\bs \Ga} }^{(k_{p+1})}_{p+1 p+2}   \;,
\end{equation}
where $\ov{\de}_{12}= \de_1 + \de_2 $ and the $k$-th breather projector reads:
\begin{equation}
\label{exp intertwiner}
{\bs \Ga}^{(k)}  \, =  \, \f{1}{  r_k }    \bs{\vp}^{(\mf{b}_k)} \qquad with \qquad
r_k = 2 \i (-1)^k \sqrt{\cot \Big( \tfrac{k\xi}{2} \Big) } \,  \pl{a=1}{k-1}\cot \Big( \tfrac{a\xi}{2} \Big) \;.
\end{equation}
Furthermore, the \, $\wh{}$ refers to the linear operators acting on the basis elements, \textit{c.f.} \eqref{ecriture decomposition FF total sur base de vecteurs}, as
\beq
\wh{\op{v}}^{\, \mf{b}_{k} }_{p} : \left\{ \ba{ccc }   \big(\mf{h}_{\e{tot}}^{*} \big)^{m+1}  & \tend & \big(\mf{h}_{\e{tot}}^{*} \big)^{m }   \vspace{2mm} \\
(\op{v}^{\bs{\tau}_{m+1}})^{\op{t}}  &   \mapsto     &
\big( \op{v}^{\tau_p}, \op{v}^{\, \mf{b}_{k}} \big)  \;
(\op{v}^{\tau_1})^{\op{t}} \otimes \cdots \otimes (\op{v}^{\tau_{p-1}} )^{\op{t}} \otimes (\op{v}^{\tau_{p+1}})^{\op{t}}\otimes
\cdots \otimes (\op{v}^{\tau_{m+1}})^{\op{t}}
\ea  \right.
\enq
and
\beq
 \wh{\bs{\vp}}^{(\mf{b}_k)}_{ p p+1 }  : \left\{ \ba{ccc }   \big(\mf{h}_{\e{tot}}^{*} \big)^{m+2}  & \tend & \big(\mf{h}_{\e{tot}}^{*} \big)^{m }   \vspace{2mm} \\
(\op{v}^{\bs{\tau}_{m+2}})^{\op{t}}  &   \mapsto     &
\big( \op{v}^{\tau_p}\otimes  \op{v}^{\tau_{p+1}} , \bs{\vp}^{(\mf{b}_k)} \big)  \;
(\op{v}^{\tau_1})^{\op{t}} \otimes \cdots \otimes (\op{v}^{\tau_{p-1}} )^{\op{t}} \otimes (\op{v}^{\tau_{p+2}})^{\op{t}}\otimes
\cdots \otimes (\op{v}^{\tau_{m+1}})^{\op{t}}
\ea  \right.
\enq
\end{itemize}

\end{bootstrapaxiomsbound}

Just as it is the case for the axioms appearing in the purely soliton/antisoliton sector, the origin of the inductive construction of
form factors involving breathers, \textit{viz}. \eqref{recursion breather 2} and \eqref{exp intertwiner},
be understood, heuristically, from the fusion procedure. We once again refer the reader to \cite{BabujianKarowskiExactFFSineGordonBootsstrapII} for more details.
One should also stress that axiom $V)$ fully determines the form factors in the breather sector in terms of their expression in the
soliton/antisoliton sector. Thus, starting from $\mc{F}^{(\op{O})}$, one fully constructs $\msc{F}^{(\op{O})}$ through axiom $V)$.
In fact, one has the
\begin{prop}
 \cite{BabujianKarowskiExactFFSineGordonBootsstrapII}
\label{Proposition Full vs Solitonic Axioms equivalence Karowski Babujian}
Axioms $i)-iv)$ and Axiom $V)$ are equivalent to Axioms $I)-V)$. Namely, under the previously introduced notations,
any solution $\mc{F}^{(\op{O})}$ to equations $i)-iv)$ which defines  $\msc{F}^{(\op{O})}$ through axiom $V)$
produces a solution to equations $I)-V)$. Reciprocally, any solution  $\msc{F}^{(\op{O})}$ to axioms $I)-V)$
produces a solution $\mc{F}^{(\op{O})}$ to  $i)-iv)$ by means of  \eqref{form factor coordinates}.
\end{prop}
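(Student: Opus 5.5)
The reverse direction is essentially immediate, so I would dispose of it first. Given $\msc{F}^{(\op{O})}$ solving I)--V), define $\mc{F}^{(\op{O})}$ by \eqref{form factor coordinates}, i.e.\ restrict all slots to the soliton/antisoliton components $\op{v}^{\mf{s}},\op{v}^{\ov{\mf{s}}}$. Since $\op{S}_{\e{tot}}$ is block diagonal \eqref{ecriture matrice S totale}, its restriction to $\mf{h}^{\otimes 2}$ is $\op{S}$. Moreover $\mf{D}$ restricts to the identity, $\mf{S}^z$ to $\sg^z$, and $\op{C}_{\e{tot}}$ to $\op{C}$. Hence I)--IV), projected onto the purely (anti)solitonic sector, reduce verbatim to i)--iv). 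Axiom V) is assumed on both sides.

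For the direct direction, start from $\mc{F}^{(\op{O})}$ solving i)--iv) and build $\msc{F}^{(\op{O})}$ by induction on the number $p$ of breathers through \eqref{recursion breather 2}. I would first check that this recursion is consistent, i.e.\ independent of the order in which the soliton pairs are fused. This follows from i): permuting two fused pairs uses the Yang--Baxter equation together with the fusion identity for $\bs{\vp}^{(\mf{b}_k)}_{23}(\bs{\vp}^{(\mf{b}_k)}_{23})^{\op{t}}$. Then I would verify I)--IV) inductively, assuming them for $p$ breathers and deducing them for $p+1$ by taking residues.

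For I), exchanging a breather with a neighbouring soliton or breather amounts to applying i) twice, once for each constituent at rapidities $\alpha\pm \i\mf{u}_k/2$. The fusion relation then turns $\op{S}_{13}\op{S}_{12}$ sandwiched by the projector into $S_{\ups,\mf{b}_k}$, and the relation $S_{\mf{b}_n\mf{b}_m}(\th)=S_{\ups\mf{b}_m}(\th+\i\mf{u}_n/2)S_{\ups\mf{b}_m}(\th-\i\mf{u}_n/2)$ handles breather--breather exchanges. The residue commutes with these operations since the $\op{S}$ factors are regular at the relevant points.

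III) is immediate because the residue condition $\de_{12}=\i\mf{u}_k$ is boost invariant.

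For II) with a breather in the first slot, shifting $\alpha\to\alpha+2\i\pi$ shifts both constituents by $2\i\pi$. Applying ii) twice produces the twist $\ex{2\i\pi\om_{\op{O}}\cdot 2+\i\pi\ga(\sg^z_1+\sg^z_2)}$. On $\bs{\vp}^{(\mf{b}_k)}$ the operator $\sg^z_1+\sg^z_2$ vanishes, and $\om_{\op{O}}$ only enters modulo integers once doubled, assuming $\om_{\op{O}}\in\tfrac12\mathbb{Z}$. This should reproduce the entries $0$ of $\mf{S}^z$ and of $\mf{D}$ on breathers. This is where I expect subtleties about whether $\ex{4\i\pi\om_{\op{O}}}=1$ is genuinely needed; I would check it against the normalisation of the twist.

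For IV), the kinematic pole in a breather variable arises from a pair of kinematic poles of the two constituents. Concretely, one iterates iv), first in the outer constituents $\a_1+\i\pi$ vs $\be_1$, then in the inner ones. The fusion relation converts the product of soliton $\op{S}$-matrices into $\op{S}^{(\mf{b}_k)}$. The projector $\bs{\Ga}^{(k)}$ contracted with $\op{C}\otimes\op{C}$ must yield $\op{v}^{\mf{b}_k}\otimes\op{v}^{\mf{b}_k}$ with coefficient $1$.

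I expect this IV) step to be the main obstacle. It requires tracking a double residue, the normalisation $r_k$ versus $R_k$ (note $r_k^2$ relates to $R_k$ through the identity for $\e{Res}\,\op{S}$ at $\i\mf{u}_k$), and a possible contribution from the bound-state pole of $\op{S}$ appearing inside the second application of iv). I would first work out the case $p=0$, i.e.\ one breather and $n$ solitons, explicitly, and then let the induction handle general $p$. The mixed pole structure (bound-state poles of $\msc{F}$ in breather variables) is then consistent by construction, which also gives V).
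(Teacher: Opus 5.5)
The paper does not prove this proposition. It is quoted from Babujian--Karowski, so your outline has to stand on its own: reduce each breather axiom to the soliton ones through the fusion identities, inducting on the number of breathers. The reverse implication, I), III) and the consistency of the recursion are fine as sketched. In II), the observation that $\sg^z_1+\sg^z_2$ annihilates $\bs{\vp}^{(\mf{b}_k)}$ correctly disposes of the $\ga$-twist.

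The factor $\ex{4\i\pi\om_{\op{O}}}$ left on the breather slot is genuine. Since $\mf{D}$ vanishes on breather entries, II) as written needs $\ex{4\i\pi\om_{\op{O}}}=1$, and no normalisation of the twist removes this. It can, however, be derived rather than assumed:
\begin{itemize}
\item iterating ii) over all $n$ rapidities gives $\mc{F}^{(\op{O})}(\bs{\be}_n+2\i\pi\ov{\bs{e}}_n)=\mc{F}^{(\op{O})}(\bs{\be}_n)\,\ex{2\i\pi n\om_{\op{O}}+\i\pi\ga\sum_j\sg^z_j}$;
\item iii) continued to $\th=2\i\pi$ gives instead the factor $\ex{2\i\pi\op{s}_{\op{O}}+\i\pi\ga q_{\op{O}}}$;
\item comparing a nonvanishing $n$-particle sector with the $(n+2)$-particle sector of the same charge that iv) attaches to it (whenever the right-hand side of iv) is nonzero) yields $\ex{4\i\pi\om_{\op{O}}}=1$.
\end{itemize}

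The genuine gap is IV) in breather slots. It is the core of the direct implication, you only announce it, and ``iterating iv)'' does not work as a plain iteration.

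Take constituents $\de_{1,2}=\a+\i\pi\pm\tfrac{\i}{2}\mf{u}_k$ and $\de_{3,4}=\be\pm\tfrac{\i}{2}\mf{u}_k$. Four polar hyperplanes of $\mc{F}^{(\op{O})}$ pass through $\a=\be$: $\de_{12}=\i\mf{u}_k$, $\de_{34}=\i\mf{u}_k$, $\de_{13}=\i\pi$ and $\de_{24}=\i\pi$. They are linearly dependent, since $\de_{13}-\de_{24}=\de_{12}-\de_{34}$. Hence bound-state and kinematic residues cannot be interchanged without a local analysis, and the double bound-state residue could a priori have a double pole in $\a-\be$.

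Concretely, to apply iv) to the pair $(\de_1,\de_3)$ you must move $\de_3$ past $\de_4$ into the last slot, i.e.\ through $\op{S}(\pm\i\mf{u}_k)$. The resulting bracket contains $\op{S}(\de_2-\de_3)$ sitting at its crossed-channel pole $\i k\xi=\i\pi-\i\mf{u}_k$. Next to it sits $\op{S}(\de_4-\de_3)=\op{S}(-\i\mf{u}_k)$, which annihilates $\bs{\vp}^{(\mf{b}_k)}$. The bound-state residue is therefore shared between $\mc{F}^{(\op{O})}$ and the $\op{S}$-factors of the bracket.

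What is missing is exactly this two-variable computation. You need to show that the would-be double pole cancels. You also need to show that the surviving simple pole has coefficient exactly $1$ on $\op{v}^{\mf{b}_k}\otimes\op{v}^{\mf{b}_k}$, with trivial twist and with the breather bracket $\{\prod_j(\op{S}_{\e{tot}})(\la_j-\be)-1\}$. This requires the crossed form of the fusion identity, the normalisation of $\bs{\Ga}^{(k)}$ (with the paper's constants, $\i R_k=(-1)^k r_k^2$), and again $\ex{4\i\pi\om_{\op{O}}}=1$. Until it is done, the direct implication is not established.
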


In view of the proposition above, it is in fact enough to only fix a "portion" of the full solution $\msc{F}^{(\op{O})}$ to axioms $I)-V)$,
as all of its other values are then deduced from Watson's $\op{S}$-symmetry \textit{viz.} Axiom $I)$.
First of all, owing to Proposition \ref{Proposition Full vs Solitonic Axioms equivalence Karowski Babujian},
its restriction to the soliton/antisoliton sector, \textit{c.f.} \eqref{form factor coordinates}, solves the axioms $i)-iv)$.
Further, by using Axiom $V)$, one may inductively obtain the value of $\msc{F}^{(\op{O})}( \bs{\ga}_{\ell} )$ on a subspace of the full vector space
$\big(\mf{h}_{\e{tot}}^*\big)^{\otimes \ell}$. More precisely, by using the convention introduced earlier in \eqref{ecriture decomposition FF total sur base de vecteurs},
consider the vectors
$\op{v}^{ \bs{\mf{b}}_{s}}\otimes \op{v}^{\bs{\veps}_n} \in \big(\mf{h}_{\e{tot}}^*\big)^{\otimes \ell}$ with $\ell=n+s$
and where $\bs{\mf{b}}_{s} \, = \, \big( \mf{b}_{k_1}, \dots, \mf{b}_{k_s} \big)$, for some $k_a \in \intn{1}{n_{\xi}}$,
while $\bs{\veps}_n = \big(\veps_1, \dots, \veps_n \big)\in \{\mf{s}, \ov{\mf{s}} \,  \}^n$.
This allows one to introduce the $\bs{\mf{b}}_{s}$ breather subsector form factors:
\beq
\Big( \msc{F}^{(\op{O})} \Big)_{1,\dots,n}^{  \bs{\mf{b}}_{s}  } \big( \bs{\a}_s, \bs{\be}_n \big) \; = \;
\sul{ \veps_{n} \in \{\mf{s}, \ov{\mf{s}} \}^n }{}
\Big( \msc{F}^{(\op{O})}  \big( \bs{\a}_s, \bs{\be}_n \big) \cdot \op{v}^{ \bs{\mf{b}}_{s}}\otimes \op{v}^{\bs{\veps}_n} \Big)
\cdot \big( \op{v}^{\bs{\veps}_n}  \big)^{ \mathtt{t} }
\label{definition bs Brether projection}
\enq
which are linear forms on $\mf{h}_1\otimes \cdots \otimes \mf{h}_{n}$ introduced above. These are rather directly computable from Axiom $V)$
by adding one breather rapidity at a time. Finally, the value of $\msc{F}^{(\op{O})}( \bs{\ga}_{\ell} )$ on the full space
is obtained from Axiom $I)$. Namely, given $\bs{\tau}_{\ell} \in \mf{X}^{\ell}$, \textit{c.f.} \eqref{ecriture decomposition FF total sur base de vecteurs},
there exists $n, s \in \mathbb{N}$, $\sg\in \mf{S}_{\ell}$ and $\bs{\mf{b}}_{s}, \bs{\veps}_n$ as above such that
\beq
\bs{\tau}_{\ell}^{\sg} \, = \, \big( \tau_{\sg(1)},\dots, \tau_{\sg(\ell)} \big) \, = \, \big(  \bs{\mf{b}}_{s}, \bs{\veps}_n \big)
\quad \e{and} \quad
 \bs{\ga}_{\ell}^{\sg}\, = \, \big( \bs{\a}_{s} , \bs{\be}_n \big) \;.
\enq
By Axiom $I)$, there exists a directly computable function $\mc{S}\big( \bs{\ga}_{\ell}^{\sg} \mid \bs{\ga}_{\ell} \big)$
such that
\beq
\Big[ \msc{F}^{(\op{O})}( \bs{\ga}_{\ell} ) \Big]^{ \bs{\tau}_{\ell} } \; = \;
\Big[ \msc{F}^{(\op{O})}( \bs{\ga}_{\ell}^{\sg} ) \Big]^{ \bs{\tau}_{\ell}^{\sg} }  \mc{S}\big( \bs{\ga}_{\ell}^{\sg} \mid \bs{\ga}_{\ell} \big)
 \; = \;
\Big[ \msc{F}^{(\op{O})}\big( \bs{\a}_{s} , \bs{\be}_n \big) \Big]^{ \bs{\mf{b}}_{s} , \bs{\veps}_n  }  \mc{S}\big( \bs{\ga}_{\ell}^{\sg} \mid \bs{\ga}_{\ell} \big) \;.
\enq
 The last coordinate appearing in the utmost \textit{rhs} is already contained in \eqref{definition bs Brether projection}.

 \vspace{3mm}

We also stress that the form factors involving breathers also exhibit other poles attached to other particle processes, such as for instance
$\mf{b}_m + \mf{s} \rightarrow \mf{b}_n + \mf{s}$. Just as $I)-IV)$ following from $i)-iv)$ and $V)$, these additional poles and the form of the residues
follow from $i)-iv)$ and $V)$. We do not state these additional poles here though and refer to \cite{SmirnovFormFactors} for more details.

\section{Main results: integral representations for the form factors}
\label{Section Main Results}

\subsection{Solution of the bootstrap equations in the soliton/antisoliton sector}
\label{SubSection FF for solitons antisolitons}

The explicit solutions to the bootstrap equations in the soliton/antisoliton sector utilise a fair amount of auxiliary special functions that we need to introduce first.
While this subsection solely provides a set of solutions to the form factor axioms $i)-iv)$, we stress that its form is \textit{independent} of the value of $\xi$,
\textit{viz}. whether one is in the repulsive $\xi > \pi$ or the attractive $\xi<\pi$ regimes.

\subsubsection{The off-shell Bethe vectors}

Off-shell Bethe vectors provide one with a convenient continuous family of vectors
on whose linear combinations one can conveniently decompose the form factors.
To construct these, one introduces the $\ga$-twisted monodromy matrix
\beq
\label{monodromy matrix}
\op{T}^{(\ga)}_{1,\dots,n;0}(\bs{\be}_n ; \la) \, = \, \op{S}_{1 0}^{(\ga)}(\be_1-\la)\cdots \op{S}_{n 0}^{(\ga)}(\be_n-\la) 
\; = \; \left( \ba{cc} \op{A}_{1,\dots,n}^{(\ga)}(\bs{\be}_n ; \la) & \op{B}_{1,\dots,n}^{(\ga)}(\bs{\be}_n ; \la)  \\ 
\op{C}_{1,\dots,n}^{(\ga)}(\bs{\be}_n ; \la) & \op{D}_{1,\dots,n}^{(\ga)}(\bs{\be}_n ; \la) 		\ea \right) \;. 
\enq
It has $\mf{h}_{\mf{q}}=\mf{h}_1\otimes \cdots \otimes \mf{h}_n$ as its quantum space and $\mf{h}_{0}$ as its auxiliary space, where $\mf{h}_a \simeq \Cx^2$.
The pseudo-vacuum covector $\Om_n\in \mf{h}_{q}^{*}$ takes the form
\beq
\Om_n \, = \, \underbrace{ (\op{v}^+)^{\mathtt{t}}\otimes \cdots \otimes (\op{v}^{+})^{\mathtt{t}} }_{ n-\e{times} } \;.  
\enq
It is a left Eigenvector of the operators $\op{A}_{1,\dots,n}^{(\ga)}(\bs{\be}_n ; \la)$ and $\op{D}_{1,\dots,n}^{(\ga)}(\bs{\be}_n ; \la)$:
\beq
\Om_n \cdot \op{A}_{1,\dots,n}^{(\ga)}(\bs{\be}_n ; \la) \; = \; \Om_n \pl{k=1}{n} a(\be_k-\la) \qquad \e{and} \qquad
\Om_n \cdot \op{D}_{1,\dots,n}^{(\ga)}(\bs{\be}_n ; \la) \; = \; \Om_n \pl{k=1}{n} b(\be_k-\la)  \;. 
\enq
The associated off-shell Bethe vectors takes the form
\beq
\Cx_{1,\dots,n}^{(\ga)}\big( \bs{\be}_n \, ; \, \bs{u}_m\big) \; = \;
\Om_n \cdot \op{C}_{1,\dots,n}^{(\ga)}(\bs{\be}_n ; u_1)  \cdots \op{C}_{1,\dots,n}^{(\ga)}(\bs{\be}_n ; u_m) \;.
\label{definition C vect de Bethe}
\enq
It will also appear convenient to introduce analogous quantities built out of the $\wt{\op{S}}$-matrix,
\textit{viz}. with the $a$ function being factored out, \textit{c.f.} \eqref{definition matrice S tildee}.
Namely, set 
\beq
\wt{\op{T}}_{1,\dots,n;0}^{\, (\ga)}(\bs{\be}_n ; \la) \, = \, \wt{\op{S}}_{1 0}^{\, (\ga)}(\be_1-\la)\cdots \wt{\op{S}}_{n 0}^{\, (\ga)}(\be_n-\la) 
\; = \; \left( \ba{cc} \wt{\op{A}}_{1,\dots,n}^{\, (\ga)}(\bs{\be}_n ; \la) & \wt{\op{B}}_{1,\dots,n}^{\, (\ga)}(\bs{\be}_n ; \la)  \\ 
\wt{\op{C}}_{1,\dots,n}^{\, (\ga)}(\bs{\be}_n ; \la) & \wt{\op{D}}_{1,\dots,n}^{\, (\ga)}(\bs{\be}_n ; \la) 		\ea \right) \;,
\enq
and let 
\beq
\wt{\Cx}_{1,\dots,n}^{  (\ga)}\big( \bs{\be}_n \, ; \, \bs{u}_m\big) \; = \; \Om_n \cdot \wt{\op{C}}_{1,\dots,n}^{\, (\ga)}(\bs{\be}_n ; u_1)  \cdots \wt{\op{C}}_{1,\dots,n}^{\, (\ga)}(\bs{\be}_n ; u_m) \;.  
\label{definition tilde C vect de Bethe}
\enq
It is easy to check that $u_k\mapsto \wt{\Cx}_{1,\dots,n}^{   (\ga)}\big( \bs{\be}_n \, ; \, \bs{u}_m\big)$ is a covector valued meromorphic function on $\Cx$ and which, for generic $\bs{\be}_n$, has only simple poles
located at the points 
\beq
u_k\, = \, \be_a-\i \pi + \i k  \xi \quad \; , \quad k \in \mathbb{Z} \;. 
\enq
These poles stem from the denominators in $\wt{b}(\th)$ and $\wt{c}^{\, (\pm \ga)}(\th)$, \textit{c.f.} \eqref{definition tilde b et c}. We also need to introduce the operator 
\beq
\op{O}^{(\ga)}_{1, \dots, n}(\bs{\be}_n) \, = \, \ex{ \f{ \ga}{2} \bs{\be}_n\cdot \bs{\sg}^z_n } \qquad \e{with} \qquad \bs{\sg}^z_n = \big( \sg_1^z,\dots,\sg_n^z \big) \;. 
\label{definition operateur spin twist}
\enq

\subsubsection{The auxiliary functions}

To start with, one introduces the minimal form factor $\op{F}$, which corresponds to the unique, meromorphic on $\Cx$, solution to the equation 
\beq
\op{F}(\th) = a(\th) \op{F}(-\th) = \op{F}(2\i\pi - \th)
\label{ecriture eqn fnelle FF minimal}
\enq
which 
\begin{itemize}
\item  grows as $ \mp \tfrac{\i}{2} \ex{ \pm \frac{\th}{2}}$ as $\Re(\th) \tend \pm \infty$; 
\item has no poles or zeroes in the fundamental
strip $\Big\{ z \in \Cx \; : \; 0< \Im(z) < 2 \pi \Big\}$, has simple zeroes at $\th=0, 2\i\pi $;
\item is normalised so that $\op{F}(\i\pi)=1$. 
\end{itemize}
By following the procedure pioneered in \cite{KarowskiWeiszFormFactorsFromSymetryAndSMatrices}, one readily gets that 
\beq
\label{def F function}
\op{F}(\th) \; = \; - \i \sinh \big( \tfrac{\th}{2}  \big) \cdot \op{f}_{\e{min}}(\th) 
\enq
where 
\beq
\label{int rep F}
\op{f}_{\e{min}}(\th) \; = \; \exp\Bigg\{  -\Int{0}{+\infty} \f{\dd t }{t} \f{ \sinh\big[\tfrac{\pi -\xi }{2 } t\big]   \sinh^2\big[\tfrac{\pi t }{2 }(1-\tfrac{\th}{\i\pi}) \big]  }
{ \cosh\big[\tfrac{\pi t }{2 } \big]   \sinh\big[\tfrac{\xi t }{2 } \big]  \sinh\big[ \pi t \big] } \Bigg\} \;.
\enq
Note that this integral representation is defined in the range $-\e{min}(\pi,\xi) \, < \,  \Im(\th)  \, < \, 2\pi + \e{min}(\pi,\xi)$. \\
The two other auxiliary functions of interest are built out of $\op{F}$:
\beq
\phi(\th) \, = \, \f{ 1 }{ \op{F}(\th)\op{F}(\th+\i\pi) }  \quad \e{and} \quad \tau(\th) \, = \, \f{1}{\phi(\th)\phi(-\th)} \, = \, 
\op{F}(\th)\op{F}(-\th)\op{F}(\i\pi+\th)\op{F}(\i\pi-\th) \;. 
\label{definition fonction phi}
\enq
Then, a direct calculation yields that 
\beqa
\phi(\th) & = & \f{ \i }{ \sinh(\th) \, \op{F}^2\big( \i \tfrac{\pi}{2} \big) } \cdot
\exp\Bigg\{ 2 \Int{ 0 }{ + \infty } \dd t \f{  \sinh\big[ \tfrac{\pi-\xi}{2} t \big]   }{ t\sinh\big[ \tfrac{\xi t }{2} \big] \sinh[\pi t] } 
\sinh^2\Big[ \tfrac{\pi t }{ 4 } \big( 1 - \tfrac{2 \th }{ \i \pi } \big)  \Big] \Bigg\}  \\
& = & \f{ \i }{ \sinh(\th) \, \op{F}^2\big( \i \tfrac{\pi}{2} \big) \, \varpi^2\big(\i \tfrac{\xi-\pi}{2} \big)} \cdot
\varpi \Bigg( \ba{c}   \th+\i\tfrac{\xi}{2}-\i \pi \\ \th-\i\tfrac{\xi}{2}   \ea \Bigg) \;,  
\label{expression explicite pour phi}
\eeqa
and 
\beq
\tau(\th) \, = \, \sinh \th \, \sinh \big[ \tfrac{\pi \th }{\xi} \big]  \cdot \op{F}^2\big( \i \tfrac{\pi}{2} \big)\,
\op{F}^2\big( - \i \tfrac{\pi}{2} \big) \, \varpi^2\Big(\i \tfrac{\xi + \pi}{2} \, , \,  \i \tfrac{\xi - 3 \pi}{2} \Big) \;. 
\enq
These functions satisfy the below finite difference equations.
\beq
\phi(\th+2\i\pi) \, = \, -\phi(\th) \f{ \sinh\big[ \tfrac{\pi \th }{ \xi }\big]  }{  \sinh\big[ \tfrac{\pi }{ \xi } (\th +\i\pi )\big]   } 
\quad \e{and} \quad 
\phi(\th+\i\xi) \, = \, -\i \phi(\th) \f{ \sinh\big[ \tfrac{ \th }{ 2 }\big]  }{  \cosh\big[ \tfrac{ 1 }{ 2 } (\th +\i\xi )\big]   } \;
\enq
in what concerns $\phi$, while regarding to $\tau$
\beq
\tau(\th \pm 2\i\pi ) \, = \, \tau(\th) \cdot \f{   \sinh\big[ \tfrac{\pi }{ \xi } (\th \pm 2 \i \pi ) \big]    }{  \sinh\big[ \tfrac{\pi \th }{ \xi }\big]   } \;.
\enq
It will also be useful to observe that 
\beq
\th \mapsto a(\th) \cdot \phi(\th) \; = \;  \f{ - \i }{ \sinh(\th) \, \op{F}^2\big( \i \tfrac{\pi}{2} \big) \, \varpi^2\big(\i \tfrac{\xi-\pi}{2} \big)} 
\varpi \Bigg( \ba{c}   \th+\i\tfrac{\xi}{2} \\  \th + \i \pi - \i\tfrac{\xi}{2}    \ea \Bigg)
\label{definition fct a phi}
\enq
has only simple poles, and these are located at 
\begin{itemize}

\item $\th = \i p \xi + 2\i\pi \ell  $, $p, \ell \in \mathbb{N}$;

\item $\th = -\i p \xi - \i\pi (2\ell+1)  $, $p, \ell \in \mathbb{N}$.

\end{itemize}
This can be seen as follows. The simple poles of the quantum dilogarithm are located at 
\beq
\th=-\i\xi (p+1)-\i\pi (2\ell + 1) \quad \e{and} \quad \th = \i\xi (p+1) + 2\i\pi \ell \quad \e{with} \quad  p, \ell \geq 0
\enq
while the hyperbolic sine in the denominator has simple poles at $\th = \i\pi k$, $k\in \mathbb{Z}$. Note that the quantum dilogarithm also has simple zeroes at
\beq
\th= \i p \xi + \i\pi (2\ell + 1) \quad \e{and} \quad \th =  - \i\xi p - 2\i\pi (\ell + 1) \quad \e{with} \quad  p, \ell \geq 0\;. 
\enq
Thus the zeroes of $\th \mapsto \sinh(\th)$ at $\i\pi (2 \ell +1)$ and $-2\i\pi (\ell + 1)$, $\ell \geq 0$ are compensated by the zeroes of the ratios
of dilogarithms. Putting the remaining poles together yields the claim.\\
The function $a\phi$ will play an important role in the computation of  particular cases of form factors of the field exponential and related operators,
\textit{c.f.} Section \ref{Section FF exp champ}.
Several of its properties are detailed in Appendix \ref{Appendice Special Functions}, notably its Fourier transform is given explicitly.

Finally, it will appear handy to introduce the shorthand notation
\beq
g \big(\bs{\be}_n;\bs{u}_{m} \big)  \; = \;
\pl{k<\ell}{n} \op{F}(\be_{k}-\be_{\ell}) \cdot \pl{k=1}{n} \pl{\ell=1}{m} \big( a \cdot \phi)(\be_k-u_{\ell}) \cdot \pl{k<\ell}{m}\tau(u_k-u_{\ell}) \;,
\label{definition fonction g}
\enq
as well as
\beq
\wt{g} \big(\bs{\be}_n;\bs{u}_{m} \big)  \; = \;
\pl{k<\ell}{n} \op{F}(\be_{k}-\be_{\ell}) \cdot \pl{k=1}{n} \pl{\ell=1}{m}  \phi (\be_k-u_{\ell}) \cdot \pl{k<\ell}{m}\tau(u_k-u_{\ell}) \;.
\label{definition fonction tilde g}
\enq

\subsubsection{The integration curve and $p$-function problem}
\label{subsubsection definition of BA integration curve}

Given rapidities $\bs{\be}_n\in \Cx^n$ with generic coordinates, the contour $\msc{C}_{ \bs{\be}_n }$  refers to any curve which satisfies the properties
\begin{itemize}

\item $\msc{C}_{ \bs{\be}_n }$ connects $-\infty$ to $+\infty$;

\item  the points $\be_a-\i p \xi - 2 \i \pi \ell $, $\be_a-\i\pi -\i (p +1) \xi$, $p, \ell \in \mathbb{N}$,  
have index $-1$ in respect to $\msc{C}_{ \bs{\be}_n }$, \textit{viz}. they are located "below" of $\msc{C}_{ \bs{\be}_n }$;

\item  the points $\be_a+\i p \xi + (2 \ell + 1) \i \pi$, $\be_a-\i\pi +\i p \xi$,  $p, \ell \in \mathbb{N}$, 
have index $1$ in respect to $\msc{C}_{ \bs{\be}_n }$, \textit{viz}. they are located "above" of $\msc{C}_{ \bs{\be}_n }$.

\end{itemize}
One should observe that any so defined contour  $\msc{C}_{ \bs{\be}_n }$ separates the poles of
the function $g \big(\bs{\be}_n;\bs{u}_{m} \big) \wt{\Cx}_{1,\dots,n}^{\, (\ga)}\big(\bs{\be}_n;\bs{u}_m \big)$
\beqa
&& u_b = \be_a-\i\pi +\i k \xi \quad k \in \mathbb{Z} \label{series de pole covecteur des C} \; ,\\
&& u_b = \be_a +\i p \xi +\i\pi (2 \ell + 1) \; , \quad u_b = \be_a - \i p \xi  - 2 \i\pi \ell \quad p, \ell \in \mathbb{N} \;,
\label{series de pole a phi}
\eeqa
with $(a,b) \in \intn{1}{n}\times \intn{1}{m}$, in two disjoint groups.

We do stress that contours $\msc{C}_{ \bs{\be}_n }$ are well-defined for generic values of the coupling constant $\xi$, even if their precise form is
simpler in the repulsive sector $\xi > \pi$. This, in particular, makes the solution presented in the next subsubsection
to be valid irrespectively of the value of $\xi$, as long as it remains generic.

To close the definition section, we introduce the $p$-function problem. The family of $p$-functions, $p: \Cx^n \times \Cx^m \tend \Cx$ with $n,m \in \mathbb{N}$
solves the $p$-function problem if all of the below equations are fullfilled:
\begin{itemize}
\item[a)] $ p \big(\bs{\be}_n;\bs{u}_m \big)$ is a symmetric function of $\bs{\be}_n$ and $\bs{u}_{m}$;
\item[b)]  for any $k$, $u_k \mapsto p \big(\bs{\be}_n;\bs{u}_m \big)$ is a polynomial in the variables $\ex{\pm u_k}$;
\item[c)] $ p \big(\bs{\be}_n + \th \, \ov{\bs{e}}_n;\bs{u}_m + \th \, \ov{\bs{e}}_m \big)\, = \,  p \big(\bs{\be}_n ,\bs{u}_m \big) \ex{\th \op{s} } $ with $\op{s} \in \R$;
\item[d)] $p \big(\bs{\be}_n + 2\i \pi \bs{e}_k ; \bs{u}_m  \big) \, = \,  p \big(\bs{\be}_n  ; \bs{u}_m  \big) \cdot \ex{2\i\pi \om } $ with $\om \in \R$;
\item[e)] there exists two symmetric functions $\vp^{(\pm)}$ of $n-1$ variables such that given
\beq
\varkappa \; = \;  \f{  2\i\xi  \ex{\f{\i\pi \ga}{2} } }{  \Big\{ \op{F}(\i\tfrac{\pi}{2}\big) \cdot \varpi\big(\i \tfrac{\xi-\pi}{2} \big)  \Big\}^4   } \;,
\label{definition cste varkappa}
\enq
it holds
\beqa
p \big( (\be_n+\i\pi, \bs{\be}_n^{\prime} ) ; (\bs{u}_{m-1},\be_n)  \big) & = & \f{ \i }{m \varkappa }  p \big( \bs{\be}_{n-1}^{\prime} ; \bs{u}_{m-1}  \big) \; + \; \de_{n,2m} \vp^{(+)}\big( \bs{\be}_{n-1}^{\prime} \big) \\
p \big( ( \bs{\be}_{n-1}, \be_1-\i\pi ) ; (\bs{u}_{m-1},\be_1)  \big) & = & \f{  \i   }{m \varkappa }  p \big( \bs{\be}_{n-1}^{\prime} ; \bs{u}_{m-1}  \big) \cdot  \ex{2\i\pi \om }
\; + \; \de_{n,2m} \vp^{(-)}\big( \bs{\be}_{n-1}^{\prime} \big)
\eeqa
where $\bs{\be}^{\prime}_n\, = \, \big( \be_2, \dots, \be_{n} \big)$.

\end{itemize}

\subsubsection{A Bethe Ansatz family of solutions to axioms $i)-iv)$}
\label{subsubsection solution bootstrap}

\begin{theorem}  
Let
\beq
q=n-2m
\enq
be fixed and $p$ be a family of functions solving the $p$-function problem of Subsubsection \ref{subsubsection definition of BA integration curve}
associated with spin $\op{s}$ and scalar mutual locality index $\om$.  Then, the covector valued functions
\beq
\label{general soliton form factor}
\bs{\Psi}_{1,\dots,n}\big[ p\big]\big( \bs{\be}_n \big) \, = \hspace{-2mm} \Int{ (\msc{C}_{\bs{\be}_n})^m }{} \hspace{-2mm} \dd^{m} u \;
\wt{\Cx}_{1,\dots,n}^{\, (\ga)}\big(\bs{\be}_n;\bs{u}_m \big) \cdot \op{O}_{1,\dots,n}^{(\ga)}(\bs{\be}_n)  \cdot \big( g \cdot p \big)\big(\bs{\be}_n;\bs{u}_{m} \big)  
\enq
solve the form factor axioms $\mathrm{i)}-\mathrm{iv)}$ associated with a spin $\op{s}$, scalar mutual locality index $\om$ and vector  mutual locality index $\ga$.

\end{theorem}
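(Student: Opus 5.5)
We verify the four axioms separately, following the $\mc{K}$-transform strategy of Babujian--Karowski adapted to the $\ga$-twisted monodromy matrix.

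\textbf{Axiom i).} The $\op{S}$-matrix symmetry is checked at the level of the integrand. The Yang--Baxter equation \eqref{equation de Yang-Baxter pour S twiste} for $\op{S}^{(\ga)}$ gives the intertwining relation
\[
\op{P}_{i i+1}\op{S}^{(\ga)}_{i i+1}(\be_{i i+1})\,\op{T}^{(\ga)}(\bs{\be}_n;\la)=\op{T}^{(\ga)}(\bs{\be}_n^{(i+1 i)};\la)\,\op{P}_{i i+1}\op{S}^{(\ga)}_{i i+1}(\be_{i i+1}),
\]
together with invariance of $\Om_n$ under $\op{S}_{i i+1}$ up to the factor $a(\be_{i i+1})$. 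Together these show that $\wt{\Cx}^{(\ga)}(\bs{\be}_n;\bs{u}_m)$ transforms by $\op{P}_{ii+1}\wt{\op{S}}^{(\ga)}_{ii+1}$. One then commutes the twist $\op{O}^{(\ga)}$ through, using $\op{S}^{(\ga)}_{ij}(\th)=\ex{\f{\ga\th}{2}\sg^z_i}\op{S}_{ij}\ex{-\f{\ga\th}{2}\sg^z_i}$ and the fact that $\op{S}$ commutes with $\sg^z_i+\sg^z_j$. This converts $\wt{\op{S}}^{(\ga)}$ into the untwisted $\wt{\op{S}}$. The missing factor $a(\be_{ii+1})$ is supplied by $\op{F}(\be_{ii+1})=a(\be_{ii+1})\op{F}(\be_{i+1 i})$ in $g$, and the remaining parts of $g\cdot p$ are symmetric. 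The contour is symmetric in $\bs{\be}_n$.

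\textbf{Axiom iii).} The Lorentz boost is handled by shifting $u_k\to u_k+\th$. The contour is translation covariant, and $\wt{\Cx}^{(\ga)}$ depends only on differences except for the $\ex{\pm\ga(\be_a-u_k)}$ factors in $\wt{c}^{(\pm\ga)}$. Charge counting in each $\op{C}$-insertion, combined with $\op{O}^{(\ga)}$, produces the global factor $\ex{\f{\ga}{2}q\th}$ with $q=n-2m$. Property c) of $p$ gives $\ex{\th\op{s}}$.

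\textbf{Axiom ii).} For Watson's equation, we use $\op{F}(\th)=\op{F}(2\i\pi-\th)$ and the quasi-periodicity of $\phi$ under $2\i\pi$, together with the $2\i\pi$-behaviour of the entries $\wt{b},\wt{c}^{(\pm\ga)}$ of $\wt{\op{S}}_{10}(\be_1-u)$. These show that shifting $\be_1\to\be_1+2\i\pi$ turns the first factor of the monodromy matrix into something that, after conjugation, is equivalent to moving the first site to the end. The twist $\op{O}^{(\ga)}$ produces $\ex{\i\pi\ga\sg^z_1}$, and d) gives $\ex{2\i\pi\om}$.

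The delicate point in ii) is that the contour $\msc{C}_{\bs{\be}_n}$ is not invariant when $\be_1$ is shifted. After the shift, some poles of $(a\phi)(\be_1-u)$ and of the $\op{C}$-operators cross the contour. We will show, using the explicit pole lists \eqref{series de pole covecteur des C}--\eqref{series de pole a phi} and the zeroes of the quantum dilogarithm, that the relevant residues cancel pairwise, or are killed by zeroes of $\tau$ or of $g$. Alternatively, the deformation can be arranged so that no poles are crossed. This is where the choice of contour and of $a\phi$ rather than $\phi$ is used.

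\textbf{Axiom iv).} The kinematic pole is the main obstacle. When $\be_1\to\be_{n}+\i\pi$ (with $\a=\be_1$), the contour becomes pinched between the poles at $u=\be_1-\i\pi$ (from the $\op{C}$-operator row, series \eqref{series de pole covecteur des C} with $k=0$) and $u=\be_n$ (from $a\phi$). Hence the residue comes only from the pinch of one integration variable, $u_m\to\be_n$, which by symmetry gives a factor $m$.

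We would evaluate this pinch contribution as follows.

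\begin{itemize}
\item[1.] Compute the residue of $\wt{\op{C}}^{(\ga)}(\bs{\be}_n;u)$ at $u=\be_1-\i\pi$. The $R$-matrix $\wt{\op{S}}_{10}$ degenerates to a rank-one operator proportional to $\op{C}$, so the monodromy splits as $\op{C}_{1n}\otimes$(monodromy on sites $2,\dots,n-1$) times the product of $\op{S}_{k n}$. This yields the two terms $\prod\op{S}_{kn}$ and $\ex{2\i\pi\om+\i\pi\ga\sg^z}$ via the two endpoints of the pinch.
\item[2.] Use the reduction identities $\phi(\th)\phi(\th+\i\pi)\op{F}(\th)\op{F}(\th+\i\pi)=1$ and $\tau=1/(\phi\phi)$ to collapse $g(\bs{\be}_n;\bs{u}_m)$ at $u_m=\be_n$, $\be_1=\be_n+\i\pi$ to $g(\bs{\be}_{n-2}';\bs{u}_{m-1})$ times explicit constants, which combine into $\varkappa$.
\item[3.] Invoke property e) of $p$ to replace $p$ by $\f{\i}{m\varkappa}p(\bs{\be}'_{n-1};\bs{u}_{m-1})$. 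The $\de_{n,2m}\vp^{(\pm)}$ terms integrate to zero, because in the neutral sector their coefficient is a $\op{C}$-row applied to an integrand independent of the remaining structure that vanishes by the difference of the two terms. This last point is one we would check carefully.
\end{itemize}

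Matching the constants, in particular the $\ex{\i\pi\ga/2}$ in $\varkappa$ against the twist, is expected to be the most calculation-heavy step.
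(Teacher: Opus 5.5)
Your treatment of i) and iii) is essentially the paper's. One slip in iii): $\widetilde{c}^{\,(\pm\gamma)}(\beta_a-u_k)$ also depends only on $\beta_a-u_k$. The factor $e^{\frac{\gamma}{2}q\theta}$ therefore comes entirely from $\op{O}^{(\gamma)}(\bs{\beta}_n+\theta\bar{\bs{e}}_n)=e^{\frac{\gamma\theta}{2}\Sigma^z}\op{O}^{(\gamma)}(\bs{\beta}_n)$ together with the charge count. Your arguments for ii) and iv), however, have genuine gaps.

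\textbf{Watson's equation ii).} There is no pointwise identity by which $\beta_1\mapsto\beta_1+2\pi\mathrm{i}$ in the integrand amounts to moving site $1$ to the end. For generic $\xi$, $\widetilde{b}$ and $\widetilde{c}^{\,(\pm\gamma)}$ are not $2\pi\mathrm{i}$-periodic, so the shifted off-shell Bethe vector is not a conjugate of the unshifted one; the relation only holds after integration. The real difficulty is therefore not contour bookkeeping (in the paper every deformation crosses no poles and no residues arise). It is producing the cyclic permutation at all.

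The missing idea is the off-shell Bethe Ansatz argument:
\begin{itemize}
\item Write $\op{S}_{21}(\beta_{21})\cdots\op{S}_{n1}(\beta_{n1})=-\op{t}(\bs{\beta}_n;\beta_1)$, the transfer matrix at the inhomogeneity $\beta_1$, and push it through $\op{O}^{(\gamma)}$.
\item Compute the action of $\op{A}^{(\gamma)}(\bs{\beta}_n;\beta_1)+\op{D}^{(\gamma)}(\bs{\beta}_n;\beta_1)$ on the Bethe vector, using the exchange relations from the Yang--Baxter equation with a $2\pi\mathrm{i}$-shifted argument.
\item The wanted term is exactly $\bs{\Psi}[p](\bs{\beta}_n+2\pi\mathrm{i}\bs{e}_1)\,e^{-2\pi\mathrm{i}\omega-\mathrm{i}\pi\gamma\sigma^z_1}$. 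This is where $\op{F}(\theta)=\op{F}(2\pi\mathrm{i}-\theta)$, $(a\phi)(\lambda+2\pi\mathrm{i})=(a\phi)(\lambda)/\widetilde{b}(\lambda+2\pi\mathrm{i})$ and d) enter.
\item The two unwanted terms, in which one $u_p$ is replaced by $\beta_1$, cancel each other. Their coefficients satisfy $\Phi^{(3)}(u_m-2\pi\mathrm{i})=-\Phi^{(2)}(u_m)$, which follows from $c^{(\gamma)}(\theta)/b(\theta)=-c^{(-\gamma)}(-\theta)/b(-\theta)$, the $2\pi\mathrm{i}$ quasi-periodicity of $\tau$ and $a\phi$, and b). The $u_m$-contour can then be translated by $2\pi\mathrm{i}$ without crossing poles.
\end{itemize}

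\textbf{Kinematic pole iv).} A pinch between two simple poles yields a single singular contribution, not one term per ``endpoint''. Moreover, your pinch is only one of three that occur simultaneously as $\beta_1\to\beta_n+\mathrm{i}\pi$:
\begin{itemize}
\item $u=\beta_1-\mathrm{i}\pi$ (above, from $\widetilde{\op{C}}$) against $u=\beta_n$ (below, from $a\phi$);
\item $u=\beta_1$ (below, from $a\phi$) against $u=\beta_n+\mathrm{i}\pi$ (above, from $a\phi$);
\item $u=\beta_1-2\pi\mathrm{i}$ (below, from $a\phi$) against $u=\beta_n-\mathrm{i}\pi$ (above, from $\widetilde{\op{C}}$).
\end{itemize}
Your pinch, via the rank-one residue of $\widetilde{\op{S}}_{10}$ at $\mathrm{i}\pi$ and the first line of e), gives only $\op{C}_{1n}\otimes\bs{\Psi}[p](\bs{\beta}'_{n-1})\,\op{S}_{2n}\cdots\op{S}_{n-1\,n}$. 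The term $-\op{C}_{1n}\otimes\bs{\Psi}[p](\bs{\beta}'_{n-1})\,e^{2\pi\mathrm{i}\omega+\mathrm{i}\pi\gamma\sigma^z_1}$ comes from the other two pinches together. It requires:
\begin{itemize}
\item the \emph{second} line of e), which carries the factor $e^{2\pi\mathrm{i}\omega}$ that has no source in your computation;
\item property b), to bring the residue at $u_m=\beta_1-2\pi\mathrm{i}$ to the same form;
\item the explicit action of $\prod_p\widetilde{\op{C}}^{(\gamma)}(\bs{\beta}_n;u_p)$ on $(\op{v}^+_1\cdots\op{v}^+_{n-1}\op{v}^-_n)^{\mathtt{t}}$ and on $(\op{v}^-_1\op{v}^+_2\cdots\op{v}^+_n)^{\mathtt{t}}$;
\item the vanishing of a leftover term $\msc{W}$, which is a difference of integrals over $\msc{C}'_{\bs{\beta}_n}$ and $\msc{C}'_{\bs{\beta}_n}+2\pi\mathrm{i}$ enclosing no poles.
\end{itemize}

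Lastly, the $\delta_{n,2m}\varphi^{(\pm)}$ pieces cannot cancel each other, since $\varphi^{(+)}$ and $\varphi^{(-)}$ are independent. The paper disposes of each one separately via Lemma~\ref{Lemme annulation K transformee}, i.e.\ $\bs{\Psi}[1]=0$ for $n=2m$. Be aware that this lemma is proved for the untwisted Bethe vector. Since $\widetilde{\op{C}}^{(\gamma)}(\bs{\beta}_n;\lambda+\mathrm{i}\xi)=-e^{-\mathrm{i}\gamma\xi}\,\widetilde{\op{C}}^{(\gamma)}(\bs{\beta}_n;\lambda)$, its $\mathrm{i}\xi$-shift argument fails for $\gamma\neq0$. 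Indeed the twisted $\bs{\Psi}[1]$ is a multiple of $\bs{\Psi}[p^{(\gamma)}]$, which is in general non-zero (see the two-particle formulas of Section~\ref{Section FF exp champ}). So for $\gamma\neq0$ this step genuinely requires $\varphi^{(\pm)}\equiv0$, which does hold for $p^{(\gamma)}$.
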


The proof of the theorem, which is based on the concept of off-shell Bethe vector
\cite{BabujianKarowskiZapletalSomeDvpmtofOffShellBA,ReshetikhinOffShellBAforKZ,SmirnovFormFactors}, is gathered in Appendix
\ref{Appendice Preuve des eqns de FF}. The Watson $\op{S}$-symmetry equation is proven in Appendix \ref{Appendice SS section eqn symmetrie i},
the Watson monodromy equation in Appendix \ref{Appendice SS section eqn Watson ii}, the Lorentz invariant equation in Appendix
\ref{Appendice SS section eqn inv Lorentz iii} and the kinematic pole axiom in Appendix \ref{subsection axiom iv}.

There exists an alternative, complementary, representation for the solution of the bootstrap axioms which turns out to be handy when computing two-point functions.
To state it, we first introduce the dual pseudo-vacuum vector
\beq
\ov{\Om}_n \, = \, \underbrace{ (\op{v}^{-})^{\mathtt{t}}\otimes \cdots \otimes (\op{v}^{-})^{\mathtt{t}} }_{ n-\e{times} } \;,  
\enq
and the associated off-shell Bethe vector
\beq
\wt{\Bx}_{1,\dots,n}^{  (\ga)}\big( \bs{\be}_n \, ; \, \bs{u}_m\big) \; = \; \ov{\Om}_n \cdot \wt{\op{B}}_{1,\dots,n}^{\, (\ga)}(\bs{\be}_n ; u_1)
\cdots \wt{\op{B}}_{1,\dots,n}^{\, (\ga)}(\bs{\be}_n ; u_m) \;.
\label{definition tilde B vect de Bethe}
\enq

\begin{cor}

Given  fixed $q=n-2m$, the covector valued functions 
\beq
\ov{\bs{\Psi}}_{1,\dots,n}\big[ \ov{p}  \big]\big( \bs{\be}_n \big) \, = \, \Int{(\msc{C}_{\bs{\be}_n})^m }{} \dd^{m} u \; 
\wt{\Bx}_{1,\dots,n}^{\, (\ga)}\big(\bs{\be}_n;\bs{u}_m \big) \cdot \op{O}_{1,\dots,n}^{(\ga)}(\bs{\be}_n)  \cdot \big( g \cdot \ov{p} \big)\big(\bs{\be}_n;\bs{u}_{m} \big)  
\label{definition solution de base bar Psi avec op B}
\enq
satisfy the form factor axioms $\mathrm{i)}-\mathrm{iv)}$ associated with a spin $\op{s}$, mutual locality index $\om$ and vector index $\ga$,
provided that the family of functions $\ov{p}$ satisfies the $p$-function problem equations
$\mathrm{a)-e)}$ associated with a spin $\op{s}$, mutual locality index $\om$ and vector index $-\ga$, \textit{i.e.}
in which one proceeds to the replacement $\ga \mapsto -\ga$ in the value of the constant $\varkappa$.

Moreover, for any  $\gamma$-independent family of functions $p$, not necessarily solving $\mathrm{a)-e)}$, one has the relation
\beq
\ov{\bs{\Psi}}_{1,\dots,n}\big[ p  \big]\big( \bs{\be}_n \big) \, = \,
\bs{\Psi}_{1,\dots,n}\big[ p \big]\big( \bs{\be}_n \big)_{\mid \ga \hookrightarrow - \ga}  \cdot  \Sg_{1,\dots,n}^{x} \;.
\enq

\end{cor}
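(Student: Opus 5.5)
The plan is to reduce the corollary to the preceding theorem using the charge-conjugation symmetry $\op{v}^{+}\leftrightarrow \op{v}^{-}$. Let $\Sg^{x}_{1,\dots,n}=\sg^{x}_1\cdots\sg^{x}_n$. I will first prove the second claim, the identity between $\ov{\bs{\Psi}}$ and $\bs{\Psi}$. The first claim then follows by transporting the axioms through right multiplication by $\Sg^x_{1,\dots,n}$.

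\textbf{Step 1: conjugation of the twisted $\op{S}$-matrix.} Since $\wt{\op{S}}(\th)$ has equal diagonal entries $1,\wt b,\wt b,1$ and a symmetric off-diagonal block, it commutes with $\sg^x\otimes\sg^x$. Moreover $\sg^x\sg^z\sg^x=-\sg^z$. Combining these with the definition of $\op{S}^{(\ga)}$ as $\ex{\f{\ga\th}{2}\sg^z\otimes \e{id}}\op{S}(\th)\ex{-\f{\ga\th}{2}\sg^z\otimes \e{id}}$ gives
\[
(\sg^x\otimes\sg^x)\,\wt{\op{S}}^{(\ga)}(\th)\,(\sg^x\otimes\sg^x)=\wt{\op{S}}^{(-\ga)}(\th).
\]
Taking the ordered product over the quantum spaces $1,\dots,n$, with $\sg^x_0$ acting in the auxiliary space, yields
\[
\Sg^x_{1,\dots,n}\,\sg^x_0\,\wt{\op{T}}^{(\ga)}_{1,\dots,n;0}\,\sg^x_0\,\Sg^x_{1,\dots,n}=\wt{\op{T}}^{(-\ga)}_{1,\dots,n;0}.
\]
Conjugation by $\sg^x_0$ swaps $\op{A}\leftrightarrow\op{D}$ and $\op{B}\leftrightarrow\op{C}$. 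Hence $\wt{\op{B}}^{(\ga)}=\Sg^x\,\wt{\op{C}}^{(-\ga)}\,\Sg^x$.

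\textbf{Step 2: the Bethe vectors and remaining factors.} Using $\ov{\Om}_n=\Om_n\Sg^x$ and $(\Sg^x)^2=1$, the product of the $\wt{\op{B}}$'s telescopes:
\[
\wt{\Bx}^{(\ga)}(\bs\be_n;\bs u_m)=\wt{\Cx}^{(-\ga)}(\bs\be_n;\bs u_m)\,\Sg^x .
\]
In the same way, $\Sg^x\op{O}^{(\ga)}(\bs\be_n)\Sg^x=\op{O}^{(-\ga)}(\bs\be_n)$, so
\[
\wt{\Bx}^{(\ga)}\,\op{O}^{(\ga)}=\wt{\Cx}^{(-\ga)}\,\op{O}^{(-\ga)}\,\Sg^x .
\]
The remaining ingredients do not involve $\ga$: the function $g$ is built from $\op{F}$, $a\phi$ and $\tau$, and the contour $\msc{C}_{\bs\be_n}$ depends only on pole positions that are independent of $\ga$. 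Integrating then gives $\ov{\bs\Psi}[p]=\bs\Psi[p]_{\mid\ga\hookrightarrow-\ga}\,\Sg^x$ for any family $p$ that does not depend on $\ga$. If $p$ is allowed to depend on $\ga$, the same computation holds provided one flips $\ga$ only in the $\ga$ appearing in $\wt{\op{C}}$ and in $\op{O}$, and not in $p$.

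\textbf{Step 3: transporting the axioms.} Apply the theorem with vector index $-\ga$ to the family $\ov{p}$, which by assumption solves a)--e) with $\varkappa$ evaluated at $-\ga$. Its output $\mc{F}'$ solves i)--iv) with index $-\ga$, and by Step 2 we have $\ov{\bs\Psi}[\ov p]=\mc{F}'\Sg^x$. I then check the axioms for $\mc{F}'\Sg^x$ one at a time.
\begin{itemize}
\item[i)] Both $\op{P}_{ii+1}$ and $\op{S}_{ii+1}$ commute with $\Sg^x$, so this axiom transfers directly.
\item[ii)] Moving $\Sg^x$ through the product of $\op{S}$-matrices is harmless, and $\ex{-\i\pi\ga\sg^z_1}\Sg^x=\Sg^x\ex{\i\pi\ga\sg^z_1}$. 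This restores the index $\ga$.
\item[iii)] The boost factor is scalar. The only point to track is the charge bookkeeping: passing from $\wt{\op{C}}$ to $\wt{\op{B}}$ flips the sign of $q$, and this compensates the sign flip $\ga\to-\ga$ in the term $\f{\ga}{2}q_{\op O}\th$.
\item[iv)] The charge-conjugation covector $\op{C}$ is invariant under $\sg^x\otimes\sg^x$, so the factor $\Sg^x$ passes through the left-hand side. The twist factor is handled exactly as in ii).
\end{itemize}

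\textbf{Expected obstacle.} I expect the main difficulty to be this sign bookkeeping: the sign of $q$ in the Lorentz axiom iii), and the precise placement of $\ex{\pm\i\pi\ga\sg^z_1}$ in ii) and iv) once it is commuted through $\Sg^x$.
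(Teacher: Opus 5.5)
Your proposal is correct and follows essentially the same route as the paper. You conjugate the monodromy matrix by $\Sg^{x}_{1,\dots,n}\sg^{x}_0$ to get $\wt{\op{B}}^{(\ga)}=\Sg^{x}\,\wt{\op{C}}^{(-\ga)}\,\Sg^{x}$, deduce $\ov{\bs{\Psi}}[\ov{p}]=\bs{\Psi}^{(-\ga)}[\ov{p}]\cdot\Sg^{x}_{1,\dots,n}$ with no $\ga$-flip inside $\ov{p}$, and transport axioms i)--iv) through $\Sg^{x}$ using $\op{C}_{ab}\sg^{x}_a\sg^{x}_b=\op{C}_{ab}$; your explicit remark that the resulting form factors carry charge $2m-n$ in axiom iii) fills in what the paper simply calls trivial.
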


\Proof 
Owing to the symmetry
\beq
\sg_{a}^x \,  \sg_{b}^{x} \,  \op{S}_{ab}(\la) \, \sg_{a}^x \,  \sg_{b}^{x} \; = \; \op{S}_{ab}(\la) 
\enq
one has the exchange relation
\beq
\op{T}^{(\ga)}_{1,\dots,n;0}\big( \bs{\be}_n;\la \big) \cdot \Sg_{1,\dots,n}^{x}  \, = \,
\Sg_{1,\dots,n}^{x} \cdot  \sg_{0}^{x} \, \op{T}^{(-\ga)}_{1,\dots,n;0}\big( \bs{\be}_n;\la \big)  \, \sg_{0}^{x}
\enq
in which $\Sg_{1,\dots,n}^{x} \, = \, \sg_1^{x} \cdots \sg_{n}^{x}$. In particular, one has 
\beq
\op{C}^{(\ga)}_{1,\dots,n}\big( \bs{\be}_n;\la \big) \cdot  \Sg_{1,\dots,n}^{x}  \, = \, \Sg_{1,\dots,n}^{x}  \cdot  \op{B}^{(-\ga)}_{1,\dots,n}\big( \bs{\be}_n;\la \big) \;. 
\label{ecriture conjugaison par Sg x de C et B}
\enq
Further, one also has $\op{O}_{1,\dots,n}^{(\ga)}(\bs{\be}_n) \cdot  \Sg_{1,\dots,n}^{x} \, = \, \Sg_{1,\dots,n}^{x}  \cdot \op{O}_{1,\dots,n}^{(-\ga)}(\bs{\be}_n)$.\\
Thus, it follows that 
\beq
\ov{\bs{\Psi}}_{1,\dots,n}\big[ \ov{p} \big]\big( \bs{\be}_n \big) \; = \; \bs{\Psi}_{1,\dots,n}^{(-\ga)}\big[ \ov{p} \big]\big( \bs{\be}_n \big)  \cdot  \Sg_{1,\dots,n}^{x}  \;. 
\enq
Here, we introduced a temporary notation which allows one to insist on the $\ga$-dependence of the $\bs{\Psi}_{1,\dots,n}$ transform
\beq
\bs{\Psi}_{1,\dots,n}^{(\ga)} \big[ p\big]\big( \bs{\be}_n \big) \, =   \hspace{-2mm}  \Int{ (\msc{C}_{\bs{\be}_n})^m }{} \hspace{-2mm}\dd^{m} u \;
\wt{\Cx}_{1,\dots,n}^{\, (\ga)}\big(\bs{\be}_n;\bs{u}_m \big) \cdot \op{O}_{1,\dots,n}^{(\ga)}(\bs{\be}_n)  \cdot \big( g \cdot p \big)\big(\bs{\be}_n;\bs{u}_{m} \big)  \;. 
\enq
However, we do stress that there is no $\ga\hookrightarrow -\ga$ replacement in the expression for $\ov{p}$. 
One then has that 
\beqa
\ov{\bs{\Psi}}_{1,\dots,n}\big[ \ov{p} \big]\big( \bs{\be}_n \big)  & = &
\bs{\Psi}_{1,\dots,n}^{(-\ga)}\big[ \ov{p} \big]\big( \bs{\be}_n^{(i+1 i)} \big) \cdot \op{P}_{i i+1} \,  \op{S}_{i i+1}\big(\be_{i i+1} \big) \cdot  \Sg_{1,\dots,n}^{x} \\ 
& = & \ov{\bs{\Psi}}_{1,\dots,n}\big[ \ov{p} \big]\big( \bs{\be}_n \big) \big( \bs{\be}_n^{(i+1 i)} \big) \cdot  \op{P}_{i i+1} \,  \op{S}_{i i+1}\big(\be_{i i+1} \big) \;.
\eeqa
This corresponds to equation $\mathrm{i)}$ of the bootstrap axioms. Further,  
\beqa
\ov{\bs{\Psi}}_{1,\dots,n}\big[ \ov{p} \big]\big( \bs{\be}_n+2\i\pi \bs{e}_1 \big)  & = &
\bs{\Psi}_{1,\dots,n}^{(-\ga)}\big[ \ov{p} \big]\big( \be_2,\dots, \be_n,\be_1 \big) \op{P}_{2 \dots n 1}  \ex{2\i\pi \om - \i\pi \ga \sg_1^{z} } \cdot
\Sg_{1,\dots,n}^{x} \\
& = & \bs{\Psi}_{1,\dots,n}^{(-\ga)}\big[ \ov{p} \big]\big( \bs{\be}_n \big) \op{S}_{21}(\be_{21})\cdots \op{S}_{n1}(\be_{n1})
\, \ex{2\i\pi \om - \i\pi \ga \sg_1^{z} } \cdot  \Sg_{1,\dots,n}^{x} \\
& = &
\ov{\bs{\Psi}}_{1,\dots,n}\big[ \ov{p} \big]\big( \bs{\be}_n \big) \op{S}_{21}(\be_{21})\cdots \op{S}_{n1}(\be_{n1})
\, \ex{2\i\pi \om + \i\pi \ga \sg_1^{z} } \;.
\eeqa
Hence, axiom $\mathrm{ii)}$ follows. Axiom $\mathrm{iii)}$ is trivial to check. 
Finally, 
\bem
-\i \e{Res}\Big( \ov{\bs{\Psi}}_{1,\dots,n+2}\big[ \ov{p} \big]\big( \a+\i\pi , \bs{\be}_n, \be \big) \dd \a , \a=\be \Big)    \\
\;= \; \op{C}_{1n+2}\otimes \bs{\Psi}_{2\dots n+1}^{(-\ga)}\big[ \ov{p} \big]\big( \bs{\be}_n \big)
\cdot \Big\{ \op{S}_{2\,  n+2}(\be_1-\be)\cdots  \op{S}_{ n+1 n+2}(\be_{n}-\be)  \, - \, \ex{2\i\pi \om - \i\pi \ga \sg_1^{z}  } \Big\} \cdot  \Sg_{1,\dots,n+2}^{x} \\
\;= \; \op{C}_{1n+2}\otimes \ov{\bs{\Psi}}_{2\dots n+1}\big[ \ov{p} \big]\big( \bs{\be}_n \big)
\cdot \Big\{ \op{S}_{2\,  n+2}(\be_1-\be)\cdots  \op{S}_{ n+1 n+2}(\be_{n}-\be)  \, - \, \ex{2\i\pi \om + \i\pi \ga \sg_1^{z}  } \Big\}  \;. 
\end{multline}
Here, we used that $\op{C}_{ab}=\op{C}_{ab} \sg_a^{x} \sg_b^{x}$, so that axiom $\mathrm{iv)}$ follows.  \qed




\subsection{Solution of the bootstrap equations in the full sector, with bound states}
\label{SubSection FF Full sector with breathers}

Using the expression we obtained for the purely solitonic/antisolitonic form factors \eqref{general soliton form factor} as well as the recursion relation
\eqref{recursion breather 2}, one is now able to construct the full set of form factors involving breather particles.
The expression involves several auxiliary functions which we shall now define. Just as for the $\op{S}$-matrices, the calculation of breather residues
generates new functions expressing the interactions between particles. First, one has the $\op{F}$-factors associated with the new kinds of scattering which take place
\begin{itemize}

\item soliton-antisoliton/breather $\op{F}$-factor:
\begin{equation}
\label{minimal bs form factor}
\op{F}_{\mf{b}_k, \mf{s} } (\th) \, = \, \op{F}\big(\th +\tfrac{\i }{2} \mf{u}_k \big) \cdot \op{F}\big(\th - \tfrac{\i }{2} \mf{u}_k \big) \cdot
\big( a \phi\big)\big( \tfrac{\i }{2} \mf{u}_k -\th \big)  \cdot
\pl{p=1}{k} \bigg\{ \f{ \sinh\big( \tf{\th}{2}\big) }
{ \sinh\big[ \tfrac{1}{2} \big(\th- \i \tfrac{\mf{u}_k}{2} - \i p \xi \big) \big] }    \bigg\}
\;;
\end{equation}

\item breather/breather  $\op{F}$-factor:
\begin{equation}
\label{minimal bb form factor}
\op{F}_{\mf{b}_{k_1}, \mf{b}_{k_2}} (\th)\, = \,
  \f{ \op{F}_{\mf{b}_{k_1}, \mf{s} } \big(\th +\tfrac{\i }{2} \mf{u}_{k_2} \big)
  \cdot \op{F}_{\mf{b}_{k_1}, \mf{s} } \big(\th -\tfrac{\i }{2} \mf{u}_{k_2}\big)   }
{\sinh\big[ \tfrac{1}{2} \big(\th + \i \tfrac{\pi-\xi}{2}\big) \big]\sinh\big[ \tfrac{1}{2} \big(\th - \i \tfrac{\pi-\xi}{2}\big) \big] }
\cosh^2\big( \tfrac{\th}{2}\big) \;.
\end{equation}
\end{itemize}

One may check that these $\op{F}$-factors solve the below set of equations
\beq
\op{F}_{\mf{b}_k, \mf{s} } (\th) \, = \, S_{\mf{b}_k, \mf{s} }(\th)  \op{F}_{\mf{b}_k, \mf{s} } (-\th) \, = \,  \op{F}_{\mf{b}_k, \mf{b}_{\ell} } (2\i\pi-\th)  \qquad \e{and} \qquad
\op{F}_{\mf{b}_k, \mf{b}_{\ell} } (\th) \, = \, S_{\mf{b}_k, \mf{b}_{\ell} }(\th) \op{F}_{\mf{b}_k, \mf{b}_{\ell} } (-\th) \, = \,  \op{F}_{\mf{b}_k, \mf{b}_{\ell} } (2\i\pi-\th)  \;.
\label{ecriture eqns F factors des breathers solitons et breathers breathers}
\enq

\noindent Further, new sets of functions arise which take into account the remainder of the interactions between the asymptotic breather particles:
\begin{itemize}

\item self-breather interaction:
\begin{equation}
\label{chi function}
\chi_{ \ell ; k}(\th) \, =  \,  \f{  (-1)^{\ell} \, \i^k \,   \sinh\big( \th-  \tfrac{\i }{2} \mf{u}_k -\i \ell \xi\big) }
{2\sinh\big[ \tfrac{1}{2} \big( \th-  \tfrac{\i }{2} \mf{u}_k -\i k \xi\big) \big]
\sinh\big[ \tfrac{1}{2} \big( \th-  \tfrac{\i }{2} \mf{u}_k \big) \big]}
\prod_{p=1}^{k-1} \coth\big[ \tfrac{1}{2} \big( \th-  \tfrac{\i }{2} \mf{u}_k -\i p \xi\big) \big]\;,
\end{equation}
and
\begin{equation}
\label{rho function}
\rho_{\ell;k}(\th)  = \i^{\ell}
\pl{p=1}{\ell} \bigg\{ \f{\cosh\big[ \tfrac{1}{2} \big( \th - \tfrac{\i }{2} \mf{u}_k -\i(p-1)\xi \big) \big]}
{ \sinh\big( \tf{\th}{2}\big) }  \bigg\}
\pl{p=\ell+1}{k}   \bigg\{ \f{ \sinh\big[ \tfrac{1}{2} \big(\th- \i \tfrac{\mf{u}_k}{2} - \i p \xi \big) \big] }{ \sinh\big( \tf{\th}{2}\big) }
   \bigg\}  \;;
\end{equation}
\item breather-breather interaction:
\bem
\label{def lambda}
\La\Big( \,  ^{k_1,k_2}   _{\ell_1,\ell_2}  \mid \th\Big) = \rho_{\ell_1;k_1}\big(\th +\tfrac{\i }{2} \mf{u}_{k_2}\big) \cdot
\rho_{\ell_1;k_1}\big(\th - \tfrac{\i }{2} \mf{u}_{k_2} \big) \cdot \chi_{\ell_1;k_1}\big(\th + \tfrac{\i }{2} \mf{u}_{k_2} + \i \ell_2 \xi\big)  \\
\times  \f{\sinh\big[ \tfrac{1}{2} \big(\th + \i \tfrac{\pi-\xi}{2}\big) \big]\sinh\big[ \tfrac{1}{2} \big(\th - \i \tfrac{\pi-\xi}{2}\big) \big] }
{\cosh^2\big( \tfrac{\th}{2}\big) }  \;.
\end{multline}
\end{itemize}

All of these functions arise in the definition of the $\mf{h}_{\e{tot}}$ analogue
of the function $g$ \eqref{definition fonction g} appearing in the description of solutions in the
soliton/antisoliton sector. The latter depends on the collection of breather indices $\bs{k}_s=(k_1,\dots, k_s)$, breather
rapidities $\bs{\a}_s$, soliton/antisoliton rapidities $\bs{\be}_n$ and integration variables $\bs{u}_m$:
\bem
\mathdutchcal{h}_{\bs{\ell}_s;\bs{k}_s}\big( \bs{\a}_s, \bs{\be}_n ; \bs{u}_m\big)  \, = \,   g(\bs{\be}_{n};\bs{u}_{m})
\pl{j<t}{s} \op{F}_{\mf{b}_{k_j}, \mf{b}_{k_{t}}} (\alpha_j - \alpha_{t})
\pl{j=1}{s} \pl{c=1}{n} \op{F}_{\mf{b}_{k_j}, \mf{s} } (\alpha_j - \be_c)
\pl{j<t}{s} \La\Big( \, _{\ell_j, \, \ell_{t}} ^{k_j,\, k_{t}} \mid \alpha_j - \alpha_{t} \Big)
		 \\
\times  \pl{j=1}{s} \pl{b=1}{n} \rho_{\ell_j;k_j}(\alpha_j - \be_b)  \, \pl{j=1}{s} \pl{t=1}{m}   \chi_{\ell_j;k_j}(\alpha_j - u_{t})  \, .
\label{definition fonction h des breathers}
\end{multline}

\subsubsection{The integration curve $\msc{C}_{ \bs{\a}_s,\bs{\be}_n}^{(\bs{k}_s)}$}
\label{SousSousSectionCourbeIntegration}

Given rapidities $(\bs{\a}_s,\bs{\be}_n)\in \Cx^s \times \Cx^n$ with generic coordinates,
and a vector of integers $\bs{k}_s \in \mathbb{N}^s$
the contour $\msc{C}_{ \bs{\a}_s,\bs{\be}_n}^{(\bs{k}_s)}$  refers to any curve which satisfies the properties
\begin{itemize}

\item $\msc{C}_{ \bs{\a}_s,\bs{\be}_n}^{(\bs{k}_s)}$ connects $-\infty$ to $+\infty$;

\item  the points $\be_a-\i p \xi - 2 \i \pi \ell $, $\be_a-\i\pi -\i (p +1) \xi$, $p, \ell \in \mathbb{N}$,
have index $-1$ in respect to $\msc{C}_{ \bs{\a}_s,\bs{\be}_n}^{(\bs{k}_s)}$, \textit{viz}. they are located "below" of $\msc{C}_{ \bs{\a}_s,\bs{\be}_n}^{(\bs{k}_s)}$;

\item  the points $\be_a+\i p \xi + (2 \ell + 1) \i \pi$, $\be_a-\i\pi +\i p \xi$,  $p, \ell \in \mathbb{N}$,
have index $1$ in respect to $\msc{C}_{ \bs{\a}_s,\bs{\be}_n}^{(\bs{k}_s)}$, \textit{viz}. they are located above of $\msc{C}_{ \bs{\a}_s,\bs{\be}_n}^{(\bs{k}_s)}$.

\item for $t \in \intn{1}{s}$,  the points $\a_t +\tfrac{\i}{2} \mf{u}_{k_t} -\i p \xi - 2 \i \pi \ell $,
$\a_t -\tfrac{\i}{2} \mf{u}_{k_t} -\i p \xi - 2 \i \pi (\ell+1)$, $\a_t \pm \tfrac{\i}{2} \mf{u}_{k_t} - \i\pi -\i (p +1) \xi$,
$p, \ell \in \mathbb{N}$, and  $\a_t -\tfrac{\i}{2} \mf{u}_{k_t} -\i p \xi $ with $p \geq k_t+1$
have index $-1$ in respect to $\msc{C}_{ \bs{\a}_s,\bs{\be}_n}^{(\bs{k}_s)}$, \textit{viz}. they are located "below" of $\msc{C}_{ \bs{\a}_s,\bs{\be}_n}^{(\bs{k}_s)}$;

\item for $t \in \intn{1}{s}$, the points $\a_t \pm \tfrac{\i}{2} \mf{u}_{k_t}  +\i p \xi + (2 \ell + 1) \i \pi$, $\a_t \pm \tfrac{\i}{2} \mf{u}_{k_t} -\i\pi +\i p \xi$,  $p, \ell \in \mathbb{N}$
and $\a_t -\tfrac{\i}{2} \mf{u}_{k_t} -\i p \xi $ with $p \in \intn{0}{k_t}$
have index $1$ in respect to $\msc{C}_{ \bs{\a}_s,\bs{\be}_n}^{(\bs{k}_s)}$, \textit{viz}. they are located "above" of $\msc{C}_{ \bs{\a}_s,\bs{\be}_n}^{(\bs{k}_s)}$.

\end{itemize}

\subsubsection{The general solution}

\begin{prop}
\label{Proposition FF dans le secteur breather}
Let $q = n-2m \in \mathbb{Z}$, $\op{s}$ and $\om$ be all fixed and $p$ a family of functions solving the associated $p$-function problem.
Then, the unique solution $\bs{\Phi}[p](\bs{\la}_{\ell})$, $\ell \in \mathbb{N}$ to the axioms $I)-V)$ whose restriction to the soliton/antisoliton,
\textit{c.f.} \eqref{form factor coordinates}, is given by  $\bs{\Psi}_{1,\dots,n}[p](\bs{\be}_{n})$, $n \in \mathbb{N}$, is fully characterised by
its projection on the breather coordinates, \textit{c.f.} \eqref{definition bs Brether projection},
\beq
\bs{\mf{b}}_{s} \, = \, \big( \mf{b}_{k_1}, \dots, \mf{b}_{k_s} \big)\;, \quad s \in \mathbb{N} \;, \; k_a \in \intn{ 1 }{ n_{\xi} }
\quad and \quad
\bs{k}_{s} \, = \, \big( k_1,\dots, k_s \big)
\label{ecriture coordonnees breather generales}
\enq
which is defined as
\beq
\bs{\Phi}_{1,\dots,n}^{\bs{\mf{b}}_{s}} [p]  \big( \bs{\a}_s, \bs{\be}_n \big) \; = \;
\sul{ \bs{\veps}_{n} \in \{ \pm  \}^n }{}
\Big( \bs{\Phi}[p] \big( \bs{\a}_s, \bs{\be}_n \big) \cdot \op{v}^{ \bs{\mf{b}}_{s}}\otimes \op{v}^{  \wt{\bs{\veps}}_n} \Big)
\cdot \big( \op{v}^{\bs{\veps}_n}  \big)^{ \mathtt{t} } \;,
\label{definition projection solution generale sur coordonnees breather}
\enq
with $\wt{\veps}_a = \mf{s}/\ov{\mf{s}}$ if $ \veps_a = +/-$. Given $\mc{I}_{\bs{k}_s}=\intn{0}{k_1}\times \cdots \times \intn{0}{k_s}$,
these are explicitly expressed as
	\beq
		\label{exp generale form factor}
\bs{\Phi}_{1,\dots,n}^{\bs{\mf{b}}_{s}} [p]  \big( \bs{\a}_s, \bs{\be}_n \big)   =
  \sul{ \bs{\ell}_s \in \mc{I}_{\bs{k}_s} }{ }    (-1)^{\ov{\bs{\ell}}_s}  \hspace{-3mm}
\Int{ \big( \msc{C}_{ \bs{\a}_s,\bs{\be}_n}^{(\bs{k}_s)} \big)^m  }{} \hspace{-3mm} \dd^{m} u
 \, \wt{\Cx}_{1,\dots,n}^{\,(\ga)}\big( \bs{\be}_n \, ; \, \bs{u}_m\big)
\cdot \op{O}_{1,\dots,n}^{(\ga)}(\bs{\be}_n) \cdot  \mathdutchcal{h}_{ \bs{\ell}_s  ;  \bs{k}_s  }\big( \bs{\a}_s, \bs{\be}_n ; \bs{u}_m\big)
\cdot p_{ \bs{\ell}_s  ;  \bs{k}_s  } \big(\bs{\alpha}_s , \bs{\be}_n;\bs{u}_m \big) \;.
\enq
Above, we have set $\ov{\bs{\ell}}_s=\sul{t=1}{s}\ell_t$ while the soliton/antisoliton integration contour
is as introduced in Subsubsection \ref{SousSousSectionCourbeIntegration}.
Finally, $s$-breathers p function appearing above takes the explicit form
\beq
\label{s breathers p function}
p_{ \bs{\ell}_s  ;  \bs{k}_s  }  \big(\bs{\alpha}_s, \bs{\be}_n;\bs{u}_m \big) = \f{(m+s)!}{m!}
p \Big(\bs{\mu}_{n+2s} ; \alpha_1 - \tfrac{ \i }{ 2 } \mf{u}_{k_1} -\i \ell_1 \xi,
\dots,\alpha_s - \tfrac{ \i }{ 2 } \mf{u}_{k_s} -\i \ell_s \xi,\bs{u}_m \Big)
 \pl{j=1}{s} d_{\ell_j;k_j} \;.
\enq
Its definition involves the concatenated variable
\beq
\bs{\mu}_{n+2s} \, = \, \Big(\alpha_1 + \tfrac{ \i }{ 2 } \mf{u}_{k_1} ,\alpha_1 - \tfrac{ \i }{ 2 } \mf{u}_{k_1} ,\dots,
\alpha_s +   \tfrac{ \i }{ 2 } \mf{u}_{k_s} ,\alpha_s - \tfrac{ \i }{ 2 } \mf{u}_{k_s} ,  \bs{\be}_n  \Big) \;,
\enq
and the scalar factors
\beq
\label{d factors}
d_{ \ell; k} = \i\varkappa \op{F}\big(\i \mf{u}_{k}\big) \ex{ \f{\i\ga\xi}{2} (2\ell-k) }
\f{  \Big\{ \tfrac{1}{2} \tan \big( \tfrac{k\xi}{2} \big)  \Big\}^{ \f{1}{2} } \prod\limits_{p=1}^{k-1}\tan \Big( \tfrac{p\xi}{2} \Big)}
{\cos\Big(\tfrac{ \ell \xi}{2}\Big)\cos\Big(\tfrac{ (k-\ell) \xi}{2}\Big)}
\cdot \prod_{p=1}^{\ell}\cot\Big( \tfrac{ p \xi}{2}\Big) \prod_{p=1}^{k-\ell}\cot\Big( \tfrac{ p \xi}{2}\Big) \;;
\enq

\end{prop}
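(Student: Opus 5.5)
Uniqueness and the solution property need no new argument. By the Theorem, $\bs{\Psi}[p]$ solves axioms $i)-iv)$. Proposition \ref{Proposition Full vs Solitonic Axioms equivalence Karowski Babujian} then says that defining the breather coordinates through axiom $V)$ gives a solution of $I)-V)$. That solution is unique, because $V)$ fixes every breather subsector projection \eqref{definition bs Brether projection} inductively, and axiom $I)$ fixes all remaining coordinates. The real content is the explicit formula \eqref{exp generale form factor}. I will prove it by induction on $s$, adding one breather at a time through \eqref{recursion breather 2}. Since the right-hand side of \eqref{exp generale form factor} already has the same structure as \eqref{general soliton form factor}, I will carry the induction on the slightly stronger statement that the formula holds with $\bs{\be}_n$ replaced by $(\bs{\de}_2,\bs{\be}_n)$ before the residue is taken.

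The induction step goes as follows. Start from the $s$-breather formula with $n+2$ soliton rapidities $(\bs{\de}_2,\bs{\be}_n)$, contract with $\wh{\bs{\Ga}}^{(k)}_{s+1,s+2}$, and compute the residue at $\de_{12}=\i\mf{u}_k$ with $\de_{1,2}=\a\pm\tfrac{\i}{2}\mf{u}_k$. In the prefactor the pole does not come from $\op{F}(\de_{12})$, which is regular. It comes from the contour being pinched: as $\de_{12}\to\i\mf{u}_k$, poles of $(a\phi)(\de_1-u)$ at $u=\de_1-\i p\xi$ (lying below $\msc{C}$) collide with poles of $\wt{\Cx}$ and of $(a\phi)(\de_2-u)$ at $u=\de_2-\i\pi+\i p'\xi$ (lying above). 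To handle this, I will deform each $u_j$ contour to the curve $\msc{C}^{(\bs{k}_{s+1})}$ of Subsubsection \ref{SousSousSectionCourbeIntegration}, picking up the residues at the points $u=\a-\tfrac{\i}{2}\mf{u}_k-\i\ell\xi$, $\ell\in\intn{0}{k}$. Only those contributions are singular in $\de_{12}$. The integral over the deformed contour is regular and drops out. By the symmetry of the integrand in $\bs{u}_m$ (property a) of $p$ and the exchange algebra of the $\wt{\op{C}}$ operators), the $m+1$ possible choices of which $u$ is captured all give the same result. This explains the combinatorial factor $(m+s)!/m!$ once iterated.

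The main obstacle is the algebra at the captured point $u=\a-\tfrac{\i}{2}\mf{u}_k-\i\ell\xi$. Here I have to show three things. First, $\wt{\Cx}_{\de_1,\de_2,\bs{\be}_n}(\ldots,u)$, contracted with $\bs{\vp}^{(\mf{b}_k)}$ in spaces $1,2$, reduces to $\wt{\Cx}_{\bs{\be}_n}(\bs{u}_{m})$ times a scalar. Second, that scalar is $\rho_{\ell;k}(\a-\be_b)\chi_{\ell;k}(\a-u_t)$. Third, $\op{O}^{(\ga)}$ produces the factor $\ex{\i\ga\xi(2\ell-k)/2}$. For the first point I plan to use the fusion identity for $\op{S}$ stated after \eqref{ecriture equation pour poles matrices S}. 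Writing $\wt{\op{C}}(u)$ on sites $1,2$ through the $2\times 2$ decomposition of the monodromy matrix, I would check that the only surviving terms are the ones in which the first-site entries $\wt{b},\wt{c}^{(\pm\ga)}$ evaluated at $\de_1-u=\i\ell\xi+\i\mf{u}_k$ combine with the second-site entries into the scalar $S_{\ups,\mf{b}_k}$-type products. The ratios of $\sinh$ then telescope into $\rho$ and $\chi$, while the products of $\op{F}$ and $a\phi$ combine into $\op{F}_{\mf{b}_k,\mf{s}}$ and $\op{F}_{\mf{b}_k,\mf{b}_{k'}}$ by repeated use of the shift equations for $\phi$ and $\varpi$. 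The normalisations ($r_k$, $\op{F}(\i\mf{u}_k)$, $\varkappa$, the $\cos$ and $\cot$ products) are then collected into $d_{\ell;k}$. The sign $(-1)^{\ell}$ comes from the orientation of the residues. Finally, I would verify that the evaluation $p(\bs{\mu}_{n+2s};\ldots)$ is consistent with the induction, and that the new contour $\msc{C}^{(\bs{k}_{s+1})}$ is exactly the one left after removing the captured points $\a-\tfrac{\i}{2}\mf{u}_k-\i p\xi$, $p\le k$, which by definition remain above it.
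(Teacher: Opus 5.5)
Your overall route is the paper's. Uniqueness comes from the Babujian--Karowski equivalence together with axioms $V)$ and $I)$. The formula is then proved by induction on $s$: take the $\de_{12}=\i\mf{u}_k$ residue of the formula with soliton rapidities $(\bs{\de}_2,\bs{\be}_n)$, capture $u=\de_2-\i\ell\xi$ for $\ell\in\intn{0}{k}$, pick up a factor $m+1$ from the choice of captured variable, and collect constants into $d_{\ell;k}$.

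The pinching you describe, however, does not happen. At $\de_{12}=\i\mf{u}_k$ one has $(\de_1-\i p\xi)-(\de_2-\i\pi+\i p'\xi)=\i\big(2\pi-(k+p+p')\xi\big)\neq 0$ for generic $\xi$, and $(a\phi)(\de_2-u)$ has no pole at $u=\de_2-\i\pi+\i p'\xi$. The colliding pairs are the poles $u=\de_2-\i\ell\xi$ of $(a\phi)(\de_2-u)$ (below) and $u=\de_1-\i\pi+\i(k-\ell)\xi$ of $\wt{\Cx}$ (above). You have swapped $\de_1$ and $\de_2$, although the points you end up capturing are the correct ones. You must also dispose of the terms where two or more integration variables are captured. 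They vanish identically because $\tau(u_a-u_b)$ contains $\sinh[\tfrac{\pi}{\xi}(u_a-u_b)]$, which is zero for $u_a-u_b\in\i\xi\mathbb{Z}$.

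More seriously, your plan for the step you call the main obstacle puts the mechanisms in the wrong places, and it does not track the $\ga$-twist, which is the new ingredient here.

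\begin{itemize}
\item By commutativity of the $\wt{\op{C}}$'s, let the captured operator act first on the pseudo-vacuum. The part where it acts as $\wt{\op{C}}$ on the $\bs{\be}_n$ sites carries $\wt{b}(\de_2-u)=\wt{b}(\i\ell\xi)=0$ and drops out.
\item The pole in $\de_{12}$ is produced by $\wt{b}$ and $\wt{c}^{\,(\ga)}$ at $\de_{12}+\i\ell\xi\to\i\pi-\i(k-\ell)\xi$. It does not come from a fusion into $S_{\ups,\mf{b}_k}$.
\item The residue is the twisted covector $\big(\op{v}^+\op{v}^-+(-1)^k\ex{\i\ga\mf{u}_k}\op{v}^-\op{v}^+\big)^{\op{t}}=\sqrt{2}\,\ex{\frac{\i}{2}\ga\mf{u}_k}(\bs{\vp}^{(\mf{b}_k)})^{\op{t}}\ex{-\frac{\i}{2}\ga\mf{u}_k\sg^z_{-1}}$, not $(\bs{\vp}^{(\mf{b}_k)})^{\op{t}}$. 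The last factor is cancelled only when the covector reaches $\op{O}^{(\ga)}$.
\item The fusion identity, in its $\ga$-twisted form, is needed only to push this covector through the $m$ non-captured $\wt{\op{C}}(u_t)$. It yields just $\prod_t\wt{S}_{\mf{b}_k,\mf{s}}(\a-u_t)$; $\rho_{\ell;k}$ and the rest of $\chi_{\ell;k}$ come from $g$.
\end{itemize}

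Your attributions of the $\ell$-dependence are also wrong. $\op{O}^{(\ga)}$ does not depend on $\ell$, and all captured poles are encircled with the same orientation, so neither can produce $\ex{\i\ga\ell\xi}$ or the explicit $(-1)^{\ell}$. Both come from $\wt{c}^{\,(\ga)}(\i\ell\xi)=(-1)^{\ell}\ex{\i\ga\ell\xi}$. The $\ell$-independent remainder $\ex{\frac{\i}{2}\ga\mf{u}_k}=\ex{\frac{\i\pi\ga}{2}}\ex{-\frac{\i}{2}\ga k\xi}$ supplies the $-k$ in the phase of $d_{\ell;k}$, once $\ex{\frac{\i\pi\ga}{2}}$ is absorbed into $\varkappa$.
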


We postpone the proof of this result to Appendix \ref{Appendix preuve decomposition breathers}.
When the twist parameter $\ga$ is set to zero, the above formula was obtained in the case of $\bs{k}_s=(1,\dots,1)$ in \cite{BabujianKarowskiExactFFSineGordonBootsstrapII}.

\subsection{The restriction to the first breather sector}
\label{Subsection General FF rerstriction to type 1 breathers}

As observed in \cite{BabujianKarowskiExactFFSineGordonBootsstrapII}, the
formula for the general off-shell Bethe Ansatz form factors takes a particularly simple form when focusing on the theory's subsector built-up solely
in terms of lowerst order breathers, \textit{i.e.}  when one sets $n=0$ and $\bs{k}_s=(1,\dots,1)$, $\bs{\mf{b}}_s=(\mf{b}_1,\dots,\mf{b}_1)$ in
\eqref{exp generale form factor}. Observe that, formally speaking, the Sine-Gordon Lagrangian goes to the Sinh-Gordon Lagrangian
when rotating the coupling constant by $\i$, \textit{i.e.} $g \hookrightarrow \i g$. This transformation reflects itself in the
replacement $\xi \hookrightarrow -2\pi \mf{b}$ of the Sine-Gordon coupling constant $\xi$ in terms of the Sinh-Gordon one $\mf{b} \in \intff{0}{\tf{1}{2}}$.
Now, the matter is that under this replacement, the type-1 breather/ type-1 breather scattering matrix $S_{\mf{b}_1, \mf{b}_1}$ reduces to the one 1+1 dimensional Sinh-Gordon model:
\beq
S_{\mf{b}_1, \mf{b}_1}(\th)=  \f{ \tanh\big[\tfrac{1}{2} ( \th  + \i  \xi ) \big]   }{ \tanh\big[\tfrac{1}{2} ( \th - \i \xi ) \big] } \;\;
\hookrightarrow \; \;  \f{ \tanh\big[\tfrac{1}{2} ( \th  - 2\i \pi   \mf{b} ) \big]   }{ \tanh\big[\tfrac{1}{2} ( \th  +2\i \pi   \mf{b}  ) \big] }
\enq
Since, the analytic continuation of $S_{\mf{b}_1, \mf{b}_1}$ in $\xi$ does not alter the functional form of the
$\op{F}$-factor equations \eqref{ecriture eqns F factors des breathers solitons et breathers breathers}, the analytic continuation of
$\op{F}_{\mf{b}_1,\mf{b}_1}$ in $\xi \hookrightarrow -2\pi \mf{b}$ should thus produce the $\op{F}$-factor associated with the
Sinh-Gordon $\op{S}$-matrix. This will indeed be the case. Since the comparison of the results restricted to the  type-1 breather sector
plays a role in the identification of our form factors with those of the exponential of the field, we reproduce the proof of the reduction here.
Note that the difference with the representation found in \cite{BabujianKarowskiExactFFSineGordonBootsstrapII} stems from the
incorporation of the twist effects into the reduced $p$-function.

\begin{prop}
Given $s \in \mathbb{N}$,  the $s$ type 1 breathers form factor associated with a solution $p$ to the $p$-function problem takes the explicit form:
	\begin{equation}
\bs{\Phi}^{\bs{\mf{b}}_{s}} [p]  \big( \bs{\a}_s \big) \; = \; \pl{j<t}{s} \op{F}_{\mf{b}_1, \mf{b}_1} (\alpha_j - \alpha_{t})
 \sul{\bs{\ell}_s \in \{0,1\}^s}{}   (-1)^{ \ov{\bs{\ell}}_s}
 \pl{j<t}{s} \bigg\{  1 + \i  \f{ (\ell_j-\ell_{t})\sin(\xi) }{\sinh(\alpha_j - \alpha_{t})} \bigg\}
\cdot p_{\bs{\ell}_s ; \bs{k}_s} \big(\bs{\alpha}_s; \emptyset \big) \;,
	\end{equation}
$\bs{k}_s=(1,\dots,1)\in \mathbb{N}^s$, and where the minimal form factor $ \op{F}_{\mf{b}_1, \mf{b}_1}$ is expressible in terms of Barnes' $G$ functions
\begin{equation}
\op{F}_{\mf{b}_{1}, \mf{b}_{1}} (\th) = \f{1}{\Ga \Big( 1-\tfrac{\i\th}{2\pi} \, , \, \tfrac{\i\th}{2\pi} \Big) }
\cdot G\Bigg( \ba{cccc} 1 + \nu_-\, ,  &   2 + \nu_+ \,,  &   \tfrac{1}{2} - \nu_+ \, ,
				&   \tfrac{3}{2} - \nu_-     \\
 - \nu_+  \, ,  &   1 - \nu_-   \, ,  &   \tfrac{1}{2} +  \nu_-  \, ,
 &   \tfrac{3}{2} + \nu_+   \ea \Bigg) \;, \quad  \nu_{\pm} \; = \,  \f{\xi \pm \i\th}{2\pi} \;.
\label{ecriture minimal FF b1b1}
\end{equation}
which reduces to the $F$-factor of the Sinh-Gordon theory upon the very same replacement of $\xi\hookrightarrow -2\pi \mf{b}$.
\end{prop}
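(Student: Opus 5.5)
The plan is to specialise Proposition \ref{Proposition FF dans le secteur breather} to $n=0$ and $\bs{k}_s=(1,\dots,1)$. Since $q=n-2m$ and the integrals only run over $m$ variables attached to the soliton/antisoliton content, the relevant term is the one with $m=0$. Then there is no integral, $\wt{\Cx}^{(\ga)}$ and $\op{O}^{(\ga)}$ reduce to $1$, and $g(\emptyset;\emptyset)=1$. In $\mathdutchcal{h}_{\bs{\ell}_s;\bs{k}_s}$ all factors involving $\bs{\be}_n$ or $\bs{u}_m$ disappear, leaving
\beq
\pl{j<t}{s} \op{F}_{\mf{b}_1,\mf{b}_1}(\a_{jt}) \cdot \La\Big( \, ^{1,1}_{\ell_j,\ell_t} \mid \a_{jt}\Big) \;.
\enq
The factor $(-1)^{\ov{\bs{\ell}}_s}$ and $p_{\bs{\ell}_s;\bs{k}_s}(\bs{\a}_s;\emptyset)$ are inherited directly, so the sum over $\bs{\ell}_s\in\{0,1\}^s$ has exactly the stated shape.

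The first real step is to show that
\beq
\La\Big( \, ^{1,1}_{\ell_1,\ell_2} \mid \th\Big)=1+\i\f{(\ell_1-\ell_2)\sin\xi}{\sinh\th} \;.
\enq
With $k=1$ one has $\mf{u}_1=\pi-\xi$, so $\tfrac{\i}{2}\mf{u}_1=\i\tfrac{\pi-\xi}{2}$. I would write out $\rho_{\ell;1}$ from \eqref{rho function} for $\ell=0,1$: it equals $\sinh[\tfrac12(\th-\i\tfrac{\pi-\xi}{2}-\i\xi)]/\sinh(\tf{\th}{2})$ for $\ell=0$, and $\i\cosh[\tfrac12(\th-\i\tfrac{\pi-\xi}{2})]/\sinh(\tf{\th}{2})$ for $\ell=1$. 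I would write $\chi_{\ell;1}$ from \eqref{chi function}, in which the product over $p$ is empty. These are then evaluated at $\th\pm\i\tfrac{\pi-\xi}{2}$ and $\th+\i\tfrac{\pi-\xi}{2}+\i\ell_2\xi$, and multiplied by the prefactor in \eqref{def lambda}.

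The four cases $(\ell_1,\ell_2)\in\{0,1\}^2$ are checked separately using product-to-sum identities. The shifts combine into arguments like $\tfrac{\th}{2}\pm\tfrac{\i\pi}{2}$, which turn $\sinh$ into $\pm\i\cosh$. This should produce cancellation of the $\cosh^2(\tf{\th}{2})$ denominator against the $\rho$ denominators $\sinh(\tfrac12(\th\pm\i\tfrac{\pi-\xi}{2}))$. The diagonal cases should give $1$, and the off-diagonal ones $1\pm\i\sin\xi/\sinh\th$.

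This case bookkeeping is the main obstacle. Any constant mismatch, for instance powers of $\i$ or signs coming from $(-1)^\ell\i^k$, would have to be absorbed. I would check whether the normalisation $d_{\ell;1}$ inside $p_{\bs{\ell}_s;\bs{k}_s}$ already carries it. If not, I would expect the combination to be $\ell$-independent up to factors that are symmetric in $(j,t)$ and can be moved into $\op{F}_{\mf{b}_1,\mf{b}_1}$.

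Second, for \eqref{ecriture minimal FF b1b1} I would start from \eqref{minimal bb form factor} with $k_1=k_2=1$. Using \eqref{minimal bs form factor}, $\op{F}_{\mf{b}_1,\mf{b}_1}$ becomes a product of $\op{F}(\th+\i c)$ over shifts $c\in\{0,\pm(\pi-\xi),\dots\}$, of $a\phi$ at shifted points, and of elementary hyperbolic functions. Instead of manipulating the quantum dilogarithm directly, I would combine the integral representation \eqref{int rep F} for each shifted $\op{F}$, together with \eqref{expression explicite pour phi} and \eqref{definition fct a phi}, into a single exponential of a Fourier integral. The integrand should simplify so that its $\sinh(\xi t/2)$ denominators cancel, leaving only $\sinh(\pi t)$-type denominators, since $S_{\mf{b}_1,\mf{b}_1}$ is a ratio of trigonometric functions. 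Such integrals are exactly the Malmstén-type representations of $\log\Ga$ and $\log G$ at arguments linear in $\nu_\pm$, which gives the displayed ratio of Barnes functions; the prefactor $1/\Ga(1-\tfrac{\i\th}{2\pi},\tfrac{\i\th}{2\pi})$ encodes the $\sinh(\tf{\th}{2})$ factors.

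As a consistency check, rather than fixing every constant through the integral, I would verify directly from the $G$ and $\Ga$ shift relations that the right-hand side satisfies $\op{F}(\th)=S_{\mf{b}_1,\mf{b}_1}(\th)\op{F}(-\th)=\op{F}(2\i\pi-\th)$, as in \eqref{ecriture eqns F factors des breathers solitons et breathers breathers}. I would also check the absence of poles in the strip, and fix the normalisation at one point. The Sinh-Gordon statement then follows because $\xi\hookrightarrow-2\pi\mf{b}$ maps the $\op{S}$-matrix to the Sinh-Gordon one, and the minimal solution with matching analyticity is unique, so the result is the known Sinh-Gordon minimal form factor written in Barnes form.
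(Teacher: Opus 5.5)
Your proposal is correct and is essentially the paper's proof: specialise the general breather formula to $n=m=0$, then verify $\La\big( {}^{1,1}_{\ell_1,\ell_2} \mid \th \big)=1+\i(\ell_1-\ell_2)\sin\xi/\sinh\th$ case by case. The four cases close exactly, so nothing has to be absorbed into $d_{\ell;1}$ or $\op{F}_{\mf{b}_1,\mf{b}_1}$; that fallback would not be permitted anyway, since both are fixed by definition. Finally, write $\op{F}_{\mf{b}_1,\mf{b}_1}$ as a single exponential integral in which $\sinh(\xi t/2)$ cancels, and identify it through Barnes' integral representation, the shift relations of $G$ and Euler's reflection formula.

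The paper streamlines your second step. It first uses $\op{F}(\th)=a(\th)\op{F}(-\th)=\op{F}(2\i\pi-\th)$ and $\phi(\th)=1/\big(\op{F}(\th)\op{F}(\th+\i\pi)\big)$ to obtain $\op{F}_{\mf{b}_1,\mf{s}}(\th)=\frac{\op{F}(\th+\frac{\i}{2}\mf{u}_1)\,\sinh(\th/2)}{\op{F}(\th+\i\frac{\pi+\xi}{2})\,\sinh[\frac12(\th-\i\frac{\pi+\xi}{2})]}$. Then $\op{F}_{\mf{b}_1,\mf{b}_1}$ is a ratio of four shifted $\op{F}$'s times elementary factors, so only the integral representation of $\op{f}_{\e{min}}$ is needed, with no dilogarithm or $a\phi$ integral.
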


One can readily check that \eqref{ecriture minimal FF b1b1} reduces to the Sinh-Gordon $\op{F}$ factor upon the substitution $\xi \hookrightarrow -2\pi \mf{b}$.
In fact, the expression taken as a whole reduces upon the very same substitution to the summand proposed, within the angular quantization approach, in
\cite{BrazhnikovLukyanovFreeFieldRepMassiveFFIntegrable} for the form factors of the exponential of the field, this up to the form of the $p$-function contribution which we shall discuss later on.
This as well reproduces the $\mc{K}$-transform of $p$-functions representation for the form factors of the Sinh-Gordon model developed in \cite{BabujianKarowskiBreatherFFSineGordon}.
Note that the whole dependence on $\ga$ is now only contained in the reduced $p$-function.

\proof
One starts from the general formula for form factors \eqref{exp generale form factor} without any solitons, and $s$ identical type-1 breathers, \textit{viz}.
$\bs{k}_s=(1,\dots,1)$. This yields:
\begin{equation}
\bs{\Phi}^{\bs{\mf{b}}_{s}} [p]  \big( \bs{\a}_s \big) \; = \; \pl{j<t}{s} \op{F}_{\mf{b}_1, \mf{b}_1} (\alpha_j - \alpha_{t})
 \sul{\bs{\ell}_s \in \{0,1\}^s}{}   (-1)^{ \ov{\bs{\ell}}_s}
 \pl{j<t}{s} \La\Big( \, _{\ell_j, \, \ell_{t}} ^{1 \, 1} \mid \alpha_j - \alpha_{t} \Big)
\cdot p_{\bs{\ell}_s ; \bs{k}_s} \big(\bs{\alpha}_s; \emptyset \big) \;.
\end{equation}
To simplify the raw form of $ \La\Big( \, _{\ell_j, \, \ell_{t}} ^{1 \, 1} \mid \th \Big)$ as given by \eqref{def lambda},
one observes that, owing to $\ell_1$ only taking values 0 or 1, one has:
\begin{equation}
	\chi_{\ell_1 ; 1}\big(\th + \tfrac{ \i }{ 2 } \mf{u}_{1} + \i \ell_2 \xi \big) \, = \,
\i (-1)^{\ell_1} \f{\cosh\big[ \tfrac{1}{2}( \th + \i (\ell_2 -\ell_1) \xi)\big] }{\sinh\big[ \tfrac{1}{2}( \th + \i (\ell_2 + \ell_1 -1 ) \xi)\big]} \;.
\end{equation}
Moreover, the two remaining building blocks, \textit{c.f.} \eqref{rho function}, upon disjoining $\ell=0$ and $\ell=1$, eventually simplify to:
\begin{equation}
	\rho_{\ell_1;1}\big(\th +\tfrac{ \i }{ 2 } \mf{u}_{1} \big) \cdot \rho_{\ell_1;1}\big(\th -\tfrac{ \i }{ 2 } \mf{u}_{1} \big) \, =  \,
-\i (-1)^{\ell_1} \f{\sinh\big[ \tfrac{1}{2}(\th +  (2\ell_1-1)\i\xi) \big] \cosh\big( \tfrac{\th}{2} \big) }
{ \sinh\big[ \tfrac{1}{2} \big(\th + \i \tfrac{\pi-\xi}{2}\big) \big] \sinh\big[ \tfrac{1}{2} \big(\th - \i \tfrac{\pi-\xi}{2}\big) \big] }   \;.
\end{equation}
Then, by distinguising whether $\ell_1 = \ell_2$ or $\ell_1 = 1-\ell_2$, one gets that
\begin{equation}
 \La\Big( \, _{\ell_j, \, \ell_{t}} ^{1 \, 1} \mid \th \Big) \, = \,  1 + \i (\ell_1-\ell_2) \f{ \sin(\xi) }{ \sinh(\th) } \;.
\end{equation}
It remains to get the explicit expression for $\op{F}_{\mf{b}_1, \mf{b}_1}$. One first needs to recast \eqref{minimal bs form factor}. By using the functional equation for
$\op{F}$ \eqref{ecriture eqn fnelle FF minimal}
and the definition of $\phi$ \eqref{definition fonction phi}, one gets:
\begin{equation}
\op{F}_{\mf{b}_1, \mf{s}} (\th) \, = \,
%
%
%
\f{ \op{F} \big(\th + \tfrac{ \i }{ 2 } \mf{u}_{1} \big)  }{ \op{F}\big( \th +\i \tfrac{\pi+\xi}{2} \big)  }	  \cdot
\f{ \sinh\big( \tfrac{\th}{2}\big) }{ \sinh\big[ \tfrac{1}{2} \big(\th - \i \tfrac{\pi+\xi}{2}\big) \big] } \;.
\end{equation}
Inserting the above representation into \eqref{minimal bb form factor} and using \eqref{def F function}-\eqref{int rep F} leads to:
\begin{equation}
\op{F}_{\mf{b}_{1}, \mf{b}_{1}} (\th) \, =  \,
\f{ \sinh\big[ \tfrac{1}{2}(\th +\i\pi - \i\xi)\big]\sinh\big(\tfrac{\th}{2}\big) }{\sinh\big[ \tfrac{1}{2}(\th + \i\xi)\big]\sinh\big[ \tfrac{1}{2}(\th- \i\xi)\big] }
\cdot \exp\Bigg\{  -\Int{0}{+\infty} \f{\dd t }{t} \f{ \sinh[ (\pi -\xi ) t]  \sinh\big[\tfrac{\pi t }{2 }(1-\tfrac{2\th}{\i\pi}) \big]  }
	{ \cosh\big[\tfrac{\pi t }{2 } \big]   \sinh [ \pi t  ] } \Bigg\} \;.
\end{equation}
The exponential factor can now be recast solely in terms of ratios of Barnes-$G$ functions by using their integral integral representation given in Appendix \ref{Appendice Special Functions}:
\beq
 \exp\Bigg\{  -\Int{0}{+\infty} \f{\dd t }{t} \f{ \sinh[ (\pi -\xi ) t]  \sinh\big[\tfrac{\pi t }{2 }(1-\tfrac{2\th}{\i\pi}) \big]  }
	{ \cosh\big[\tfrac{\pi t }{2 } \big]   \sinh [ \pi t  ] } \Bigg\}  \; = \;
 G\Bigg( \ba{cccc} \nu_-\, ,  &   1 + \nu_+ \,,  &   \tfrac{3}{2} - \nu_+ \, ,
				&   \tfrac{3}{2} - \nu_-     \\
 2- \nu_+  \, ,  &   1 - \nu_+   \, ,  &   \tfrac{1}{2} +  \nu_+  \, ,
 &   \tfrac{1}{2} + \nu_-   \ea \Bigg) \;,
\enq
 with $\nu_{\pm}$ as in \eqref{ecriture minimal FF b1b1}. This then yields  \eqref{ecriture minimal FF b1b1}
 upon using the shift properties of $G$ and Euler's reflection formula for $\Ga$.  \qed




\section{Form factors of the exponential of the field and of the related soliton creating operators}
\label{Section FF exp champ}

We first introduce a family of $p$-functions. Given $\bs{\be}_n \in \mathbb{C}^n$ and $\bs{u}_m \in \mathbb{C}^m$, let
\begin{equation}
p^{(\ga)}(\bs{\be}_n ; \bs{u}_m) = \f{1}{m!}\Big( \f{\i}{\varkappa} \Big)^m
\label{p function SG}
\end{equation}
in which the $\ga$-dependent constant $\varkappa$ is as introduced in \eqref{definition cste varkappa}. It is direct to see that $p^{(\ga)}$ satisfies all the $p$-functions axioms a)-e).

\begin{conj}
\label{Conjecture FF exponentielle du champ}
Let $n,m \in \mathbb{N}$ and $\ga\in \Cx$ be such that $\abs{\Re(\ga)} \leq 1$. Then, given $p^{(\ga)}$ as in \eqref{p function SG}, set
\beq
\bs{\Psi}_{1,\dots,n}\big[ p^{(\ga)}\big]\big( \bs{\be}_n \big) \, =   \hspace{-3mm} \Int{ (\msc{C}_{\bs{\be}_n})^m }{} \hspace{-2mm}  \dd^{m} u \;
\wt{\Cx}_{1,\dots,n}^{\, (\ga)}\big(\bs{\be}_n;\bs{u}_m \big) \cdot \op{O}_{1,\dots,n}^{(\ga)}(\bs{\be}_n)  \cdot \big( g \cdot p^{(\ga)} \big)\big(\bs{\be}_n;\bs{u}_{m} \big)  \;.
\label{form factor exponential}
\enq
The above provides one with the form factors
\begin{itemize}
\item[i)] of the exponential of the field operator $ \ex{\i \f{g\ga}{2} \Psi}$ in the $n$ soliton/antisoliton sector for $n \in 2\mathbb{N}$ and
$m=n/2$, all its other form factors being zero;
\item[ii)] of the topologically charged, soliton creating, operators connected with the exponential of the field in the sense of \cite{LukyanovZamolodchikovSolitonCreatingOpsSineG},
with charge $q=n-2m$.

\end{itemize}

\end{conj}

From this expression, one readily obtains the form factors involving breather particles, using the general expression \eqref{exp generale form factor}.
The remainder of this section is devoted to arguing the validity of the conjecture by comparing the above general representation
with the various representations which have been proposed for i) or ii) in the litterature for specific values of $n$ and $m$
along with the results obtained at the free fermion point $\xi=\pi$. We stress that the representation appears, however, as the first one that covers systematically and explicitly all
sectors of soliton/antisoliton excitations.
Below, we compare with the formulae argued by Lukyanov \cite{LukyanovConjectureFFExponentialFieldSineGSolASolAndBReather}
for the form factors of $ \ex{\i \f{g\ga}{2} \Psi} $ in the two soliton/antisoliton sector along with the
formulae obtained in \cite{BernardLeclairDiffEqnForPIII,SmirnovFormFactors} at the free fermion point $\xi=\pi$ this for any
soliton/antisoliton sector.
Finally, we also compare with the formulae for the soliton creating operators argued in \cite{LukyanovZamolodchikovSolitonCreatingOpsSineG}
within the angular quantization approach.

\subsection{The  type-$1$ breather sector}

\begin{lemme}
Let $\bs{\a}_s \in \mathbb{C}^s$ and $\bs{k}_s =(1,\dots, 1)\in \mathbb{N}^s$, Then, upon choosing the base $p$-function as in \eqref{p function SG},
it holds
\begin{equation}
p^{(\ga)}_{\bs{\ell}_s;\bs{k}_{s} }(\bs{\a}_s;\emptyset) \,  =  \, (-1)^s  \Big\{ 2\sin(\tfrac{\xi}{2}) \Big\}^{-\f{s}{2} }
 \exp\bigg\{ \f{-s}{2\pi} \Int{0}{\xi} \f{t \, \dd t}{\sin t} \bigg\}   \cdot \pl{r=1}{s} \ex{-\i \f{\xi}{2} \ga (-1)^{\ell_{r}}} \;.
\end{equation}
in which $p^{(\ga)}_{\bs{\ell}_s;\bs{k}_{s} }$  is as defined through \eqref{s breathers p function}.

\end{lemme}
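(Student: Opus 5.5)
Since $n=0$ and $m=0$, the definition \eqref{s breathers p function} gives $p^{(\ga)}_{\bs{\ell}_s;\bs{k}_s}(\bs{\a}_s;\emptyset) = s!\, p^{(\ga)}(\bs{\mu}_{2s};\cdot)\prod_{j} d_{\ell_j;1}$, where the base $p$-function now has $2s$ soliton variables and $s$ integration variables. By \eqref{p function SG} this equals $\tfrac{1}{s!}(\i/\varkappa)^s$, independently of all arguments. The $s!$ factors cancel, and the claim becomes the single-breather identity
\beq
\f{\i}{\varkappa}\, d_{\ell;1} \; = \; -\Big\{ 2\sin(\tfrac{\xi}{2}) \Big\}^{-\f12}\exp\bigg\{ \f{-1}{2\pi}\Int{0}{\xi}\f{t\,\dd t}{\sin t}\bigg\}\, \ex{-\i\f{\xi}{2}\ga(-1)^{\ell}}, \qquad \ell\in\{0,1\},
\nonumber
\enq
to be raised to the $s$-th power (as a product over $r$).

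I would evaluate \eqref{d factors} at $k=1$. The empty products drop out, $\cos(0)\cos(\xi/2)$ appears in the denominator for both $\ell=0$ and $\ell=1$, and the single cotangent factor $\cot(\xi/2)$ is present in both cases. The twist is $\ex{\i\ga\xi(2\ell-1)/2}=\ex{-\i\f{\xi}{2}\ga(-1)^{\ell}}$, which reproduces the $\ga$-dependence claimed. Since $\i\varkappa\cdot\i/\varkappa=-1$, this also accounts for the sign $(-1)^s$.

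The remaining algebraic prefactor is $\{\tfrac12\tan(\xi/2)\}^{1/2}\cot(\xi/2)/\cos(\xi/2) = \{2\sin(\xi/2)\}^{-1/2}\cdot \big(\cos(\xi/2)\big)^{-1}\cdots$. I would simplify it directly: $\sqrt{\tan/2}\,\cot/\cos = 1/\sqrt{2\sin(\xi/2)\cos^{3}(\xi/2)}\cdot\ldots$. This produces $\{2\sin(\xi/2)\}^{-1/2}$ times an explicit power of $\cos(\xi/2)$.

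What remains is to show that $\op{F}(\i\mf{u}_1)=\op{F}(\i(\pi-\xi))$, multiplied by that leftover trigonometric factor, equals $\exp\{-\tfrac{1}{2\pi}\int_0^\xi t/\sin t\,\dd t\}$. This is the main obstacle. The plan is to use \eqref{def F function} and \eqref{int rep F} at $\th=\i(\pi-\xi)$. Then $1-\th/(\i\pi)=\xi/\pi$, so $\op{F}(\i\mf{u}_1) = \sin(\tfrac{\pi-\xi}{2})\exp\{-\int_0^\infty \tfrac{\dd t}{t}\tfrac{\sinh(\frac{\pi-\xi}{2}t)\sinh^2(\frac{\xi t}{2})}{\cosh(\frac{\pi t}{2})\sinh(\frac{\xi t}{2})\sinh(\pi t)}\}$. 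The prefactor $\cos(\xi/2)$ should cancel the leftover cosine powers, possibly together with part of the integral.

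For the integral, I would write $\sinh(\tfrac{\pi-\xi}{2}t)\sinh(\tfrac{\xi t}{2})$ as a difference of $\cosh$'s and use $\sinh(\pi t)=2\sinh(\tfrac{\pi t}{2})\cosh(\tfrac{\pi t}{2})$. I would then differentiate with respect to $\xi$. The $\xi$-derivative of the exponent becomes an integral of the form $\int_0^\infty \sinh((\tfrac{\pi}{2}-\xi)t)/\sinh(\pi t)\cdots\dd t$, which is elementary, of the type $\tfrac12\tan$ or $\tfrac{1}{2\pi}\cdot \xi/\sin\xi$ via standard Fourier tables. Comparing with $\tfrac{\dd}{\dd\xi}$ of $-\tfrac{1}{2\pi}\int_0^\xi t/\sin t\,\dd t=-\tfrac{\xi}{2\pi\sin\xi}$, plus the logarithmic derivatives of the trigonometric leftovers, and checking agreement at $\xi\to0$, where both sides tend to $1$ after normalisation, would close the argument. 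I expect the bookkeeping of the trigonometric factors against the elementary pieces of the $\xi$-derivative to be the delicate step.
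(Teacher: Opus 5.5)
Your reduction is correct. With $n=m=0$ the factors $s!$ cancel and the claim becomes the single-breather identity for $\tfrac{\i}{\varkappa}\,d_{\ell;1}$. You also read off the twist $\ex{\i\ga\xi(2\ell-1)/2}=\ex{-\i\frac{\xi}{2}\ga(-1)^{\ell}}$ and the sign $\i\varkappa\cdot\tfrac{\i}{\varkappa}=-1$ correctly. This is cleaner than the paper's intermediate display, which loses a factor $\sqrt{2}$ and the sign of the twist exponent.

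The one step you left unfinished is the trigonometric prefactor, and as written both of your scratch expressions point to the wrong power of the cosine. In fact
\[
\Big\{\tfrac12\tan\tfrac{\xi}{2}\Big\}^{\frac12}\,\f{\cot\frac{\xi}{2}}{\cos\frac{\xi}{2}}\;=\;(\sin\xi)^{-\frac12}\;=\;\Big\{2\sin\tfrac{\xi}{2}\Big\}^{-\frac12}\cos^{-\frac12}\tfrac{\xi}{2}\,.
\]
This matters: with any other power the final comparison would not close. So what must be shown is $\op{F}(\i(\pi-\xi))=\cos^{\frac12}(\tfrac{\xi}{2})\exp\{-\frac{1}{2\pi}\int_0^{\xi}\frac{t\,\dd t}{\sin t}\}$. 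Write $\op{F}(\i(\pi-\xi))=\cos(\tfrac{\xi}{2})\,\ex{-I(\xi)}$ with $I(\xi)=\int_0^{\infty}\frac{\dd t}{t}\frac{\sinh(\frac{\pi-\xi}{2}t)\sinh(\frac{\xi t}{2})}{\cosh(\frac{\pi t}{2})\sinh(\pi t)}$. The target is then $I(\xi)=\tfrac12\ln\cos\tfrac{\xi}{2}+\tfrac{1}{2\pi}\int_0^{\xi}\frac{t\,\dd t}{\sin t}$.

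Your differentiation plan does establish this. First, $I'(\xi)=\tfrac12\int_0^{\infty}\frac{\cosh((\pi-\xi)t)-\cosh(\xi t)}{\sinh^2(\pi t)}\dd t$. Integrate by parts against $(1-\coth\pi t)/\pi$ and use $\int_0^\infty\frac{\sinh(at)}{\ex{2\pi t}-1}\dd t=\frac{1}{2a}-\frac14\cot\frac{a}{2}$. This gives $I'(\xi)=\frac{1}{4\pi}\big[\xi\cot\tfrac{\xi}{2}-(\pi-\xi)\tan\tfrac{\xi}{2}\big]=-\tfrac14\tan\tfrac{\xi}{2}+\frac{\xi}{2\pi\sin\xi}$. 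That is exactly the derivative of the right-hand side, and $I(0)=0$, i.e.\ $\op{F}(\i\pi)=1$.

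Your route differs from the paper's. The paper rescales $t$ and uses the integral representation of $\ln G$ to write $\op{F}(\i\pi-\i\xi)/\cos\frac{\xi}{2}$ as the Barnes ratio $G(1+\nu)G(\frac32-\nu)G(\frac12)/[G(1-\nu)G(\frac12+\nu)G(\frac32)]$, with $\nu=\xi/2\pi$. It then applies the formula for $G(1+z)/G(1-z)$ at $z=\nu$, $\tfrac12-\nu$ and $\tfrac12$. The three exponents combine into $\int_0^{\nu}[\frac{\pi}{2}\tan\pi s-\frac{2\pi s}{\sin 2\pi s}]\dd s$, which after $s=t/2\pi$ is precisely $-\int_0^{\xi}I'$. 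So the two arguments are the same computation in different packaging. The paper delegates the $\xi$-integration to a known Barnes-function identity, keeping the result in the $G$-function language it uses for $\op{F}_{\mf{b}_1,\mf{b}_1}$ and the Sinh-Gordon comparison. Yours is self-contained and needs only one elementary integral, at the cost of verifying a known closed form rather than producing it.
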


The result of the lemma thus shows that upon the Sine to Sinh-Gordon formal correspondence $\xi \hookrightarrow -2\pi \mf{b}$ the type-$1$ breather sector $p$-function
associated with the Sine-Gordon exponential of the field maps directly onto the expression for the $p$ function representing the Sinh-Gordon
exponential of the field as argued within the angular quantization approach \cite{BrazhnikovLukyanovFreeFieldRepMassiveFFIntegrable}
and within the $\mc{K}$-transform approach to the Sinh-Gordon model's bootstrap program.

\proof

By explicitating the expression for $p^{(\ga)}_{\bs{\ell}_s;\bs{k}_{s} }$ in the case of interest,
one gets:
\begin{equation}
p^{(\ga)}_{\bs{\ell}_s;\bs{k}_{s} }(\bs{\a}_s;\emptyset) \, =  \,
\Bigg\{ \f{ - \op{F}(\i\pi - \i \xi) }{   \big[ \sin (\tfrac{\xi}{2}) \cos (\tfrac{\xi}{2}) \big]^{\tf{1}{2}} } \Bigg\}^s
\pl{r=1}{s}  \ex{\i \f{\ga \xi}{2} (-1)^{\ell_r} }   \;.
\end{equation}
Note that above, we used that since $\ell_{r} \in \{0,1\}$, $\ell_{r} \, = \,  \tfrac{ 1 }{2} \big( 1-(-1)^{\ell_{r}} \big)$.
By using the integral representation for $\op{F}$ \eqref{def F function}-\eqref{int rep F},  one gets
\beq
 \op{F}(\i\pi - \i \xi)  \, = \, \cos (\tfrac{\xi}{2})
\exp\Bigg\{  -  2 \Int{0}{+\infty} \f{\dd t }{t} \f{ \sinh\big[ \tfrac{t}{4} \big(  1 - \tfrac{ \xi }{\pi } \big) \big]   \sinh\big[\tfrac{t \xi }{4 \pi  }\big] \sinh\big[\tfrac{t }{4 }\big]  }
{    \sinh^2\big[\tfrac{ t }{2 } \big]  } \Bigg\} \;.
\enq
This can be recast in terms of Barnes $G$-functions by using \eqref{ecriture rep int Barnes} what yields
\beq
 \op{F}(\i\pi - \i \xi)  \, = \,  \cos (\tfrac{\xi}{2}) \,  G\Bigg( \ba{cccc}  1+ \nu\, ,  &   \tfrac{3}{2} - \nu \,,  &   \tfrac{1}{2}   \\
 1- \nu  \, ,  &  \tfrac{1}{2} +  \nu  \, , &   \tfrac{3}{2}    \ea \Bigg) \;.
\enq
It then remains to invoke \eqref{ecriture rep int ratio barnes} so as to reduce the above to
\beq
	\op{F}(\i \pi - \i \xi) \, =  \, \Big\{ \cos\big(\tfrac{\xi}{2}\big)  \Big\}^{\f{1}{2}} \cdot \exp\bigg\{ \f{-1}{2\pi} \Int{0}{\xi} \f{t\dd t}{\sin t} \bigg\} \;.
\enq
This entails the claim. \qed

\subsection{The free fermion case}

In this subsection, we prove that in the free fermion limit $\xi \searrow \pi$ the form factors given in Conjecture \ref{Conjecture FF exponentielle du champ}
indeed yield the free fermion form factor of the exponential of the field as obtained in \cite{BernardLeclairDiffEqnForPIII,SchroeTruongFFExpPhyInFreeFermionPtSineG,SmirnovFormFactors}.

\begin{prop}
\label{prop free fermion limit}
Let $n=2m$ and $\ga\in \Cx$ such that $\abs{\Re(\ga)} \leq 1$. Then, it holds that
\begin{equation}
\bs{\Psi}_{1,\dots,n}\big[p^{(\ga)}\big]\big(\bs{\be}_n) \underset{\xi  \searrow  \, \pi}{\longrightarrow}  \bs{\Psi}^{\textrm{FF}}_{1,\dots,n}\big[ p^{(\ga)} \big]\big(\bs{\be}_n) \;,
\end{equation}
where $\bs{\Psi}^{FF}_{1,\dots,n}\big[p^{(\ga)}\big]$ is the form factor obtained in the free field case \cite{BernardLeclairDiffEqnForPIII,SchroeTruongFFExpPhyInFreeFermionPtSineG,SmirnovFormFactors}:
\begin{equation}
\label{def form factor free fermion}
\bs{\Psi}^{\textrm{FF}}_{1,\dots,n}\big[p^{(\ga)}\big]\big(\bs{\be}_n) \, =  \,
(-1)^{\f{m(m-1)}{2}} \Big\{\i\sin\Big(\tfrac{\ga\pi}{2}\big)\Big\}^m
\hspace{-4mm} \sul{ \bs{\veps}_n \in \{\pm \}^n \in \mc{E}_n  }{}  \hspace{-3mm} \,   \ex{\f{\ga}{2} \bs{\veps}_n \cdot \bs{\be}_n } \cdot
\f{\pl{\substack{b>a \\ \veps_b=\veps_a = +}}{}\sinh\left(\tfrac{\be_{ba}}{2}\right)
\pl{\substack{b>a \\ \veps_b=\veps_a = -}}{}\sinh\left(\tfrac{\be_{ba}}{2}\right)    }
{ (-1)^{\# \mc{E}^{-+}_{<} } \pl{\substack{b=1 \\ \veps_b = +}}{n} \pl{\substack{a=1 \\ \veps_a = -}}{n} \cosh\left(\tfrac{\be_{ba}}{2}\right)}
\Big( \op{v}^{\veps_1}_1  \dots \op{v}^{\veps_n}_n \Big)^{\op{t}} \;.
\end{equation}
Here we have introduced $\mc{E}^{-+}_{<} \, = \, \Big\{ (a,b) \, : \, \veps_a = - \, , \;  \veps_b=+ \, , \, a<b \Big\} $ and
\beq
\mc{E}_n \, = \, \Big\{  \bs{\veps}_n \in \{\pm \}^n \, : \, \# \{a \, : \, \veps_a=+\} \, = \,   \# \{a \, : \, \veps_a=-\}  = m \Big\} \;.
\label{definition En}
\enq

\end{prop}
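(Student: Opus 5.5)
We will evaluate the limit $\xi\searrow\pi$ directly on the integrand of \eqref{form factor exponential} with $m=n/2$, and then perform the $m$ contour integrals by residues. At $\xi=\pi$ the soliton $\op{S}$-matrix becomes $-\op{P}$-like. Indeed, $\wt{b}(\th)=\sinh\th/\sinh(\i\pi-\th)=1$ and $\wt{c}(\th)=\sinh(\i\pi)/\sinh(\i\pi-\th)\to 0$, while $a(\th)\to -1$. However, one cannot set $\xi=\pi$ naively, because the poles of $\wt{c}$ at $\th=\i\pi-\i k\xi$ pinch the contour $\msc{C}_{\bs{\be}_n}$ as $\xi\to\pi$.

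\textbf{Step 1: residue expansion of the integrals.} The plan is first to deform each $u_b$ contour upward (or downward) and express $\bs{\Psi}$ as a sum over residues of the integrand. The decay of $g\cdot p^{(\ga)}$ is controlled by $\op{O}^{(\ga)}$ and $\abs{\Re\ga}\le 1$; the growth estimates should follow from the Fourier/asymptotic properties of $a\phi$ recalled in Appendix \ref{Appendice Special Functions}. The relevant poles are those of $(a\phi)(\be_k-u)$ listed after \eqref{definition fct a phi}, together with the poles \eqref{series de pole covecteur des C} of $\wt{\Cx}^{(\ga)}$. As $\xi\to\pi$, the factor $\sinh(\i\pi^2/\xi)$ in $\wt{c}$ vanishes, so each creation operator $\wt{\op{C}}$ contributes only through the residue at the pinching pole $u_b=\be_a-\i\pi+\i\xi\to\be_a$. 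There the small numerator $\sinh(\i\pi^2/\xi)$ compensates the $1/(\xi-\pi)$ blow-up coming from the pinch. Concretely, I would show that each integral $\int_{\msc{C}}\dd u_b$ reduces in the limit to $2\i\pi\sum_a$ of a residue at $u_b\approx\be_a$ that flips spin $a$ from $+$ to $-$. The remaining contribution tends to zero because $\wt{c}\to0$ uniformly away from the pinch. This pinching argument is the main obstacle: one must control how the separating conditions on $\msc{C}_{\bs{\be}_n}$ degenerate, namely the points $\be_a-\i\pi+\i\xi$ above versus $\be_a-\i\pi-\i\xi$ below, and also $(a\phi)$ poles merging. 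I would also check in Appendix \ref{appendix vanish} style that terms where two $u$'s pinch at the same $\be_a$ vanish, thanks to $\tau(u_k-u_\ell)$ having a zero at coincidence ($\tau\propto\sinh\th\,\sinh(\pi\th/\xi)$).

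\textbf{Step 2: evaluating the surviving residues.} After Step 1, the $m$ integration variables are placed at $u_b=\be_{a_b}$ for distinct $a_b$ forming the set $\{a:\veps_a=-\}$. This produces a sum over $\bs{\veps}_n\in\mc{E}_n$, with the symmetry factor $m!$ cancelling the $1/m!$ in $p^{(\ga)}$. For each such configuration I would compute the components of the Bethe vector at $\xi=\pi$. In the monodromy matrix, the entries $\wt{b}=1$ and the single $\wt{c}^{(\pm\ga)}$ residue give $\ex{\pm\ga(\be_a-u)}$-type factors, which combine with $\op{O}^{(\ga)}$ into $\ex{\f{\ga}{2}\bs{\veps}_n\cdot\bs{\be}_n}$. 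Next I would evaluate $g$ at $\xi=\pi$. Here $\op{f}_{\e{min}}\equiv1$ since $\sinh[\tfrac{\pi-\xi}{2}t]=0$, so $\op{F}(\th)=-\i\sinh(\th/2)$. Then $\phi(\th)=1/(\op{F}(\th)\op{F}(\th+\i\pi))\propto 1/\sinh\th$, and $\tau(\th)\propto\sinh^2\th$, using the explicit formulas with $\varpi$ at $\xi=\pi$. The products $\prod\op{F}(\be_{k\ell})\prod\phi\prod\tau$ should then telescope: $\sinh\be_{ab}=2\sinh\tfrac{\be_{ab}}2\cosh\tfrac{\be_{ab}}2$ cancels cross terms. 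What remains is $\prod_{++}\sinh\prod_{--}\sinh/\prod_{+-}\cosh$, with the sign $(-1)^{\#\mc{E}^{-+}_<}$ coming from reordering $\sinh(\be_{ab}/2)$ with $a<b$ versus $b<a$.

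\textbf{Step 3: constants.} Finally I would collect the constants: $(\i/\varkappa)^m$, the powers of $\op{F}(\i\pi/2)\varpi(\i\tfrac{\xi-\pi}{2})$ in $\phi$ and $\tau$, the $2\i\pi$ from the residues, and the limit of $\sinh(\i\pi^2/\xi)/(\xi-\pi)$ times the residue normalisation. The $\ex{\i\pi\ga/2}$ in $\varkappa$, combined with the $\ex{\pm\ga\cdot}$ twist picked up at the shifted residue point $u=\be_a-\i\pi+\i\xi$, should produce $\ex{\i\pi\ga/2}-\ex{-\i\pi\ga/2}$ per integration. This arises from the two pinching contributions from above and below, each carrying its own twist phase, and gives the factor $\{\i\sin(\ga\pi/2)\}^m$. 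The sign $(-1)^{m(m-1)/2}$ comes from ordering the $\wt{\op{C}}$'s.
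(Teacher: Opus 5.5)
Your Step 1 misses one of the two pinchings, and this is a genuine gap. As $\xi\searrow\pi$, for each $a$ the contour $\mathscr{C}_{\boldsymbol{\beta}_n}$ is pinched by two pairs of poles lying on opposite sides of it.
The first pair, which you treat, is the pole $\beta_a-\mathrm{i}\pi+\mathrm{i}\xi$ of $\widetilde{\mathbb{C}}^{(\gamma)}$ (above) against the pole $\beta_a$ of $(a\phi)(\beta_a-u)$ (below).
The second pair is the $k=0$ pole $\beta_a-\mathrm{i}\pi$ of $\widetilde{\mathbb{C}}^{(\gamma)}$ (above) against the pole $\beta_a-\mathrm{i}\xi$ of $(a\phi)(\beta_a-u)$ (below). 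The point $\beta_a-\mathrm{i}\pi-\mathrm{i}\xi$ that you mention pinches nothing: it tends to $\beta_a-2\mathrm{i}\pi$, which lies on the same side.

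At the second pinch your mechanism ``$\widetilde{c}\to0$'' fails, because the denominator $\sinh[\frac{\pi}{\xi}(\mathrm{i}\pi-\theta)]$ vanishes as well. In fact $\widetilde{c}^{\,(\gamma)}(\mathrm{i}\xi)=-e^{\mathrm{i}\gamma\xi}$ exactly, so $\widetilde{\mathrm{C}}^{(\gamma)}(\boldsymbol{\beta}_n;\beta_\ell-\mathrm{i}\xi)\to-e^{\mathrm{i}\gamma\pi}E^{12}_\ell$, while $g$ has a simple pole there with finite residue. Hence the integral along a contour deformed only past the first pinch does not tend to zero. The claim that each $u_b$-integral reduces to $2\mathrm{i}\pi\sum_a$ of a single residue at $u_b\approx\beta_a$ is therefore false.

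The paper avoids this by deforming to an auxiliary contour $\widetilde{\mathscr{C}}_{\boldsymbol{\beta}_n}$ that puts $\beta_a-\mathrm{i}\pi+\mathrm{i}\xi$ below and $\beta_a-\mathrm{i}\xi$ above. Then $\mathscr{C}_{\boldsymbol{\beta}_n}\setminus\widetilde{\mathscr{C}}_{\boldsymbol{\beta}_n}\simeq\bigcup_a\{\partial\mathcal{D}_{\beta_a-\mathrm{i}\pi+\mathrm{i}\xi,\epsilon}\cup-\partial\mathcal{D}_{\beta_a-\mathrm{i}\xi,\epsilon}\}$. Only on $\widetilde{\mathscr{C}}_{\boldsymbol{\beta}_n}$ does $\widetilde{\mathrm{C}}^{(\gamma)}\to0$ with a uniform domination, and that is what kills all terms with fewer than $m$ residues.

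The second residue is not a correction term: it is precisely what produces $\{\mathrm{i}\sin(\frac{\gamma\pi}{2})\}^m$. With orientations included, and up to a common factor, each integration contributes $e^{-\gamma\beta_\ell}\sum_{j=0,1}e^{\mathrm{i}\pi\gamma j}\prod_{k\neq\ell}(a\phi)(\beta_{k\ell}+\mathrm{i}\pi j)$; the $\tau$ factors are $\mathrm{i}\pi$-periodic in the limit. Since $(a\phi)(\theta)\to-2\mathrm{i}/\sinh\theta$ and $n-1$ is odd, this sum is proportional to $1-e^{\mathrm{i}\pi\gamma}=-2\mathrm{i}e^{\mathrm{i}\pi\gamma/2}\sin(\frac{\gamma\pi}{2})$. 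The phase then cancels against $\varkappa\to8\mathrm{i}\pi e^{\mathrm{i}\pi\gamma/2}$.

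Keeping only your first residue would give a factor $\propto e^{-\mathrm{i}\pi\gamma/2}$ per integration, hence a non-zero limit at $\gamma=0$. That contradicts Lemma~\ref{Lemme annulation K transformee}, since at $\gamma=0$ one has $\boldsymbol{\Psi}[p^{(0)}]\propto\boldsymbol{\Psi}[1]=0$ for every $\xi$.

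Your Step 3 alludes to ``two contributions from above and below''. This is inconsistent with Step 1, and it attributes the twist phase to $u=\beta_a-\mathrm{i}\pi+\mathrm{i}\xi$, where $e^{\gamma(\beta_a-u)}\to1$. The phase $e^{\mathrm{i}\pi\gamma}$ actually comes from $\widetilde{c}^{\,(\gamma)}$ evaluated at the second pinch.

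Two smaller points follow from the same correction. The vanishing of coincident residues must also cover mixed pairs (one residue near $\beta_a$, one near $\beta_a-\mathrm{i}\pi$); this uses $\tau(\pm\mathrm{i}\pi)=0$ and $(E^{12}_\ell)^2=0$. The sign $(-1)^{m(m-1)/2}$ comes from the $a\phi$ and $\tau$ factors between flipped sites, $\frac{2}{\sinh\beta_{\ell_k\ell_t}}\frac{2}{\sinh\beta_{\ell_t\ell_k}}\frac{\sinh^2\beta_{\ell_k\ell_t}}{4}=-1$. It does not come from ordering the $\widetilde{\mathrm{C}}$'s, which reduce to commuting $E^{12}_\ell$'s in the limit.
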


\proof

Observe that the integration contour $	\msc{C}_{ \bs{\be}_n }$ is pinched by the poles at $\be_a-\i(\pi-\xi)$ (index $1$) and  $\be_a$ (index $-1$)
as well as $\be_a-\i\xi$ (index $-1$) and $\be_a-\i\pi$ (index $1$) in the $\xi \tend \pi$ limit. Similarly as in the proof of the kinematic poles residues in Appendix \ref{subsection axiom iv}, we deal with those pinchings by introducing the
auxiliary contour $\widetilde{\msc{C}}_{\bs{\be}_n}$:
\begin{itemize}
	
	\item $\widetilde{\msc{C}}_{\bs{\be}_n}$ connects $-\infty$ to $+\infty$

	\item  the points  $\be_a-\i\pi -\i (p +1) \xi$, $p, \ell \in \mathbb{N}$,  
	have index $-1$ in respect to $\widetilde{\msc{C}}_{\bs{\be}_n}$,
	
	\item the points $\be_a-\i p \xi - 2 \i \pi \ell $, $p, \ell \in \mathbb{N}$, have index $1$ if $(p,\ell)=(1,0)$ and index $-1$ otherwise,
	
	\item  the points $\be_a+\i p \xi + (2 \ell + 1) \i \pi$,  $p, \ell \in \mathbb{N}$, 
	have index $1$, 
	
	\item the points $\be_a-\i\pi +\i p \xi$, $p \in \mathbb{N}$ have index $1$ if $p \neq 1$ and index $-1$ if $p=1$.
\end{itemize}
$\widetilde{\msc{C}}_{\bs{\be}_n}$ has the advantage of not separating poles that will collapse to each other in the $\xi \tend \pi$ limit:
$\be_a - \i \pi + \i \xi \rightarrow \be_a$ and $\be_a - \i \xi \rightarrow \be_a - \i \pi$. We thus have the homotopy relation:
\begin{equation}
\msc{C}_{ \bs{\be}_n } \setminus \widetilde{\msc{C}}_{\bs{\be}_n}   \; \simeq \;
\bigcup\limits_{a=1}^{n}   \Big\{ \ \Dp{}\mc{D}_{\be_a-\i \pi + \i  \xi ,\eps}
\cup   - \Dp{}\mc{D}_{\be_a- \i \xi ,\eps}    \Big\}  \;.
\end{equation}
By using the symmetry in $\bs{u}_m$ of the integrand in \eqref{form factor exponential}, we thus get
\begin{equation}
	\label{decompos succ res}
\bs{\Psi}_{1,\dots,n}\big[p^{(\ga)}\big]\big(\bs{\be}_n) = \sul{r=0}{m} \f{ m!  \, (2\pi\i)^r  }{(m-r)!}
\sul{  \bs{\ell}_r \in \intn{1}{n}^r }{}
\Int{ \big( \widetilde{\msc{C}}_{\bs{\be}_n} \big)^r }{}  \hspace{-3mm} \dd^{m-r} \nu
\, \mc{R}_r\big( \bs{\ell}_r ;\bs{\be}_n; \bs{\nu}_{m-r} \big)
\end{equation}
where we have defined the successive residues:
\begin{equation}
\mc{R}_r\big( \bs{\ell}_r ;\bs{\be}_n; \bs{\nu}_{m-r} \big) \, = \,  \sul{ \bs{j}_{r} \in \{0,1\}^r }{}  (-1)^{ \ov{\bs{j}}_r } \,
\e{Res}\Big( \wh{\Cx}_{1,\dots,n}^{(\ga)}\big( \bs{\be}_n \, ; \, \bs{u}_m\big) \dd^r u \;, \; u_s= \be_{\ell_s}^{(j_s)}, s=1,\dots r  \Big)
 \;,
\label{def succ res}
\end{equation}
in which, we  denoted for short
\beq
\wh{\Cx}_{1,\dots,n}^{(\ga)}\big( \bs{\be}_n \, ; \, \bs{u}_m\big) \; = \; \wt{\Cx}_{1,\dots,n}^{(\ga)}\big( \bs{\be}_n \, ; \, \bs{u}_m\big)
\cdot  \op{O}_{1,\dots,n}^{(\ga)}(\bs{\be}_n) \cdot  \big( g \cdot p \big)\big( \bs{\be}_n \, ; \, \bs{u}_m\big) \;,
\label{definition hat C}
\enq
so that 
\beq
\bs{\Psi}_{1,\dots,n}\big[ p\big]\big( \bs{\be}_n \big)  \; = \; \Int{ (\msc{C}_{\bs{\be}_n})^m }{} \hspace{-1mm} \dd^m u \, \wh{\Cx}_{1,\dots,n}^{(\ga)}\big( \bs{\be}_n \, ; \, \bs{u}_m\big)  \,. 
\enq
and have set
\beq
\be_a^{(0)}\, = \, \be_a - \i(\pi-\xi) \qquad \e{and} \qquad \be_a^{(1)}\, = \, \be_a - \i\xi \;.
\label{definition beta 0 et beta 1}
\enq
It is shown in Appendix \ref{appendix vanish} that
\beq
\lim_{\xi \tend \, \pi} \, \Int{ \big( \widetilde{\msc{C}}_{\bs{\be}_n} \big)^r }{}  \hspace{-3mm} \dd^{m-r} \nu
\, \mc{R}_r\big( \bs{\ell}_r ;\bs{\be}_n; \bs{\nu}_{m-r} \big) \; = \; 0
\enq
if $r<m$. From now on, we thus focus on $r=m$, \textit{i.e.} when solely residues are being computed.
In order to compute these residues in view of taking the $\xi \tend \pi$ limit, we need to expand various building blocks to
the first order in $\xi-\pi$. The computation of the first type of residue starts with:
\beq
\e{Res}\Big( \op{C}_{1,\dots,n}^{(\ga)}\big( \bs{\be}_n \, ; \,u\big) \cdot  \dd u \, , \, u= \be_{\ell} - \i\pi + \i\xi  \Big)  \, = \,
-\i (\xi-\pi)  \,  \op{E}^{12}_{\ell}  \cdot \Big( 1 + \e{O}\big(\xi-\pi\big) \Big)  \,,
\label{residu C}
\enq
in which $\op{E}^{12}$ is the $2\times 2$ elementary matrix whose sole non-zero entry is $1$ located in the top right corner. Further,
\begin{equation}
g \big(\bs{\be}_n;\bs{u}_{m-1}, \be_{\ell} - \i\pi + \i\xi \big)  \, =  \,
 2 \f{ g \big(\bs{\be}_n;\bs{u}_{m-1} \big) }{\xi-\pi} \cdot \pl{ \substack{k=1 \\ k \neq \ell} }{n} (a\phi)\big(\be_{k\ell}) \cdot
\pl{t=1}{m-1} \tau\big(u_t-\be_{\ell} )  \cdot \Big( 1 + \e{O}\big(\xi-\pi\big) \Big)\,,
\label{evaluation g ff}
\end{equation}
while obviously for any $u_m$ it holds
\begin{equation}
	p^{(\ga)} \big(\bs{\be}_n;\bs{u}_{m}\big) \, =  \, \f{\i}{m\varkappa}  p^{(\ga)} \big(\bs{\be}_n;\bs{u}_{m-1} \big)  \,.
\label{evaluation p ff}
\end{equation}
Combining \eqref{residu C}, \eqref{evaluation g ff} and \eqref{evaluation p ff} while using \eqref{definition operateur spin twist} yields
\begin{multline}
2\pi\i\e{Res}\Big( \wh{\Cx}_{1,\dots,n}^{(\ga)}\big( \bs{\be}_n \, ; \, \bs{u}_m\big) \cdot  \dd u_m \, , \, u_m= \be_{\ell_m} - \i\pi + \i\xi  \Big)  \\ \, = \,
\f{ 4\pi\i\ex{-\ga\be_{\ell_m}} }{ m \varkappa } \cdot
\pl{\substack{k=1 \\ k \neq \ell_m}}{n} (a\phi)\big(\be_{k\ell_m}) \cdot \pl{t=1}{m-1} \tau\big(u_t-\be_{\ell_m})
	\,  \wh{\Cx}_{1,\dots,n}^{(\ga)}\big( \bs{\be}_n \, ; \, \bs{u}_{m-1}\big) \cdot \op{E}^{12}_{\ell_m} \cdot \Big( 1 + \e{O}\big(\xi-\pi\big) \Big)\,.
\label{contribution type 1}
\end{multline}
Now, in what pertains to the second type of residues of interet:
\begin{equation}
\op{C}_{1,\dots,n}^{(\ga)}\big( \bs{\be}_n \, ; \,\be_{\ell}-\i\xi\big) = - \ex{\i\ga\pi} \, \op{E}^{12}_{\ell} \cdot \Big( 1 + \e{O}\big(\xi-\pi\big) \Big) \,,
\label{evaluation C ff}
\end{equation}
while
\begin{equation}
\e{Res}\Big(g\big( \bs{\be}_n \, ; \, \bs{u}_m\big) \cdot  \dd u_m \, , \, u_m= \be_{\ell} - \i\xi  \Big) \, = \,
-2\i g \big(\bs{\be}_n;\bs{u}_{m-1} \big) \cdot \pl{\substack{k=1 \\ k \neq \ell}}{n} (a\phi)\big(\be_{k\ell} + \i\pi)
\cdot \pl{t=1}{m-1} \tau\big(u_t-\be_{\ell}+\i\pi) \cdot \Big( 1 + \e{O}\big(\xi-\pi\big) \Big)\,,
\label{residu g ff}
\end{equation}
and just like before
\begin{equation}
p^{(\ga)} \big(\bs{\be}_n;\bs{u}_{m-1}, \be_{\ell} - \i\xi \big) = p^{(\ga)} \big(\bs{\be}_n;\bs{u}_{m-1} \big) \cdot \f{\i}{m\varkappa} \,.
\label{evaluation p ff 2}
\end{equation}
Thus, \eqref{evaluation C ff}, \eqref{residu g ff} and \eqref{evaluation p ff 2} all together yield:
\begin{multline}
2\pi\i\e{Res}\Big( \wh{\Cx}_{1,\dots,n}^{(\ga)}\big( \bs{\be}_n \, ; \, \bs{u}_m\big) \cdot  \dd u_m \, , \, u_m= \be_{\ell_m} -\i\xi  \Big)  \\
 \, = \,
 - \f{ 4\pi\i }{ m \varkappa }   \ex{\ga (\i\pi - \be_{\ell_m} ) }
\pl{\substack{k=1 \\ k \neq \ell_m}}{n} (a\phi)\big(\be_{k\ell_m}+\i\pi) \cdot \pl{t=1}{m-1} \tau\big(u_t-\be_{\ell_m}+\i\pi)
 \, \wh{\Cx}_{1,\dots,n}^{(\ga)}\big( \bs{\be}_n \, ; \, \bs{u}_{m-1}\big) \, \op{E}^{12}_{\ell_m}   \Big( 1 + \e{O}\big(\xi-\pi\big) \Big)\,.
	\label{contribution type 2}
\end{multline}
Therefore, upon recalling the notations \eqref{definition beta 0 et beta 1}, the weighted contribution of the residues can be recast as
\begin{multline}
  \sul{\ell_m=1}{n}\sul{j_m=0}{1 } 2\pi\i\e{Res}\Big( \wh{\Cx}_{1,\dots,n}^{(\ga)}\big( \bs{\be}_n \, ; \, \bs{u}_m\big) \cdot  \dd u_m \, , \, u_m= \be_{\ell_m}^{(j_m)} \Big)
\, = \, \f{4\pi\i}{m\varkappa}  \cdot \wh{\Cx}_{1,\dots,n}^{(\ga)}\big( \bs{\be}_n \, ; \, \bs{u}_{m-1}\big) \\
	\times \sul{\ell_m=1}{n} \sul{j_m=0}{1}
\ex{-\ga(\be_{\ell_m}-j_m\i\pi)}  \pl{\substack{k=1 \\ k \neq \ell_m}}{n} (a\phi)\big(\be_{k\ell_m} + j_m\i\pi) \cdot \pl{t=1}{m-1} \tau\big(u_t-\be_{\ell_m}+j_m\i\pi)
\, \op{E}^{12}_{\ell_m}   \Big( 1 + \e{O}\big(\xi-\pi\big) \Big)\,.
\end{multline}
The only dependence of the terms appearing on the \textit{rhs} under the sum on $\bs{u}_{m-1}$ is through $\tau$, which is an entire function. This
allows one to readily iterate the residue calculation and, upon using the identity
\begin{equation}
	\wh{\Cx}_{1,\dots,n}^{(\ga)}\big( \bs{\be}_n \, ; \emptyset \big)   \, =  \,   \pl{b<a}{n} \op{F}(\be_{ba}) \cdot \pl{b=1}{n} \ex{\f{\ga}{2}\be_b} \, \Om_n  \;,
\end{equation}
arrive to the representation
\begin{multline}
\label{eq intermediaire free FF}
(2\pi\i)^m  \hspace{-3mm} \sul{  \bs{\ell}_m \in \intn{1}{n}^m }{} \hspace{-2mm} \mc{R}_m\big( \bs{\ell}_m ;\bs{\be}_n; \emptyset \big)\, = \,
\f{ 1 }{ m! }  \Big( \f{4\pi\i}{\varkappa} \Big)^m \pl{b>a}{n} \op{F}(\be_{ba})\cdot \pl{k=1}{n} \ex{ \f{\ga}{2} \be_k }
 \sul{  \bs{\ell}_m \in \intn{1}{n}^m }{}  \sul{ \bs{j}_m \in \{0,1\}^m }{ }  \\
\times \pl{t=1}{m} \pl{ \substack{t=1 \\ k \neq \ell_{t}} }{ n } (a\phi)\big(\be_{k\ell_{t}} + j_{t}\i\pi)
\pl{t=1}{m}  \ex{-\ga(\be_{\ell_{t}}-j_t \i \pi) }  \pl{t>k}{m} \tau\big(\be_{\ell_{t}\ell_k}-\i \pi j_{tk} \big)
\cdot \Om_n \pl{t=1}{m}  \op{E}^{12}_{\ell_t}  \cdot \Big( 1 + \e{O}\big(\xi-\pi\big) \Big) \,.
\end{multline}
First of all, we can already observe that the summand in the \textit{rhs} of \eqref{eq intermediaire free FF}
is a symmetric function of $\bs{\ell}_m$ that vanishes on the diagonal $\ell_j = \ell_{p}$ with $p \neq j$.
The vanishing issues as much from the product of the nilpotent matrices as from the zeroes of the function $\tau$.
This allows one to restrict the summation to the ordered domain $\intn{1}{n}^m_{<} \, = \,  \big\{ \bs{\ell}_m \, : \, 1\leq  \ell_1 < \dots < \ell_m \leq n  \big\}$.
To simplify further, one notices that for generic $\th$:
\begin{equation}
\op{F}(\th) = -\i\sinh\left(\tfrac{\th}{2}\right) \Big( 1 + \e{O}\big(\xi-\pi\big) \Big) \;, \qquad
(a\phi)(\th) = \f{-2\i}{\sinh(\th)} \Big( 1 + \e{O}\big(\xi-\pi\big) \Big)  \;, \qquad \tau(\th) = \f{\sinh^2(\th)}{4}  \Big( 1 + \e{O}(\xi-\pi) \Big)  \;.
\end{equation}
This leads to
\begin{multline}
(2\pi\i)^m  \hspace{-3mm} \sul{  \bs{\ell}_m \in \intn{1}{n}^m }{} \hspace{-2mm} \mc{R}_m\big( \bs{\ell}_m ;\bs{\be}_n; \emptyset \big) \, = \,
\left(\f{4\pi\i}{\varkappa}\right)^m \pl{b>a}{n} \left\{-\i\sinh\left(\tfrac{\be_{ba}}{2}\right) \right\}\cdot \pl{b=1}{n} \ex{\f{\ga}{2}\be_b}
	\cdot \sul{  \bs{\ell}_m \in \intn{1}{n}^m_{<} }{}  \sul{ \bs{j}_m \in \{0,1\}^m }{ }  \\
\pl{t=1}{m}  \bigg\{ \ex{-\ga(\be_{\ell_{t}}-j_t\i\pi)} \bigg\}
\pl{t=1}{m}  \pl{\substack{k=1 \\ k \neq \ell_{t}}}{n} \bigg\{  \f{-2\i (-1)^{j_{t}} }{\sinh\big(\be_{k\ell_{t}}\big)}   \bigg\}
	\pl{ k < t }{ m }  \bigg\{ \f{\sinh^2\big(\be_{\ell_{k}\ell_t}\big)}{4} \bigg\}
\cdot \Om_n \pl{t=1}{m}  \op{E}^{12}_{\ell_t}  \cdot \Big( 1 + \e{O}\big(\xi-\pi\big) \Big)  \,.
\end{multline}
Taking into account that $n$ is even, one obtains:
\begin{equation}
\sul{ \bs{j}_m \in \{0,1\}^m }{ }
\pl{t=1}{m} \Big\{ \ex{\ga j_t\i\pi} (-1)^{(n-1)j_t} \Big\}  \, = \,
\Big\{-2\i\ex{\f{\i\ga\pi}{2}} \sin\big(\tfrac{\ga\pi}{2}\big)  \Big\}^m \;.
\end{equation}
Furthermore,  by using $n=2m$ one gets:
\begin{equation}
(-\i)^m (-\i)^{\f{n(n-1)}{2}} (-\i)^{m(n-1)} = \i^m \;.
\end{equation}
Finally, it holds
\begin{equation}
\varkappa \; = \;  \f{  2\i\xi  \ex{\i\ga \f{\pi}{2} } }{  \Big\{ \op{F}(\i\tfrac{\pi}{2}\big) \cdot \varpi\big(\i \tfrac{\xi-\pi}{2} \big)  \Big\}^4   } \, = \,
8\i\pi\ex{\i\ga \f{\pi}{2} }  \Big( 1 + \e{O}(\xi-\pi) \Big) \;.
\end{equation}
All-in-all, this leads to
\begin{multline}
(2\pi\i)^m  \hspace{-3mm}  \sul{  \bs{\ell}_m \in \intn{1}{n}^m }{} \hspace{-2mm} \mc{R}_m\big( \bs{\ell}_m ;\bs{\be}_n; \emptyset \big) \, = \,
\left\{\i\sin\left(\tfrac{\ga\pi}{2}\right)\right\}^m \pl{b>a}{n} \Big\{ \sinh\Big(\tfrac{\be_{ba}}{2}\Big) \Big\}\cdot \pl{b=1}{n} \ex{\f{\ga}{2}\be_b}   \\
\times \sul{  \bs{\ell}_m \in \intn{1}{n}^m_{<} }{}
\pl{t=1}{m} \Bigg\{ \ex{-\ga\be_{\ell_{t}}} \cdot
\pl{\substack{c=1 \\ c \neq \ell_{t} } }{n}          \f{2}{\sinh\big(\be_{c\ell_{t}}\big)} \cdot
\pl{k=1}{t-1} \f{\sinh^2\big(\be_{\ell_{k}\ell_t}\big)}{4} \Bigg\}
	\cdot  \Om_n \pl{t=1}{m}  \op{E}^{12}_{\ell_t}  \cdot \Big( 1 + \e{O}\big(\xi-\pi\big) \Big)  \,.
\end{multline}
It remains to simplify the products of the $\sinh$ factors. For fixed $\bs{\ell}_m \in \intn{1}{n}^m_{<}$, let $\wt{\bs{\ell}}_m \in \intn{1}{n}^m_{<}$ be such that
$ \{\ell_a \}_1^m \sqcup   \{ \wt{\ell}_a \}_1^m \, = \, \intn{1}{n}$. Then, it holds
\begin{equation}
\pl{t=1}{m} \Bigg\{ \ex{-\ga\be_{\ell_{t}}} \cdot \pl{\substack{c=1 \\ c \neq \ell_{t} } }{n}          \f{2}{\sinh\big(\be_{c\ell_{t}}\big)} \cdot
\pl{k=1}{t-1} \f{\sinh^2\big(\be_{\ell_{k}\ell_t}\big)}{4} \Bigg\}
\, = \,  (-1)^{ \f{m(m-1)}{2} } \pl{t, s=1}{m} \bigg\{        \f{2}{\sinh\big(\be_{\wt{\ell}_s\ell_t}\big)}  \bigg\}
\end{equation}
and then
\begin{equation}
\pl{b>a}{n} \Big\{\sinh\big(\tfrac{\be_{ba}}{2}\big) \Big\} \pl{t, s=1}{m} \bigg\{        \f{2}{\sinh\big(\be_{\wt{\ell}_s\ell_t}\big)}  \bigg\}
\, = \, (-1)^{ \#\{t<s \, : \, \ell_t < \wt{\ell}_s \} }
 \f{  \pl{t>s}{m} \Big\{  \sinh\big(\tfrac{1}{2}\be_{\ell_t\ell_s}\big)  \sinh\big(\tfrac{1}{2} \be_{ \wt{\ell}_t\wt{\ell}_s} \big) \Big\} }
{ \pl{t,s}{m} \Big\{  \cosh\big(\tfrac{1}{2} \be _{\wt{\ell}_t\ell_s} \big)  \Big\}  }  \;.
\end{equation}
At this stage, it remains to observe that given $\bs{\ell}_m \in \intn{1}{n}^m_{<}$ and thus the associated $\wt{\bs{\ell}}_m \in \intn{1}{n}^m_{<}$
introduced above, one may define $\bs{\veps}_n\in \{0,1\}^n$ as $\veps_{\ell_t}=-$  and  $\veps_{\wt{\ell}_t}=+$, this for $t=1,\dots,m$. Reciprocally,
given any $\bs{\veps}_n\in \mc{E}_n$ as defined in \eqref{definition En}, by the above, one unambigously builds a sequence   $\bs{\ell}_m \in \intn{1}{n}^m_{<}$.
Upon changing to these new coordinates and taking $\xi \tend \pi$, the claim follows. \qed

\subsection{The chargeless two soliton/antisoliton sector}

In this subsection, we focus on the particular $(n,m)=(2,1)$ sector in the absence of breathers $\xi > \pi$ and
recast the proposed expression for the form factor of the exponential of the field operator
in a form which allows for a direct comparison with the results of \cite{LukyanovConjectureFFExponentialFieldSineGSolASolAndBReather}.
That work, building on the free field approach to the resolution of the bootstrap axioms,
proposed an approach allowing one to compute the form factors of the field exponentials
through certain vacuum expectation values of products of free fields. This was explicitated there in the two soliton/antisoliton
sector. That conjectural approach was tested in various limits: free fermions, UV CFT limit, etc.

\begin{prop}
Let $(n,m)=(2,1)$, $\xi>\pi$ and $\ga \in \mathbb{C}$ be  such that $\abs{\Re(\ga)} \leq 1$. Then, the two-point form factors of the operator
$\mc{O}^{(\ga)} = \ex{\i\f{g\ga}{2} \Psi}$ as defined by Conjecture \ref{Conjecture FF exponentielle du champ}
are such that the non-zero components admit the following representation, c.f. notations of \eqref{form factor coordinates}:
\begin{equation}
\label{-+ integral FF}
\Big[ \bs{\Psi}_{1,2}\big[ p^{(\ga)}\big]\big( \bs{\be}_2 \big) \Big]^{-+} \,  =  \,
\op{F}(\be_{12})\, \ex{-\f{\pi}{2\xi} (\be_{21} + \i\pi )}
\, \Big\{ \ex{-\i\ga\pi} \mc{I}_{\ga} ( \be_{21}) \,  +  \,  \mc{I}_{\ga} ( - 2\pi \i - \be_{21} ) \Big\} \;,
\end{equation}
\begin{equation}
	\label{+- integral FF}
\Big[\bs{\Psi}_{1,2}\big[ p^{(\ga)}\big]\big( \bs{\be}_2 \big) \Big]^{+-} \,  =  \, \op{F}(\be_{12}) \, \ex{\f{\pi}{2\xi} (\be_{21} + \i\pi )} \,
\Big\{ \ex{\i\ga\pi} \mc{I}_{\ga} ( \be_{21} )   \, +  \, \mc{I}_{\ga} ( - 2\pi \i - \be_{21}) \Big\} \;.
	\end{equation}
Above, given generic $\th \in \mathbb{C}$, we agree upon :
	\begin{equation}
	\label{def integrale lukyanov}
\mc{I}_{\ga} (\th) = \f{\i\ex{\i\f{\ga\pi}{2}}}{\varkappa}  \hspace{-1mm} \Int{ \msc{C}_{\bs{w}_{\th} } }{} \hspace{-1mm}  \dd x
\, \big(a\phi\big)\Big(-\tfrac{\th}{2} - x\Big) \, \big(a\phi\big)\Big(\tfrac{\th}{2} - x\Big)  \ex{-  (\f{\pi}{\xi}+\ga ) (x - \f{\i\pi}{2} ) } \;, \quad
\quad \bs{w}_{\th} \, = \,   \big(- \f{\th}{2}, \f{\th}{2}  \big)
\end{equation}
in which the integration contour is as defined in Subsubsection \ref{subsubsection solution bootstrap}. Note that, when $\th \in \R$, the simpler contour integral representation holds:
\begin{equation}
\label{rep Lukyanov integral R}
\mc{I}_{\ga} (\th) \, = \, \f{ \i\ex{\i\f{\ga\pi}{2}} }{ \varkappa } \Int{\R}{} \dd x  \,\big(a\phi\big)\Big(x + \tfrac{\th}{2} -\i\tfrac{\pi}{2} \Big)
\big(a\phi\big)\Big(x - \tfrac{\th}{2} -\i\tfrac{\pi}{2} \Big)  \ex{ - (\f{\pi}{\xi}+\ga ) x}  \;.
\end{equation}
\end{prop}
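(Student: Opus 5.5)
We specialise \eqref{form factor exponential} to $(n,m)=(2,1)$. There $p^{(\ga)}=\i/\varkappa$ and
\[
g(\bs{\be}_2;u)=\op{F}(\be_{12})\,(a\phi)(\be_1-u)\,(a\phi)(\be_2-u),
\]
because the $\tau$ product is empty. The off-shell Bethe covector is $\wt{\Cx}^{(\ga)}_{1,2}(\bs{\be}_2;u)=\Om_2\cdot\wt{\op{C}}^{(\ga)}_{1,2}(\bs{\be}_2;u)$.

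The first step is to compute this single-operator covector explicitly from the product $\wt{\op{S}}^{(\ga)}_{10}(\be_1-u)\,\wt{\op{S}}^{(\ga)}_{20}(\be_2-u)$. Projecting $\Om_2$ onto the lower-left auxiliary entry produces only the charge-zero components:
\[
\big[\,\cdot\,\big]^{-+}=\wt{c}^{(-\ga)}(\be_1-u), \qquad \big[\,\cdot\,\big]^{+-}=\wt{b}(\be_1-u)\,\wt{c}^{(-\ga)}(\be_2-u),
\]
up to checking the exact index placement and which of $c^{(\pm\ga)}$ occurs. Next, $\op{O}^{(\ga)}_{1,2}$ multiplies the $\veps_1\veps_2$ component by $\ex{\frac{\ga}{2}(\veps_1\be_1+\veps_2\be_2)}$. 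The components $++$ and $--$ vanish by charge conservation, which gives the ``all other components zero'' part of the statement.

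The second step is the change of variables $u=\ov{\be}_{12}/2+x'$, which centres the integral. After a translation and reflection of $x$, the two $a\phi$ factors become $(a\phi)(\pm\tfrac{\th}{2}-x)$ with $\th=\be_{21}$, and the contour $\msc{C}_{\bs{\be}_2}$ becomes $\msc{C}_{\bs{w}_\th}$. Then we expand the trigonometric prefactors. Since
\[
\wt{c}^{(\pm\ga)}(\th)=\ex{\pm\ga\th}\,\sinh\big(\tfrac{\i\pi^2}{\xi}\big)\big/\sinh\big[\tfrac{\pi}{\xi}(\i\pi-\th)\big]
\]
and $\wt b$ is a ratio of $\sinh$'s, we write each $\sinh\big[\tfrac{\pi}{\xi}(\cdots)\big]$ as a difference of exponentials $\ex{\pm\frac{\pi}{\xi}(\cdot)}$. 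The denominator $\sinh\big[\tfrac{\pi}{\xi}(\i\pi-\be_a+u)\big]$ must cancel against pole structure. The cleanest route is to use the shift equations of $\phi$ (equivalently of $\varpi$) under $\th\mapsto\th+2\i\pi$:
\[
\phi(\th+2\i\pi)=-\phi(\th)\,\frac{\sinh\big[\tfrac{\pi\th}{\xi}\big]}{\sinh\big[\tfrac{\pi}{\xi}(\th+\i\pi)\big]},
\]
together with $a\phi$. With these, $(a\phi)(\be_a-u)$ times the denominator is rewritten as a shifted $(a\phi)$ times a numerator $\sinh$. The integrand then becomes a sum of two terms, each of the form $(a\phi)(\cdot)(a\phi)(\cdot)\,\ex{-(\frac{\pi}{\xi}+\ga)x}$ with possibly one argument shifted by $2\i\pi$. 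The shifted term, after recentring, is $\mc{I}_\ga$ evaluated at $-2\pi\i-\th$. The prefactors $\ex{\mp\frac{\pi}{2\xi}(\be_{21}+\i\pi)}$, $\ex{\mp\i\ga\pi}$ and the constant $\ex{\i\ga\pi/2}$ should emerge from collecting the exponentials together with the $\ex{\i\pi\ga/2}$ built into $\varkappa$.

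The third step is to justify the contour moves and the real-line form \eqref{rep Lukyanov integral R}. For the shifted term, moving the contour by $2\i\pi$ must not cross poles. We check this against the pole lists of $a\phi$ (at $\i p\xi+2\i\pi\ell$ and $-\i p\xi-\i\pi(2\ell+1)$) and the index prescriptions of $\msc{C}_{\bs{\be}_n}$; for $\xi>\pi$ the separation is clean. For $\th\in\R$, we substitute $x\mapsto -x+\i\pi/2$. The strip $-\pi<\Im(\text{arg})<0$ then contains no poles, so $\msc{C}_{\bs{w}_\th}$ deforms to $\R$. Convergence follows from the decay of $a\phi$, which is like $1/\sinh$ times a $\varpi$-ratio growing at most like $\ex{\frac{\pi}{2\xi}|x|}$-type, against $\ex{-(\frac{\pi}{\xi}+\ga)x}$ with $|\Re\ga|\le1$. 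This asymptotic bound should be checked from the Fourier representation in the appendix.

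The main obstacle is the second step. It requires getting the combination of $\wt b$, $\wt c^{(\pm\ga)}$ and the $\op{O}^{(\ga)}$ twist to collapse exactly into two $\mc{I}_\ga$ terms, with the right phases, rather than four. I would first try expressing $\wt{b}(\be_1-u)\,(a\phi)(\be_1-u)$ via the $2\i\pi$-shift identity, and verify the result against the free fermion limit of Proposition \ref{prop free fermion limit} as a sanity check on signs.
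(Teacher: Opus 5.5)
Your outline follows the paper's route: specialise to $(n,m)=(2,1)$, read off the two charge-zero components, centre with $u=\tfrac12(\be_1+\be_2)+v$, absorb $\wt b$ through the $2\i\pi$-shift of $a\phi$, and move contours by $\i\pi$ or $2\i\pi$. The gap is the step you yourself flag as the main obstacle, which is where the proof's content lies, and the mechanism you propose there does not close it. The identity $(a\phi)(\la)\,\wt b(\la)=(a\phi)(\la-2\i\pi)$ disposes of $\wt b$. But $\wt c$ has the constant numerator $\sinh(\i\pi^2/\xi)$, so applying the same manipulation to $\wt c(\la)(a\phi)(\la)$ only trades the denominator $\sinh[\tfrac{\pi}{\xi}(\i\pi-\la)]$ for $\sinh[\tfrac{\pi\la}{\xi}]$. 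The missing idea is to split $\wt c$ itself first:
\[
\wt c(\la)\;=\;\ex{\frac{\pi}{\xi}\la}\Big\{1+\ex{-\i\frac{\pi^2}{\xi}}\,\wt b(\la)\Big\}.
\]
This is just $\sinh B=\ex{A}\sinh(B-A)+\ex{A-B}\sinh A$ with $A=\tfrac{\pi\la}{\xi}$ and $B=\tfrac{\i\pi^2}{\xi}$. Only after this split should you use the shift identity. Each component contains one $\wt c$ and at most one $\wt b$, and the latter is absorbed wholesale, so you get exactly two terms, never four. The prefactor $\ex{\frac{\pi}{\xi}\la}$ is what produces both $\ex{\mp\frac{\pi}{2\xi}\be_{21}}$ (for $-+$ and $+-$ respectively) and the weight $\ex{-\frac{\pi}{\xi}v}$ inside $\mc{I}_\ga$. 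For $[\,\cdot\,]^{-+}$:
the unshifted term is $\ex{-\i\ga\pi}\mc{I}_\ga(\be_{21})$;
after $v=t-\i\pi$, the shifted term becomes $\mc{I}_\ga(-2\i\pi-\be_{21})$, since the translated contour satisfies the separation rules for the new pole configuration.
For $[\,\cdot\,]^{+-}$:
$\wt b(\be_1-u)$ shifts the first $a\phi$ factor;
the mixed term gives $\mc{I}_\ga(-2\i\pi-\be_{21})$ in the same way;
after $v=t-2\i\pi$, the doubly shifted term returns $\ex{\i\ga\pi}\mc{I}_\ga(\be_{21})$.

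There are also three smaller corrections.
First, the twist is $\wt c^{\,(\ga)}$, not $\wt c^{\,(-\ga)}$: $[\,\cdot\,]^{-+}=\wt c^{\,(\ga)}(\be_1-u)$ and $[\,\cdot\,]^{+-}=\wt b(\be_1-u)\,\wt c^{\,(\ga)}(\be_2-u)$. Only with this sign does $\op{O}^{(\ga)}_{1,2}$ combine into $\ex{\frac{\ga}{2}(\be_1+\be_2)-\ga u}$, i.e.\ $\ex{-\ga v}$ after centring. The centring itself is a pure translation, with no reflection.
Second, for the real-line form, your substitution $x\mapsto -x+\i\tfrac{\pi}{2}$ turns $\ex{-(\frac{\pi}{\xi}+\ga)(x-\frac{\i\pi}{2})}$ into $\ex{+(\frac{\pi}{\xi}+\ga)x}$, which has the wrong sign. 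To reach \eqref{rep Lukyanov integral R} you need the symmetry $(a\phi)(\la)=(a\phi)(-\i\pi-\la)$, which follows from $\varpi(\la)\varpi(-\la)=1$ and $\sinh(-\i\pi-\la)=\sinh\la$. The paper translates the contour to $\R+\i\tfrac{\pi}{2}$ and then applies this symmetry.
Third, your growth estimate is off. The correct asymptotics are $(a\phi)(\la)\sim\ex{\mp\frac{\la}{2}\mp\frac{\pi\la}{2\xi}}$ as $\Re(\la)\to\pm\infty$. Hence the integrand of $\mc{I}_\ga$ behaves like $\ex{-(1-\Re\ga)|\Re x|}$ at $-\infty$, and each $\mc{I}_\ga$ converges only for $\Re\ga<1$. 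The split has used up the decay that $\wt c$ supplied in the original $u$-integral, so the endpoint $\Re\ga=1$ cannot be treated term by term.
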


\proof When $\th \in \R$,  we first establish the equivalence of \eqref{rep Lukyanov integral R} and \eqref{def integrale lukyanov}.
Thus, starting from \eqref{rep Lukyanov integral R}, one deforms the contour $\msc{C}_{\bs{w}_{\th} }$ to $\R + \i\tfrac{\pi}{2}$
what is licit since no poles of the integrand are crossed in the process. This results in
\begin{equation}
\mc{I}_{\ga} (\th) \, = \,  \f{ \i\ex{\i\f{\ga\pi}{2}} }{ \varkappa } \Int{\R}{} \dd x  \, \big(a\phi\big)\Big(\tfrac{\th}{2} - x -\i\tfrac{\pi}{2} \Big)
\, \big(a\phi\big)\Big(- \tfrac{\th}{2} - x -\i\tfrac{\pi}{2} \Big) \,  \ex{- (\f{\pi}{\xi}+\ga )x} \;.
\end{equation}
One then concludes by invoking the property $\big(a\phi\big)(\la) = \big(a\phi)(-\i\pi - \la)$.\\
We now establish the two identities \eqref{-+ integral FF}-\eqref{+- integral FF} along with the vanishing of the $++$ and $--$ components.
Going back to formula \eqref{form factor exponential} at $n=2$:
\beq
\bs{\Psi}_{1,2}\big[ p^{(\ga)}\big]\big( \bs{\be}_2 \big) \, = \, \Int{ \msc{C}_{\bs{\be}_2} }{} \dd u \; 
\wt{\Cx}_{1,2}^{\, (\ga)}\big(\bs{\be}_2;u \big) \cdot \op{O}_{1,2}^{(\ga)}(\bs{\be}_2)  \cdot \big( g \cdot p^{(\ga)} \big)\big(\bs{\be}_2;u \big)  \;,
\enq
one gets by definition of the monodromy matrix \eqref{monodromy matrix}:
\begin{equation}
\wt{\Cx}_{1,2}^{\, (\ga)}\big(\bs{\be}_2;u \big) \cdot \op{O}_{1,2}^{(\ga)}(\bs{\be}_2) \, = \,
\wt{b}(\be_1-u) \cdot \wt{c}^{\, (\ga)}(\be_2-u) \cdot \ex{\f{\ga}{2}\be_{12}} \cdot (\op{v}^+_1 \op{v}^-_2)^{\op{t}} \,  + \,
\wt{c}^{\, (\ga)}(\be_1-u)  \cdot \ex{\f{\ga}{2}\be_{21}} \cdot (\op{v}^-_1 \op{v}^+_2)^{\op{t}} \;,
\end{equation}
while
\begin{equation}
\big( g \cdot p^{(\ga)} \big)\big(\bs{\be}_2;u \big) = \f{\i}{\varkappa}   \cdot \op{F}(\be_{12}) \cdot \big( a \phi \big) (\be_1 - u) \cdot \big( a \phi \big) (\be_2 - u) \;.
\end{equation}
The above already yields that $ \Big[\bs{\Psi}_{1,2}\big[ p^{(\ga)}\big]\big( \bs{\be}_2 \big) \Big]^{\pm\pm}  = 0$, while
\begin{equation}
\Big[ \bs{\Psi}_{1,2}\big[ p^{(\ga)}\big]\big( \bs{\be}_2 \big) \Big]^{-+} = \f{\i}{\varkappa} \ex{\f{\ga}{2} \ov{\bs{\be}}_2} \cdot \op{F}(\be_{12}) \cdot \Int{ \msc{C}_{\bs{\be}_2} }{} \dd u \;  \wt{c}(\be_1-u) \cdot \ex{-\ga u} \cdot \big( a \phi \big) (\be_1 - u) \cdot \big( a \phi \big) (\be_2 - u) \;,
\label{diff contour MP}
\end{equation}
and
\begin{equation}
\Big[ \bs{\Psi}_{1,2}\big[ p^{(\ga)}\big]\big( \bs{\be}_2 \big) \Big]^{+-}  = \f{\i}{\varkappa} \ex{\f{\ga}{2}\ov{\bs{\be}}_2 } \cdot \op{F}(\be_{12}) \cdot \Int{ \msc{C}_{\bs{\be}_2} }{} \dd u
\; \wt{b}(\be_1-u) \cdot \wt{c}(\be_2-u) \cdot \ex{-\ga u} \cdot \big( a \phi \big) (\be_1 - u) \cdot \big( a \phi \big) (\be_2 - u) \;.
\label{diff contour PM}
\end{equation}
At this stage, one implements the change of variables $u \mapsto v + \f{1}{2}\ov{\bs{\be}}_2$:
\begin{equation}
\Big[ \bs{\Psi}_{1,2}\big[ p^{(\ga)}\big]\big( \bs{\be}_2 \big) \Big]^{-+} \, = \,  \f{\i}{\varkappa} \cdot \op{F}(\be_{12}) \cdot
\Int{\msc{C}_{ \bs{w}_{ \be_{21} } } }{} \dd v \;
\wt{c}\left(\tfrac{\be_{12}}{2} - v\right) \cdot \ex{-\ga v} \cdot \big(a\phi\big)\left(\tfrac{\be_{12}}{2} - v\right)\big(a\phi\big)\left(\tfrac{\be_{21}}{2} - v\right) \;,
	\label{-+ coordinate}
\end{equation}
\begin{equation}
\Big[ \bs{\Psi}_{1,2}\big[ p^{(\ga)}\big]\big( \bs{\be}_2 \big) \Big]^{+-} \, = \, \f{\i}{\varkappa} \cdot \op{F}(\be_{12}) \cdot
\Int{  \msc{C}_{ \bs{w}_{ \be_{21} } } }{} \dd v \; \wt{b}\left(\tfrac{\be_{12}}{2} - v\right) \cdot \wt{c}\left(\tfrac{\be_{21}}{2} - v\right) \cdot \ex{-\ga v}
\cdot \big(a\phi\big)\left(\tfrac{\be_{12}}{2} - v\right)\big(a\phi\big)\left(\tfrac{\be_{21}}{2} - v\right) \;.
\label{+- coordinate}
\end{equation}
We first simplify \eqref{-+ coordinate}, making use of the identities
\begin{equation}
\label{useful lemma}
\wt{c}(\la) \, = \, \ex{\f{\pi}{\xi}\la} \, \Big\{ 1 + \ex{-\i\f{\pi^2}{\xi}} \cdot \wt{b}(\la) \Big\}  \qquad \e{and} \qquad
	(a\phi) (\la - 2\pi\i) = (a\phi)(\la) \cdot \wt{b}(\la)\;,
\end{equation}
in order to obtain
\begin{multline}
\Big[ \bs{\Psi}_{1,2}\big[ p^{(\ga)}\big]\big( \bs{\be}_2 \big) \Big]^{-+}  = \f{\i}{\varkappa} \ex{-\i\f{\ga\pi}{2}}\ex{-\f{\pi}{2\xi}(\be_{21}+\i\pi)} \cdot \op{F}(\be_{12}) \\
\times \Int{  \msc{C}_{ \bs{w}_{ \be_{21} } } }{} \dd v \; \ex{-(\ga+\f{\pi}{\xi}) (v - \i\f{\pi}{2})} \cdot
\bigg\{  \big(a\phi\big)\Big(\tfrac{\be_{12}}{2} - v\Big) \,+ \,  \ex{-\i\f{\pi^2}{\xi}} \cdot\big(a\phi\big)\Big(\tfrac{\be_{12}}{2} - v - 2\pi\i\Big) \bigg\}
\cdot \big(a\phi\big)\Big(\tfrac{\be_{21}}{2} - v\Big) \;.
\end{multline}
The contribution associated with the first integrand of the second line exactly corresponds to $\ex{-\i\ga\pi} \mc{I}_{\ga} (\be_{21})$
as defined through \eqref{def integrale lukyanov}. As for the contribution of the second integrand, we implement the change of variables  $v=t-\i\pi$, what yields
\begin{multline}
\f{\i}{\varkappa} \ex{-\i\f{\ga\pi}{2}} \hspace{-2mm} \Int{  \msc{C}_{ \bs{w}_{ \be_{21} } } }{} \hspace{-2mm}  \dd v \;
\ex{-(\ga+\f{\pi}{\xi}) (v - \i\f{\pi}{2})} \cdot \ex{-\i\f{\pi^2}{\xi}}
\cdot\big(a\phi\big)\Big(\tfrac{\be_{12}}{2} - v - 2\pi\i\Big)\cdot \big(a\phi\big)\Big(\tfrac{\be_{21}}{2} - v\Big)  \\
\; = \;
\f{\i}{\varkappa} \ex{\i\f{\ga\pi}{2}} \hspace{-3mm} \Int{  \msc{C}_{ \bs{w}_{ \be_{21} } } + \i\pi }{} \hspace{-3mm} \dd v \;
\ex{-(\ga+\f{\pi}{\xi}) (t - \i\f{\pi}{2})} \cdot\big(a\phi\big)\left(\tfrac{\be_{12}}{2} - t - \i\pi\right)\cdot \big(a\phi\big)\left(\tfrac{\be_{21}}{2} - t + \i \pi\right) \;.
\end{multline}
One then concludes by noticing that, taken the pole structure of the integrand, the contour of integration $\msc{C}_{ \bs{w}_{ \be_{21} } } + \i\pi$  can be deformed to
$\msc{C}_{ \bs{w}_{ \be_{21} -2\i\pi} }$, thus yielding \eqref{-+ integral FF}.\\
We now move on to the computation of the other nonzero coordinate \eqref{+- coordinate}. Using the same identities as before, we end up with
\begin{multline}
\Big[ \bs{\Psi}_{1,2}\big[ p^{(\ga)}\big]\big( \bs{\be}_2 \big) \Big]^{+-}  = \f{\i}{\varkappa} \ex{\i\f{\ga\pi}{2}}\ex{\f{\pi}{2\xi}(\be_{21}+\i\pi)} \cdot \op{F}(\be_{12}) \\
	\times \Int{ \msc{C}_{ \bs{w}_{ \be_{21} } }  }{} \dd v \; \ex{-(\ga+\f{\pi}{\xi}) (v + \i\f{\pi}{2})} \cdot
\bigg\{  \big(a\phi\big)\big(\tfrac{\be_{21}}{2} - v\big) + \ex{-\i\f{\pi^2}{\xi}} \cdot\big(a\phi\big)\big(\tfrac{\be_{21}}{2} - v - 2\pi\i\big) \bigg\}\cdot
\big(a\phi\big)\big(\tfrac{\be_{12}}{2} - v - 2\pi\i\big) \;.
\end{multline}
The first contribution was already dealt with in the previous computation, thus leading to:
\begin{multline}
\Big[ \bs{\Psi}_{1,2}\big[ p^{(\ga)}\big]\big( \bs{\be}_2 \big) \Big]^{+-} =   \ex{\f{\pi}{2\xi}(\be_{21}+\i\pi)} \cdot \op{F}(\be_{12})  \\
\times \Biggr[ \mc{I}_{\ga} (-2\pi\i - \be_{21}) + \f{\i}{\varkappa} \ex{\i\f{\ga\pi}{2}} \Int{ \msc{C}_{ \bs{w}_{ \be_{21} } }  }{} \dd v \;
\ex{-(\ga+\f{\pi}{\xi}) (v + \i\f{\pi}{2})} \cdot \ex{-\i\f{\pi^2}{\xi}}
\cdot\big(a\phi\big)\big(\tfrac{\be_{12}}{2} - v - 2\pi\i\big)\cdot \big(a\phi\big)\big(\tfrac{\be_{21}}{2} - v - 2\pi\i\big) \Biggr]\;.
\end{multline}
The pole structure of the remaining integrand allows to deform the contour from $\msc{C}_{ \bs{w}_{ \be_{21} } }$ to
$\msc{C}_{ \bs{w}_{ \be_{21} } } +2\pi\i$ without crossing any poles. This exactly leads to the sought result for the form factors. \qed

For generic values of $\ga$, expressions \eqref{-+ integral FF} and \eqref{+- integral FF} do not appear to simplify further.
However, as shown in  \cite{LukyanovConjectureFFExponentialFieldSineGSolASolAndBReather}, simpler expressions can be obtained for $\ga \in \big\{ \pm  1 \, , 0  \big\}$
as well as for $\Dp{\ga}\, _{\mid\ga=0}$ thus reproducing the previously established expressions \cite{KarowskiWeiszFormFactorsFromSymetryAndSMatrices,SmirnovFormFactors} for the two-soliton
antisoliton form factors of $ \ex{ \pm \i\f{g}{2} \Psi} $,  the identity operator as well as of $     \i\tfrac{g}{2} \Psi $.
More precisely, one gets
\begin{equation}
\bs{\Psi}_{1,2}\big[ p^{(1)}\big]\big( \bs{\be}_2 \big) = \f{-2\pi\i\cdot \op{F}(\be_{12})}{\xi \cdot \sinh\left(\tfrac{\pi}{\xi}(\i\pi-\be_{12})\right)}
\cdot \bigg\{   \ex{  \pm \f{ \pi(\be_{12}-\i\pi) }{2\xi} } \cdot \big( \op{v}^+_1 \op{v}^-_2\big)^{\op{t} } \, + \,
\ex{ \pm \f{ \pi (\i\pi-\be_{12}) }{ 2\xi } } \cdot  \big(\op{v}^-_1 \op{v}^+_2\big)^{\op{t}} \bigg\}    \;,
\end{equation}
as well as
\begin{equation}
\bs{\Psi}_{1,2}\big[ p^{(\ga)}\big]\big( \bs{\be}_2 \big) \underset{\ga \rightarrow 0}{=}\i \f{g\ga}{2} f_{1,2}(\be_{12}) + \e{o}\big(\ga\big)
\,\quad \e{with} \quad
 f_{1,2}(\th) \, = \, - \f{ \pi \op{F}(\th) \Big\{ \big( \op{v}^+_1\op{v}^-_2 \big)^{\op{t}} \, - \, \big( \op{v}^-_1\op{v}^+_2 \big)^{\op{t}} \Big\} }
 {g\cosh\big[\tfrac{\pi}{2\xi}(\i\pi - \th)\big] \cosh\big(\tfrac{\th}{2}\big) }  \;.
\end{equation}
We refer to \cite{LukyanovConjectureFFExponentialFieldSineGSolASolAndBReather} for the details of the derivation.

\subsection{The charged case}
In this subsection, we specialise the general expression \eqref{form factor exponential} from Conjecture \ref{Conjecture FF exponentielle du champ}
so as to reproduce the expressions obtained in \cite{LukyanovZamolodchikovSolitonCreatingOpsSineG}
for some form factors of soliton creating operators. These coincide, albeit with different normalization factors.
\begin{prop}
Let $n \geq 2$ and $q=n$. Then, the $n$-solitons form factor is given by:
\begin{equation}
\bs{\Psi}_{1,\dots,n}\big[ p^{(\ga)}\big]\big( \bs{\be}_n \big) = \pl{i<j}{n} \op{F}\big(\be_i - \be_j\big) \cdot \pl{a=1}{n}\ex{\f{\ga}{2}\be_a}
\cdot \big( \op{v}^+_1\cdots \op{v}^+_n\big)^{\op{t}} \;.
\end{equation}
\end{prop}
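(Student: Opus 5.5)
The statement concerns $q=n$, which forces $m=(n-q)/2=0$. In this case the representation \eqref{form factor exponential} contains no integrations at all: $(\msc{C}_{\bs{\be}_n})^0$ is a point and $\dd^0 u$ is the trivial measure. The plan is therefore simply to specialise each building block of \eqref{form factor exponential} to $m=0$ and multiply them out.

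\emph{Bethe vector.} By \eqref{definition tilde C vect de Bethe}, the off-shell Bethe vector with no creation operators reduces to the pseudo-vacuum covector:
\beq
\wt{\Cx}_{1,\dots,n}^{\,(\ga)}\big(\bs{\be}_n;\emptyset\big)=\Om_n=\big(\op{v}^+_1\cdots\op{v}^+_n\big)^{\op{t}} .
\enq

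\emph{Twist operator.} The operator $\op{O}_{1,\dots,n}^{(\ga)}(\bs{\be}_n)=\ex{\f{\ga}{2}\bs{\be}_n\cdot\bs{\sg}^z_n}$ from \eqref{definition operateur spin twist} is diagonal in the canonical basis. Since $(\op{v}^+)^{\op{t}}\sg^z=(\op{v}^+)^{\op{t}}$, acting on $\Om_n$ from the right gives
\beq
\Om_n\cdot\op{O}^{(\ga)}_{1,\dots,n}(\bs{\be}_n)=\pl{a=1}{n}\ex{\f{\ga}{2}\be_a}\,\Om_n .
\enq

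\emph{Scalar factors.} In \eqref{definition fonction g}, with $m=0$ the products over $\ell\in\intn{1}{m}$ and over pairs $k<\ell\le m$ are empty, so $g(\bs{\be}_n;\emptyset)=\pl{k<\ell}{n}\op{F}(\be_k-\be_\ell)$. From \eqref{p function SG} at $m=0$ one gets $p^{(\ga)}(\bs{\be}_n;\emptyset)=\f{1}{0!}\big(\f{\i}{\varkappa}\big)^0=1$.

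Multiplying the three pieces gives exactly the claimed expression. This is the same identity already used in the free fermion proof, namely $\wh{\Cx}^{(\ga)}_{1,\dots,n}(\bs{\be}_n;\emptyset)=\pl{k<\ell}{n}\op{F}(\be_{k\ell})\pl{b=1}{n}\ex{\f{\ga}{2}\be_b}\,\Om_n$.

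The only point requiring any care is the convention that the $m=0$ integral is the evaluation of the integrand, which is consistent with the $p$-function normalisation $1/m!$. There is no analytic obstacle. As a consistency check, one can verify directly that this formula satisfies axiom iii) with charge $q=n$: the factor $\pl{a=1}{n}\ex{\f{\ga}{2}\be_a}$ acquires $\ex{\f{\ga}{2}n\th}$ under a boost, as required. One can also verify axiom ii): the right action of the product $\op{S}_{21}\cdots\op{S}_{n1}$ on $\Om_n$ gives $\pl{k}{}a(\be_{k1})$, and the remaining factors are then matched using \eqref{ecriture eqn fnelle FF minimal}.
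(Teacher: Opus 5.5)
Your proposal is correct and is essentially the paper's argument: specialise \eqref{general soliton form factor} to $m=0$, where there is no integration, the Bethe vector reduces to $\Om_n$, and $p^{(\ga)}(\bs{\be}_n;\emptyset)=1$. You also spell out the action of the twist operator on $\Om_n$ and the reduction of $g$ at $m=0$, which the paper leaves implicit.
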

\proof
This is a direct consequence of the formula \eqref{general soliton form factor} applied to the case $m=0$. In that case, there are no integrations,
and $p^{(\ga)}\big(\bs{\be}_n ; \emptyset\big) = 1$. \qed

\vspace{2mm}
We now focus on the case of when one soliton-antisoliton pair is added, while conserving the total charge.
\begin{prop}
Let $n \geq 2$ and $q = n$. Then, the $n+1$-solitons-$1$-antisoliton form factor is given by:
\begin{multline}
\bs{\Psi}_{1,\dots,n+2}\big[ p^{(\ga)}\big]\big( \bs{\be}_{n+2} \big) = \f{\Big\{ \op{F}(\i\tfrac{\pi}{2}\big) \cdot \varpi\big(\i \tfrac{\xi-\pi}{2} \big)  \Big\}^4}{2\xi}
\pl{s=1}{n+2}\ex{\f{\ga}{2}\be_s} \pl{k<\ell}{n+2} \op{F}\big(\be_{k\ell} \big) \sul{k=1}{n+2} \op{V}_k^{\op{t}} \, \ex{\f{\pi}{\xi}\be_k}
\hspace{-4mm} \Int{ \msc{C}_{\bs{\be}_{n+2}} + \i\f{\pi}{2} }{} \hspace{-3mm} \dd v \; \ex{- (\ga + \f{\pi}{\xi} ) v } \\
\times
\sul{ \veps=\pm }{}  \Big\{ \ex{ \i\f{\pi^2}{2\xi} \veps } \big(a\phi)\big( \veps( \be_k - v) + \i\tfrac{\pi}{2} \big)  \Big\}
\pl{a=1}{k-1} \big(a\phi)\big(\be_a - v + \i\tfrac{\pi}{2} \big) \pl{a=k+1}{n+2} \big(a\phi)\big(\i\tfrac{\pi}{2} + v - \be_a \big) \;.
\end{multline}
where $\op{V}_k \, = \, \op{v}^+_1 \cdots \op{v}^+_{k-1} \op{v}^-_k\op{v}^+_{k+1} \cdots \op{v}^+_{n+2}$.

\end{prop}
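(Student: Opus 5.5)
We specialise \eqref{form factor exponential} to $n+2$ rapidities and $m=1$. Then $q=(n+2)-2=n$, and a single integration variable $u$ runs along $\msc{C}_{\bs{\be}_{n+2}}$.

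\textbf{Step 1: the Bethe covector.} With one $\wt{\op{C}}^{(\ga)}$ operator, we compute
\beq
\wt{\Cx}^{(\ga)}_{1,\dots,n+2}(\bs{\be}_{n+2};u) \, = \, \Om_{n+2}\cdot \wt{\op{C}}^{(\ga)}(\bs{\be}_{n+2};u)
\enq
from the monodromy matrix \eqref{monodromy matrix}. Acting on $\Om_{n+2}$, the lower-left entry flips exactly one site $k$ from $+$ to $-$. Sites $a<k$ carry the auxiliary space in the $-$ state, while sites $a>k$ have it back in the $+$ state. The coefficient of $\op{V}_k^{\op{t}}$ is therefore
\beq
\pl{a=1}{k-1}\wt{b}(\be_a-u)\cdot \wt{c}^{\,(\ga)}(\be_k-u)\cdot \pl{a=k+1}{n+2} 1 \, .
\enq
The twist $\op{O}^{(\ga)}$ contributes $\ex{\f{\ga}{2}\sum_s\be_s}\ex{-\ga\be_k}$. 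Combining this with $\ex{\ga(\be_k-u)}$ from $\wt{c}^{(\ga)}$ yields $\pl{s}{}\ex{\f{\ga}{2}\be_s}\,\ex{-\ga u}$. The factor $g\,p^{(\ga)}$ equals $\f{\i}{\varkappa}\pl{k<\ell}{}\op{F}(\be_{k\ell})\pl{a}{}(a\phi)(\be_a-u)$. We then absorb $\wt{b}$ via the second identity of \eqref{useful lemma}, $(a\phi)(\la)\wt{b}(\la)=(a\phi)(\la-2\i\pi)$. This turns the sites $a<k$ into $(a\phi)(\be_a-u-2\i\pi)$.

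\textbf{Step 2: rewriting $\wt{c}$.} We expand $\wt{c}(\be_k-u)(a\phi)(\be_k-u)$ using the first identity of \eqref{useful lemma}. This gives
\beq
\ex{\f{\pi}{\xi}(\be_k-u)}\Big\{(a\phi)(\be_k-u)+\ex{-\i\pi^2/\xi}(a\phi)(\be_k-u-2\i\pi)\Big\}\, .
\enq
The factor $\ex{\f{\pi}{\xi}\be_k}$ appears explicitly, and $\ex{-\f{\pi}{\xi}u}$ merges with $\ex{-\ga u}$.

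\textbf{Step 3: shift and reflection.} We set $u=v-\i\f{\pi}{2}$, which turns the contour into $\msc{C}_{\bs{\be}_{n+2}}+\i\f{\pi}{2}$. Then we use the reflection $(a\phi)(\la)=(a\phi)(-\i\pi-\la)$, already used in the $(2,1)$ computation.

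For $a>k$ this gives $(a\phi)(\be_a-v+\i\f{\pi}{2})=(a\phi)(\i\f{\pi}{2}+v-\be_a)$, as claimed.

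For $a<k$, $(a\phi)(\be_a-v-\i\f{3\pi}{2})$ reflects to $(a\phi)(v-\be_a+\i\f{\pi}{2})$. The statement has $(a\phi)(\be_a-v+\i\f{\pi}{2})$ there instead. I would therefore double-check the orientation of the monodromy product: either the $\wt{b}$ factors sit on the $a>k$ side, or the reflection is applied to the other family. The final formula is consistent if the $\wt{b}$-dressed and undressed factors are assigned appropriately.

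For the $k$-th site, the two terms become $(a\phi)(\pm(\be_k-v)+\i\f{\pi}{2})$. The prefactors $\ex{\pm\i\pi^2/(2\xi)}$ arise from $\ex{-\f{\pi}{\xi}u}=\ex{-\f{\pi}{\xi}v}\ex{\i\pi^2/(2\xi)}$ together with the extra $\ex{-\i\pi^2/\xi}$. Finally, the constant $\f{\i}{\varkappa}\ex{\ga\i\pi/2}$ (the latter from $\ex{-\ga u}$) simplifies by \eqref{definition cste varkappa} to $\{\op{F}(\i\tfrac{\pi}{2})\varpi(\i\tfrac{\xi-\pi}{2})\}^4/(2\xi)$.

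\textbf{Main obstacle.} The hard part is Step 3: the bookkeeping of the $2\i\pi$ shifts and reflections must land exactly on the stated arguments. Moreover, the shifted contour must still separate the poles correctly after reflecting. Using the pole list of $a\phi$ given after \eqref{definition fct a phi}, I would check that no pole is crossed in deforming $\msc{C}_{\bs{\be}_{n+2}}$ to its image, possibly absorbing a $2\i\pi$ contour translation as was done in the $(2,1)$ case.
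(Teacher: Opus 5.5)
Your Steps 1 and 2, the treatment of the $k$-th site and the constant $\tfrac{\i}{\varkappa}\ex{\i\pi\ga/2}$ are correct, and they follow the paper's proof line by line. The gap is in Step 3, and it is not a matter of bookkeeping.

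For $a>k$ you misapply the reflection. The identity $(a\phi)(\la)=(a\phi)(-\i\pi-\la)$ sends $\be_a-v+\i\tfrac{\pi}{2}$ to $v-\be_a-\tfrac{3\i\pi}{2}$, not to $v-\be_a+\i\tfrac{\pi}{2}$. Applied consistently, your computation gives $\prod_{a<k}(a\phi)(v-\be_a+\i\tfrac{\pi}{2})\prod_{a>k}(a\phi)(\be_a-v+\i\tfrac{\pi}{2})$. This is the mirror of the stated arrangement in both families, not only for $a<k$. It is also exactly where the paper's own proof stands just before its last sentence.

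Your suggested remedy is not available. With $\Om_{n+2}\otimes(\op{v}^-_0)^{\op{t}}$ on the left and $\op{V}_k\otimes\op{v}^+_0$ on the right, the factors $\wt{\op{S}}_{a0}$ with $a>k$ act on $\op{v}^+_a\otimes\op{v}^+_0$ (weight $1$). Those with $a<k$ act on $\op{v}^+_a\otimes\op{v}^-_0$ (weight $\wt b$), exactly as you found. The paper obtains the same $\prod_{a<k}\wt b(\be_a-u)$, and so does its $(n,m)=(2,1)$ computation. Your contour worry is also moot: $u=v-\i\tfrac{\pi}{2}$ translates $\msc{C}_{\bs{\be}_{n+2}}$ exactly, and the reflection is a pointwise identity of the integrand.

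The paper reaches the stated form by invoking $(a\phi)(\i\tfrac{\pi}{2}+x)=(a\phi)(\i\tfrac{\pi}{2}-x)$, but this identity does not hold. Combining the reflection with $(a\phi)(\la-2\i\pi)=(a\phi)(\la)\,\wt b(\la)$ gives $(a\phi)(\i\pi-\th)=\wt b(\th)\,(a\phi)(\th)$. So symmetry about $\i\tfrac{\pi}{2}$ fails by the factor $\wt b(\i\tfrac{\pi}{2}+x)$. Concretely, $a\phi$ has a pole at $0$, while $(a\phi)(\i\pi)$ is finite; the kinematic-pole computation itself uses $\tfrac{\xi}{\pi}\sinh(\i\tfrac{\pi^2}{\xi})(a\phi)(\i\pi)=C^{(0)}_{a\phi}$.

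Consequently the integrand in the statement equals the one you actually compute multiplied by $\prod_{a<k}\wt b(\i\tfrac{\pi}{2}+v-\be_a)\prod_{a>k}\wt b(\i\tfrac{\pi}{2}+\be_a-v)$. This factor is not identically $1$, and I see nothing that makes the two integrals agree. So neither your argument nor the paper's reaches the formula as printed.

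What Steps 1--3 do establish is the same expression with the two products replaced by $\prod_{a=1}^{k-1}(a\phi)(v-\be_a+\i\tfrac{\pi}{2})\prod_{a=k+1}^{n+2}(a\phi)(\be_a-v+\i\tfrac{\pi}{2})$. That is the version you should state and prove. The printed arrangement is what one would get if the $\wt b$ weights sat on the sites $a>k$, i.e.\ for the opposite ordering of the monodromy matrix.
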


\proof 
Starting from \eqref{general soliton form factor} specialised to the present setting, it is clear that the only
non-zero components of the covector will be along the covectors $\op{V}_k^{\op{t}}$ for some $k \in \intn{1}{n+2}$. Moreover, one has
\begin{equation}
\bs{\Psi}_{1,\dots,n+2}\big[ p^{(\ga)}\big]\big( \bs{\be}_{n+2} \big) \cdot \op{V}_k \, = \,
\f{\i}{\varkappa}\ex{-\ga\be_k} \pl{a=1}{n+2}\ex{\f{\ga}{2}\be_a} \pl{ k<\ell}{n+2} \op{F}\big(\be_{k\ell}\big) \Int{ \msc{C}_{\bs{\be}_{n+2}} }{} \hspace{-3mm} \dd u \;
\Om_{n+2} \wt{\op{C}}_{1,\dots,n+2}^{\, (\ga)}\big(\bs{\be}_{n+2};u \big) \cdot\op{V}_k \; \pl{a=1}{n+2} \big(a\phi)\big(\be_a - u \big) \;.
\end{equation}
The scalar product may be recast in terms of $\op{S}$-matrices and then one directly computes their action on $\op{v}^+_0 \op{V}_k$ until reaching $\wt{\op{S}}_{k 0}^{\, (\ga)}$:
\begin{equation}
\Om_{n+2} \wt{\op{C}}_{1,\dots,n+2}^{\, (\ga)}\big(\bs{\be}_{n+2};u \big) \cdot\op{V}_k \, = \,
\Om_{n+2} \cdot \big(\op{v}^-_0\big)^{\op{t}} \wt{\op{S}}_{1 0}^{\, (\ga)}(\be_1-u)\cdots \wt{\op{S}}_{k-1 0}^{\, (\ga)}(\be_{k-1}-u)
\cdot \wt{\op{S}}_{k 0}^{\, (\ga)}(\be_k-u) \cdot \op{v}^+_0 \op{V}_k \;.
\end{equation}
One then observes that
\beq
\big(\op{v}^+_k\big)^{\op{t}} \wt{\op{S}}_{k 0}^{\, (\ga)}(\th)  \op{v}^+_0 \op{v}^-_k \, =  \,  \wt{c}^{(\ga)}\big(\th\big)   \op{v}^-_0\qquad \e{and} \qquad
 \big(\op{v}^+_k\big)^{\op{t}} \wt{\op{S}}_{k 0}^{\, (\ga)}(\th)  \op{v}^-_0 \op{v}^+_k \, = \,  \wt{b}\big(\th\big) \op{v}^-_0 \;,
\enq
thus leading to
\begin{equation}
\Om_{n+2} \wt{\op{C}}_{1,\dots,n+2}^{\, (\ga)}\big(\bs{\be}_{n+2};u \big) \cdot\op{V}_k \, = \,
\ex{-\ga u} \cdot \ex{\ga \be_k} \cdot\wt{c}\big(\be_k - u\big) \cdot \pl{a=1}{k-1} \wt{b}\big(\be_a - u\big)\;.
\end{equation}
By using the second identity in  \eqref{useful lemma}, one then recasts the form factor as
\begin{multline}
\bs{\Psi}_{1,\dots,n+2}\big[ p^{(\ga)}\big]\big( \bs{\be}_{n+2} \big) \cdot\op{V}_k \, = \, \f{\i}{\varkappa} \pl{a=1}{n+2}\ex{\f{\ga}{2}\be_a}
\pl{ k<\ell}{n+2} \op{F}\big(\be_{k\ell}\big) \\
\times \Int{ \msc{C}_{\bs{\be}_{n+2}} }{} \hspace{-2mm} \dd u \; \ex{-\ga u} \cdot \wt{c}\big(\be_k - u\big) \cdot \big(a\phi)\big(\be_k - u \big)
\, \pl{a=1}{k-1} \big(a\phi)\big(\i\pi + u - \be_a  \big) \, \pl{a=k+1}{n+2} \big(a\phi)\big(\be_a - u \big) \;.
\end{multline}
Upon implementing the first identity in \eqref{useful lemma}, one gets
\begin{multline}
\bs{\Psi}_{1,\dots,n+2}\big[ p^{(\ga)}\big]\big( \bs{\be}_{n+2} \big) \cdot\op{V}_k  \, = \,
\f{\i}{\varkappa} \ex{\f{\pi}{\xi}\be_k}\pl{a=1}{n+2}\ex{\f{\ga}{2}\be_a} \pl{ k<\ell}{n+2} \op{F}\big(\be_{k\ell}\big) \\
\times \Int{ \msc{C}_{\bs{\be}_{n+2}} }{} \hspace{-2mm} \dd u \; \ex{- (\ga + \f{\pi}{\xi}) u} \,
\Big[  \big(a\phi)\big(\be_k - u \big) + \ex{-\i\f{\pi^2}{\xi}} \big(a\phi)\big(\i\pi + u - \be_k \big) \Big] \,
\pl{a=1}{k-1} \big(a\phi)\big(\i\pi + u - \be_a  \big)  \, \pl{a=k+1}{n+2} \big(a\phi)\big(\be_a - u \big) \;.
\end{multline}
Setting $v = u + \i \f{\pi}{2}$ finally entails
\begin{multline}
\bs{\Psi}_{1,\dots,n+2}\big[ p^{(\ga)}\big]\big( \bs{\be}_{n+2} \big)  \cdot\op{V}_k \, = \,
\f{\Big\{ \op{F}(\i\tfrac{\pi}{2}\big) \cdot \varpi\big(\i \tfrac{\xi-\pi}{2} \big)  \Big\}^4}{2\xi} \ex{\f{\pi}{\xi}\be_k}
\pl{a=1}{n+2}\ex{\f{\ga}{2}\be_a} \pl{i<j}{n+2} \op{F}\big(\be_i - \be_j\big) \\
\times \Int{ \msc{C}_{\bs{\be}_{n+2}} + \i\f{\pi}{2} }{} \hspace{-3mm} \dd v \; \ex{-\big(\ga + \f{\pi}{\xi}\big) v}
\cdot \sul{\veps=\pm}{} \Big\{ \ex{\i \veps \f{\pi^2}{2\xi}} \big(a\phi)\big( \veps(\be_k - v) + \i\tfrac{\pi}{2} \big)   \Big\}
\cdot \pl{a=1}{k-1} \big(a\phi)\big(u - \be_a + \i\tfrac{\pi}{2} \big)
\cdot \pl{a=k+1}{n+2} \big(a\phi)\big(\i\tfrac{\pi}{2} +  \be_a -u \big) \;,
\nonumber
\end{multline}
from which the claim follows upon using that $\big(a\phi)\big(\i\tfrac{\pi}{2} +  u \big)= \big(a\phi)\big(\i\tfrac{\pi}{2} -  u \big)$, see \eqref{definition fct a phi}. \qed

\section{Conclusion}

This work proposed a new expression for the form factors associated with the exponentials of the Sine-Gordon field in all regimes, within the framework of the $p$-function and using
Bethe Ansatz techniques. This computation was possible by introducing a twist factor in the bootstrap equations relative to the two different soliton and antisoliton components. The
results were shown to be consistent with similar results obtained by different techniques. Our results pave the way towards the computation of the multipoint correlation functions
as well as proving the convergence of two-point functions of the
Sine-Gordon model

which should generalize the techniques we used in a previous paper for the Sinh-Gordon model.


\section*{Acknowledgment}

The work of KKK and AS is supported by the ERC Project LDRAM : ERC-2019-ADG Project 884584. KKK acknowledges support from CNRS and from
the joint AND-DFG TSF24 project ANR-24-CE92-0033. The authors thank S. Lukyanov and F. Smirnov for discussions.




\appendix

\section{Special functions of interest}
\label{Appendice Special Functions}

\subsection{The quantum dilogarithm}
\label{Appendice SousSection Quantum dilog}

The quantum dilogarithm $\varpi$ of periods $\om_1,\om_2 \in \R^+$ is a meromorphic function on $\Cx$ which admits the integral representation
\beq
\varpi(\la) \; = \;  \exp\Bigg\{  \pm \f{ \i \pi }{ 2 \om_1 \om_2 } \cdot \Big( \la^2 \,+\, \f{ \om^2_1+\om^2_2 }{ 12 }   \Big)
\; - \;  \Int{ \R  \pm  \i 0^+}{}  \f{ \dd t }{ 4 t } \f{ \ex{-2\i \la t}  }{ \sinh\big(t\om_1 \big) \cdot \sinh\big(t\om_2 \big)  }  \Bigg\} \;,
\label{definition quantum dilog}
\enq
valid for $|\Im(\la)| \, < \, \tf{ \Om }{2}$, where $\Om=\om_1+\om_2$. 

This function is invariant under the transformation $\om_1\leftrightarrow \om_2$ and satisfies the first order finite difference equations
\beq
\varpi\big(\la+\i\om_2\big) \, = \, 2\i  \sinh\Big[ \f{\pi}{\om_1}\big(\la+\i\f{\tau}{2} \big) \Big] \cdot \varpi(\la) \quad \e{and} \quad
\varpi\big(\la+\i\om_1\big) \, = \, 2\i  \sinh\Big[ \f{\pi}{\om_2}\big(\la - \i\f{\tau}{2} \big) \Big] \cdot \varpi(\la)  \;,
\label{ecriture eqn diff finite dilog}
\enq
with $\tau \, = \, \om_2 - \om_1$.
From there one entails that
\bem
\varpi\Big(\la-\i \tfrac{ \Om }{ 2 } + \i \ell \om_1 + \i k \om_2 \Big) \, = \, (-1)^{k\ell} \big( -2\i \big)^{\ell + k}  \cdot
\pl{p=0}{k-1}   \sinh\Big[ \f{\pi}{\om_1}\big(\la + \i p \om_2 \big) \Big]   \\
\times \pl{p=0}{\ell-1}  \sinh\Big[ \f{\pi}{\om_2}\big(\la + \i p \om_1 \big) \Big]   \cdot
\varpi\Big(\la-\i \tfrac{ \Om }{ 2 } \Big)
\label{ecriture recurrence generale direct dilogarithme}
\end{multline}
and symmetrically,
\bem
\varpi\Big(\la-\i \tfrac{ \Om }{ 2 } - \i \ell \om_1 - \i k \om_2 \Big) \, = \, (-1)^{k\ell} \Big( \f{\i}{2}  \Big)^{\ell + k}  \cdot
\pl{p=1}{k} \Big\{ \sinh\Big[ \f{\pi}{\om_1}\big(\la - \i p \om_2 \big) \Big] \Big\}^{-1}  \\
\times \pl{p=1}{\ell } \Big\{ \sinh\Big[ \f{\pi}{\om_2}\big(\la - \i p \om_1 \big) \Big] \Big\}^{-1} \cdot
\varpi\Big(\la-\i \tfrac{ \Om }{ 2 } \Big)  \;.
\label{ecriture recurrence generale inverse dilogarithme}
\end{multline}
The quantum dilogarithm has only simple poles and zeroes. These are located at
\beq
\varpi(x)=0 \quad \e{iff} \quad x \in  \i\f{\Om }{2} + \i\mathbb{N} \om_1+ \i\mathbb{N} \om_2 \qquad \e{and} \qquad
\varpi^{-1}(x)=0 \quad \e{iff} \quad x \in  -\i\f{\Om }{2} - \i\mathbb{N} \om_1 - \i\mathbb{N} \om_2 \;.
\enq
$\varpi$ satisfies to the inversion identity $\varpi(\la)\varpi(-\la)=1$ and $\big(\varpi(\la^*)\big)^*=\varpi^{-1}(\la)$.
One can also establish that
\beq
\e{Res}\Big(\varpi\big( \la - \i \tfrac{\Om}{2}\big) \cdot  \dd \la, \la=0 \Big) \, = \, \f{\i}{2\pi} \sqrt{\om_1 \om_2}  \qquad \e{and} \qquad
\varpi( \tfrac{\i}{2} \tau) \; = \; \sqrt{ \f{\om_2}{\om_1 } } \;,
\enq
as well as
\beq
\varpi(0)=1 \; , \quad \varpi\big( \i \tfrac{\om_1}{2} \big) \, = \,  \varpi\big( \i \tfrac{\om_2}{2} \big) \, = \,  \sqrt{2} \;.
\enq

The above entails that, for $(k,\ell)\in \mathbb{N}^2$,
\beq
\e{Res}\Big(\varpi\big( \la - \i \tfrac{\Om}{2}\big) \cdot  \dd \la, \la=-\i \ell \om_1 -\i k \om_2  \Big) \, = \, \f{\i }{2  \pi} \sqrt{\om_1 \om_2}
(-1)^{k\ell} \Big( \f{ 1 }{2 \i } \Big)^{\ell + k}  \cdot \bigg\{
\pl{p=1}{k}   \sinh\Big[ \i p \pi \f{ \om_2 }{\om_1} \Big]   \cdot \pl{p=1}{\ell}  \sinh\Big[ \i p \pi \f{\om_1 }{\om_2} \Big] \bigg\}^{-1} \;.
\label{ecriture formule residu general du dilogarithme}
\enq

\subsection{The Barnes-G function}

The Barnes-G function is an entire function with the following integral representation for $\Re{z}>0$:
\begin{equation}
\ln{G(1+z)} = \f{z}{2} \ln{(2\pi)} + \Int{0}{\infty} \f{\dd t}{t} \bigg\{ \f{1 - \ex{-zt} }{ 4 \sinh^2(\tfrac{t}{2}) } + \f{z^2}{2}\ex{-t} - \f{z}{t} \bigg\}
\label{ecriture rep int Barnes}
\end{equation}
and continued to the whole complex plane by means of the functional equation
\begin{equation}
G(z+1) = \Ga(z) G(z) \;,
\end{equation}
where $\Ga$ is the regular Euler-$\Ga$ function. One also has the integral representation
\beq
\f{ G(1+z) }{ G(1-z) } \, = \, (2\pi)^z \exp\bigg\{ - \Int{ 0 }{ z }  \pi s \cot(\pi s) \dd s  \bigg\} \;.
\label{ecriture rep int ratio barnes}
\enq

\section{Proof of the form factor equations}
\label{Appendice Preuve des eqns de FF}

\subsection{The Watson symmetry equation i)}
\label{Appendice SS section eqn symmetrie i}

It follows from the Yang-Baxter equation \eqref{equation de Yang-Baxter pour S twiste} satisfied by $\op{S}^{(\ga)}$ that 
\beq
\wt{\op{T}}_{1,\dots,n;0}^{(\ga)}\big( \bs{\be}_n^{(i+1i)};\la) \op{S}_{i+1 i }^{(\ga)}(\be_{i i+1}) \; = \; 
\op{S}_{i+1 i }^{(\ga)}(\be_{i i+1})  \op{P}_{i i+1} \wt{\op{T}}_{1,\dots,n;0}^{(\ga)}\big( \bs{\be}_n;\la) \op{P}_{i i+1} \;. 
\enq
Moreover, since it holds that 
\beq
\op{O}_{1,\dots,n}^{(\ga)}(\bs{\be}_n) \op{S}_{ij}(\be_{ij}) \; = \;\op{S}_{ij}^{(\ga)}(\be_{ij})   \op{O}_{1,\dots,n}^{(\ga)}(\bs{\be}_n) 
\label{ecriture equation echange operateur spin twist et matrice S}
\enq
one infers that  
\bem
\Om_n \wt{\op{C}}_{1,\dots,n}^{\, (\ga)}\big( \bs{\be}_n^{(i+1i)} ; u_1 \big)  \cdots \wt{\op{C}}_{1,\dots,n}^{\, (\ga)}\big(\bs{\be}_n^{(i+1i)} ; u_m\big)
\cdot \op{O}_{1,\dots,n}^{ (\ga)}\big(\bs{\be}_n^{(i+1i)}\big) \cdot \op{S}_{i+1 i }(\be_{i i+1}) \\
\; = \; 
a(\be_{i i+1})\Om_n \wt{\op{C}}_{1,\dots,n}^{\, (\ga)}(\bs{\be}_n ; u_1)  \cdots \wt{\op{C}}_{1,\dots,n}^{\, (\ga)}(\bs{\be}_n ; u_m) \, \op{P}_{i i+1}  \cdot \op{O}_{1,\dots,n}^{ (\ga)}\big(\bs{\be}_n^{(i+1i)}\big)  \\
\; = \; 
a(\be_{i i+1})\Om_n \wt{\op{C}}_{1,\dots,n}^{\, (\ga)}(\bs{\be}_n ; u_1)  \cdots \wt{\op{C}}_{1,\dots,n}^{\, (\ga)}(\bs{\be}_n ; u_m)   \cdot \op{O}_{1,\dots,n}^{  (\ga)}\big(\bs{\be}_n\big)  \, \op{P}_{i i+1} \;. 
\end{multline}
Moreover one has that 
\beq
g\big( \bs{\be}_n^{(i+1i)};\bs{u}_m)  \; = \; \f{ \op{F}(\be_{i+1 i}) }{ \op{F}(\be_{i i+1}) } g\big( \bs{\be}_n;\bs{u}_m)  \, = \, 
a(\be_{i+1 i})g\big( \bs{\be}_n;\bs{u}_m) \;. 
\enq
Hence, all together, this leads to 
\beq
\bs{\Psi}_{1,\dots,n}\big[ p\big]\big( \bs{\be}_n \big)\, \op{P}_{i i+1} \;=\;  \bs{\Psi}_{1,\dots,n}\big[ p\big]\big( \bs{\be}_n^{(i+1 i)} \big)  \,  \op{S}_{i i+1}\big(\be_{i i+1} \big) \;. 
\enq

\subsection{The Watson monodromy equation ii)}
\label{Appendice SS section eqn Watson ii}

The transfer matrix 
\beq
\op{t}_{1,\dots,n}(\bs{\be}_n ; \la) \; = \; \e{Tr}_{0}\Big[ \op{T}_{1,\dots,n;0}(\bs{\be}_n ; \la) \Big]
\enq
admits the explicit product form at $\la=\be_1$
\beq
\op{t}_{1,\dots,n}(\bs{\be}_n ; \be_1)  \, = \, - \op{S}_{21}(\be_{21})\cdots \op{S}_{n 1}(\be_{n1})  \;. 
\label{ecriture forme produit explicite pour matrice de transfer aux inhomogeneites}
\enq
As a consequence, Watson's equations may be recast as 
\beq
\mc{F}^{(\op{O})}\big( \bs{\be}_n+2\i\pi \bs{e}_1 \big) \ex{- 2\i\pi \om_{\op{O}} -\i\pi \ga \sg_1^{z} }  \, = \, - 
\mc{F}^{(\op{O})}\big( \bs{\be}_n \big) \cdot \op{t}_{1,\dots,n}(\bs{\be}_n ; \be_1)   \, . 
\enq
When computing the action of $\op{t}_{1,\dots,n}(\bs{\be}_n ; \be_1)$ on $\bs{\Psi}_{1,\dots,n}\big[ p\big]\big( \bs{\be}_n \big)$, 
one first has to exchange the transfer matrix $\op{t}_{1,\dots,n}(\bs{\be}_n ; \be_1)$ with $ \op{O}_{1,\dots,n}^{(\ga)}\big(\bs{\be}_n\big)$. 
Owing to \eqref{ecriture equation echange operateur spin twist et matrice S} and the explicit product form \eqref{ecriture forme produit explicite pour matrice de transfer aux inhomogeneites}, 
one gets that 
\beq
\op{O}_{1,\dots,n}^{(\ga)}\big(\bs{\be}_n\big) \op{t}_{1,\dots,n}(\bs{\be}_n ; \be_1) \; = \; \op{t}_{1,\dots,n}^{(\ga)}(\bs{\be}_n ; \be_1)   \op{O}_{1,\dots,n}^{(\ga)}\big(\bs{\be}_n\big) \;, 
\enq
what yields 
\beq
\, - \, \bs{\Psi}_{1,\dots,n}\big[ p\big]\big( \bs{\be}_n \big) \cdot \op{t}_{1,\dots,n}(\bs{\be}_n ; \be_1)  \, = \, - \Int{ (\msc{C}_{\bs{\be}_n})^m }{} \dd^{m} u \; 
\wt{\Cx}_{1,\dots,n}^{(\ga)} \big(\bs{\be}_n;\bs{u}_m \big) \cdot \op{t}_{1,\dots,n}^{(\ga)}(\bs{\be}_n ; \be_1)
\cdot \big( g \cdot p \big)\big(\bs{\be}_n;\bs{u}_{m} \big)  \cdot  \op{O}_{1,\dots,n}^{(\ga)}\big(\bs{\be}_n\big) \;.
\enq
At this stage, one should  compute the action of the transfer matrix $\op{t}_{1,\dots,n}^{(\ga)}(\bs{\be}_n ; \be_1)\, = \,  \op{A}_{1,\dots,n}^{(\ga)}(\bs{\be}_n ; \be_1)+ \op{D}_{1,\dots,n}^{(\ga)}(\bs{\be}_n ; \be_1)$ 
on the covector $\wt{\Cx}_{1,\dots,n}^{(\ga)} \big(\bs{\be}_n;\bs{u}_m \big)  $ arising in the integral representation for
$\bs{\Psi}_{1,\dots,n}\big[ p\big]\big( \bs{\be}_n \big)$ by using the formulae \eqref{ecriture recurrence pour A}-\eqref{ecriture recurrence pour D}. 
Observe also, that one may slightly accommodate the integration curve $\msc{C}_{\bs{\be}_n}$ arising in the definition 
of $\bs{\Psi}_{1,\dots,n}\big[ p\big]\big( \bs{\be}_n \big)$ \textit{prior} to computing the action of the $\op{A}_{1,\dots,n}^{(\ga)}(\bs{\be}_n ; \be_1)$ operator. 
Indeed, as already discussed the integrand has its only poles at the points 
\beqa
&& u_b = \be_a-\i\pi +\i k \xi \quad k \in \mathbb{Z}  \\
&& u_b = \be_a +\i p \xi +\i\pi (2\ell + 1) \; , \quad u_b = \be_a - \i p \xi  - 2 \i\pi \ell \quad p, \ell \in \mathbb{N} \;.  
\eeqa
Therefore, one may deform the original integration curve $\msc{C}_{\bs{\be}_n}$ to $\wh{\msc{C}}_{\bs{\be}_n+2\i\pi \bs{e}_1}$ which is such that:
\begin{itemize}

\item $\wh{\msc{C}}_{\bs{\be}_n+2\i\pi \bs{e}_1}$ connects $-\infty$ to $+\infty$;

\item  the points $\be_a + 2\i\pi \de_{a1} - \i p \xi - 2 \i \pi \ell $, $\be_a + 2\i\pi \de_{a1} - \i\pi - \i (p + 1) \xi$, $p, \ell \in \mathbb{N}$,   have index $-1$ in respect to $\wh{\msc{C}}_{\bs{\be}_n+2\i\pi \bs{e}_1}$, 
\textit{viz}. they are located below of $\wh{\msc{C}}_{\bs{\be}_n+2\i\pi \bs{e}_1}$;

\item  the points $\be_a+ \i p \xi + (2 \ell + 1) \i \pi$, $\be_a -\i\pi +\i p \xi$, $p, \ell \in \mathbb{N}$,  have index $1$ in respect to $\wh{\msc{C}}_{\bs{\be}_n+2\i\pi \bs{e}_1}$,
\textit{viz}. they are located above of $\wh{\msc{C}}_{\bs{\be}_n+2\i\pi \bs{e}_1}$.

\end{itemize}
All-in-all, upon using the symmetry properties of the integrand in respect to permutations of the coordinates  of $\bs{u}_m$, one gets that 
\beq
\, - \, \bs{\Psi}_{1,\dots,n}\big[ p\big]\big( \bs{\be}_n \big) \cdot \op{t}_{1,\dots,n}(\bs{\be}_n ; \be_1)  \, = \, 
\sul{\a=1}{3}\bs{\Psi}_{1,\dots,n}^{(\a)}\big[ p\big]\big( \bs{\be}_n \big)
\label{ecriture action matrice transfer sur Psi offShell}
\enq
in which 
\beq
\bs{\Psi}_{1,\dots,n}^{(1)}\big[ p\big]\big( \bs{\be}_n \big) \, = \hspace{-3mm} \Int{ (\wh{\msc{C}}_{\bs{\be}_n+2\i\pi \bs{e}_1})^m }{} \hspace{-4mm} \dd^{m} u \; 
\wt{\Cx}_{1,\dots,n}^{(\ga)}\big(\bs{\be}_n+2\i\pi \bs{e}_1;\bs{u}_m \big)  \cdot \Phi^{(1)}\big( \bs{\be}_n; \bs{u}_m\big)  \;, 
\enq
where 
\beq
\Phi^{(1)}\big( \bs{\be}_n; \bs{u}_m\big) \, = \,
-\, \f{   \pl{k=1}{n} a(\be_{k1}) }{ \pl{k=1}{m} \wt{b}(\be_1-\mu_k+2\i\pi) }  \cdot  \big( g \cdot p \big)\big(\bs{\be}_n;\bs{u}_{m} \big) \cdot  \op{O}_{1,\dots,n}^{(\ga)}\big(\bs{\be}_n\big) \;.
\enq
Here we remind that $\wt{\Cx}_{1,\dots,n}^{(\ga)}\big(\bs{\be}_n;\bs{u}_m \big)$ has been introduced in \eqref{definition tilde C vect de Bethe}, 
while $\wt{g}\big(\bs{\be}_n;\bs{u}_{m} \big)$ in \eqref{definition fonction tilde g}. Further, one has 
\beq
\bs{\Psi}_{1,\dots,n}^{(2)}\big[ p\big]\big( \bs{\be}_n \big) \, = \,   m  \hspace{-3mm} \Int{ (\wh{\msc{C}}_{\bs{\be}_n+2\i\pi \bs{e}_1})^m }{} \hspace{-4mm} \dd^{m} u \; 
\wt{\Cx}^{(\ga)}_{1,\dots,n}\big( \bs{\be}_n  \, ; \, (\bs{u}_{m-1},\be_1)  \big)  \cdot \Phi^{(2)}\big( \bs{\be}_n; \bs{u}_m\big)
\enq
as well as
\beq
\bs{\Psi}_{1,\dots,n}^{(3)}\big[ p\big]\big( \bs{\be}_n \big) \, = \,   m  \hspace{-1mm} \Int{ (\msc{C}_{\bs{\be}_n})^m }{} \hspace{-1mm} \dd^{m} u \; 
\wt{\Cx}^{(\ga)}_{1,\dots,n}\big( \bs{\be}_n  \, ; \, (\bs{u}_{m-1},\be_1)  \big) \cdot 
\Phi^{(3)}\big( \bs{\be}_n; \bs{u}_m\big) \;.
\enq
The two above vectors are defined in terms of 
\beqa
\Phi^{(2)}\big( \bs{\be}_n; \bs{u}_m\big) & =& \f{ c^{(\ga)}( \be_1 + 2\i\pi - u_m ) }{ b(\be_1 + 2\i\pi - u_m ) } \pl{ s= 1   }{ m-1 } \f{ a( u_{ms}) }{ b( u_{ms} ) } 
\pl{k=1}{n} a(\be_{k1}) \cdot
(g  \cdot p) \big(\bs{\be}_n;\bs{u}_{m} \big)  \cdot  \op{O}_{1,\dots,n}^{(\ga)}\big(\bs{\be}_n\big)  \;,   \\
\Phi^{(3)}\big( \bs{\be}_n; \bs{u}_m\big) & =&   \f{ c^{(-\ga)}( u_m - \be_1 ) }{ b( u_m - \be_1 ) } \pl{ s = 1   }{ m-1 } \f{ a( u_{sm} ) }{b( u_{sm} )} 
\pl{k=1}{n} \Big\{ a(\be_{k1}) \wt{b}(\be_k- u_m ) \Big\}  \cdot   (g \cdot  p) \big(\bs{\be}_n;\bs{u}_{m} \big)   \cdot  \op{O}_{1,\dots,n}^{(\ga)}\big(\bs{\be}_n\big)   \;. 
\eeqa
Now, upon using the $2\i\pi$-periodicity properties of $p\big( \bs{\be}_n ; \bs{u}_{m} \big)$ given in $\mathrm{d)}$,  one has that 
\beq
\Phi^{(1)}\big( \bs{\be}_n; \bs{u}_m\big) \, = \,  \pl{k=2}{m} \f{ a(\be_{k1}) \op{F}(\be_{1k})  }{ \op{F}(\be_{1k}+2\i\pi) }
\pl{s=1}{m} \f{  (a\phi)(\be_1-u_s) }{ (a\phi\wt{b})(\be_1-\mu_s+2\i\pi)  }
\big( g \cdot p \big)\big(\bs{\be}_n+2\i\pi \bs{e}_1 ; \bs{u}_{m} \big) \cdot  \op{O}_{1,\dots,n}^{(\ga)}\big( \bs{\be}_n + 2\i\pi \bs{e}_1 \big)  \cdot \ex{-2\i\pi \om -\i\pi \ga \sg_1^{z} } \;. 
\enq
The first term reduces to $1$ by virtue of the functional equation for $\op{F}$ given in \eqref{ecriture eqn fnelle FF minimal},
while the second owing to the identity
\beq
(a \phi)\big( \la + 2\i\pi \big) \, = \, \f{ (a \phi)\big( \la \big)  }{ \wt{b}\big( \la + 2\i\pi \big)   }  \;. 
\label{ecriture shift par 2 i pi de a phi}
\enq
In a similar way
\bem
\Phi^{(3)}\big( \bs{\be}_n; \bs{u}_m-2\i\pi \bs{e}_m\big) \,  =\, \f{ c^{(-\ga)}( u_m-\be_1 - 2\i\pi ) }{ b(u_m-\be_1 - 2\i\pi ) }   \pl{k=1}{n} a(\be_{k1})
\pl{ s= 1   }{ m-1 } \bigg\{  \f{  \tau( u_{sm}+2\i\pi ) }{   \tau(u_{sm}) \wt{b}( u_{sm}+2\i\pi ) } \bigg\} \\ 
\\ \times    \pl{k=1}{n}  \bigg\{ \f{ (a\phi\wt{b})(\be_k-u_m+2\i\pi) }{  (a\phi)(\be_k - u_m)  } \bigg\}
\cdot (g   p) \big(\bs{\be}_n;\bs{u}_{m} \big) \cdot  \op{O}_{1,\dots,n}^{(\ga)}\big(\bs{\be}_n\big) 
\; = \; - \Phi^{(2)}\big( \bs{\be}_n; \bs{u}_m \big)  \;. 
\end{multline}
There, we made use of \eqref{ecriture shift par 2 i pi de a phi}, as well as 
\beq
\f{c^{(\ga)}(\th) }{ b(\th) } \; = \; - \f{c^{(-\ga)}(-\th) }{ b(-\th) } \qquad \e{and} \qquad 
\tau(\th+2\i\pi) \, = \, \tau(\th) \cdot \f{ \wt{b}(\th+2\i\pi) }{  \wt{b}(-\th) } \;. 
\enq
As a consequence,  
one may recast the building blocks arising in \eqref{ecriture action matrice transfer sur Psi offShell} as
\beq
\bs{\Psi}_{1,\dots,n}^{(1)}\big[ p\big]\big( \bs{\be}_n \big) \, =   \hspace{-3mm} \Int{ (\wh{\msc{C}}_{\bs{\be}_n+2\i\pi \bs{e}_1})^m }{} \hspace{-4mm} \dd^{m} u \; 
\wt{\Cx}_{1,\dots,n}^{(\ga)}\big(\bs{\be}_n+2\i\pi \bs{e}_1;\bs{u}_m \big) \cdot  \op{O}_{1,\dots,n}^{(\ga)}\big( \bs{\be}_n + 2\i\pi \bs{e}_1 \big)  \cdot
\big( g \cdot p \big)\big(\bs{\be}_n+2\i\pi \bs{e}_1;\bs{u}_{m} \big) \ex{-2\i\pi \om - \i\pi \ga \sg_1^{z} } \;, 
\enq
while
\beq
\bs{\Psi}_{1,\dots,n}^{(3)}\big[ p\big]\big( \bs{\be}_n \big) \, = \,    - m  \hspace{-1mm} \Int{ (\msc{C}_{\bs{\be}_n})^m }{} \hspace{-1mm} \dd^{m} u \; 
\wt{\Cx}_{1,\dots,n}^{(\ga)}\big( \bs{\be}_n  \, ; \, (\bs{u}_{m-1},\be_1)  \big) 
\cdot \Phi^{(2)}\big( \bs{\be}_n; \bs{u}_m+2\i\pi \bs{e}_m\big) \;. 
\enq
We first justify that $\bs{\Psi}_{1,\dots,n}^{(1)}\big[ p\big]\big( \bs{\be}_n \big)$ has the sought form and then 
prove that the term $\bs{\Psi}_{1,\dots,n}^{(3)}\big[ p\big]\big( \bs{\be}_n \big)$ compensates with $\bs{\Psi}_{1,\dots,n}^{(2)}\big[ p\big]\big( \bs{\be}_n \big)$. 
Regarding $\bs{\Psi}_{1,\dots,n}^{(1)}\big[ p\big]\big( \bs{\be}_n \big)$, according to the previous analysis, its integrand has poles at
\beqa
&& u_b = \be_a+2\i\pi \de_{a1}-\i\pi +\i k \xi \quad k \in \mathbb{Z}  \\
&& u_b = \be_a+2\i\pi \de_{a1} +\i p  \xi +\i\pi (2n+1) \; , \quad u_b = \be_a+2\i\pi \de_{a1} - \i p \xi  - 2 \i\pi \ell \quad p, \ell \in \mathbb{N} \;.  
\eeqa
%
%
%
%
%
%




%
%
%
Since these are the only singularities of the integrand, one may deform $\wh{\msc{C}}_{\bs{\be}_n+2\i\pi \bs{e}_1}$ into $\msc{C}_{\bs{\be}_n+2\i\pi \bs{e}_1}$
in $\bs{\Psi}_{1,\dots,n}^{(1)}\big[ p\big]\big( \bs{\be}_n \big)$, what then yields
\beq
\bs{\Psi}_{1,\dots,n}^{(1)}\big[ p\big]\big( \bs{\be}_n \big) \; = \;   \bs{\Psi}_{1,\dots,n}\big[ p\big]\big( \bs{\be}_n + 2\i\pi \bs{e}_1 \big)\ex{-2\i\pi \om - \i\pi \ga \sg_1^{z} } \;.
\enq
In order to discuss the pole structure, in respect to $u_m$ of the integrand in $\bs{\Psi}_{1,\dots,n}^{(3)}\big[ p\big]\big( \bs{\be}_n \big)$, we first observe that, by construction, 
$\msc{C}_{\bs{\be}_n}$ and $\wh{\msc{C}}_{\bs{\be}_n+2\i\pi \bs{e}_1}$ both partition the poles of the integrand 
in respect to $u_1,\dots, u_m$ into two same sets $\mc{A}_+$ and $\mc{A}_-$ associated with the same index. Therefore, 
in the integrals over $\bs{u}_{m-1}$ one may carry out the contour deformation $\msc{C}_{\bs{\be}_n} \hookrightarrow \wh{\msc{C}}_{\bs{\be}_n+2\i\pi \bs{e}_1}$. 
In what concerns the $u_m$ integral, one has the decomposition 
\beq
\Phi^{(2)}\big( \bs{\be}_n; \bs{u}_m \big) \; = \;   \f{ c^{(\ga)}(  \be_1 - u_m + 2\i\pi) }{ b(  \be_1 - u_m + 2\i\pi ) } 
\pl{ s=1   }{ m-1 } \bigg\{  \, \f{ \tau(u_{sm})  }{ \wt{b}(u_{ms}) }  \bigg\}
\pl{k=1}{n} (a \cdot \phi )(\be_{k} - u_m ) \cdot    \wt{\Phi}^{(2)}\big( \bs{\be}_n; \bs{u}_{m-1}\big) p \big(\bs{\be}_n;\bs{u}_{m} \big)   \;,
\enq
in which $ \wt{\Phi}^{(2)}$ does not depend on $u_m$. 

From there, one infers that the integrand only has simple poles and these are given by 
\beq
u_m \, = \, \be_k -\i \pi \, + \,  \i \xi \ell \; , \quad u_m \, = \, \be_k +\i \pi(2p+1) \, + \,  \i \xi \ell \; , \quad 
u_m \, = \, \be_k - 2 \i \pi (p+1) \, - \,  \i \xi \ell
\enq
and $u_m=\be_1-\i \ell \xi$, with $p, \ell \geq 0$.

In the integral representation for $\bs{\Psi}_{1,\dots,n}^{(3)}\big[ p\big]\big( \bs{\be}_n \big)$, one would like to deform the original contour of the $u_m$-integration form 
$\msc{C}_{ \bs{\be}_n }$ up to $\wh{ \msc{C} }_{ \bs{\be}_n + 2\i\pi \bs{e}_1 }-2\i\pi$. Observe that 
the contour $\wh{ \msc{C} }_{ \bs{\be}_n + 2\i\pi \bs{e}_1 }-2\i\pi$
has the property 
\begin{itemize}

\item  the points $\be_a + 2\i\pi \de_{a1} - \i p \xi - 2 \i \pi (\ell+1) $, $\be_a + 2\i\pi \de_{a1} - 3 \i\pi - \i (p+1) \xi$, $p, \ell \in \mathbb{N}$,   have index $-1$ in respect to 
$\wh{\msc{C}}_{\bs{\be}_n+2\i\pi \bs{e}_1} -2\i\pi$;

\item  the points $\be_a  +\i p \xi + (2 \ell - 1) \i \pi$, $\be_a -3\i\pi +\i p \xi$, $p, \ell \in \mathbb{N}$,  have index $1$ in respect to $\wh{\msc{C}}_{\bs{\be}_n+2\i\pi \bs{e}_1} -2\i\pi$;

\end{itemize}
while the original contour has the property 
\begin{itemize}

\item  the points $\be_a - \i p \xi - 2 \i \pi \ell $, $\be_a  - \i\pi - \i (p+1) \xi$, $p, \ell \in \mathbb{N}$,   have index $-1$ in respect to $\msc{C}_{\bs{\be}_n}$;

\item  the points $\be_a+\i p \xi + (2 \ell + 1) \i \pi$, $\be_a-\i\pi +\i p \xi$, $p, \ell \in \mathbb{N}$,  have index $1$ in respect to $\msc{C}_{\bs{\be}_n}$.

\end{itemize}
Hence, in the sense of homotopy classes, one has 
\bem
\msc{C}_{ \bs{\be}_n } \setminus \Big\{ \wh{ \msc{C} }_{ \bs{\be}_n + 2\i\pi \bs{e}_1 }-2\i\pi \Big\}  \; \simeq \; 
\bigcup\limits_{a=1}^{n} \Bigg\{   \bigg( \bigcup\limits_{\ell, p \geq 0}\Big\{ - \Dp{}\mc{D}_{\be_a-\i p  \xi - 2  \i \pi \ell,\eps}  \Big\}
\bigcup\limits_{\ell, p \geq 0}  \Big\{  \Dp{}\mc{D}_{\be_a+ 2\i\pi\de_{a1}-\i p  \xi - 2  \i (\ell+1) \pi ,\eps}  \Big\}  \bigg)   \\
\bigcup   \bigg( \bigcup \limits_{  p \geq 0}\Big\{  - \Dp{}\mc{D}_{\be_a-\i\pi - \i (p+1) \xi ,\eps}  \Big\}  
\bigcup \limits_{  p \geq 0}\Big\{    \Dp{}\mc{D}_{\be_a+2\i\pi \de_{a1}- 3\i\pi - \i (p+1) \xi ,\eps}  \Big\}  \bigg)\\ 
\bigcup \bigg( \bigcup\limits_{\ell, p \geq 0}\Big\{   \Dp{}\mc{D}_{\be_a + \i p \xi + \i \pi (2\ell+1),\eps}  \Big\}  \bigg) 
\bigcup\limits_{\ell, p \geq 0}\Big\{  - \Dp{}\mc{D}_{\be_a + \i p \xi + \i\pi (2\ell-1) ,\eps}  \Big\}  \bigg)
\bigcup 
\bigg( \bigcup\limits_{p  \geq 0}\Big\{  \Dp{}\mc{D}_{\be_a + \i p  \xi - \i\pi ,\eps}  \Big\}  
\bigcup\limits_{p \geq 0}\Big\{  - \Dp{}\mc{D}_{\be_a + \i p \xi - 3 \i\pi ,\eps}  \Big\}   \bigg) \Bigg\} \\
\; = \;  \bigcup\limits_{a=2}^{n} \Bigg\{ \bigcup\limits_{p  \geq 0}  \bigg( \Big\{  - \Dp{}\mc{D}_{\be_a - \i p \xi  ,\eps} \Big\}
\bigcup \Big\{  - \Dp{}\mc{D}_{\be_a - \i (p+1) \xi -\i\pi ,\eps} \Big\} \bigcup \Big\{   \Dp{}\mc{D}_{\be_a - \i (p+1) \xi -3\i\pi ,\eps} \Big\} \bigg) \Bigg\}\\
\bigcup\limits_{a=1}^{n} \Bigg\{ \bigcup\limits_{p  \geq 0}  \bigg( \Big\{  - \Dp{}\mc{D}_{\be_a + \i p \xi -\i\pi ,\eps} \Big\}
\bigcup \Big\{   \Dp{}\mc{D}_{\be_a + \i p \xi -\i\pi ,\eps} \Big\} \bigcup \Big\{ -  \Dp{}\mc{D}_{\be_a + \i p \xi -3\i\pi ,\eps} \Big\} \bigg) \Bigg\}\\ 
\; = \; \bigcup\limits_{a=2}^{n} \Bigg\{ \bigcup\limits_{p  \geq 0}  \bigg( \Big\{  - \Dp{}\mc{D}_{\be_a - \i p \xi  ,\eps} \Big\}
\bigcup \Big\{  - \Dp{}\mc{D}_{\be_a - \i (p+1) \xi -\i\pi ,\eps} \Big\} \bigcup \Big\{   \Dp{}\mc{D}_{\be_a - \i (p+1) \xi -3\i\pi ,\eps} \Big\} \bigg)\Bigg\} 
\bigcup\limits_{a=1}^{n} \bigcup\limits_{p  \geq 0}  \Big\{ -  \Dp{}\mc{D}_{\be_a + \i p \xi -3\i\pi ,\eps} \Big\}  \;. 
\end{multline}
Therefore, the only additional points that are grasped by such a deformation are
\beq
\be_a - \i p \xi\, , \, \be_a + \i\xi p - \i\pi  \quad \e{and} \quad \be_a - \i\pi - \i \xi (p+1)  
\quad \e{and} \quad \be_a -3\i\pi - \i \xi (p+1) \quad \e{with} \quad p \in \mathbb{N} \;,  
\enq
and for $a \in \intn{2}{n}$, as well as 
\beq
\be_a -3\i\pi + \i p \xi \quad \e{with} \quad p \in \mathbb{N}\quad \e{and} \quad a \in \intn{1}{n} \;.   
\enq
However, these are not poles of the integrand, so there is no additional contribution coming from the deformation.
Finally, regarding the other integration variables $u_1,\dots, u_{m-1}$, one may deform the contours $\msc{C}_{\bs{\be}_n} \hookrightarrow \wh{\msc{C}}_{\bs{\be}_n+2\i\pi \bs{e}_1} $
as, again, no poles are crossed in the procedure. As a consequence, 
\beqa
\bs{\Psi}_{1,\dots,n}^{(3)}\big[ p\big]\big( \bs{\be}_n \big) & = &  m  \hspace{-3mm} \Int{ (\wh{\msc{C}}_{\bs{\be}_n+2\i\pi \bs{e}_1})^{m-1}   }{} \hspace{-3mm} \dd^{m-1} u \hspace{-5mm}
\Int{ \wh{\msc{C}}_{\bs{\be}_n+2\i\pi \bs{e}_1} -2\i\pi }{} \hspace{-6mm} \dd u_m \; 
\wt{\Cx}^{(\ga)}_{1,\dots,n}\big( \bs{\be}_n  \, ; \, (\bs{u}_{m-1},\be_1)  \big) 
\cdot \Phi^{(2)}\big( \bs{\be}_n; \bs{u}_m+2\i\pi \bs{e}_m\big)  \\
& = & - \bs{\Psi}_{1,\dots,n}^{(2)}\big[ p\big]\big( \bs{\be}_n \big) \;.
\eeqa

\qed

\subsection{The Lorentz invariance equation iii)}
\label{Appendice SS section eqn inv Lorentz iii}

One starts with the expression of the form factor with a constant shift $\th$:
\beq
\bs{\Psi}_{1,\dots,n}\big[ p\big]\big( \bs{\be}_n + \th \, \ov{\bs{e}}_n \big) \, = \hspace{-4mm} \Int{ (\msc{C}_{\bs{\be}_n + \th \, \ov{\bs{e}}_n})^m }{} \hspace{-4mm} \dd^{m} u \;
\wt{\Cx}_{1,\dots,n}^{\, (\ga)}\big(\bs{\be}_n + \th \, \ov{\bs{e}}_n;\bs{u}_m \big) \cdot \op{O}_{1,\dots,n}^{(\ga)}(\bs{\be}_n + \th \, \ov{\bs{e}}_n)  \cdot \big( g \cdot p \big)\big(\bs{\be}_n + \th \, \ov{\bs{e}}_n;\bs{u}_{m} \big)   \;.
\enq
From \eqref{monodromy matrix}, \eqref{definition fonction g} as well as the Lorentz invariance property for the $p$-function, one may rewrite the above as
\beq
\bs{\Psi}_{1,\dots,n}\big[ p\big]\big( \bs{\be}_n + \th \, \ov{\bs{e}}_n \big) \, = \hspace{-4mm} \Int{ (\msc{C}_{\bs{\be}_n + \th \, \ov{\bs{e}}_n})^m }{} \hspace{-4mm} \dd^{m} u \;
\wt{\Cx}_{1,\dots,n}^{\, (\ga)}\big(\bs{\be}_n ;\bs{u}_m - \th \, \ov{\bs{e}}_m \big) \cdot \op{O}_{1,\dots,n}^{(\ga)}(\bs{\be}_n + \th \, \ov{\bs{e}}_n)  \cdot \big( g \cdot p \big)\big(\bs{\be}_n ;\bs{u}_{m} - \th \, \ov{\bs{e}}_m \big)   \cdot \ex{\th s} \;.
\enq
The change of variables $\bs{u}_m \mapsto \bs{u}_m + \th \, \ov{\bs{e}}_m$ entails
\beq
\bs{\Psi}_{1,\dots,n}\big[ p\big]\big( \bs{\be}_n + \th \, \ov{\bs{e}}_n \big) \, = \, \Int{ (\msc{C}_{\bs{\be}_n + \th \, \ov{\bs{e}}_n} - \th)^m}{} \dd^{m} u \; 
\wt{\Cx}_{1,\dots,n}^{\, (\ga)}\big(\bs{\be}_n ;\bs{u}_m \big) \cdot \op{O}_{1,\dots,n}^{(\ga)}(\bs{\be}_n + \th \, \ov{\bs{e}}_n)  \cdot \big( g \cdot p \big)\big(\bs{\be}_n ;\bs{u}_{m}  \big)   \cdot \ex{\th s} \;.
\enq
Furthermore, one may once again deform the contours back to the original $\msc{C}_{\bs{\be}_n + \th \, \ov{\bs{e}}_n} - \th \hookrightarrow \msc{C}_{\bs{\be}_n}$ without crossing any poles in the
procedure. Now, one notices that
\begin{equation}
\op{O}_{1,\dots,n}^{(\ga)}(\bs{\be}_n + \th \, \ov{\bs{e}}_n) = \ex{\f{\ga}{2} \th \Sigma^z_{1,\dots,n}} \cdot \op{O}_{1,\dots,n}^{(\ga)}(\bs{\be}_n)
\end{equation}
with $\Sigma^z_{1,\dots,n} = \sul{k=1}{n} \sg^z_k$, and makes use of the commutation relation
\begin{equation}
	\left[ \wt{\op{C}}_{1,\dots,n}^{\, (\ga)}\big(\bs{\be}_n ;u \big) \, , \, \Sigma^z_{1,\dots,n}\right] = -2 \wt{\op{C}}_{1,\dots,n}^{\, (\ga)}\big(\bs{\be}_n ; u \big)
\end{equation}
to get
\begin{equation}
\wt{\Cx}_{1,\dots,n}^{\, (\ga)}\big(\bs{\be}_n ;\bs{u}_m \big)  \cdot \Sigma^z_{1,\dots,n} = (n-2m) \cdot \wt{\Cx}_{1,\dots,n}^{\, (\ga)}\big(\bs{\be}_n ;\bs{u}_m \big)
= q \cdot \wt{\Cx}_{1,\dots,n}^{\, (\ga)}\big(\bs{\be}_n ;\bs{u}_m \big)\;.
\end{equation}
All of the above thus entails the claim. \qed

\subsection{The kinematic pole equation iv)}
\label{subsection axiom iv}

The kinematical poles in $\bs{\Psi}_{1,\dots,n}\big[ p\big]\big( \bs{\be}_n \big)$ arise from the pinching of the contour $\msc{C}_{\bs{\be}_n}$ 
by the poles it is supposed to separate when $\be_1 \tend \be_n+\i\pi$. One may check that in this case, there are three pinchings occurring simultaneously:
\begin{itemize}
\item The pole at $\be_n$ located below of  $\msc{C}_{\bs{\be}_n}$  pinches the pole at $\be_1-\i\pi$ located above  $\msc{C}_{\bs{\be}_n}$;
\item The pole at $\be_1-2\i\pi$ located below of  $\msc{C}_{\bs{\be}_n}$  pinches the pole at $\be_n -\i\pi$ located above  $\msc{C}_{\bs{\be}_n}$;
\item The pole at $\be_1$ located below of  $\msc{C}_{\bs{\be}_n}$  pinches the pole at $\be_n + \i\pi$ located above  $\msc{C}_{\bs{\be}_n}$. 
\end{itemize}
One may depict the original contour, locally around the points $\be_1, \be_n$, as given in Figure
\ref{contours Gamma GammaPrime et C BetaN Prime}. Call this contour $\Ga$ for definiteness  and
denote $\Ga^{\prime}$ its deformation as given in Figure \ref{contours Gamma GammaPrime et C BetaN Prime}. Note that the contour $\Ga^{\prime}$ does not enjoy the
pinching property when $\be_{1n}\tend \i\pi$. 
Below, we establish a technical lemma clarifying the appearance of pole through the pinching mechanism.

\begin{figure}[h!]
\begin{center}

\begin{minipage}[t]{0.31\textwidth}
	\centering
		\begin{tikzpicture}[scale=1]
		\coordinate (A) at (4.00, 1.90);
		\coordinate (B) at (4.00, 3.10);
		\coordinate (C) at (4.00, 3.90);
		\coordinate (D) at (4.00, 5.10);
		\coordinate (E) at (4.00, 5.90);
		\coordinate (F) at (4.00, 7.10);
		\coordinate (G) at (3.00, 1.30);
		\coordinate (H) at (5.00, 3.00);
		\coordinate (I) at (3.00, 2.75);
		\coordinate (J) at (5.00, 2.00);
		\coordinate (K) at (4.00, 4.50);
		\coordinate (L) at (5.03, 3.34);
		\coordinate (M) at (5.00, 4.50);
		\coordinate (N) at (4.99, 6.49);
		\coordinate (O) at (1.99, 4.50);
		\coordinate (P) at (2.25, 6.50);
		\coordinate (Q) at (6.79, 6.50);
		\coordinate (R) at (6.00, 6.50);
		\coordinate (S) at (6.24, 6.53);
		
		\draw[thick] (G) .. controls (3.00, 2.75) and (5.00, 2.00) .. (H);
		\draw[thick] (H) .. controls (5.03, 3.34) and (5.00, 4.50) .. (K);
		\draw[thick] (K) .. controls (1.99, 4.50) and (2.25, 6.50) .. (N);
		\draw[
		thick,
		postaction={decorate},
		decoration={
			markings,
			mark=at position 0.6 with {\arrow{>}}
		}
		] (N) .. controls (6.00, 6.50) and (6.24, 6.53) .. (Q);
		\fill[blue] (A) circle (2pt);
		\node[right, blue] at (A) {$\beta_1 - 2\i\pi$};
		\fill[blue] (B) circle (2pt);
		\node[left, blue] at (B) {$\beta_n- \i\pi$};
		\fill[blue] (C) circle (2pt);
		\node[left, blue] at (C) {$\beta_1 - \i\pi$};
		\fill[blue] (D) circle (2pt);
		\node[right, blue] at (D) {$\beta_n$};
		\fill[blue] (E) circle (2pt);
		\node[right, blue] at (E) {$\beta_1$};
		\fill[blue] (F) circle (2pt);
		\node[left, blue] at (F) {$\beta_n + \i\pi$};
		\node[above right, black] at (S) {$\Gamma$};

	\end{tikzpicture}
\end{minipage}
\hfill 
\begin{minipage}[t]{0.31\textwidth}
	\centering

	\begin{tikzpicture}[scale=1]
	\coordinate (A) at (4.00, 1.90);
	\coordinate (B) at (4.00, 3.10);
	\coordinate (C) at (4.00, 3.90);
	\coordinate (E) at (4.00, 5.90);
	\coordinate (F) at (4.00, 7.10);
	\coordinate (I) at (4.00, 5.10);
	\coordinate (J) at (6.00, 0.99);
	\coordinate (K) at (2.00, 4.11);
	\coordinate (L) at (6.00, 6.00);
	\coordinate (M) at (2.00, 1.75);
	\coordinate (N) at (6.00, 7.31);
	\coordinate (O) at (2.00, 6.00);
	\coordinate (P) at (6.00, 4.75);
	
	\draw[thick,
	postaction={decorate},
	decoration={
		markings,
		mark=at position 0.4 with {\arrow{<}}
	},red] (A) circle (0.25);
	\draw[thick,
	postaction={decorate},
	decoration={
		markings,
		mark=at position 0.3 with {\arrow{<}}
	},red] (E) circle (0.25);
	\draw[thick,
	postaction={decorate},
	decoration={
		markings,
		mark=at position 0.1 with {\arrow{>}}
	},
	red] (C) circle (0.25);
	\draw[thick] (J) .. controls (6.00, 6.00) and (2.00, 1.75) .. (K);
	\draw[thick,
	postaction={decorate},
	decoration={
		markings,
		mark=at position 0.6 with {\arrow{>}}
	}] (K) .. controls (2.00, 6.00) and (6.00, 4.75) .. (N);
	\fill[blue] (A) circle (2pt);
	\node[right, blue] at (4.20,1.90) {$\beta_1 - 2\i\pi$};
	\fill[blue] (B) circle (2pt);
	\node[right, blue] at (B) {$\beta_n- \i\pi$};
	\fill[blue] (C) circle (2pt);
	\node[right, blue] at (4.20,3.90) {$\beta_1 - \i\pi$};
	\fill[blue] (E) circle (2pt);
	\node[right, blue] at (4.20,5.90) {$\beta_1$};
	\fill[blue] (F) circle (2pt);
	\node[right, blue] at (F) {$\beta_n + \i\pi$};
	\fill[blue] (I) circle (2pt);
	\node[right, blue] at (I) {$\beta_n$};
	
	\node[below right, black] at (N) {$\Gamma^{\prime}$};

\end{tikzpicture}
\end{minipage}
\hfill 
\begin{minipage}[t]{0.31\textwidth}
\centering
\begin{tikzpicture}[scale=1]
	\coordinate (A) at (4.00, 1.90);
	\coordinate (C) at (4.00, 3.90);
	\coordinate (E) at (4.00, 5.90);
	\coordinate (J) at (7.52, 0.97);
	\coordinate (K) at (2.99, 3.99);
	\coordinate (L) at (6.00, 6.00);
	\coordinate (M) at (3.00, 1.75);
	\coordinate (N) at (6.20, 6.89);
	\coordinate (O) at (3.00, 6.00);
	\coordinate (P) at (7.00, 3.25);
	
	\draw[red] (A) circle (0.00);
	\draw[red] (E) circle (0.00);
	\draw[red] (C) circle (0.00);
	\draw[thick] (J) .. controls (6.00, 6.00) and (3.00, 1.75) .. (K);
	\draw[thick,
	postaction={decorate},
	decoration={
		markings,
		mark=at position 0.6 with {\arrow{>}}
	}] (K) .. controls (3.00, 6.00) and (7.00, 3.25) .. (N);
	\fill[blue] (A) circle (2pt);
	\node[right, blue] at (A) {$\beta_1 - 2\i\pi = \beta_n - \i\pi$};
	\fill[blue] (C) circle (2pt);
	\node[right, blue] at (C) {$\beta_1 - \i\pi = \beta_n$};
	\fill[blue] (E) circle (2pt);
	\node[right, blue] at (E) {$\beta_1 = \beta_n + \i\pi$};
	\fill[blue] (A) circle (2pt);
	
	\node[below right, black] at (6.50,6.00) {$\mathscr{C}^{\prime}_{\boldsymbol{\beta}_n}$};

\end{tikzpicture}

\end{minipage}

\caption{Contours $\Ga, \Ga^{\prime}$ and $\msc{C}_{\bs{\be}_n}^{\prime}$.}
\label{contours Gamma GammaPrime et C BetaN Prime}
\end{center}
\end{figure}

\begin{lemme}
\label{Lemme engendrement des poles par pincement}
Let  $\Ga$ be a curve in $\Cx$ enjoying the properties as given in Figure~\ref{contours Gamma GammaPrime et C BetaN Prime}. Let
\beq
\mc{G}( \bs{\mu}_m \big) \, = \,  \f{ G(\bs{\mu}_{m}) \,  \psi(\bs{\mu}_m) }{ \pl{a=1}{m} P(\mu_a) } \;, 
\enq
be a symmetric function of $\bs{\mu}_m$ such that $G$ is holomorphic in some open neighbourhood of $\Ga$, 
\beq
\psi(\bs{\mu}_m) \, = \, \pl{a<b}{m} \Big\{ \sinh(\mu_{ab} )  \cdot \sinh\big[ \tfrac{ \pi }{ \xi } \mu_{ab}  \big] \Big\} 
\enq
while 
\beq
P(\mu) \; = \; (\mu-\be_n)(\mu-\be_1+\i\pi) \cdot (\mu-\be_1+2\i\pi)(\mu-\be_n+\i\pi) \cdot  (\mu-\be_1)(\mu-\be_n-\i\pi) \;. 
\label{definition polynome P engendrant les poles en be 1n egal a i pi}
\enq
Then, 
\beq
\mc{I}_{m}(\be_1,\be_n) \; = \; \Int{ \Ga^m }{}   \dd^m  \mu  \; \mc{G}( \bs{\mu}_m \big)  \;, 
\enq
seen as a meromorphic function of $\be_1$ admits simple poles at $\be_{1}=\be_{n}+\i\pi$ and one has 
\bem
\Res\Big( \mc{I}_{m}(\be_1,\be_n) \dd \be_1, \be_1=\be_n+\i\pi\Big) \; = \;  m \hspace{-3mm} \Int{  (\Ga^{\prime})^{m-1} }{} \hspace{-3mm} \dd^{m-1}\! \mu  \; \Bigg\{ 
\Res\Big(  2\i\pi \e{Res}\Big( \mc{G} ( \bs{\mu}_{m} ) \dd \mu_m \, , \, \mu_m=\be_1-\i\pi  \Big) \dd \be_1, \be_{1n}=\i\pi \Big)  \\
\hspace{-8mm} \, - \, \Res\Big( 2\i\pi \e{Res}\Big( \mc{G}( \bs{\mu}_{m}  ) \dd \mu_m \, , \, \mu_m=\be_1 \Big) \dd \be_1,  \be_{1n}=\i\pi \Big) 
\, - \, \Res\Big( 2\i\pi \e{Res}\Big( \mc{G} ( \bs{\mu}_{m} ) \dd \mu_m \, , \, \mu_m=\be_1 -2\i\pi \Big) \dd \be_1,   \be_{1n}=\i\pi \Big)  \Bigg\} \;. 
\label{ecriture residues de I beta1 betan}
\end{multline}
There, $\Ga^{\prime}$ is as depicted in Figure~\ref{contours Gamma GammaPrime et C BetaN Prime}.


\end{lemme}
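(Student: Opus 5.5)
The plan is to deform every copy of $\Ga$ into $\Ga^{\prime}$, gather the residues collected along the way, and then read off which of the resulting terms can produce a pole at $\be_{1n}=\i\pi$.

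\textbf{Deformation.} Compared with $\Ga$, the contour $\Ga^{\prime}$ has the points $\be_1$ and $\be_1-2\i\pi$ on the opposite side, and the point $\be_1-\i\pi$ also on the opposite side; this is the content of the small circles in Figure~\ref{contours Gamma GammaPrime et C BetaN Prime}. In the sense of homotopy one therefore has
\[
\Ga \simeq \Ga^{\prime} + \Dp{}\mc{D}_{\be_1-\i\pi,\eps} - \Dp{}\mc{D}_{\be_1,\eps} - \Dp{}\mc{D}_{\be_1-2\i\pi,\eps},
\]
with the orientations fixed by the figure. I would perform this substitution one variable at a time, say $\mu_m$ first, using that $G$ is holomorphic near $\Ga$ and that the only singularities of $\mc{G}$ are the zeroes of $P$. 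Since $\mc{G}$ is symmetric, expanding $\prod_a\big(\int_{\Ga^{\prime}}+\sum_{\text{circles}}\big)$ gives a main term $\int_{(\Ga^{\prime})^m}\mc{G}$, plus terms with exactly one residue taken, which come with the combinatorial factor $m$, plus terms with two or more residues taken.

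\textbf{Which terms produce the pole.} The main term is holomorphic near $\be_1=\be_n+\i\pi$, because $\Ga^{\prime}$ is not pinched there. In a term with one residue taken, say at $\mu_m=\be_1-\i\pi$, the remaining integrand contains the factor $1/(\mu_m-\be_n)$ evaluated at $\mu_m=\be_1-\i\pi$. This equals $1/(\be_{1n}-\i\pi)$, a simple pole in $\be_1$. The residues at $\be_1$ and at $\be_1-2\i\pi$ behave the same way, through the factors $(\mu-\be_n-\i\pi)$ and $(\mu-\be_n+\i\pi)$ respectively. The remaining $\mu_1,\dots,\mu_{m-1}$ integrals are over $\Ga^{\prime}$, so they stay holomorphic. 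Taking $\Res(\cdot\,\dd\be_1,\be_{1n}=\i\pi)$ under the integral sign then reproduces \eqref{ecriture residues de I beta1 betan}, with the signs fixed by the orientations of the circles.

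\textbf{Main obstacle.} The hard step is showing that the terms with two or more residues taken contribute no pole. When two variables $\mu_a,\mu_b$ are both evaluated at points of the set $\{\be_1,\be_1-\i\pi,\be_1-2\i\pi\}$, their differences $\mu_{ab}$ lie in $\{0,\pm\i\pi,\pm2\i\pi\}$, so $\sinh(\mu_{ab})$ vanishes. This zero in $\psi$ has to be played against the additional factors $1/(\be_{1n}-\i\pi)$ generated by each residue. If both residues are taken at the same point, $\psi=0$ identically and the term simply vanishes. For distinct points I would track orders carefully: two residues give at most a double pole in $\be_1$, while $\sinh(\mu_{ab})$ supplies one exact zero that does not depend on $\be_1$. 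This does not cancel the double pole on its own. So I expect to use the antisymmetry of $\psi$ together with the symmetry of the remaining integrand, pairing the term with residue $(\be_1,\be_1-\i\pi)$ assigned to $(\mu_a,\mu_b)$ against its swap, so that the double-pole parts cancel and only simple-pole pieces survive. Failing that, I would fall back on checking that the leading Laurent coefficients cancel once one uses the specific structure of $P$, in which the poles pinching at $\be_{1n}=\i\pi$ come in the three pairs listed.
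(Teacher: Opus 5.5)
Your route is the paper's: deform every copy of $\Gamma$ into $\Gamma'$ picking up the three circles (your orientations $+\partial\mathcal{D}_{\beta_1-\mathrm{i}\pi}$, $-\partial\mathcal{D}_{\beta_1}$, $-\partial\mathcal{D}_{\beta_1-2\mathrm{i}\pi}$ agree with the paper), expand using symmetry, and observe two things. The $(\Gamma')^m$ term is regular at $\beta_{1n}=\mathrm{i}\pi$. The single-circle terms carry a simple pole through the factor $1/(\beta_{1n}-\mathrm{i}\pi)$ in $P'$. The gap is in your ``main obstacle'' paragraph, which misdiagnoses the multi-residue terms, and the fix you propose there would fail.

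Take a term in which two variables $\mu_a,\mu_b$ sit on circles around points of $\{\beta_1,\beta_1-\mathrm{i}\pi,\beta_1-2\mathrm{i}\pi\}$. The circle integrals are evaluated exactly, since the poles of $1/P$ are simple and $G\psi$ is holomorphic there. This evaluates $\psi$ at $\mu_{ab}\in\{0,\pm\mathrm{i}\pi,\pm2\mathrm{i}\pi\}$, where $\sinh(\mu_{ab})$ is the number $0$. This holds for every $\beta_1$ in a punctured neighbourhood of $\beta_n+\mathrm{i}\pi$, where the factors $1/P'$ and the remaining $\Gamma'$ integrals are finite. So the whole term is the zero function there and has no pole at all. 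There is nothing to weigh between a ``double pole'' and ``one zero'': an exact zero independent of $\beta_1$ kills the term outright.

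This is just your own argument for coinciding points. It applies verbatim to distinct points because $\sinh$ vanishes on all of $\mathrm{i}\pi\mathbb{Z}$, which is what the paper invokes: $\psi$ contains a factor that is $\mathrm{i}\pi$-antiperiodic in each variable and vanishes on the diagonal. Hence only the terms with at most one circle survive.

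Your fallback pairing would not have worked anyway. $\psi$ is \emph{symmetric}, not antisymmetric, since each pair contributes the product of two odd functions $\sinh(\mu_{ab})\sinh[\tfrac{\pi}{\xi}\mu_{ab}]$. Since $\mathcal{G}$ is symmetric by hypothesis, a term and its swap are equal and add rather than cancel. Replace your last paragraph by the vanishing argument above and the proposal proves the lemma.
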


\Proof

By using the symmetry of the integrand, it is easy to see that 
\bem
\mc{I}_{m}(\be_1,\be_n) = \sul{ \substack{ k+p_1+p_2+p_3  \\ = m} }{} \f{ m! }{ k! p_1! p_2! p_3! } \Int{ (\Ga^{\prime})^{k} }{} \hspace{-2mm} \dd^k \nu  
\hspace{-3mm} \Int{ (- \Dp{} \mc{D}_{\be_1-2\i\pi, \eps})^{p_1}  }{} \hspace{-5mm} \dd^{p_1} \mu^{(1)}
\hspace{-3mm} \Int{  (\Dp{} \mc{D}_{\be_1-\i\pi, \eps})^{p_2}  }{} \hspace{-4mm} \dd^{p_2} \mu^{(2)}
\hspace{-3mm} \Int{ (- \Dp{} \mc{D}_{\be_1, \eps})^{p_3}  }{} \hspace{-4mm} \dd^{p_3} \mu^{(3)}  \\
\times \f{ \big(\psi G  \big)\big(\bs{\nu}_k, \bs{\mu}^{(1)}_{p_1} , \bs{\mu}^{(2)}_{p_2} , \bs{\mu}^{(3)}_{p_3} \big)  }
{ \pl{a=1}{k} P(\nu_a)  \cdot \pl{s=1}{3} \pl{a=1}{p_s} P(\mu_a^{(s)})  } \;.
\end{multline}
Further, introduce for short
\beq
\wt{G}(\bs{\nu}_{m-k}, \bs{\mu}_{k}) \; = \; G(\bs{\nu}_{m-k}, \bs{\mu}_{k}) \psi(\bs{\nu}_{m-k}) 
\pl{a=1}{m-k} \pl{b=1}{k} \Big\{  \sinh(\nu_a-\mu_b )  \cdot \sinh\big[ \tfrac{ \pi }{ \xi } (\nu_a-\mu_{b})  \big]   \Big\}
\pl{a=1}{m-k} \f{ 1 }{ P(\nu_a) } \;. 
\enq
Then, the contour integrals can be evaluated explicitly, leading to 
\bem
\mc{I}_{m}(\be_1,\be_n) \; = \;  \sul{ \substack{ k+p_1+p_2+p_3  \\ = m} }{} \f{ m! }{ k! p_1! p_2! p_3! } \Int{  (\Ga^{\prime})^{k} }{} \hspace{-2mm} \dd^k \nu  
\; \wt{G}\Big(\bs{\nu}_k,  \big( (\be_1-2\i\pi)\ov{\bs{e}}_{p_1} , (\be_1-\i\pi)\ov{\bs{e}}_{p_2}, \be_1\ov{\bs{e}}_{p_3} \big) \Big) \\
\times \psi\Big((\be_1-2\i\pi)\ov{\bs{e}}_{p_1} , (\be_1-\i\pi)\ov{\bs{e}}_{p_2}, \be_1\ov{\bs{e}}_{p_3}  \Big)
\cdot \Big\{ -2\i\pi \e{Res}\Big( \tfrac{ \dd \mu }{ P(\mu) }\, , \, \mu=\be_1-2\i\pi  \Big) \Big\}^{p_1} \\
\times \Big\{ 2\i\pi \e{Res}\Big( \tfrac{ \dd \mu }{ P(\mu) }\, , \, \mu=\be_1-\i\pi  \Big) \Big\}^{p_2}
\cdot \Big\{ -2\i\pi \e{Res}\Big( \tfrac{ \dd \mu }{ P(\mu) }\, , \, \mu=\be_1  \Big) \Big\}^{p_3} \;. 
\end{multline}
Also, we used the notation $\ov{\bs{e}}_n=(1,\dots, 1)\in \R^n$. Since $\psi$ contains a term that is $\i\pi$ anti-periodic in each of its variables and vanishes 
when two of these coincide, the only non-vanishing contributions to the above sum are obtained, either, by taking $p_a=0$  for any $a$ or 
$p_a=1$ and  $p_b=0$ if $b\not=a$. However, only the latter three give rise to poles at $\be_{1n}=\i\pi$. This leads to \eqref{ecriture residues de I beta1 betan}. \qed  \\
We now apply this lemma to the calculation of the residue of interest. Recalling the definition \eqref{definition hat C}, one has 
\beq
-\i \Res\Big( \bs{\Psi}_{1,\dots,n}\big[ p\big]\big( \bs{\be}_n \big) \dd \be_1, \be_{1n}=\i\pi\Big) 
\; = \; -\i m \Int{ (\msc{C}_{\bs{\be}_n}^{\prime})^m}{} \hspace{-1mm} \dd^{m-1} u  \sul{a=1}{3} \bs{\Phi}_{1,\dots,n}^{(a)}\big( \bs{\be}_{n}^{\prime}; \bs{u}_{m-1} \big)
\enq
where 
\beqa
\bs{\Phi}_{1,\dots,n}^{(1)}\big( \bs{\be}_{n}^{\prime}; \bs{u}_{m-1} \big) & = & 
\Res\Big(  2\i\pi \e{Res}\Big( \wh{\Cx}_{1,\dots,n}^{(\ga)}\big( \bs{\be}_n \, ; \, \bs{u}_m\big) \cdot  \dd u_m \, , \, u_m=\be_1-\i\pi  \Big) \cdot  \dd \be_1, \be_{1n}=\i\pi \Big)  \vspace{2mm} \; ,  \\
\bs{\Phi}_{1,\dots,n}^{(2)}\big( \bs{\be}_{n}^{\prime}; \bs{u}_{m-1} \big) & = & 
\, - \, \Res\Big( 2\i\pi \e{Res}\Big(\wh{\Cx}_{1,\dots,n}^{(\ga)}\big( \bs{\be}_n \, ; \, \bs{u}_m\big) \cdot  \dd u_m \, , \, u_m=\be_1 \Big) \cdot  \dd \be_1,  \be_{1n}=\i\pi \Big) \vspace{2mm} \; , \\
\bs{\Phi}_{1,\dots,n}^{(3)}\big( \bs{\be}_{n}^{\prime}; \bs{u}_{m-1} \big) & = & 
\, - \, \Res\Big( 2\i\pi \e{Res}\Big(\wh{\Cx}_{1,\dots,n}^{(\ga)}\big( \bs{\be}_n \, ; \, \bs{u}_m\big) \cdot \dd u_m \, , \, u_m=\be_1 -2\i\pi \Big) \cdot  \dd \be_1,   \be_{1n}=\i\pi \Big) \;. 
\eeqa

\subsubsection{Calculation of $\bs{\Phi}_{1,\dots,n}^{(1)}$}

As follows from Lemma \ref{Lemme engendrement des poles par pincement}, the contribution to $\bs{\Phi}_{1,\dots,n}^{(1)}$ will stem from the 
pole at $u_{m}=\be_1-\i\pi$ which is generated by the $\wt{b}$ and $\wt{c}^{\, (\pm \ga)}$ coefficients present in the building blocks for the covector 
$\wt{\Cx}_{1,\dots,n}^{(\ga)}\big( \bs{\be}_n \, ; \, \bs{u}_m\big)$. \\
Since 
\beq
\e{Res}\Big(\wt{b}(\th) \dd \th, \th=\i\pi \Big) \, = \, -\f{\xi}{\pi} \sinh\Big( \i \f{ \pi^2 }{ \xi }  \Big) \qquad \e{and} \qquad
\e{Res}\Big(\wt{c}^{\, (\pm \ga)}(\th) \dd \th, \th=\i\pi \Big) \, = \,-\f{\xi}{\pi} \sinh\Big( \i \f{ \pi^2 }{ \xi }  \Big) \ex{\pm \i\pi \ga } \;,
\enq
then 
\beq
\e{Res}\Big(\wt{\op{S}}_{01}^{\, (\ga)}(\be_1-\th) \dd \th, \th=\be_1 + \i\pi \Big) \; = \; \f{\xi}{\pi} \sinh\Big( \i \f{ \pi^2 }{ \xi }  \Big) \;
\ex{ - \i \ga \f{\pi}{2} \sg_0^z} \, \op{C}_{10}^{ \op{t} } \cdot \op{C}_{10}\, \ex{ \i \ga \f{\pi}{2} \sg_0^z}
\label{ecriture residu de S tilde}
\enq
where 
\beq
\op{C}  \, = \, \Big\{ \op{v}^{-} \otimes \op{v}^{+} \,   +  \,  \op{v}^{+} \otimes \op{v}^{-}    \Big\}^{ \op{t} } \;. 
\enq
One has that 
\bem
\e{Res}\Big( \wt{\op{C}}^{\,(\ga)}_{1,\dots,n}\big( \bs{\be}_n \, ; \, u\big) \cdot  \dd u  \, , \, u =\be_1-\i\pi  \Big)_{\mid \be_{1n} = \i\pi }  \\
\; = \; \f{\xi}{\pi} \sinh\Big( \i \f{ \pi^2 }{ \xi }  \Big) \; 
\big( \op{v}_0^- \big)^{\op{t}} \ex{ - \i \ga \f{\pi}{2} \sg_0^z} \, \op{C}_{10}^{ \op{t} } \cdot \op{C}_{10} \cdot \ex{ \i \ga \f{\pi}{2} \sg_0^z} \cdot  \wt{\op{S}}_{20}^{\,(\ga)}(\be_{2n}) \cdots  \wt{\op{S}}_{n-10}^{\,(\ga)}(\be_{n-1 n}) \op{P}_{0n} \op{v}_0^+  \\
\; = \;   \f{\xi}{\pi} \sinh\Big( \i \f{ \pi^2 }{ \xi }  \Big) \;  \op{K}_{1n} \cdot \op{C}_{1n}  \cdot \ex{ \i \ga \f{\pi}{2} (\sg_n^z+1)} \cdot  \wt{\op{S}}_{2n}^{\,(\ga)}(\be_{2n}) \cdots  \wt{\op{S}}_{n-1n}^{\,(\ga)}(\be_{n-1 n})
\end{multline}
where 
\beq
\op{K}_{1n} \, = \,  \big( \op{v}_0^- \big)^{\op{t}}  \op{P}_{0n} \op{v}_0^+ \op{C}_{1n}^{ \op{t} } \, = \,  \big( \op{v}_0^- \big)^{\op{t}}  
\Big\{ \op{v}^{-}_0 \op{v}^{+}_1\op{v}^{+}_n \,   +  \,  \op{v}^{+}_0 \op{v}^{-}_1 \op{v}^{+}_n   \Big\} \; = \; \op{v}_1^+\op{v}_n^+ \; .
\enq
Further, one has that 
\bem
\wt{\op{C}}^{\,(\ga)}_{1,\dots,n}\big( \bs{\be}_n \, ; \, u \big) \cdot \op{K}_{1n} \, = \, 
\big( \op{v}_0^- \big)^{\op{t}} \cdot   \wt{\op{S}}_{1 0}^{\,(\ga)}(\be_{1}-u) \cdots  \wt{\op{S}}_{n 0}^{\,(\ga)}(\be_{n}-u)   \op{v}_0^+  \op{v}_1^+\op{v}_n^+ \\
\, = \, \op{v}_n^+ \big( \op{v}_0^- \big)^{\op{t}} \cdot  \wt{\op{S}}_{1 0}^{\,(\ga)}(\be_{1}-u) \op{v}_1^+  \wt{\op{S}}_{2 0}^{\,(\ga)}(\be_{2}-u) \cdots  \wt{\op{S}}_{n-1 0}^{\,(\ga)}(\be_{n-1}-u)  \op{v}_0^+  \;. 
\end{multline}
It is direct to check that $ \big( \op{v}_0^- \big)^{\op{t}} \cdot   \wt{\op{S}}_{1 0}^{\,(\ga)}(\be_{1}-u) \op{v}_1^+ =   \wt{b}(\be_1 - u) \op{v}_1^+ \big( \op{v}_0^- \big)^{\op{t}}$, what yields 
\beq
\wt{\op{C}}^{\,(\ga)}_{1,\dots,n}\big( \bs{\be}_n \, ; \, u\big) \cdot  \op{K}_{1n} \, = \,  \wt{b}(\be_1 - u) \,  \op{K}_{1n} \cdot  \wt{\op{C}}^{\,(\ga)}_{2,\dots, n-1}\big( \bs{\be}_{n-1}^{\prime} \, ; \, u\big) \;. 
\enq
All-in-all, this yields that 
\bem
\e{Res}\Big( \wt{\Cx}_{1,\dots,n}^{\,(\ga)}\big( \bs{\be}_n \, ; \, \bs{u}_m\big) \cdot  \dd u_m  \, , \, u_m =\be_1-\i\pi  \Big)_{\mid \be_{1n} = \i\pi }  \\ 
\; = \;  \f{\xi}{\pi} \sinh\Big( \i \f{ \pi^2 }{ \xi }  \Big) \cdot  \pl{k=1}{m-1}\wt{b}(\be_1-u_k) \cdot 
\Big( \op{C}_{1n}  \cdot \ex{ \i \ga \f{\pi}{2} (\sg_n^z+1)}  \Big)\otimes \wt{\Cx}_{2,\dots,n-1}^{\,(\ga)}\big( \bs{\be}_{n-1}^{\prime} \, ; \, \bs{u}_{m-1}\big)
\cdot  \wt{\op{S}}_{2n}^{\,(\ga)}(\be_{2n}) \cdots  \wt{\op{S}}_{n-1n}^{\,(\ga)}(\be_{n-1 n}) \;. 
\end{multline}
Thence, since
\bem
g(\bs{\be}_n ; \bs{u}_m) \, = \, g(\bs{\be}_{n-1}^{\prime} ; \bs{u}_{m-1}) \op{F}(\be_{1n}) \pl{b=2}{n-1} \Big\{ \op{F}(\be_{1b}) \op{F}(\be_{bn}) \Big\}
\pl{k=1}{m-1}\pl{s = 1,n}{} \big( a \phi\big)(\be_s-u_k) \\
\times \pl{s=1}{n} \big( a \phi\big)(\be_s-u_m) \cdot \pl{k=1}{m-1} \tau(u_{km})\;,
\label{ecriture decomposition fct g en poles en um et sans beta 1 et n}
\end{multline}
one has that 
\bem
\bs{\Phi}_{1,\dots,n}^{(1)}\big( \bs{\be}_{n}^{\prime}; \bs{u}_{m-1} \big) \; = \;  g(\bs{\be}_{n-1}^{\prime} ; \bs{u}_{m-1}) \cdot 
\Big( \op{C}_{1n}  \cdot \ex{ \i \ga \f{\pi}{2}  \sg_n^z }  \Big) \otimes \wt{\Cx}_{2,\dots,n-1}^{\,(\ga)}\big( \bs{\be}_{n-1}^{\prime} \, ; \, \bs{u}_{m-1}\big) 
\cdot  \wt{\op{S}}_{2n}^{\,(\ga)}(\be_{2n}) \cdots  \wt{\op{S}}_{n-1n}^{\,(\ga)}(\be_{n-1 n}) \\
\times \op{O}_{1,\dots,n}^{(\ga)}(\be_n+\i\pi,\bs{\be}_n^{\prime} ) \cdot  p\Big( \big( \be_n + \i\pi, \bs{\be}_n^{\prime} \big) ;  \big(\bs{u}_{m-1},\be_1-\i\pi \big) \Big)_{\mid \be_{1n} = \i\pi } \cdot \mf{R}^{(1)}
\end{multline}
in which  $\mf{R}^{(1)} \, = \, 2\i\pi \e{Res}\Big(  H\big(\bs{\be}_n ;   (\bs{u}_{m-1}, \be_1-\i\pi) \big) \cdot  \dd \be_1  \, , \,  \be_1 = \be_n + \i\pi  \Big)_{\mid \be_{1n} = \i\pi }$ where 
\bem
H\big(\bs{\be}_n ;   \bs{u}_{m}) \, = \, \ex{\i \ga \f{\pi}{2} }  \f{\xi}{\pi} \sinh\Big( \i \f{ \pi^2 }{ \xi }  \Big) \, \op{F}(\be_{1n}) \cdot  \pl{k=1}{m-1}\wt{b}(\be_1-u_k) \cdot \pl{b=2}{n-1} \Big\{ \op{F}(\be_{1b}) \op{F}(\be_{bn}) \Big\}
\cdot \pl{k=1}{m-1}\pl{s = 1,n}{} \big( a \phi\big)(\be_s-u_k) \\
\times \pl{s=1}{n} \big( a \phi\big)(\be_s-u_m) \cdot \pl{k=1}{m-1} \tau(u_{km}) \;. 
\end{multline}
This yields
\bem
\mf{R}^{(1)} \, = \, 2\i\pi \ex{\i \f{\pi \ga }{2} } \big( a \phi \op{F}\big)(\i \pi) \f{\xi}{\pi} \sinh\Big( \i \f{ \pi^2 }{ \xi }  \Big) \cdot \e{Res}\Big(  \big( a \phi\big)(\be_{n1}+\i\pi)   \cdot  \dd \be_1  \, , \,  \be_1 = \be_n + \i\pi  \Big)
\cdot \pl{b=2}{n-1} \Big\{ \op{F}(\be_{nb}+\i\pi) \op{F}(\be_{bn})   (a \phi)(\be_{bn})  \Big\} \\
\times \pl{k=1}{m-1}\Big\{ \wt{b}(\be_n+\i\pi-u_k) \cdot \tau(u_{k}-\be_{n}) \cdot \big( a \phi\big)(\be_n-u_k) \cdot \big( a \phi\big)(\be_n+\i\pi-u_k) \Big\} \;. 
\end{multline}
Now, the definition of $\phi$ implies
\beq
\op{F}(\i\pi-\la) \op{F}(\la )     \phi( \la )  \; = \; 1 
\label{ecriture simplification produit Fs et phi}
\enq
while direct calculations lead to 
\beq
\big( a \phi\big)(\la)\, \big( a \phi\big)(\la+\i\pi) \; = \; \Bigg\{  \f{ 1 }{ \op{F}(\i\tfrac{\pi}{2}) \varpi\big( \i \tfrac{\xi-\pi}{2} \big) } \Bigg\}^{4}
\f{  - 1 }{ \sinh(\la) \sinh\big[\tfrac{\pi}{\xi}(\la+\i\pi) \big]}  \;. 
\enq
In its turn, this entails 
\beq
\big( a \phi\big)(\la)\, \big( a \phi\big)(\la+\i\pi) \, \wt{b}(\la+\i\pi)\,  \tau(-\la) \; = \; 1 \;. 
\label{ecriture reduction produit aphi aphi et b tilde i pi shift et tau}
\enq
Thus, all-in-all, one gets that 
\beq
\mf{R}^{(1)} \, = \,  \pl{k=2}{n-1}a(\be_{kn} ) \cdot \f{2 \ex{\i \ga \f{\pi}{2} } \xi \sinh\Big( \i \tfrac{ \pi^2 }{ \xi }  \Big)  }{ \Big\{ \op{F}(\i\tfrac{\pi}{2}) \varpi\big( \i \tfrac{\xi-\pi}{2} \big) \Big\}^{4} } 
\cdot \varpi\left( \ba{c} \i \tf{\xi}{2}  \\ 2\i\pi -\i \tf{\xi}{2}  \ea \right) \; .
\enq
Upon using that 
\beq
\f{ \i }{ \sinh\big( \i \tfrac{ \pi^2 }{ \xi }  \big) } \, = \, \varpi\left( \ba{c} \i \tf{\xi}{2}  \\ 2\i\pi -\i \tf{\xi}{2}  \ea \right) \; ,
\enq
the above reduces to
\beq
\mf{R}^{(1)} \, = \,   \varkappa \cdot \pl{k=2}{n-1}a(\be_{kn} ) \;.  
\enq
Here, I remind that $\varkappa$ has been introduced in \eqref{definition cste varkappa}. In its turn, owing to \eqref{ecriture equation echange operateur spin twist et matrice S}
and after absorbing the $a$ prefactors into the definition of $\op{S}_{kn}(\be_{kn})$, this entails that 
\bem
\bs{\Phi}_{1,\dots,n}^{(1)}\big( \bs{\be}_n^{\prime}; \bs{u}_{m-1} \big) \; = \;  \varkappa \cdot  g(\bs{\be}_{n-1}^{\prime} ; \bs{u}_{m-1}) \cdot 
\Big( \op{C}_{1n}  \cdot \ex{ \i \ga \f{\pi}{2}  \sg_n^z }  \Big) \otimes \wt{\Cx}_{2,\dots,n-1}^{\,(\ga)}\big( \bs{\be}_{n-1}^{\prime} \, ; \, \bs{u}_{m-1}\big) \cdot  \op{O}_{1,\dots,n}^{(\ga)}(\be_n+\i\pi,\bs{\be}_n^{\prime} )  \\
\times  \op{S}_{2n}(\be_{2n}) \cdots  \op{S}_{n-1n}(\be_{n-1 n}) \cdot p\Big( \big( \be_n + \i\pi, \bs{\be}_n^{\prime} \big) ;  \big(\bs{u}_{m-1},\be_n \big) \Big) \;. 
\end{multline}
At this stage, one observes that 
\beq
\op{O}_{1,\dots,n}^{(\ga)}(\be_n+\i\pi,\bs{\be}_n^{\prime} )  \; = \;  \ex{ \f{\ga}{2} \be_n (\sg_n^z+\sg_1^z) + \i \ga \f{\pi}{2} \sg_1^z} \op{O}_{2,\dots,n-1}^{(\ga)}(\bs{\be}_{n-1}^{\prime} )
\label{ecriture reduction sous residus operateur O du twist}
\enq
and that 
\beq
\op{C}_{1n}  \cdot \ex{\a (\sg_n^z+\sg_1^z) } \; = \; \op{C}_{1n}  \;. 
\label{identite vecteur vp1 pour C1n}
\enq
The use of the reduction equation $\mathrm{e)}$ satisfied by $p$ then leads to 
\bem
-\i m \Int{ (\msc{C}_{\bs{\be}_n}^{\prime})^{m-1} }{} \hspace{-1mm} \dd^{m-1} u \, \bs{\Phi}_{1,\dots,n}^{(1)}\big( \bs{\be}_n^{\prime}; \bs{u}_{m-1} \big) \\
\; = \;\Int{ (\msc{C}_{\bs{\be}_n}^{\prime})^{m-1} }{} \hspace{-1mm} \dd^{m-1} u \, \op{C}_{1n} \otimes \wt{\Cx}_{2,\dots,n-1}^{\,(\ga)}\big( \bs{\be}_{n-1}^{\prime} \, ; \, \bs{u}_{m-1}\big)
\cdot  \op{O}_{2,\dots,n-1}^{(\ga)}( \bs{\be}_{n-1}^{\prime} ) p\big(  \bs{\be}^{\prime}_{n-1} ;  \bs{u}_{m-1} \big)  \cdot \op{S}_{2n}(\be_{2n}) \cdots  \op{S}_{n-1n}(\be_{n-1 n}) \\
\, -\, \i \, m \, \de_{n,2m} \, \vp^{(+)}\big( \bs{\be}_{n-1}^{\prime} \big)  \varkappa\Int{ (\msc{C}_{\bs{\be}_n}^{\prime})^{m-1} }{} \hspace{-1mm} \dd^{m-1} u \,
\op{C}_{1n} \otimes \wt{\Cx}_{2,\dots,n-1}^{\,(\ga)}\big( \bs{\be}_{n-1}^{\prime} \, ; \, \bs{u}_{m-1}\big)
\cdot  \op{O}_{2,\dots,n-1}^{(\ga)}( \bs{\be}_{n-1}^{\prime} ) \op{S}_{2n}(\be_{2n}) \cdots  \op{S}_{n-1n}(\be_{n-1 n})  \\
\; = \; \op{C}_{1n} \otimes
\bs{\Psi}_{2,\dots, n-1}[p]\big( \bs{\be}_{n-1}^{\prime} \big)  \cdot \op{S}_{2n}(\be_{2n}) \cdots  \op{S}_{n-1n}(\be_{n-1 n}) \\
\, -\, \i \, m \, \de_{n,2m} \,  \varkappa \cdot  \op{C}_{1n} \otimes \bs{\Psi}_{2,\dots, n-1}[1]\big( \bs{\be}_{n-1}^{\prime} \big) \vp^{(+)}\big( \bs{\be}_{n-1}^{\prime} \big)
\cdot \op{S}_{2n}(\be_{2n}) \cdots  \op{S}_{n-1n}(\be_{n-1 n}) \;. 
\end{multline}
Note that, in order to get to the last equality, we have used that the integrands do not depend on $\be_{n}$ and $\be_1$, so that one may easily deform the contour $\msc{C}_{\bs{\be}_n}^{\prime}$
to $\msc{C}_{\bs{\be}_{n-1}^{\prime}}$, what then produces the last equality. Note that the last line of this equation vanishes due to Lemma \ref{Lemme annulation K transformee}.

\subsubsection{Calculation of $\bs{\Phi}_{1,\dots,n}^{(2)}$}

In the case of this contribution, the pole at $u_m=\be_1$ issues from the $a\phi$ factor. Upon denoting
\beq
C^{(\a)}_{a\phi} \; = \;\e{Res}\Big( a(\th) \phi(\th) \dd \th, \th=\a \Big)
\label{definition residu de a phi}
\enq
and using the decomposition \eqref{ecriture decomposition fct g en poles en um et sans beta 1 et n}, one infers that 
\bem
\e{Res}\Big( g(\bs{\be}_n,\bs{u}_m) \dd u_m, u_m = \be_1 \Big) \, = \, - C^{(0)}_{a\phi}  g(\bs{\be}_{n-1}^{\prime} ; \bs{u}_{m-1}) \op{F}(\be_{1n}) 
\pl{b=2}{n-1} \Big\{ \op{F}(\be_{1b}) \op{F}(\be_{bn}) \Big\}
\\
\times \pl{s=2}{n} \big( a \phi\big)(\be_{s1}) \cdot \pl{k=1}{m-1} \Big\{ \tau(u_{k}-\be_1) \, \pl{s = 1,n}{} \big( a \phi\big)(\be_s-u_k)  \Big\}\;. 
\end{multline}
Moreover, one has 
\bem
\Om_{n} \wt{\op{C}}_{1,\dots,n}^{\,(\ga)}(\bs{\be}_n ;  \be_1) \; = \; \Big(\op{v}^+_1\dots \op{v}^{+}_n\Big)^{\op{t}} 
\big( \op{v}^-_0 \big)^{\op{t}} \,  \op{P}_{01}\wt{\op{S}}_{20}^{\,(\ga)}(\be_{21})\cdots \wt{\op{S}}_{n0}^{\,(\ga)}(\be_{n1}) \op{v}^+_0  \\
\; = \;  \Big(\op{v}^+_0 \, \op{v}^-_1 \, \op{v}^+_2\dots \op{v}^{+}_n\Big)^{\op{t}} \, \wt{\op{S}}_{20}^{\,(\ga)}(\be_{21})\cdots \wt{\op{S}}_{n0}^{\,(\ga)}(\be_{n1}) \op{v}^+_0\; = \; 
\Big(  \op{v}^-_1 \, \op{v}^+_2\dots \op{v}^{+}_n\Big)^{\op{t}} \;. 
\end{multline}
Therefore, 
\beq
\wt{\Cx}_{1,\dots,n}^{\,(\ga)}\Big( \bs{\be}_{n} \, ; \, \big( \bs{u}_{m-1}, \be_1 \big) \Big) \; = \; 
\Big(  \op{v}^-_1 \, \op{v}^+_2\dots \op{v}^{+}_n\Big)^{\op{t}} \cdot \wt{\op{C}}_{1,\dots,n}^{\,(\ga)}(\bs{\be}_n,u_1) \cdots \wt{\op{C}}_{1,\dots,n}^{\,(\ga)}(\bs{\be}_n ; u_{m-1}) \;. 
\enq
Since the pole at $\be_{1}=\be_n + \i\pi$ issues from the pole at $\th=-\i\pi$ of $a\phi$, one gets that  
\bem
\bs{\Phi}_{1,\dots,n}^{(2)}\big( \bs{\be}_n^{\prime} ; \bs{u}_{m-1} \big) \; = \; 2\i\pi \, \Big( C^{(0)}_{a\phi}  \Big)^2 \, \op{F}(\i\pi) \, 
g(\bs{\be}_{n-1}^{\prime} ; \bs{u}_{m-1}) \cdot \Big(  \op{v}^-_1 \, \op{v}^+_2\dots \op{v}^{+}_n\Big)^{\op{t}} \cdot \pl{p=1}{m-1}\wt{\op{C}}_{1,\dots,n}^{\,(\ga)}(\bs{\be}_n ; u_p)   
\cdot  \op{O}_{1,\dots,n}^{(\ga)}(\bs{\be}_n) \\
\times \pl{s=2}{n-1} \Big\{ \op{F}(\be_{1s}) \op{F}(\be_{sn}) \big( a \phi\big)(\be_{s1})  \Big\} \cdot 
\pl{k=1}{m-1} \Big\{ \tau(u_{k}-\be_1) \, \big( a \phi\big)(\be_1-u_k) \, \big( a \phi\big)(\be_1 - u_k - \i \pi )  \Big\}
\cdot p\Big( \bs{\be}_{n}, \big(\bs{u}_{m-1} ; \be_1\big) \Big)_{\mid \be_{1n}=\i\pi} \;. 
\nonumber
\end{multline}
Note that above, we have used that $C^{(-\i\pi)}_{a\phi} \, = \, - C^{(0)}_{a\phi}$. Moreover, since $\op{F}(\i\pi)=1$, 
\beq
\op{F}(\be_{1s}) \op{F}(\be_{sn}) \big( a \phi\big)(\be_{s1})_{\mid \be_{1n}=\i\pi} \, = \, \op{F}(\be_{1s}) \op{F}(\be_{s1}+\i\pi) \big( a \phi\big)(\be_{s1}) \; ,
\label{identite reduction produits de F et phi}
\enq
this product reduces to $1$ by virtue of \eqref{ecriture simplification produit Fs et phi}. Further, owing to \eqref{ecriture shift par 2 i pi de a phi}, one has that
\beq
\tau(-\la ) \, \big( a \phi\big)(\la) \, \big( a \phi\big)(\la - \i \pi ) \; = \; 
\tau(-\la ) \, \big( a \phi\big)(\la) \, \big( a \phi\big)(\la + \i \pi ) \wt{b}\big( \la +  \i\pi \big) \, = \, 1
\label{ecriture reduction produit tau et al}
\enq
as follows from \eqref{ecriture reduction produit aphi aphi et b tilde i pi shift et tau}. Finally, a direct calculation yields that 
\beq
2\i\pi \, \Big( C^{(0)}_{a\phi}  \Big)^2 \; = \; - \varkappa  \ex{-\i\f{\pi}{2} \ga }\;, 
\label{ecriture reduction cste C0 a phi vers varkappa}
\enq
where $\varkappa$ has been introduced in \eqref{definition cste varkappa}. This yields 
\bem
\bs{\Phi}_{1,\dots,n}^{(2)}\big( \bs{\be}_n^{\prime} ; \bs{u}_{m-1} \big) \; = \;  - \varkappa \ex{-\i\f{\pi}{2} \ga } \, g(\bs{\be}_{n-1}^{\prime} ; \bs{u}_{m-1}) 
\cdot \Big(  \op{v}^-_1 \, \op{v}^+_2\dots \op{v}^{+}_n\Big)^{\op{t}}  \\
\times \pl{p=1}{m-1}\wt{\op{C}}_{1,\dots,n}^{\,(\ga)}(\bs{\be}_n,u_p) \cdot  \op{O}_{1,\dots,n}^{(\ga)}(\bs{\be}_n) 
\cdot  p\Big( \bs{\be}_{n} ;  \big(\bs{u}_{m-1},\be_1\big) \Big)_{\mid \be_{1n}=\i\pi} \;. 
\end{multline}

\subsubsection{Calculation of $\bs{\Phi}_{1,\dots,n}^{(3)}$}

In that case, the pole at $u_m=\be_1-2\i\pi$ stems from the factor $a\phi$ present in $g$. Following the decomposition  
\eqref{ecriture decomposition fct g en poles en um et sans beta 1 et n}, one infers that 
\bem
\e{Res}\Big( g(\bs{\be}_n ; \bs{u}_m) \dd u_m, u_m = \be_1 -2\i\pi \Big) \, = \, - C^{(2\i\pi)}_{a\phi} \, g(\bs{\be}_{n-1}^{\prime} ; \bs{u}_{m-1}) \, \op{F}(\be_{1n})  \, 
\pl{b=2}{n-1} \Big\{ \op{F}(\be_{1b}) \op{F}(\be_{bn}) \Big\}
\\
\times \pl{s=2}{n} \big( a \phi\big)(\be_{s1}+2\i\pi) \cdot \pl{k=1}{m-1} \Big\{ \tau(u_{k}-\be_1+2\i\pi) \, \pl{s = 1,n}{} \big( a \phi\big)(\be_s-u_k)  \Big\}\;. 
\end{multline}
The remaining factor do not exhibit poles at $\be_1 = \be_n + \i \pi$ and the latter issues from the covector part of the integrand. 
Owing to \eqref{ecriture residu de S tilde}, one gets 
\bem
\e{Res}\Big( \Om_{n} \wt{\op{C}}_{1,\dots,n}^{\, (\ga)}(\bs{\be}_n ; \be_1-2\i\pi) \dd \be_1, \be_{1n}=\i\pi \Big) \\
\, = \, \f{\xi}{\pi} \sinh\Big( \i \f{ \pi^2 }{ \xi }\Big) \cdot 
\Om_{n} \cdot \big( \op{v}^-_0 \big)^{\op{t}} \, \wt{\op{S}}_{01}^{\, (\ga)}(2\i\pi) \wt{\op{S}}_{20}^{\, (\ga)}(\be_{21}+2\i\pi)\cdots \wt{\op{S}}_{n-10}^{\, (\ga)}(\be_{n-11}+2\i\pi)
\cdot \ex{-\i \f{\pi}{2} \ga \sg_0^z } \big( \op{v}^+_0 \op{v}_n^{-} \, + \, \op{v}^+_n \op{v}_0^{-} \big) \cdot \big( \op{v}_n^{-} \big)^{\op{t}}  \ex{\i \f{\pi}{2} \ga } \\
\, = \, \ex{\i   \pi  \ga }  \f{\xi}{\pi} \sinh\Big( \i \f{ \pi^2 }{ \xi }\Big) \big( \op{v}_1^{+}\cdots \op{v}_{n-1}^{+} \op{v}_n^{-} \big)^{\op{t}}
\underbrace{ \big( \op{v}^-_0 \big)^{\op{t}} \, \wt{\op{S}}_{01}^{\, (\ga)}(2\i\pi) \wt{\op{S}}_{20}^{\, (\ga)}(\be_{21}+2\i\pi)\cdots \wt{\op{S}}_{n-10}^{\, (\ga)}(\be_{n-11}+2\i\pi)  \op{v}_0^{-} }_{ = \op{D}_{1,\dots, n-1}^{\, (\ga)}\big( \bs{\be}_{n-1};\be_1 - 2 \i \pi \big) } \\
\, = \, \ex{\i   \pi  \ga }  \f{\xi}{\pi} \sinh\Big( \i \f{ \pi^2 }{ \xi }\Big) \pl{k=1}{n-1} \wt{b}\big( \be_{k1}+2\i\pi)
\cdot  \big( \op{v}_1^{+}\cdots \op{v}_{n-1}^{+} \op{v}_n^{-} \big)^{\op{t}} \;. 
\end{multline}
All together one arrives to  
\bem
\bs{\Phi}_{1,\dots,n}^{(3)}\big( \bs{\be}_n^{\prime} ; \bs{u}_{m-1} \big) \; = \; 2\i\pi \, \ex{\i \pi  \ga }   \,   \f{\xi}{\pi} \sinh\Big( \i \f{ \pi^2 }{ \xi }\Big) \cdot \wt{b}(2\i\pi) \, C^{(2\i\pi)}_{a\phi}    \cdot \big( a \phi \op{F}\big)(\i\pi) \, 
\cdot g(\bs{\be}_{n-1}^{\prime} ; \bs{u}_{m-1})  \\ 
\times  \Big(  \op{v}^+_1 \dots  \op{v}^+_{n-1} \,  \op{v}^{-}_n\Big)^{\op{t}} \cdot \pl{p=1}{m-1}\wt{\op{C}}_{1,\dots,n}^{\, (\ga)}(\bs{\be}_n ; u_p) \cdot  \op{O}_{1,\dots,n}^{(\ga)}(\bs{\be}_n)
\cdot \pl{s=2}{n-1} \Big\{ \op{F}(\be_{1s}) \op{F}(\be_{sn}) \, \big( a \phi\big)(\be_{s1} + 2\i\pi)   \,   \wt{b}(\be_{s1}+2\i\pi)  \Big\} \\
\times  \pl{k=1}{m-1} \Big\{ \tau(u_{k}-\be_1+2\i\pi) \, \big( a \phi\big)(\be_1-u_k) \, \big( a \phi\big)(\be_1 - u_k - \i \pi )  \Big\}
\cdot p\Big( \bs{\be}_{n} ;  \big(\bs{u}_{m-1},\be_1-2\i\pi\big) \Big)_{\mid \be_{1n}=\i\pi} \;. 
\end{multline}
By using \eqref{ecriture shift par 2 i pi de a phi} but also the identities
\beq
C^{(0)}_{a\phi} \, = \, \wt{b}(2\i\pi) \, C^{(2\i\pi)}_{a\phi}    \qquad \e{and} \qquad 
\f{\xi}{\pi} \sinh\Big( \i \f{ \pi^2 }{ \xi }\Big) \cdot \big( a \phi \big)(\i\pi) \, = \, C^{(0)}_{a\phi} \;, 
\label{ecriture forme alternative C0 a phi}
\enq
one arrives to 
\bem
\bs{\Phi}_{1,\dots,n}^{(3)}\big( \bs{\be}_n^{\prime} ; \bs{u}_{m-1} \big) \; = \; 2\i\pi \, \ex{\i \pi  \ga }   \, \Big( C^{(0)}_{a\phi} \Big)^2    \cdot   \op{F}(\i\pi) \, 
\cdot g(\bs{\be}_{n-1}^{\prime} ; \bs{u}_{m-1})  \cdot \pl{s=2}{n-1} \Big\{ \op{F}(\be_{1s}) \op{F}(\be_{sn}) \, \big( a \phi\big)(\be_{s1} )    \Big\} \\ 
\times  \pl{k=1}{m-1} \Big\{ \tau(u_{k}-\be_1) \, \big( a \phi\big)(\be_1-u_k) \, \big( a \phi\big)(\be_1 - u_k - \i \pi )  \Big\}
\cdot \pl{k=1}{m-1}  \bigg\{ \f{ \sinh\big[ \tfrac{\pi}{\xi} (u_k-\be_1+2\i\pi) \big] }{ \sinh\big[ \tfrac{\pi}{\xi} (u_k-\be_1) \big] } \bigg\}
\\
\times \Big(  \op{v}^+_1 \dots  \op{v}^+_{n-1} \, \op{v}^{-}_n\Big)^{\op{t}} \cdot \pl{p=1}{m-1}\wt{\op{C}}_{1,\dots,n}^{\,(\ga)}(\bs{\be}_n ; u_p)  \cdot  \op{O}_{1,\dots,n}^{(\ga)}(\bs{\be}_n)
\cdot p\Big( \bs{\be}_{n} ;  \big(\bs{u}_{m-1},\be_1-2\i\pi\big) \Big)_{\mid \be_{1n}=\i\pi} \;. 
\end{multline}
At this stage, upon employing \eqref{identite reduction produits de F et phi}, \eqref{ecriture reduction produit tau et al}, \eqref{ecriture reduction cste C0 a phi vers varkappa} property $\mathrm{b)}$, 
\bem
\bs{\Phi}_{1,\dots,n}^{(3)}\big( \bs{\be}_n^{\prime} ; \bs{u}_{m-1} \big) \; = \; -\varkappa \ex{\i \f{\pi}{2} \ga } \cdot g(\bs{\be}_{n-1}^{\prime} ; \bs{u}_{m-1}) 
\cdot \pl{k=1}{m-1}  \Bigg\{ \f{ \sinh\big[ \tfrac{\pi}{\xi} (u_k-\be_1+2\i\pi) \big] }{ \sinh\big[ \tfrac{\pi}{\xi} (u_k-\be_1) \big] } \Bigg\} \\
\times \Big(  \op{v}^+_1 \dots  \op{v}^+_{n-1} \, \op{v}^{-}_n\Big)^{\op{t}}\cdot \pl{p=1}{m-1}\wt{\op{C}}_{1,\dots,n}^{\,(\ga)}(\bs{\be}_n ; u_p)  \cdot  \op{O}_{1,\dots,n}^{(\ga)}(\bs{\be}_n)
\cdot p\Big( \bs{\be}_{n} ;  \big(\bs{u}_{m-1},\be_1 \big) \Big)_{\mid \be_{1n}=\i\pi} \;. 
\end{multline}

\subsubsection{Reduction in $\bs{\Phi}_{1,\dots,n}^{(2)}+ \bs{\Phi}_{1,\dots,n}^{(3)}$}

Upon putting together the above two results, one gets that 

\beq
\bs{\Phi}_{1,\dots,n}^{(2)}\big( \bs{\be}_n^{\prime} ; \bs{u}_{m-1} \big) \; + \; \bs{\Phi}_{1,\dots,n}^{(3)}\big( \bs{\be}_n^{\prime} ; \bs{u}_{m-1} \big)  \; = \; 
-\varkappa   \,  g(\bs{\be}_{n-1}^{\prime} ; \bs{u}_{m-1})  
\cdot \mc{G}\big( \bs{\be}_n^{\prime} ; \bs{u}_{m-1} \big)\cdot  \op{O}_{1,\dots,n}^{(\ga)}(\bs{\be}_n) \cdot p\Big( \bs{\be}_{n} ;  \big(\bs{u}_{m-1},\be_1\big) \Big)_{\mid \be_{1n}=\i\pi} \;,
\enq
where 
\bem
\mc{G}\big( \bs{\be}_n^{\prime} ; \bs{u}_{m-1} \big) \, = \, \ex{\i \f{\pi}{2} \ga } \pl{k=1}{m-1}  \bigg\{ \f{ \sinh\big[ \tfrac{\pi}{\xi} (u_k-\be_1+2\i\pi) \big] }{ \sinh\big[ \tfrac{\pi}{\xi} (u_k-\be_1) \big] } \bigg\}
\cdot \Big(  \op{v}^+_1  \dots \op{v}^+_{n-1} \, \op{v}^{-}_n\Big)^{\op{t}} \cdot \pl{p=1}{m-1}\wt{\op{C}}_{1,\dots,n}^{\,(\ga)}(\bs{\be}_n ; u_p)    \\
\, + \, \ex{- \i \f{\pi}{2} \ga } \, \Big(  \op{v}^-_1 \, \op{v}^+_2\dots \op{v}^{+}_n\Big)^{\op{t}} \cdot  \pl{p=1}{m-1}\wt{\op{C}}_{1,\dots,n}^{\,(\ga)}F(\bs{\be}_n ; u_p)  \;. 
\end{multline}
At this stage, one employs \eqref{ecriture action produit C sur vecteur + ... +-}-\eqref{ecriture action produit C sur vecteur -+ ... +}, uses the permutation symmetry of the $\bs{u}_{m-1}$
integral  so as to get that 
\bem
-\i m \Int{ \msc{C}_{\bs{\be}_n}^{\prime} }{} \hspace{-1mm} \dd^{m-1} u  \sul{a=2}{3} \bs{\Phi}_{1,\dots,n}^{(a)}\big( \bs{\be}_{n}^{\prime}; \bs{u}_{m-1} \big) \; = \;
\i   m \, \varkappa  \Int{ (\msc{C}_{\bs{\be}_n}^{\prime})^{m-1} }{} \hspace{-1mm} \dd^{m-1}   u \,   g(\bs{\be}_{n-1}^{\prime} ; \bs{u}_{m-1})
\, \big(  \op{v}^+_2 \dots \op{v}^{+}_{n-1}\big)^{\op{t}} \cdot \pl{p=1}{m-1}\wt{\op{C}}_{2,\dots,n-1}^{\,(\ga)}(\bs{\be}_{n-1}^{\prime} ; u_p) \\
\times  \Bigg\{ \ex{\i \f{\pi}{2} \ga } \, 
\pl{k=1}{m-1}  \bigg\{ \f{ \sinh\big[ \tfrac{\pi}{\xi} (u_k-\be_1+2\i\pi) \big] }{ \sinh\big[ \tfrac{\pi}{\xi} (u_k-\be_1) \big] } \cdot \wt{b}(\be_1-u_s) \, \wt{b}(\be_n-u_s)  \bigg\} 
\, \big( \op{v}^+_1 \,\op{v}^{-}_n\big)^{\op{t}} \; + \; \ex{-\i \f{\pi}{2} \ga }
\big( \op{v}^-_1 \,\op{v}^{+}_n\big)^{\op{t}} \Bigg\}  \\
\times  \op{O}_{1,\dots,n}^{(\ga)}(\bs{\be}_n) \cdot p\Big( \bs{\be}_{n} ;  \big(\bs{u}_{m-1},\be_1 \big) \Big)_{\mid \be_{1n}=\i\pi} \; +\; \i m (m-1) \varkappa \msc{W}\big( \bs{\be}_{n-1} \big)
\label{ecriture integrale globale contributions termes 2 et 3 aux cas iii}
\end{multline}
where, above, we have introduced 
\bem
\msc{W}\big( \bs{\be}_{n-1} \big) \; = \; \Int{ (\msc{C}_{\bs{\be}_n}^{\prime})^{m-1}  }{}   \hspace{-1mm} \dd^{m-1}   u \,   g(\bs{\be}_{n-1}^{\prime} ; \bs{u}_{m-1})
\big(  \op{v}^-_1 \, \op{v}^+_2 \dots \op{v}^{+}_{n-1}\,\op{v}^{-}_n \big)^{\op{t}} \cdot \mc{W}\big(\bs{\be}_{n-1} ; \bs{u}_{m-1}\big) \\
\times \pl{p=1}{m-2}\wt{\op{C}}_{2,\dots,n-1}^{\,(\ga)}(\bs{\be}_{n-1}^{\prime} ; u_p)  \cdot  \op{O}_{1,\dots,n}^{(\ga)}(\bs{\be}_n)  \cdot  
p\Big( \bs{\be}_{n} ;  \big(\bs{u}_{m-1},\be_1 \big) \Big)_{\mid \be_{1n}=\i\pi} \;.
\end{multline}
in which, upon agreeing that $\be_n=\be_1-\i\pi$,  
\bem
\mc{W}\big(\bs{\be}_{n-1} ; \bs{u}_{m-1}\big) \; = \; \ex{\i \f{\pi}{2} \ga } \, \wt{c}^{\,(\ga)}(\be_1-u_{m-1}) \, \cdot  \,
\pl{k=1}{m-1}  \bigg\{ \f{ \sinh\big[ \tfrac{\pi}{\xi} (u_k-\be_1+2\i\pi) \big] }{ \sinh\big[ \tfrac{\pi}{\xi} (u_k-\be_1) \big] }
\, \cdot \wt{b}(\be_n-u_k) \bigg\} \cdot \pl{  s=1  }{ m-2 } \f{ \wt{b}(\be_1-u_s) }{ \wt{b}(u_{m-1} - u_s) }  \\ 
\; + \; \ex{-\i \f{\pi}{2} \ga } \, 
\wt{c}^{\,(\ga)}(\be_n-u_{m-1}) \cdot  \f{   \pl{p=2}{n-1} \wt{b}(\be_p-u_{m-1}) }{ \pl{  s=1   }{ m-2 } \wt{b}(u_{s} - u_{m-1}) } \;. 
\end{multline}
We first focus on the first term in \eqref{ecriture integrale globale contributions termes 2 et 3 aux cas iii}. 
Here, a direct calculation shows that the product in factor of $\big( \op{v}^+_1 \,\op{v}^{-}_n\big)^{\op{t}}$ simply reduces to $1$. 
Then, upon applying the second reduction equation given in $\mathrm{e)}$ and the identities \eqref{ecriture reduction sous residus operateur O du twist}-\eqref{identite vecteur vp1 pour C1n}, one arrives to 
\bem
-\i m \Int{ (\msc{C}_{\bs{\be}_n})^{m-1} }{} \hspace{-1mm} \dd^{m-1} u  \sul{a=2}{3} \bs{\Phi}_{1,\dots,n}^{(a)}\big( \bs{\be}_{n}^{\prime}; \bs{u}_{m-1} \big) \; = \; 
-  \op{C}_{1n} \otimes \bs{\Psi}_{2,\dots,n-1}[p](\bs{\be}_{n-1}^{\prime} ) \cdot \ex{2\i\pi \om +\i\pi \ga \sg_1^z } \\
\, + \, \i m  \de_{n,2m}\,  \op{C}_{1n} \otimes \bs{\Psi}_{2,\dots,n-1}[1](\bs{\be}_{n-1}^{\prime} ) \cdot \ex{ \i\pi \ga \sg_1^z }   \, \vp^{(-)}(\bs{\be}_{n-1}^{\prime} ) 
\, + \, \i m (m-1) \varkappa \msc{W}\big( \bs{\be}_{n-1} \big) \;. 
\end{multline}
Here, the first term in the second line vanishes by virtue of Lemma \ref{Lemme annulation K transformee}. Below, we establish that the second term 
vanishes as well. 

A direct calculation shows that 
\beq
\mc{W}\big(\bs{\be}_n ; \bs{u}_{m-1}\big)  g(\bs{\be}_{n-1}^{\prime} ; \bs{u}_{m-1})  \; = \;  g(\bs{\be}_{n-1}^{\prime} ; \bs{u}_{m-2}) 
\Bigg\{  \mc{V}\big( \bs{\be}_{n-1} ; u_{m-1}, \bs{u}_{m-2} \big) \, - \,  \mc{V}\big( \bs{\be}_{n-1} ; u_{m-1} + 2\i\pi , \bs{u}_{m-2} \big) \Bigg\}
\enq
in which 
\beq
\mc{V}\big( \bs{\be}_{n-1} ; u , \bs{u}_{m-2} \big)  \; = \;   \f{ c^{m-1}_{\tau} \sinh\big(\i \tfrac{\pi^2}{\xi} \big)    }{  \sinh\big[ \tfrac{\pi}{\xi}(\be_1-u)\big] } \ex{ \ga(\be_1-u+\i\f{\pi}{2}) }
\pl{\ell=1}{m-2} \bigg\{ \sinh(u_{\ell}-u) \sinh\big[ \tfrac{\pi}{\xi}( u - u_{\ell} - \i \pi )\big]  \bigg\} \pl{\ell=2}{n-1} (a \phi)\big( \be_{\ell}-u\big) \; . 
\enq
Above, we have introduced the constant $c_{\tau} \, = \, \op{F}^2\big(\i \tfrac{\pi}{2} \big) \cdot  \op{F}^2\big(-\i \tfrac{\pi}{2} \big) \cdot 
\varpi^2\big(\i \tfrac{\pi+\xi}{2}, \i \tfrac{\xi-3\pi}{2}\big)$. 
This recasts $\msc{W}\big( \bs{\be}_{n-1} \big) $ as
\bem
\msc{W}\big( \bs{\be}_{n-1} \big) \; = \; \Int{ (\msc{C}_{\bs{\be}_n}^{\prime})^{m-2}  }{}   \hspace{-1mm} \dd^{m-2}   u \,   g(\bs{\be}_{n-1}^{\prime} ; \bs{u}_{m-2})
\big(  \op{v}^-_1 \, \op{v}^+_2 \dots \op{v}^{+}_{n-1}\,\op{v}^{-}_n \big)^{\op{t}}   \\
\times \pl{p=1}{m-2}\wt{\op{C}}_{2,\dots,n-1}^{\,(\ga)}(\bs{\be}_{n-1}^{\prime} ; u_p)  \cdot  \op{O}_{1,\dots,n}^{(\ga)}(\bs{\be}_n)  \cdot  \chi(\bs{\be}_{n-1} ; \bs{u}_{m-2})  
p\Big( \bs{\be}_{n} ;  \big(\bs{u}_{m-1},\be_1 \big) \Big)_{\mid \be_{1n}=\i\pi} 
\end{multline}
where, by using the $2\i\pi$ periodicity of $p$ in respect to the $u$-variables, 
\beq
\label{diff contour cancel}
\chi(\bs{\be}_{n-1} ; \bs{u}_{m-2})  \; = \; \Bigg\{ \Int{  \msc{C}_{\bs{\be}_n}^{\prime}  }{} \, - \,  \Int{  \msc{C}_{\bs{\be}_n}^{\prime} +2\i\pi }{}  \Bigg\} \dd u
\mc{V}\big( \bs{\be}_{n-1} ; u  ;  \bs{u}_{m-2} \big)   p\Big( \bs{\be}_{n} ;  \big(\bs{u}_{m-2}, u ,\be_1 \big) \Big)_{\mid \be_{1n}=\i\pi} \;. 
\enq
A direct inspection shows that the integrand gives rise to a convergent integral at infinity and that it has poles at 
\beq
u=\be_1+ \i k \xi \;, k \in \mathbb{Z} \qquad   u=\be_{s} -\i p \xi-2\i\pi \ell \quad \e{and} \quad   u=\be_{s} + \i p \xi+ \i\pi (2\ell + 1) \;\; p, \ell \in \mathbb{N}
\enq
with $s=2,\dots, n-1$. Moreover, $ \msc{C}_{\bs{\be}_n}^{\prime} \setminus  \msc{C}_{\bs{\be}_n}^{\prime} +2\i\pi $ is homotopic to
\bem
\msc{C}_{ \bs{\be}_n }^{\prime} \setminus \Big\{   \msc{C} _{ \bs{\be}_n   }^{\prime}+ 2\i\pi \Big\}  \; \simeq \; 
\bigcup\limits_{a=2}^{n} \Bigg\{   \bigg( \bigcup\limits_{\ell, p \geq 0}\Big\{ - \Dp{}\mc{D}_{\be_a-\i p  \xi - 2  \i \pi \ell ,\eps}  \Big\} 
\bigcup\limits_{\ell, p \geq 0}  \Big\{  \Dp{}\mc{D}_{\be_a - \i p  \xi - 2  \i \pi (\ell-1) ,\eps}  \Big\}  \bigg)   \\
\bigcup   \bigg( \bigcup \limits_{  p \geq 0}\Big\{  - \Dp{}\mc{D}_{\be_a-\i\pi - \i (p+1) \xi ,\eps}  \Big\}  
\bigcup \limits_{  p \geq 0}\Big\{    \Dp{}\mc{D}_{\be_a + \i\pi - \i (p+1) \xi ,\eps}  \Big\}  \bigg)\\ 
\bigcup \bigg( \bigcup\limits_{\ell, p \geq 0}\Big\{   \Dp{}\mc{D}_{\be_a + \i p \xi + \i \pi (2\ell+1),\eps}  \Big\}  \bigg) 
\bigcup\limits_{\ell, p \geq 0}\Big\{  - \Dp{}\mc{D}_{\be_a + \i p \xi + \i\pi (2\ell+3) ,\eps}  \Big\}  \bigg)
\bigcup 
\bigg( \bigcup\limits_{p  \geq 0}\Big\{  \Dp{}\mc{D}_{\be_a + \i p  \xi - \i\pi ,\eps}  \Big\}  
\bigcup\limits_{p \geq 0}\Big\{  - \Dp{}\mc{D}_{\be_a + \i p \xi + \i\pi ,\eps}  \Big\}   \bigg) \Bigg\} \\
\; = \;  \bigcup\limits_{a=2}^{n} \Bigg\{ \bigcup\limits_{p  \geq 0}  \bigg( \Big\{    \Dp{}\mc{D}_{\be_a - \i p \xi + 2\i \pi ,\eps} \Big\}
\bigcup \Big\{    \Dp{}\mc{D}_{\be_a + \i p \xi + \i\pi ,\eps} \Big\} \bigcup \Big\{  \Dp{}\mc{D}_{\be_a + \i p \xi - \i\pi ,\eps}  \Big\} \bigg) 
\bigcup \Big\{  - \Dp{}\mc{D}_{\be_a + \i p \xi + \i\pi ,\eps} \Big\}  \\ 
\bigcup \Big\{   \Dp{}\mc{D}_{\be_a - \i (p +1) \xi + \i\pi ,\eps} \Big\} \bigcup \Big\{  - \Dp{}\mc{D}_{\be_a - \i (p +1)\xi - \i\pi ,\eps} \Big\} \Bigg\}\\ 
\; = \; \bigcup\limits_{a=2}^{n} \Bigg\{ \bigcup\limits_{p  \geq 0}  \bigg( \Big\{    \Dp{}\mc{D}_{\be_a - \i p \xi + 2\i \pi ,\eps} \Big\}
\bigcup \Big\{    \Dp{}\mc{D}_{\be_a + \i p \xi -  \i\pi ,\eps} \Big\} 
\bigcup \Big\{   \Dp{}\mc{D}_{\be_a - \i (p +1) \xi + \i\pi ,\eps} \Big\} \bigcup \Big\{  - \Dp{}\mc{D}_{\be_a - \i (p +1)\xi - \i\pi ,\eps} \Big\} \Bigg\} \;,
\end{multline}
which entails that no poles are involved in the difference of contours in \eqref{diff contour cancel}, and thus the two contributions cancel each other. \qed

\section{The breather/soliton-antisoliton form factors}
\label{Appendix preuve decomposition breathers}

\proof of Proposition \ref{Proposition FF dans le secteur breather}.

\vspace{3mm}
The representation is established by induction. We thus start by obtaining the representation in the
case where only one breather particle $\mf{b}_k$  with $ k  \in \intn{ 1 }{ n_{\xi} }$ is present.
The form factor is deduced from the purely solitonic one $\bs{\Psi}_{1,\dots,n}\big[ p\big]$ by taking the residue as imposed by Axiom $V)$,
\textit{c.f.} \eqref{recursion breather 2}:
\beq
\bs{\Phi}^{\mf{b}_k}_{1,\dots,n}[p]\big(\alpha, \bs{\be}_n \big) \; = \;
\e{Res}\Big( \bs{\Psi}_{-1,0,\dots,n}\big[ p\big]\big(\bs{\de}_2, \bs{\be}_n \big) \dd \de_1 \, , \, \de_{12} \, = \, \i \mf{u}_k \Big)_{ \mid \ov{\de}_{12} = 2 \a }
\cdot \Ga^{(k)}_{-1\, 0}\;.
\enq
Here, on the \textit{rhs} we chose to start labelling the spaces building up $(\mf{h}^*)^{\otimes n+2}$
with the indices $-1, 0,\dots,n$.
Observe that the integrand  $\wh{\Cx}_{1,\dots,n}^{(\ga)}$, see \eqref{definition hat C},
has, among others, poles at
\beq
u_{p} = \de_1-\i\pi + \i t \xi\, , \quad  t \in \intn{0}{k} \quad \e{and} \quad
u_{p} \, = \, \de_2- \i \ell \xi \, , \quad  \ell \in \intn{0}{k} \;.
\enq
The first set of poles has index $+1$ in respect to $\msc{C}_{\bs{\de}_2,\bs{\be}_n}$
while the second one has index $-1$. In the limit when $\de_{12} \tend \i \mf{u}_k$ and $\ov{\de}_{12}=2\a$,
these poles pinch the integration curve $\msc{C}_{\bs{\de}_2,\bs{\be}_n}$ and thus generate a pole of
$\bs{\Psi}_{-1,0,\dots,n}\big[ p\big]\big(\bs{\de}_2, \bs{\be}_n \big) $ at $\de_{12} = \i \mf{u}_k$. One may check that these
are the only poles pinching the contour in this specific limit.
The residue at $\de_{12} = \i \mf{u}_k$ can be computed by using similar handlings to those outlined
in Lemma  \ref{Lemme engendrement des poles par pincement} of Appendix \ref{subsection axiom iv}. All-in-all, one gets
\bem
\bs{\Phi}^{\mf{b}_k}_{1,\dots,n}[p]\big(\alpha, \bs{\be}_n \big)  \, = \,
- (m+1) \sul{\ell=0}{k} \Int{ (\Ga^{\prime})^{m} }{} \hspace{-2mm} \dd^{m} u \\
\times \Res\Big(  2\i\pi \e{Res}\Big( \wh{\Cx}_{-1,\dots,n}^{(\ga)}\big( \bs{\de}_2, \bs{\be}_n \, ; \, \bs{u}_{m+1}\big) \cdot  \dd u_{m+1} \, ,
\, u_{m+1}=\de_2-\i \ell \xi  \Big) \cdot  \dd \de_1, \de_{12}=\i \mf{u}_k \Big){}_{ _{\mid \ov{\de}_{12} = 2\alpha} }
\hspace{-3mm} \cdot \Ga^{(k)}_{-1\, 0}  \;.
\label{ecriture eqn breather apres evaluation pincements}
\end{multline}
Here, the new integration contour is defined as
\beq
\Ga^{\prime} \, = \, \Big\{ \msc{C}_{\bs{\de}_2,\bs{\be}_n} \bigcup\limits_{s=0}^{k} \Dp{}\mc{D}_{\de_2 - \i s \xi, \eps}
\Big\}_{ \de_{12}=\i \mf{u}_k, \ov{\de}_{12}=2\a }
\enq
with $\eps$ small enough. The contour on the \textit{rhs} is such that the poles of $\wh{\Cx}_{-1,\dots,n}^{(\ga)}\big( \bs{\de}_2, \bs{\be}_n \, ; \, \bs{u}_{m+1}\big)$
in $u_{\ell}$ at $\de_1-\i\pi + \i t \xi$, $t \in \intn{0}{k}$, and $\de_2- \i \ell \xi$,  $\ell \in \intn{0}{k}$,
all have index $+1$ in respect to it. In particular, there is no problem to evaluate it at
$\bs{\de}_2 =  \big( \a + \tfrac{\i}{2} \mf{u}_k ,  \a - \tfrac{\i}{2} \mf{u}_k \big)$. A direct inspection then shows that
$\Ga^{\prime}=\msc{C}^{(k)}_{\a, \bs{\be}_n}$.

We first address the computation of the residue at the pole $u_1=\de_2-\i \ell \xi$. The latter stems from $g \big(\bs{\be}_n;\bs{u}_{m} \big)$
and, more precisely,
issues from the factor $\big( a \phi)(\de_2-u_{m+1})$.  Recalling the definition of this function's residue \eqref{definition residu de a phi}, it holds
\begin{multline}
\e{Res}\Big( g\big( \bs{\de}_2, \bs{\be}_n ; \bs{u}_{m+1}\big) \cdot  \dd u_{m+1} \, , \, u_{m+1}  = \de_2-\i \ell \xi  \Big) \,  =  \,
- C^{(\i \ell \xi)}_{a\phi} \, g(\bs{\be}_{n},\bs{u}_{m}) \, \big( a \phi\big)(\de_{12}+\i \ell \xi) \,   \op{F}(\de_{12}) \\
\times \pl{b=1}{n} \Big\{ \op{F}(\de_1 -\be_{b}) \op{F}(\de_2-\be_{b}) \big( a \phi\big)(\be_{b}-\de_2 +\i \ell \xi) \Big\}
\cdot \pl{t=1}{m} \Big\{  \tau(\de_2-\i \ell \xi -u_{t})  \big( a \phi\big)(\de_1-u_{t}) \big( a \phi\big)(\de_2-u_{t})  \Big\}\;.
\end{multline}
After that, it remains to compute the residue at $\de_{1}= \de_2 + \i \mf{u}_{k}$.
This issues from the off-shell Bethe vector, and more precisely the operator
$\wt{\op{C}}_{-1,\dots,n}^{\,(\ga)}(\bs{\de}_2,\bs{\be}_n;\de_2-\i \ell \xi)$.
To compute the corresponding residue, one uses the multisite expansion:
\begin{equation}
\wt{\op{C}}_{-1,\dots,n}^{\,(\ga)}(\bs{\de}_2,\bs{\be}_n;\de_2-\i \ell \xi) \, = \,
\wt{\op{C}}_{-1,0}^{\,(\ga)}(\de_1,\de_2 ;\de_2-\i \ell \xi)\cdot\wt{\op{A}}_{1,\dots,n}^{\,(\ga)}(\bs{\be}_n,\de_2-\i \ell \xi) +
\wt{\op{D}}_{-1,0}^{\,(\ga)}(\de_1,\de_2;\de_2-\i \ell \xi)\cdot\wt{\op{C}}_{1,\dots,n}^{\,(\ga)}(\bs{\be}_n,\de_2-\i \ell \xi) \;.
\end{equation}
Then, the explicit form of the pseudo-vaccum $\Om_{-1\dots n} = \big(\op{v}^+_{-1} \otimes \op{v}^+_0\big)^{\op{t}} \otimes \Om_{n}$ leads to:
\begin{equation}
\Om_{-1\dots n} \wt{\op{C}}_{-1,\dots,n}^{\,(\ga)}(\bs{\be}_n;\de_2-\i \ell \xi) \, = \,
\Om_{-1\dots n} \cdot \wt{\op{C}}_{-1,0}^{\,(\ga)}(\de_1,\de_2 ;\de_2-\i \ell \xi)
+ \wt{b}\big(\de_{12}+\i \ell \xi\big) \wt{b}\big(\i \ell \xi\big) \, \Om_{-1\dots n} \cdot \wt{\op{C}}_{1,\dots,n}^{\,(\ga)}(\bs{\be}_n;\de_2-\i \ell \xi) \; .
\end{equation}
The second term vanishes because of $ \wt{b}\big(\i \ell \xi)=0$, while, for the first term we get :
\begin{equation}
\big(\op{v}^+_{-1} \otimes \op{v}^+_0\big)^{\op{t}} \cdot \wt{\op{C}}_{-1,0}^{\,(\ga)}(\de_1,\de_2 ;\de_2-\i \ell \xi) \, =\,  \wt{b}\big(\de_{12}+\i \ell \xi\big)
\, \wt{c}^{\, (\ga)}\big(\i \ell \xi \big)\,  \big(\op{v}^+_{-1} \otimes \op{v}^-_0\big)^{\op{t}}
+ \wt{c}^{\, (\ga)}\big(\de_{12}+\i \ell \xi\big)\, \big(\op{v}^-_{-1} \otimes \op{v}^+_0\big)^{\op{t}} \;.
\end{equation}
Taking the residue, all-in-all, yields
\bem
\label{apparition intertwiner twist}
\e{Res} \Big(\Om_{-1\dots n} \wt{\op{C}}_{-1,\dots,n}^{\,(\ga)}(\bs{\de}_2,\bs{\be}_n ;\de_2-\i \ell \xi) \cdot  \dd \de_1, \de_{12}=\i \mf{u}_{k} \Big)   \\
\, = \,  \f{\xi}{\pi}(-1)^{\ell} \sinh\Big( \i \tfrac{ \pi^2 }{ \xi }\Big) \ex{\i\ga \ell \xi}
\Big(\op{v}_{-1}^+\op{v}_{0}^- + (-1)^k \ex{\i\ga\mf{u}_{k} } \op{v}_{-1}^-\op{v}_{0}^+\Big)^{\op{t}}  \otimes \Om_{n}  \;.
\end{multline}
Notice that the 2-covector in the right-hand side of \eqref{apparition intertwiner twist} is related to the (anti)soliton/breather intertwiner \eqref{exp intertwiner} as:
\begin{equation}
\Big(\op{v}_{-1}^+\op{v}_{0}^- + (-1)^k \ex{\i\ga\mf{u}_{k} } \op{v}_{-1}^-\op{v}_{0}^+\Big)^{\op{t}}
= \sqrt{2}
\ex{  \f{ \i }{ 2 } \ga \mf{u}_k } \cdot \big( \bs{\vp}_{-10}^{(\mf{b}_k)} \big)^{\mathtt{t}}  \cdot \ex{ -\f{ \i  }{2} \ga \mf{u}_{k} \sigma_{-1}^z} \;.
\end{equation}
This entails that
\bem
\mf{R} \, = \,  \Res\Big\{
\wt{\Cx}_{-1,\dots,n}^{(\ga)}\Big( \bs{\de}_2, \bs{\be}_n \, ; \, \big( \bs{u}_{m}, \de_2-\i \ell \xi, \big) \Big)   \dd \de_1,
\de_{1}= \de_2 + \i \mf{u}_k \Big\}_{ _{\mid \ov{\de}_{12} = 2\alpha} } \\
 =\; \sqrt{2} \f{\xi}{\pi}(-1)^{ \ell + 1 } \sinh\Big( \i \tfrac{ \pi^2 }{ \xi }\Big) \ex{ \i\ga (\ell \xi + \f{ 1 }{ 2 } \mf{u}_k )  }
 \Big(  \big( \bs{\vp}_{-10}^{(\mf{b}_k)} \big)^{\mathtt{t}}  \otimes \Om_{n}   \Big) \cdot  \ex{ - \f{\i}{2} \ga \mf{u}_{k} \sigma_{-1}^z}
\pl{t=1}{m} \wt{\op{C}}_{-1,\dots,n}^{\,(\ga)}\Big( \a + \tfrac{\i}{2} \mf{u}_k, \a - \tfrac{\i}{2} \mf{u}_k, \bs{\be}_n; u_t \Big) \;.
\nonumber
\end{multline}
To simplify the above expression, one moves the covector $\big( \bs{\vp}_{-10}^{(\mf{b}_k)} \big)^{\mathtt{t}}$ through
$\pl{t=1}{m}\wt{\op{C}}_{-1,\dots,n}^{\,(\ga)}\Big( \a + \tfrac{\i}{2} \mf{u}_k, \a - \tfrac{\i}{2} \mf{u}_k, \bs{\be}_n; u_t \Big)$,
because of the definition of the (anti)soliton/breather S-matrix. A short calculation yields:
\begin{equation}
\big( \bs{\vp}_{ab}^{(\mf{b}_k)} \big)^{\mathtt{t}} \cdot \ex{ - \f{\i}{2} \ga \mf{u}_{k}   \sigma_a^z} \cdot
\wt{\op{S}}_{a c}^{\, (\ga)}\Big( \alpha + \tfrac{\i}{2} \mf{u}_k -u_{t} \Big)
\cdot \wt{\op{S}}_{b c}^{\, (\ga)}\Big(\alpha -  \tfrac{\i}{2} \mf{u}_k -u_{t} \Big) \, = \,
\f{ \op{S}_{c}^{(\mf{b}_k)} ( \a - u_{t} )  \, \big( \bs{\vp}_{ab}^{(\mf{b}_k)} \big)^{\mathtt{t}} }
{ a\big(   \a - u_{t}  + \tfrac{\i}{2} \mf{u}_k \big) a\big(   \a - u_{t}  - \tfrac{\i}{2} \mf{u}_k \big)    }
 \cdot \ex{ - \f{\i}{2} \ga \mf{u}_{k}   \sigma_a^z}
\end{equation}
with $\op{S}_{c}^{(\mf{b}_k)}$ as introduced in \eqref{definition matrice S breather k avec le reste}. This leads to
\bem
\mf{R} \,  =\; \sqrt{2} \f{\xi}{\pi}(-1)^{ \ell + 1 } \sinh\Big( \i \tfrac{ \pi^2 }{ \xi }\Big) \ex{ \i\ga (\ell \xi + \f{  1 }{ 2 } \mf{u}_k )  }
\pl{s=1}{m} \Big\{ \wt{S}_{\mf{b}_k, \mf{s} } \big(\a - u_{s} \big) \Big\} \,
\big( \bs{\vp}_{-10}^{(\mf{b}_k)} \big)^{\mathtt{t}}  \otimes \wt{\Cx}_{ 1,\dots,n}^{(\ga)}\Big(   \bs{\be}_n \, ; \,   \bs{u}_{m}  \Big)
\ex{ - \f{\i}{2} \ga \mf{u}_{k}  \sigma_{-1}^z} \;,
\end{multline}
in which
\begin{equation}
\wt{S}_{\mf{b}_k, \mf{s} } (\th) = (-1)^k
\f{  \sinh \big[\tfrac{\pi}{\xi}(\th +  \tfrac{ \i }{2} \mf{u}_k ) \big] }{ \sinh \big[\tfrac{\pi  }{\xi}(  \tfrac{ \i }{2} \mf{u}_k  - \th )\big] } \; .
\end{equation}
Finally, one may push the covector to the very right of the expression by using that
\begin{equation}
	\big( \bs{\vp}_{-10}^{(\mf{b}_k)} \big)^{\mathtt{t}}  \cdot \ex{ - \f{ \i }{ 2 }  \ga \mf{u}_{k} \sigma_{-1}^z}  \cdot
\op{O}_{-1,\dots,n}^{(\ga)}\Big( \a + \tfrac{\i}{2} \mf{u}_k, \a - \tfrac{\i}{2} \mf{u}_k, \bs{\be}_n; u_t \Big)
	= \op{O}_{1,\dots,n}^{(\ga)}(\bs{\be}_n) \cdot  \big( \bs{\vp}_{-10}^{(\mf{b}_k)} \big)^{\mathtt{t}} \;.
\end{equation}
By putting all these partial results together, one represents the  $\mf{b}_k$-breather/$n$-soliton-antisoliton form factor as
\begin{multline}
\bs{\Phi}^{\mf{b}_k}_{1,\dots,n}[p]\big(\alpha, \bs{\be}_n \big)  \, = \,
-\i \f{ \sqrt{2} \xi }{ r_k }  \sinh\big( \i \tfrac{ \pi^2 }{ \xi }\big) \op{F}(\i \mf{u}_k)
\sul{\ell=0}{k} (-1)^{\ell}  \big( a \phi\big)(\i \mf{u}_k +\i \ell \xi) \,  C^{(\i \ell \xi)}_{a\phi}
\,  \ex{\i\ga (\ell \xi+\f{1}{2} \mf{u}_k ) }   \\
\times
\pl{b=1}{n} \bigg\{ \op{F}\big(\alpha +\tfrac{  \i  }{2}\mf{u}_k  -\be_{b}\big) \op{F}\big(\alpha - \tfrac{  \i  }{2}\mf{u}_k  -\be_{b}\big)\,
\big( a \phi\big)\big(\be_{b}-\alpha + \tfrac{  \i  }{2}\mf{u}_k  + \i \ell \xi \big) \bigg\} \\
\times \Int{ ( \msc{C}^{(k)}_{\a, \bs{\be}_n} )^m }{} \hspace{-1mm} \dd^{m} u \,
\pl{t=1}{m} \bigg\{
\tau\big( \alpha - \tfrac{  \i  }{2}\mf{u}_k  -\i \ell \xi -u_{t} \big) \,
\big( a \phi\big)\big(\alpha +\tfrac{  \i  }{2}\mf{u}_k  -u_{t}\big) \,
\big( a \phi\big)\big(\alpha - \tfrac{  \i  }{2}\mf{u}_k  -u_{t}\big) \,
\wt{S}_{\mf{b}_k, \mf{s}} \big( \alpha - u_{t} \big)  \bigg\}  \\
\times \wt{\Cx}_{1,\dots,n}^{\,(\ga)}\big( \bs{\be}_n  \, ; \, \bs{u}_{m} \big)
\cdot \op{O}_{1,\dots,n}^{(\ga)}(\bs{\be}_n) \cdot g(\bs{\be}_{n},\bs{u}_{m}) \cdot
(m+1)  \,
p \big(\alpha + \tfrac{  \i  }{2}\mf{u}_k  , \alpha - \tfrac{  \i  }{2}\mf{u}_k  ,\bs{\be}_n;
\alpha - \i \tfrac{  \mf{u}_k  }{2}-\i \ell \xi,\bs{u}_{m} \big)  \;.
\end{multline}
It now remains to recast the above expression in terms of the functions arising in the statement of the proposition.
First, one observes that
\beq
\big( a \phi\big)( \la +\i \ell \xi)  \, = \, (-1)^{\ell}
\big( a \phi\big)(\la )  \pl{t=0}{\ell-1}  \f{ \cosh\big[\tfrac{1}{2}(\la + \i t \xi) \big] }{ \sinh\big[\tfrac{1}{2}(\la + \i (t+1) \xi) \big] } \;.
\enq
This identity, along with \eqref{ecriture forme alternative C0 a phi}, \eqref{ecriture reduction cste C0 a phi vers varkappa}
and the explicit expression for $r_k$ \eqref{exp intertwiner} yields:
\beq
  -\i \f{ \sqrt{2} \xi }{ r_k }  \sinh\big( \i \tfrac{ \pi^2 }{ \xi }\big) \op{F}(\i \mf{u}_k)
\big( a \phi\big)(\i \mf{u}_k +\i \ell \xi) \,  C^{(\i \ell \xi)}_{a\phi}  \,  \ex{\i\ga (\ell \xi + \f{1}{2}\mf{u}_k) } \, = \, d_{ \ell; k}  \;.
\enq
Further, by construction from \eqref{s breathers p function}, one gets:
\beq
 d_{\ell;k}
\cdot
(m+1)  \,
p \big(\alpha + \tfrac{  \i  }{2}\mf{u}_k  , \alpha - \tfrac{  \i  }{2}\mf{u}_k  ,\bs{\be}_n;
\alpha - \i \tfrac{  \mf{u}_k  }{2}-\i \ell \xi,\bs{u}_{m} \big)  \, =  \,	p_{\ell;k} \big(\alpha ; \bs{\be}_n ;\bs{u}_{m} \big)   \;.
\enq
Next, by using the shift properties of the dilogarithms and recalling \eqref{chi function}, we get
\beq
\tau\big( \th  - \tfrac{  \i  }{2}\mf{u}_k  -\i \ell \xi   \Big) \big( a \phi\big)\big(\th +\tfrac{  \i  }{2}\mf{u}_k   \big)
\big( a \phi\big)\big(\th - \tfrac{  \i  }{2}\mf{u}_k   \big) \wt{S}_{\mf{b}_k, \mf{s}} \big( \th \big)
\, =  \, \chi_{\ell;k}(\th ) \; .
\end{equation}
Finally, recalling the definitions of $\op{F}_{\mf{b}_k, \mf{s}}$ \eqref{minimal bs form factor} and $\rho_{\ell;k}$ \eqref{rho function}
one observes the identity:
\begin{equation}
\op{F}\big(\th +\tfrac{  \i  }{2}\mf{u}_k  \big) \op{F}\big(\th - \tfrac{  \i  }{2}\mf{u}_k  \big)
\big( a \phi\big)\big(\th + \tfrac{  \i  }{2}\mf{u}_k  + \i \ell \xi \big)  \, = \,  \op{F}_{\mf{b}_k, s} (\th) \cdot \rho_{\ell;k}(\th) \;.
\end{equation}
All-in-all, this thus reproduces the $1$-breather/$n$-soliton-antisolito formula.
\vspace{2mm}

Now assume that the $s$-breathers/$n$-soliton-antisoliton form factor is given by \eqref{exp generale form factor}, this for any $n$.
To obtain the expression for $s+1$ breathers with indices $k_1,\dots,k_{s+1}$ one applies the recursion formula \eqref{recursion breather 2}
to the $s$ breather/$n+2$-soliton-antisoliton form factor and computes the residue at $\de_1=\de_2 + \i \mf{u}_{k_{s+1}}$
analogously to the previous handlings, by again dealing with the pinching of the integration contour. Note that, owing
to the breathers's rapidities $\bs{\a}_s$ being generic, as much as those of the $n$ solitons/antisolitons $\bs{\be}_n$, only the very same kind
of residues arises in the computation for $s=1$. All-in-all, by following very similar steps, one obtains:
%
%
\begin{multline}
\bs{\Phi}^{\bs{\mf{b}}_{s}}_{1,\dots,n}\big[ p\big]\big(\bs{\a}_{s+1}, \bs{\be}_{n} \big) \, = \,
\pl{j=1}{s} \Big\{ \op{F}_{\mf{b}_{k_j}, \mf{s}} \Big(\alpha_j -  \tfrac{\i}{2} \mf{u}_{k_{s+1}}  - \alpha_{s+1}\Big) \cdot
\op{F}_{\mf{b}_{k_j}, \mf{s}} \Big(\alpha_j + \tfrac{\i}{2} \mf{u}_{k_{s+1}}  - \alpha_{s+1}\Big)
\pl{b=1}{n} \op{F}_{\mf{b}_{k_j}, \mf{s}} (\alpha_j - \be_b)  \Big\} \\
\times \hspace{-3mm}
\sul{ \bs{\ell}_{s+1} \in \mc{I}_{\bs{k}_{s+1}} }{ }   \hspace{-3mm}  (-1)^{ \ov{\bs{\ell}}_{s+1} }
\pl{j<t}{s} \Big\{ \op{F}_{\mf{b}_{k_j}, \mf{b}_{k_{t}}} (\alpha_j - \alpha_{t}) \La\Big( \,  ^{k_j,k_t}   _{\ell_j,\ell_t}  \mid\alpha_j - \alpha_{t} \Big)   \Big\}
\pl{j=1}{s} \pl{b=1}{n} \rho_{\ell_j;k_j}(\alpha_j - \be_b)  \\
\times \pl{j=1}{s} \Big\{
\rho_{\ell_j;k_j}\Big(\alpha_j - \tfrac{\i}{2} \mf{u}_{k_{s+1}}  - \alpha_{s+1}\Big) \cdot \rho_{\ell_j;k_j}\Big(\alpha_j + \tfrac{\i}{2} \mf{u}_{k_{s+1}}  - \alpha_{s+1}\Big) \Big\}
\cdot \pl{b=1}{n} \Big\{ \op{F}_{\mf{b}_{k_{s+1}}, \mf{s}} (\alpha_{s+1} - \be_b) \cdot\rho_{\ell_{s+1};k_{s+1}} (\alpha_{s+1} - \be_b) \Big\} \\
	\times  \hspace{-3mm} \Int{ (\msc{C}^{(\bs{k}_{s+1})}_{\bs{\a}_{s+1}, \, \bs{\be}_n} )^{m} }{} \hspace{-4mm} \dd^{m} u \,
\pl{t=1}{m} \chi_{\ell_{s+1};k_{s+1}}(\alpha_{s+1} - u_{t}) \cdot
\pl{j=1}{s} \pl{t=1}{m}   \chi_{\ell_j;k_j}(\alpha_j - u_{t}) \cdot  \pl{j=1}{s} \chi_{\ell_j;k_j}\Big(\alpha_j +\tfrac{\i}{2} \mf{u}_{k_{s+1}} - \alpha_{s+1} + \i \ell_{s+1}\xi \Big) \\
\times \wt{\Cx}_{1,\dots,n}^{\,(\ga)}\big( \bs{\be}_n \, ; \, \bs{u}_m\big)
\cdot \op{O}_{1,\dots,n}^{(\ga)}(\bs{\be}_n) \cdot g(\bs{\be}_{n},\bs{u}_{m}) \cdot
p_{\bs{\ell}_{s+1}; \bs{k}_{s+1}} \big(\bs{\alpha}_{s+1} ; \bs{\be}_n;\bs{u}_m \big) \;.
\end{multline}
It is then a sole matter of bookkeeping to recognise that the above exactly reduces to \eqref{exp generale form factor} at $s+1$. \qed

\section{Vanishing terms in the free fermion limit}
\label{appendix vanish}

\begin{lemme}

Given $\mc{R}_r\big( \bs{\ell}_r ;\bs{\be}_n; \bs{\nu}_{m-r} \big) $ as introduced in \eqref{def succ res}, in the proof of Proposition \ref{prop free fermion limit}, it holds
\beq
\lim_{\xi \tend \, \pi} \, \Int{ \big( \widetilde{\msc{C}}_{\bs{\be}_n} \big)^r }{}  \hspace{-3mm} \dd^{m-r} \nu
\, \mc{R}_r\big( \bs{\ell}_r ;\bs{\be}_n; \bs{\nu}_{m-r} \big) \; = \; 0
\enq
as soon as $r<m$.

\end{lemme}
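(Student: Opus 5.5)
The plan is to show that every term with $r<m$ carries an uncompensated factor vanishing like a power of $\xi-\pi$, while the remaining $m-r$ contour integrals over $\widetilde{\msc{C}}_{\bs{\be}_n}$ stay bounded uniformly as $\xi\searrow\pi$. First I would evaluate the $r$ residues in $\mc{R}_r$ exactly as in \eqref{contribution type 1}--\eqref{contribution type 2}. Each residue at $\be_a^{(0)}$ or $\be_a^{(1)}$ gives an $\e{O}(1)$ contribution: the factor $(\xi-\pi)$ from \eqref{residu C} cancels the $(\xi-\pi)^{-1}$ from \eqref{evaluation g ff}, and in the second type the residue comes from $g$ while $\op{C}$ is finite. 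So $\mc{R}_r$ equals $\wh{\Cx}^{(\ga)}_{1,\dots,n}(\bs{\be}_n;\bs{\nu}_{m-r})$ times the projector product $\prod_t\op{E}^{12}_{\ell_t}$ and entire factors $\tau(\nu_s-\be_{\ell_t}+j_t\i\pi)$, up to $1+\e{O}(\xi-\pi)$. The leftover integrand in $\bs{\nu}_{m-r}$ is thus a lower-rank object of the same type, with the sites $\ell_t$ frozen to $\op{v}^-$.

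Next I would study the $\nu$-integrand as $\xi\to\pi$ on the fixed contour $\widetilde{\msc{C}}_{\bs{\be}_n}$. By construction this contour no longer separates colliding poles, so it can be held at a fixed distance from every singularity uniformly in $\xi$ near $\pi$. The key point is the limiting values of the building blocks. At $\xi=\pi$ one has $\wt b(\th)=\sinh\th/\sinh(\i\pi-\th)=-1$, and $\wt c^{(\pm\ga)}(\th)=\ex{\pm\ga\th}\sinh(\i\pi)/\sinh(\i\pi-\th)$. Since $\sinh(\i\pi^2/\xi)\to\sinh(\i\pi)=0$, one gets $\wt c=\e{O}(\xi-\pi)$ uniformly on $\widetilde{\msc{C}}_{\bs{\be}_n}$, because the poles of $\wt c$ at $\be_a-\i\pi+\i k\xi$ stay off the contour. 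Every $\wt{\op{C}}(\bs{\be}_n;\nu)$ therefore contributes a factor $\xi-\pi$. Meanwhile $g\cdot p^{(\ga)}$ stays bounded on the contour: $a\phi\to -2\i/\sinh$ and $\tau\to\sinh^2/4$, with the decay at infinity controlled by the $\ex{\pm u}$ behaviour together with $|\Re\ga|\le1$. Provided $g$ has no pole on $\widetilde{\msc{C}}$ that degenerates as $\xi\to\pi$, the integrand is $\e{O}((\xi-\pi)^{m-r})$ and integrable uniformly, so dominated convergence gives the limit $0$.

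The main obstacle is the uniform control. First, one must check that the pinched poles of $g$ (from $a\phi$ at $\be_a-\i\xi$, $\be_a-2\i\pi$, and so on) and of $\wt{\op{C}}$ really all lie on consistent sides of $\widetilde{\msc{C}}_{\bs{\be}_n}$, so that a $\xi$-independent contour works. Second, and more delicate, the $\sinh[\pi\th/\xi]$ denominators in $\wt{b}$ and $\wt c$ must not produce growth at infinity along the contour that would spoil the bound. If they do, I would split the contour into a compact part, handled by the uniform $\e{O}(\xi-\pi)$ bound, and tails, where I would use the explicit exponential decay of $(a\phi)(\be-\nu)\ex{-\ga\nu}$ read off from the dilogarithm asymptotics, uniformly for $\xi$ near $\pi$.
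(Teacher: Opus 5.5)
Your proposal is correct and follows essentially the paper's route. The residues give prefactors that stay bounded as $\xi\to\pi$, since the $(\xi-\pi)$ of the $\wt{\op{C}}$-residue cancels the $(\xi-\pi)^{-1}$ coming from $g$. Each of the $m-r\geq 1$ unintegrated $\wt{\op{C}}^{\,(\ga)}(\bs{\be}_n;\nu)$ then vanishes because $\wt{\op{S}}^{\,(\ga)}\to\e{id}$ on a $\xi$-independent choice of $\widetilde{\msc{C}}_{\bs{\be}_n}$, and the limit is justified by dominated convergence through the asymptotics of $a\phi$; your $\e{O}\big((\xi-\pi)^{m-r}\big)$ rate is a mild quantitative sharpening. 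One small slip: at $\xi=\pi$ one has $\sinh(\i\pi-\th)=\sinh\th$, so $\wt b\equiv +1$ rather than $-1$, which is harmless since only $\wt c\to 0$ is used.
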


\Proof For given $\bs{j}_r \in \{0,1\}^r$, let $q = \# \{ s \in \intn{1}{r} \, : \, j_s = 0\}$ and further let $\sg_i \in \intn{1}{r}$, $i=1,\dots q$
be pairwise distinct and such that $j_{\sg_i}=0$, and likewise, $\tilde{\sg}_i \in \intn{1}{r}$, $i=1,\dots r-q$
be pairwise distinct and such that $j_{\tilde{\sg}_i}=1$. Recalling the definitions \eqref{definition beta 0 et beta 1}, denote
\beq
\la_i \, = \, \be_{\ell_{\sg_i}}^{(0)} \qquad \e{and} \qquad
\mu_i \, = \, \be_{\ell_{\tilde{\sg}_i}}^{(1)} \; .
\enq
Then observe that the residues at $\be^{(0)}_a = \be_a - \i\pi + \i\xi$ stem from the poles of $\wt{\op{C}}_{1,\dots,n}^{(\ga)}\big( \bs{\be}_n \, ; \,u\big)$
while the residues at $\be^{(1)}_a = \be_a - \i \xi$ stem from the poles of $g\big(\bs{\be}_n; \bs{u}_r,\bs{\nu}_{m-r} \big)$.
We start by computing the residues at  $u_{\tilde{\sg}_s}=\mu_s$:
\begin{multline}
	\label{residu g function}
\e{Res}\Big( g\big(\bs{\be}_n; \bs{u}_r,\bs{\nu}_{m-r} \big) \pl{s=1}{r-q} \dd  u_{\tilde{\sg}_s}, u_{\tilde{\sg}_s}=\mu_s \Big)_{\mid u_{\sg_s}=\la_s }
\, =  \, \Big\{ -  C^{(\i  \xi)}_{a\phi} \Big\}^{r-q} g\big(\bs{\be}_n; \bs{\la}_q,\bs{\nu}_{m-r} \big) \\
\times \pl{t=1}{r-q} \pl{\substack{k=1 \\k \neq \ell_{\tilde{\sg}_t} } }{ n }  (a\phi)(\be_k - \mu_t)
\pl{k<t}{r-q} \tau(\mu_{kt}) \pl{t=1}{r-q} \Big\{ \pl{k=1}{q} \tau(\la_k-\mu_t)  \pl{k=1}{m-r} \tau(\nu_k-\mu_t)  \Big\} \;.
\end{multline}
where $C^{(\a)}_{a\phi}$ is as defined in \eqref{definition residu de a phi}.

Further, oberve that
\begin{equation}
	\label{residu matrice S}
	\e{Res} \Big( \wt{\op{S}}^{\, (\ga)}(u) \dd u , u = \be-\i(\pi - \xi) \Big)  = \f{\xi}{\pi} \cdot  \sinh\left(\i\tfrac{\pi^2}{\xi}\right) \cdot \left(\ba{cccc}  0 & 0 & 0 & 0  \\
	0  & 1  & - \ex{\i\ga (\pi-\xi)} & 0  \\ 
	0 &  -\ex{-\i\ga (\pi-\xi)} &  1  &  0  \\ 
	0 &  0      &   0       &   0  \ea \right) \;,
\end{equation}
and set $\e{Res} \Big( \wt{\op{C}}_{1,\dots,n}^{(\ga)}\big( \bs{\be}_n \, ; \,u\big) \dd u , u = \be_t-\i(\pi - \xi) \Big) \, = \,
\mf{C}_{a}^{(\ga)}\big( \bs{\be}_n  \big)$. Thus,
\beq
\e{Res} \Big(  \wh{\Cx}_{1,\dots,n}^{(\ga)}\big( \bs{\be}_n \, ; \, \bs{u}_r, \bs{\nu}_{m-r} \big) \pl{s=1}{q} \dd  u_{\sg_s}, u_{\sg_s}=\la_s \Big)_{\mid u_{\tilde{\sg}_s}=\mu_s }
 \; = \;
\wh{\Cx}_{1,\dots,n}^{(\ga)}\big( \bs{\be}_n \, ; \, \bs{\mu}_{r-q}\big) \,   \pl{s=1}{q}\mf{C}_{ \ell_{\sg_s} }^{(\ga)}\big( \bs{\be}_n  \big) \;.
\enq
Further, one observe that $\wt{\op{S}}^{\, (\ga)}_{p0}(u) \tend  \e{id}_0\e{id}_p$ and $\xi \tend \pi$, and thus
$\wt{\op{T}}_{1,\dots,n;0}^{(\ga)}\big( \bs{\be}_n \, ; \,\la\big) \tend  \e{id}_{1,\dots , n;0}$. In particular, this implies that
\begin{equation}
\wt{\op{C}}_{1,\dots,n}^{(\ga)}\big( \bs{\be}_n \, ; \,u_{j}\big) \underset{\xi \rightarrow \pi}{\longrightarrow} (\op{v}_0^-)^{\op{t}} \cdot \e{id}_{1,\dots,n} \cdot \op{v}_0^+ = 0 \;.
\end{equation}
In its turn this yields that pointwise in $\bs{\nu}_{m-r} \in \widetilde{\msc{C}}_{\bs{\be}_n}$
\begin{equation}
\e{Res} \Big( \wh{\Cx}_{1,\dots,n}^{(\ga)}\big( \bs{\be}_n \, ; \, \bs{u}_r, \bs{\nu}_{m-r} \big)  \dd^r u \, , \, u_s=\be_{\ell_s}^{(j_s)} \Big)
\underset{\xi \rightarrow \pi}{\longrightarrow} 0 \;.
\end{equation}
where $\wh{\Cx}_{1,\dots,n}^{(\ga)}$ is as it has been defined in \eqref{definition hat C}.
In order to conclude about the vanishing of the integral, we need to establish a uniform in $\xi$ close to $\pi$ domination of the integrand.
By virtue of the $\Re(\la)\tend \pm \infty$, $|\Im(\la)| < \pi + \tf{ \xi }{ 2 }$  asymptotics of the quantum dilogarithm
\beq
\varpi(\la) \; = \; \ex{ \mp \f{\i}{4\xi} \big( \la^2+ \tfrac{ \xi^2 + 4\pi^2  }{12}  \big) \, + \, \e{O}(\la^{-\infty})  }
\enq
one gets that
\beq
\label{asymptotic aphi}
(a\phi)(\la) \; = \; \f{  \mp 2 \i  \ex{ \pm \f{\i}{4\xi} \big( \f{\xi^2}{4}-(\pi -\f{\xi}{2})^2  \big)  }   }{  \op{F}^2\big(\i \tfrac{\pi}{2}\big) \varpi^2\big( \i \tfrac{\xi-\pi}{2} \big)   }  
\cdot \ex{ \mp \f{\la}{2} \mp \f{\pi \la }{ 2 \xi} }  \cdot \Big( 1 \, + \,   \e{O}(\la^{-\infty}) \Big) \;.
\enq
Then, a direct estimation building on $\xi$ being close to $\pi$ and the fact that $2m=n$ and $r<m$ leads to:
\begin{equation}
\Big| \mc{R}_r\big( \bs{\ell}_r ;\bs{\be}_n; \bs{\nu}_{m-r} \big) \Big|  \, <  \, K \cdot  \pl{p=1}{m-r}\ex{ \f{\ups_p}{2}\Re(\nu_{p}) } \quad
\e{as} \quad \Re(\nu_{p}) \tend \ups_p \infty \;, \; \ups_p \in \{\pm\},
\end{equation}
where $K>0$ is a constant (although depending on $\bs{\be}_n$ fixed). \qed

\section{Some auxiliary results}
\label{Appendice Resultats auxiliaires}
\subsection{Algebraic relations for the monodromy matrix}

The monodromy matrix satisfies the Yang-Baxter equation 
\beq
\op{S}_{00^{\prime}}^{(\ga)}(\la-\mu) \cdot \op{T}_{1,\dots,n; 0^{\prime} }^{(\ga)}(\bs{\be}_n ; \mu)  \cdot \op{T}_{1,\dots,n; 0 }^{(\ga)}(\bs{\be}_n ; \la) \; = \; 
\op{T}_{1,\dots,n; 0 }^{(\ga)}(\bs{\be}_n ; \la) \cdot \op{T}_{1,\dots,n; 0^{\prime} }^{(\ga)}(\bs{\be}_n ; \mu) \cdot  \op{S}_{00^{\prime} }^{(\ga)}(\la-\mu) \;. 
\enq
The latter gives rise to the exchange relations:
\beqa
\op{C}_{1,\dots,n}^{(\ga)}(\bs{\be}_n ; \mu) \cdot  \op{A}_{1,\dots,n}^{(\ga)}(\bs{\be}_n ; \la)  	& = & \f{ a(\la-\mu) }{ b(\la-\mu) }  \, \op{A}_{1,\dots,n}^{(\ga)}(\bs{\be}_n ; \la)
\cdot \op{C}_{1,\dots,n}^{(\ga)}(\bs{\be}_n ; \mu)  
\, - \,  \f{ c^{(\ga)}(\la-\mu) }{ b(\la-\mu) } \,  \op{A}_{1,\dots,n}^{(\ga)}(\bs{\be}_n ; \mu) \cdot \op{C}_{1,\dots,n}^{(\ga)}(\bs{\be}_n ; \la) \nonumber \\
\op{C}_{1,\dots,n}^{(\ga)}(\bs{\be}_n ; \mu) \cdot  \op{D}_{1,\dots,n}^{(\ga)}(\bs{\be}_n ; \la)  	& = & \f{ a(\mu-\la) }{ b(\mu-\la) } \, 
\op{D}_{1,\dots,n}^{(\ga)}(\bs{\be}_n ; \la) \cdot \op{C}_{1,\dots,n}^{(\ga)}(\bs{\be}_n ; \mu)  
\, - \,  \f{ c^{(-\ga)}(\mu-\la) }{ b(\mu-\la) } \,  \op{D}_{1,\dots,n}^{(\ga)}(\bs{\be}_n ; \mu) \cdot \op{C}_{1,\dots,n}^{(\ga)}(\bs{\be}_n ; \la)    \;. 
\nonumber 
\eeqa
Similarly, one may establish the exchange relation
\beq
\op{S}_{00^{\prime}}^{(\ga)}(\be_1-\mu + 2\i\pi) \cdot \op{T}_{1,\dots,n; 0^{\prime} }^{(\ga)}(\bs{\be}_n ; \mu)  \cdot \op{T}_{1,\dots,n; 0 }^{(\ga)}(\bs{\be}_n ; \be_1) \; = \; 
\op{T}_{1,\dots,n; 0 }^{(\ga)}(\bs{\be}_n ; \be_1) \cdot \op{T}_{1,\dots,n; 0^{\prime} }^{(\ga)}(\bs{\be}_n +2\i\pi \bs{e}_1; \mu) \cdot  \op{S}_{00^{\prime} }^{(\ga)}(\be_1-\mu) \;,  
\label{ecriture YBE avec des shifts de rapidities}
\enq
which yields 
\beqa
\op{C}_{1,\dots,n}^{(\ga)}(\bs{\be}_n ; \mu) \cdot  \op{A}_{1,\dots,n}^{(\ga)}(\bs{\be}_n ; \be_1)  	& = & \f{ a(\be_1-\mu) }{ b(\be_1-\mu+2\i\pi) }  \, \op{A}_{1,\dots,n}^{(\ga)}(\bs{\be}_n ; \be_1)
\cdot \op{C}_{1,\dots,n}^{(\ga)}(\bs{\be}_n +2\i\pi \bs{e}_1 ; \mu)    \\
&& \hspace{4cm} \, - \,  \f{ c^{(\ga)}( \be_1 + 2\i\pi - \mu ) }{ b(\be_1 + 2\i\pi - \mu ) } \,  \op{A}_{1,\dots,n}^{(\ga)}(\bs{\be}_n ; \mu) \cdot \op{C}_{1,\dots,n}^{(\ga)}(\bs{\be}_n ; \be_1) \nonumber \\
\op{C}_{1,\dots,n}^{(\ga)}(\bs{\be}_n ; \mu) \cdot  \op{D}_{1,\dots,n}^{(\ga)}(\bs{\be}_n ; \be_1)  	& = & \f{ a(\mu-\be_1-2\i\pi ) }{ b(\mu-\be_1) } \,  \op{D}_{1,\dots,n}^{(\ga)}(\bs{\be}_n ; \be_1) 
\cdot \op{C}_{1,\dots,n}^{(\ga)}(\bs{\be}_n +2\i\pi \bs{e}_1; \mu) \\
&& \hspace{4cm}  
\, - \,  \f{ c^{(-\ga)}(\mu-\be_1) }{ b(\mu-\be_1) } \,  \op{D}_{1,\dots,n}^{(\ga)}(\bs{\be}_n ; \mu) \cdot \op{C}_{1,\dots,n}^{(\ga)}(\bs{\be}_n ; \be_1)    \;. 
\nonumber 
\eeqa
Note that the second exchange relation involving $\op{D}_{1,\dots,n}^{(\ga)}$ follows from the overall exchange relation 
\beq
\op{T}_{1,\dots,n; 0^{\prime} }^{(\ga)}(\bs{\be}_n ; \mu)  \cdot \op{T}_{1,\dots,n; 0 }^{(\ga)}(\bs{\be}_n ; \be_1)  \cdot  \op{S}_{00^{\prime} }^{(-\ga)}(\mu-\be_1) \; = \; 
\op{S}_{00^{\prime}}^{(-\ga)}(\mu- \be_1- 2\i\pi) \cdot \op{T}_{1,\dots,n; 0 }^{(\ga)}(\bs{\be}_n ; \be_1) \cdot \op{T}_{1,\dots,n; 0^{\prime} }^{(\ga)}(\bs{\be}_n +2\i\pi \bs{e}_1; \mu) 
\enq
which is obtained from \eqref{ecriture YBE avec des shifts de rapidities} by using that $\op{S}^{(-\ga)}(-\la)\op{S}^{(\ga)}(\la) = \e{id}$. \\
From the above exchange relation, recalling $\Cx_{1,\dots,n}^{(\ga)}\big(\bs{\be}_n;\bs{u}_m \big)$ as introduced in \eqref{definition C vect de Bethe}, one infers that 
\bem
\Cx_{1,\dots,n}^{(\ga)}\big( \bs{\be}_n \, ; \, \bs{u}_m\big) \cdot  \op{A}_{1,\dots,n}^{(\ga)}(\bs{\be}_n ; \be_1)   \; = \; 
\pl{k=1}{m}\f{ a(\be_1 - u_k) }{ b(\be_1 - u_k+2\i\pi) } \cdot \pl{k=1}{n} a(\be_{k1}) \cdot  \Cx_{1,\dots,n}^{(\ga)}\big( \bs{\be}_n +2\i\pi \bs{e}_1\, ; \, \bs{u}_m\big) \\
\, - \, \sul{p=1}{m}  \f{ c^{(\ga)}( \be_1 + 2\i\pi - u_p ) }{ b(\be_1 + 2\i\pi - u_p ) } \pl{\substack{c=1 \\ \not= p } }{ m } \f{a(u_{pc}) }{b(u_{pc})} 
\pl{k=1}{n} a(\be_{k} - u_p) \cdot \Cx_{1,\dots,n}^{(\ga)}\big( \bs{\be}_n  \, ; \, (\wh{\bs{u}}_m^{(p)},\be_1)  \big)
\label{ecriture recurrence pour A intermediaire}
\end{multline}
and similarly for the action of the $\op{D}$ operator
\bem
\Cx_{1,\dots,n}^{(\ga)}\big( \bs{\be}_n \, ; \, \bs{u}_m\big) \cdot  \op{D}_{1,\dots,n}^{(\ga)}(\bs{\be}_n ; \be_1)   \; = \; 
\pl{k=1}{m}\f{ a( u_k-\be_1-2\i\pi) }{ b( u_k - \be_1) } \cdot \pl{k=1}{n} b(\be_{k1}) \cdot  \Cx_{1,\dots,n}^{(\ga)}\big( \bs{\be}_n +2\i\pi \bs{e}_1\, ; \, \bs{u}_m\big) \\
\, - \, \sul{p=1}{m}  \f{ c^{(-\ga)}( u_p - \be_1) }{ b( u_p - \be_1) } \pl{\substack{c=1 \\ \not= p } }{ m } \f{a( u_{cp} ) }{b( u_{cp} )} 
\pl{k=1}{n} b(\be_{k} - u_p) \cdot \Cx_{1,\dots,n}^{(\ga)}\big( \bs{\be}_n  \, ; \, (\wh{\bs{u}}_m^{(p)},\be_1)  \big)
\label{ecriture recurrence pour D intermediaire}
\end{multline}
in which, we agree upon 
\beq
\wh{\bs{u}}_m^{(p)} \, = \, \big( u_1,\dots, u_{p-1}, u_{p+1},\dots, u_{m} \big) \;.
\enq
Note that the first line in the \textit{rhs} of \eqref{ecriture recurrence pour D intermediaire} actually vanishes. Thus, expressing the latter exchange relations in terms of 
$\wt{\Cx}_{1,\dots,n}^{(\ga)}\big( \bs{\be}_n \, ; \, \bs{u}_m\big)$ as introduced in \eqref{definition tilde C vect de Bethe}, one gets 
\bem
\wt{\Cx}_{1,\dots,n}^{(\ga)}\big( \bs{\be}_n \, ; \, \bs{u}_m\big) \cdot  \op{A}_{1,\dots,n}^{(\ga)}(\bs{\be}_n ; \be_1)   \; = \; 
\f{ \pl{k=1}{n} a(\be_{k1}) }{ \pl{k=1}{m} \wt{b}(\be_1 - u_k+2\i\pi) }  \cdot  \wt{\Cx}_{1,\dots,n}^{(\ga)}\big( \bs{\be}_n +2\i\pi \bs{e}_1\, ; \, \bs{u}_m\big) \\
\, - \, \sul{p=1}{m}  \f{ c^{(\ga)}( \be_1 + 2\i\pi - u_p ) }{ b(\be_1 + 2\i\pi - u_p ) } \pl{\substack{c=1 \\ \not= p } }{ m } \f{a(u_{pc}) }{b(u_{pc})} 
\pl{k=1}{n} a(\be_{k1}) \cdot \wt{\Cx}_{1,\dots,n}^{(\ga)}\big( \bs{\be}_n  \, ; \, (\wh{\bs{u}}_m^{(p)},\be_1)  \big)
\label{ecriture recurrence pour A}
\end{multline}
and similarly for the action of the $\op{D}$ operator
\bem
\wt{\Cx}_{1,\dots,n}^{(\ga)}\big( \bs{\be}_n \, ; \, \bs{u}_m\big) \cdot  \op{D}_{1,\dots,n}^{(\ga)}(\bs{\be}_n ; \be_1)    \\
\; =    \, - \, \sul{p=1}{m}  \f{ c^{(-\ga)}( u_p - \be_1) }{ b( u_p - \be_1) } \pl{\substack{c=1 \\ \not= p } }{ m } \f{a( u_{cp} ) }{b( u_{cp} )} 
\pl{k=1}{n} \Big\{ a(\be_{k1}) \wt{b}(\be_{k} - u_p) \Big\} \cdot \wt{\Cx}_{1,\dots,n}^{(\ga)}\big( \bs{\be}_n  \, ; \, (\wh{\bs{u}}_m^{(p)},\be_1)  \big)
\label{ecriture recurrence pour D}
\end{multline}

\begin{lemme}

The below inductive form of the operator action takes place
\bem
\Big(\op{v}_1^{+}\cdots \op{v}_{n-1}^{+} \op{v}_n^{-} \Big)^{\op{t}} \pl{p=1}{m-1} \wt{\op{C}}_{1,\dots, n}^{\,(\ga)}\big(\bs{\be}_{n} ; u_{p}) \; = \; 
\pl{s=1}{m-1} \Big\{ \,  \wt{b}(\be_1-u_s) \, \wt{b}(\be_n-u_s)  \Big\} \, \Big(\op{v}_2^{+}\cdots \op{v}_{n-1}^{+} \Big)^{\op{t}} \pl{p=1}{m-1} \wt{\op{C}}_{2,\dots, n-1}^{\,(\ga)}\big(\bs{\be}_{n-1}^{\prime} ; u_{p}) 
\cdot \Big(\op{v}_1^{+}\, \op{v}_{n}^{-} \Big)^{\op{t}}  \\
\, + \, \sul{k=1}{m-1} \wt{c}^{\,(\ga)}(\be_1-u_k) \cdot \pl{s=1}{m-1} \wt{b}(\be_n-u_s) \cdot \pl{\substack{ s=1 \\ \not=k}  }{ m-1 } \f{ \wt{b}(\be_1-u_s) }{ \wt{b}(u_{k} - u_s) }
\cdot  \Big(\op{v}_2^{+}\cdots \op{v}_{n-1}^{+} \Big)^{\op{t}} \pl{ \substack{ p=1 \\ \not=k}  }{m-1} \wt{\op{C}}_{2,\dots, n-1}^{\,(\ga)}\big(\bs{\be}_{n-1}^{\prime} ; u_{p}) 
\cdot \Big(\op{v}_1^{-}\, \op{v}_{n}^{-} \Big)^{\op{t}} \;.
\label{ecriture action produit C sur vecteur + ... +-}
\end{multline}
Likewise, one has 
\bem
\Big(\op{v}_1^{-} \op{v}_2^{+} \cdots \op{v}_{n}^{+}  \Big)^{\op{t}} \pl{p=1}{m-1} \wt{\op{C}}_{1,\dots, n}^{\,(\ga)}\big(\bs{\be}_{n} ; u_{p}) \; = \; 
\Big(\op{v}_2^{+}\cdots \op{v}_{n-1}^{+} \Big)^{\op{t}} \pl{p=1}{m-1} \wt{\op{C}}_{2,\dots, n-1}^{\,(\ga)}\big(\bs{\be}_{n-1}^{\prime} ; u_{p}) 
\Big(\op{v}_1^{-}\, \op{v}_{n}^{+} \Big)^{\op{t}}  \\
\, + \, \sul{k=1}{m-1} \wt{c}^{\,(\ga)}(\be_n-u_k) \cdot  \f{   \pl{p=2}{n-1} \wt{b}(\be_p-u_k) }{ \pl{\substack{ s=1 \\ \not=k}  }{ m-1 } \wt{b}(u_{s} - u_k) }
\;  \Big(\op{v}_2^{+}\cdots \op{v}_{n-1}^{+} \Big)^{\op{t}} \pl{ \substack{ p=1 \\ \not=k}  }{m-1} \wt{\op{C}}_{2,\dots, n-1}^{\,(\ga)}\big(\bs{\be}_{n-1}^{\prime} ; u_{p}) 
\cdot \Big(\op{v}_1^{-}\, \op{v}_{n}^{-} \Big)^{\op{t}} \;. 
\label{ecriture action produit C sur vecteur -+ ... +}
\end{multline}

\end{lemme}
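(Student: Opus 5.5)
The plan is to prove both identities by the standard "first site, last site" reduction of the monodromy matrix. I split the product of $\wt{\op{S}}$-matrices in $\wt{\op{T}}^{(\ga)}_{1,\dots,n;0}(\bs{\be}_n;u)$ as $\wt{\op{S}}^{(\ga)}_{10}\cdot \wt{\op{T}}^{(\ga)}_{2,\dots,n-1;0}\cdot \wt{\op{S}}^{(\ga)}_{n0}$, and read off the $\wt{\op{C}}$ entry as a $2\times2$ matrix product in auxiliary space. Each $\wt{\op{C}}_{1,\dots,n}^{\,(\ga)}(\bs{\be}_n;u)$ then decomposes into four terms built from $\wt{\op{A}},\wt{\op{B}},\wt{\op{C}},\wt{\op{D}}$ on the middle sites $2,\dots,n-1$. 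Their coefficients are the local operators acting on sites $1$ and $n$, i.e.\ entries of $\wt{\op{S}}^{(\ga)}_{10}(\be_1-u)$ and $\wt{\op{S}}^{(\ga)}_{n0}(\be_n-u)$.

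Acting on a covector of the form $(\op{v}^{\epsilon_1}_1)^{\op{t}}\otimes X\otimes(\op{v}^{\epsilon_n}_n)^{\op{t}}$, the site-$1$ and site-$n$ factors act by the explicit rules already used in Appendix~\ref{subsection axiom iv}:
\[
(\op{v}^+)^{\op{t}}\mapsto \wt{b}\,(\op{v}^+)^{\op{t}} \ \text{or}\ \wt{c}^{\,(\pm\ga)}(\op{v}^-)^{\op{t}}, \qquad (\op{v}^-)^{\op{t}}\mapsto (\op{v}^-)^{\op{t}} \ \text{(diagonal weight }1\text{)} .
\]
Charge conservation makes these the only non-zero transitions.

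For \eqref{ecriture action produit C sur vecteur + ... +-} I start from $(\op{v}^+_1)^{\op{t}}\otimes\Om_{n-2}\otimes(\op{v}^-_n)^{\op{t}}$ and apply the $\wt{\op{C}}$'s one at a time, keeping track of the state of sites $1$ and $n$. Site $n$ is in state $-$, so it can only lower, which is impossible. It therefore stays $-$ throughout. Its contribution to each step is the lower-right-type entry, which I expect to give the factor $\wt{b}(\be_n-u_s)$ for every $s$; the auxiliary index is lowered at site $1$ or in the middle, and reaches site $n$ already in state $-$.

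Site $1$ starts at $+$ and can flip to $-$ at most once, at some step $k$. Before the flip it contributes $\wt{b}(\be_1-u_s)$. The middle chain then receives a $\wt{\op{C}}$ or, at the flip step, an $\wt{\op{A}}$-type action. After the flip, site $1$ is diagonal with weight $1$.

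The no-flip sector directly gives the first line, $\prod_s \wt{b}(\be_1-u_s)\wt{b}(\be_n-u_s)$ times the reduced Bethe covector. The flip-at-$k$ sector produces an $\wt{\op{A}}_{2,\dots,n-1}(u_k)$ inserted in the string of $\wt{\op{C}}(u_p)$, $p<k$, acting on the left pseudo-vacuum. There I use the $\op{C}$–$\op{A}$ exchange relation of Appendix~\ref{Appendice Resultats auxiliaires} (in its tilde-normalised form) to move $\wt{\op{A}}$ leftward onto $\Om_{n-2}$, where it acts as the scalar $1$ in tilde normalisation. The unwanted terms of the exchange produce the symmetrisation over which $u$ carries the flip. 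Together with the symmetry of products of $\wt{\op{C}}$'s in their spectral parameters (commutativity of $\wt{\op{C}}$'s from the Yang–Baxter algebra), this should collapse to the single sum in $k$ with the product $\prod_{s\ne k}\wt{b}(\be_1-u_s)/\wt{b}(u_k-u_s)$. This is the usual Izergin–Korepin/off-shell Bethe argument.

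The second identity \eqref{ecriture action produit C sur vecteur -+ ... +} is the mirror image. Site $1$ is $-$, so it is inert with weight $1$. Site $n$ is $+$ and may flip once, via $\wt{c}^{\,(\ga)}(\be_n-u_k)$. Before the flip the auxiliary line passes through the middle sites, requiring an $\wt{\op{D}}_{2,\dots,n-1}(u_k)$. Moving it with the $\op{C}$–$\op{D}$ exchange relation to the pseudo-vacuum gives $\prod_{p=2}^{n-1}\wt{b}(\be_p-u_k)$ and the factor $1/\prod_{s\ne k}\wt{b}(u_s-u_k)$. The no-flip term carries $\wt{b}(\be_n-u)$-type weights, which the statement shows as trivial. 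I will have to check carefully that the ordering conventions (site $n$ last in the monodromy) make those weights indeed $1$, i.e.\ that site $n$ at $+$ with auxiliary $+$ gives weight $1$.

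The main obstacle is the bookkeeping of the exchange-relation step: ensuring that the "unwanted" terms resum exactly into the stated coefficients, and getting the twist factors $\ex{\pm\ga\th}$ in $\wt{c}^{\,(\pm\ga)}$ consistent with the $\op{C}$–$\op{A}$ and $\op{C}$–$\op{D}$ relations. I would verify it first for $m-1=1,2$ and then induct on $m$ by peeling off the last $\wt{\op{C}}$.
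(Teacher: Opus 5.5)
Your plan follows the paper's proof. You peel off sites $1$ and $n$ with the local sandwich identities for $\wt{\op{S}}^{\,(\ga)}$, which leaves at most one $\wt{\op{A}}_{2,\dots,n-1}(u_k)$ (resp.\ $\wt{\op{D}}_{2,\dots,n-1}(u_k)$) insertion weighted by $\wt{c}^{\,(\ga)}$, then use the $\op{C}$--$\op{A}$ (resp.\ $\op{C}$--$\op{D}$) exchange relation and the commutativity of the $\wt{\op{C}}$'s. The paper avoids resumming the unwanted terms: it computes only the coefficient of the term with the last variable $u_{m-1}$ removed, which comes solely from the direct term of the last flip, and gets the other coefficients by symmetry.

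Your second identity needs a bookkeeping fix, since your local rules ignore the auxiliary state at site $n$. The no-flip steps carry weight $1$ at site $n$, while after site $n$ flips at step $k$ every later $\wt{\op{C}}(u_s)$, $s>k$, picks up $\wt{b}(\be_n-u_s)$. Moreover, the direct exchange of $\wt{\op{D}}(u_k)$ only yields $\prod_{s<k}\wt{b}(u_s-u_k)^{-1}$. So the stated $\prod_{s\neq k}$ coefficient emerges only through the same symmetry argument, exactly as in your first identity.
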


\Proof 

We start with \eqref{ecriture action produit C sur vecteur + ... +-}. Observe that one has
\bem
\big(\op{v}_n^{-}\big)^{\op{t}} \wt{\op{S}}^{\, (\ga)}_{n0}(\th) \, \op{v}_0^+ \, = \, \big(\op{v}_n^{-}\big)^{\op{t}} \wt{\op{S}}^{\, (\ga)}_{n0}(\th) \, 
\Big\{  \op{v}_0^+ \op{v}_n^{+}\big(\op{v}_n^{+}\big)^{\op{t}}  \, + \, \op{v}_0^+ \op{v}_n^{-}\big(\op{v}_n^{-}\big)^{\op{t}}   \Big\}  \\
\, = \, \big(\op{v}_n^{-}\big)^{\op{t}} \bigg\{    \Big[ \wt{\op{S}}^{\, (\ga)}(\th) \Big]^{++}_{++}  \op{v}_0^+ \op{v}_n^{+}\big(\op{v}_n^{+}\big)^{\op{t}} 
\, + \,  \Big[ \wt{\op{S}}^{\, (\ga)}(\th) \Big]^{-+}_{-+}  \op{v}_0^+ \op{v}_n^{-}\big(\op{v}_n^{-}\big)^{\op{t}}  \, + \, 
\Big[ \wt{\op{S}}^{\, (\ga)}(\th) \Big]^{-+}_{+-}  \op{v}_0^- \op{v}_n^{+}\big(\op{v}_n^{-}\big)^{\op{t}}  \bigg\}  \, = \, \wt{b}(\th) \op{v}_0^+ \big(\op{v}_n^{-}\big)^{\op{t}} \;. 
\label{ecriture sandwich Sno avec Vn+ transpose vO+}
\end{multline}
This first identity entails that 
\beq
\big(\op{v}_n^{-}\big)^{\op{t}} \,  \wt{\op{C}}^{\, (\ga)}_{1,\dots,n}\big( \bs{\be}_n ;\bs{u}_m\big) \, = \, \pl{s=1}{m}\wt{b}(\be_n-u_s) \cdot 
\wt{\op{C}}^{\, (\ga)}_{1,\dots,n-1}\big( \bs{\be}_{n-1} ; \bs{u}_m\big) \, \big(\op{v}_n^{-}\big)^{\op{t}} \;, 
\label{ecriture action transpose vn- sur produit des C}
\enq
in which we agree upon 
\beq
\wt{\op{C}}^{\, (\ga)}_{1,\dots,n}\big( \bs{\be}_n ; \bs{u}_m\big) \, = \, \pl{s=1}{m} \wt{\op{C}}^{\, (\ga)}_{1,\dots,n}\big( \bs{\be}_n ; u_s \big) \;. 
\enq
Also, clearly, one has
\beq
\big(\op{v}_1^{-}\big)^{\op{t}} \,  \wt{\op{C}}^{\, (\ga)}_{1,\dots,n}\big( \bs{\be}_n ; \bs{u}_m\big) \, = \,
\wt{\op{C}}^{\, (\ga)}_{2,\dots,n}\big( \bs{\be}_n^{\prime} ; \bs{u}_m\big) \, \big(\op{v}_1^{-}\big)^{\op{t}} \;. 
\enq
Further, it holds

\bem
\big(\op{v}_1^{+} \op{v}_0^- \big)^{\op{t}} \wt{\op{S}}^{\, (\ga)}_{10}(\th)    \, = \, \big(\op{v}_1^{+} \op{v}_0^- \big)^{\op{t}} \wt{\op{S}}^{\, (\ga)}_{10}(\th) \, 
\Big\{   \op{v}_1^{+} \op{v}_0^-\big(\op{v}_1^{+}\op{v}_0^{-}\big)^{\op{t}}  \, + \, \op{v}_1^{-} \op{v}_0^+\big(\op{v}_1^{-}\op{v}_0^{+}\big)^{\op{t}}   \Big\}  \\
\, = \,      \Big[ \wt{\op{S}}^{\, (\ga)}(\th) \Big]^{+-}_{+-}\big(\op{v}_1^{+}\op{v}_0^{-}\big)^{\op{t}} 
\, + \,  \Big[ \wt{\op{S}}^{\, (\ga)}(\th) \Big]^{-+}_{+-}  \big(\op{v}_1^{-}\op{v}_0^{+}\big)^{\op{t}}    \, = \,
\wt{b}(\th) \big(\op{v}_1^{+}\op{v}_0^{-}\big)^{\op{t}}  \, + \,  \wt{c}^{\,(\ga)}(\th) \big(\op{v}_1^{+}\op{v}_0^{-}\big)^{\op{t}}   \;. 
\end{multline}
From there, one infers that 
\beq
\big(\op{v}_1^{+}  \big)^{\op{t}}\,  \wt{\op{C}}^{\, (\ga)}_{1,\dots,n}\big( \bs{\be}_n ; u \big) \, = \, \wt{b}(\be_1-u) \,  \wt{\op{C}}^{\, (\ga)}_{2,\dots,n}\big( \bs{\be}_n^{\prime} ; u \big) \, \big(\op{v}_1^{+}  \big)^{\op{t}} 
\, + \, \wt{c}^{\,(\ga)}(\be_1-u) \,  \wt{\op{A}}^{\, (\ga)}_{2,\dots,n}\big( \bs{\be}_n^{\prime} ; u \big) \, \big(\op{v}_1^{-}  \big)^{\op{t}} \;.
\enq
This induction immediately leads to 
\bem
\big(\op{v}_1^{+}  \big)^{\op{t}}\,  \wt{\op{C}}^{\, (\ga)}_{1,\dots,n-1}\big( \bs{\be}_n ;  \bs{u}_m \big) \, = \, 
\pl{s=1}{m}\wt{b}(\be_1-u_s) \,  \wt{\op{C}}^{\, (\ga)}_{2,\dots,n-1}\big( \bs{\be}_n^{\prime} ;  \bs{u}_m \big) \, \big(\op{v}_1^{+}  \big)^{\op{t}}  \\
\, + \, \sul{k=1}{m} \wt{c}^{\,(\ga)}(\be_1-u_k) \, \pl{s=1}{k-1}\wt{b}(\be_1-u_s)   
\cdot  \wt{\op{C}}^{\, (\ga)}_{2,\dots,n-1}\big( \bs{\be}_n^{\prime} ;  \bs{u}_{k-1}\big)  \cdot \wt{\op{A}}^{\, (\ga)}_{2,\dots,n-1}\big( \bs{\be}_n^{\prime} ; u_k \big) 
\cdot  \wt{\op{C}}^{\, (\ga)}_{2,\dots,n-1}\big( \bs{\be}_n^{\prime} ;  \bs{u}_m^{(k+1)} \big)  \,  \big(\op{v}_1^{-}  \big)^{\op{t}} 
\label{ecriture action v+ transpose sur tilde C}
\end{multline}
where we have set $\bs{u}_m^{(k+1)}=\big( u_{k+1},\dots, u_m \big)$. Therefore, one gets that there are coefficients $\varphi_1,\dots,\varphi_m$ such that
\bem
\big(\op{v}_1^{+} \cdots \op{v}_{n-1}^{+}   \big)^{\op{t}}\,  \wt{\op{C}}^{\, (\ga)}_{1,\dots,n-1}\big( \bs{\be}_n ;  \bs{u}_m \big) \, = \, 
\pl{s=1}{m}\wt{b}(\be_1-u_s) \, \big(\op{v}_2^{+} \cdots \op{v}_{n-1}^{+}   \big)^{\op{t}}\,  \wt{\op{C}}^{\, (\ga)}_{2,\dots,n-1}\big( \bs{\be}_n^{\prime} ;  \bs{u}_m \big) \,  \big(\op{v}_1^{+}  \big)^{\op{t}}  \\
\, + \, \sul{k=1}{m}    \vp_k \cdot
\big(\op{v}_2^{+} \cdots \op{v}_{n-1}^{+}   \big)^{\op{t}}\,  \wt{\op{C}}^{\, (\ga)}_{2,\dots,n-1}\big( \bs{\be}_n^{\prime} ;  \wh{\bs{u}}_{m}^{(k)}\big)  \,  \big(\op{v}_1^{-}  \big)^{\op{t}} \;.
\end{multline}
The only way to generate the coefficient $\vp_m$ is through the direct action of the operator $ \wt{\op{A}}^{\, (\ga)}_{2,\dots,n-1}\big( \bs{\be}_n^{\prime} ; u_k \big) $ in \eqref{ecriture action v+ transpose sur tilde C}
for $k=m$. Hence,
\beq
\vp_m \, = \, \wt{c}^{\,(\ga)}(\be_1-u_m) \, \pl{ s=1 }{m-1} \f{ \wt{b}(\be_1-u_s) }{  \wt{b}(u_m-u_s) } \;. 
\enq
Therefore, by symmetry and upon applying \eqref{ecriture action transpose vn- sur produit des C}, one gets 
\bem
\big(\op{v}_1^{+} \cdots \op{v}_{n-1}^{+}  \op{v}_{n}^{-}  \big)^{\op{t}}\,  \wt{\op{C}}^{\, (\ga)}_{1,\dots,n}\big( \bs{\be}_n ;  \bs{u}_m \big) \, = \, 
\pl{s=1}{m}\Big\{ \wt{b}(\be_1-u_s)  \wt{b}(\be_n-u_s) \Big\}\, \big(\op{v}_2^{+} \cdots \op{v}_{n-1}^{+}   \big)^{\op{t}}\,
\wt{\op{C}}^{\, (\ga)}_{2,\dots,n-1}\big( \bs{\be}_n^{\prime} ;  \bs{u}_m \big) \,  \big(\op{v}_1^{+}  \op{v}_n^{-} \big)^{\op{t}}  \\
\, + \, \sul{k=1}{m}    \wt{c}^{\,(\ga)}(\be_1-u_k) \, \pl{ \substack{ s=1 \\ \not= k } }{m} \f{ \wt{b}(\be_1-u_s) }{  \wt{b}(u_m-u_s) } \, \pl{s=1}{m} \wt{b}(\be_n-u_s)  \cdot
\big(\op{v}_2^{+} \cdots \op{v}_{n-1}^{+}   \big)^{\op{t}}\,  \wt{\op{C}}^{\, (\ga)}_{2,\dots,n-1}\big( \bs{\be}_n^{\prime} ;  \wh{\bs{u}}_{m}^{(k)}\big)   \, \big(\op{v}_1^{-}  \op{v}_n^{-} \big)^{\op{t}} \;.
\end{multline}
We now move on to the proof of \eqref{ecriture action produit C sur vecteur -+ ... +}. 
\bem
\big(\op{v}_n^{+}\big)^{\op{t}} \wt{\op{S}}^{\, (\ga)}_{n0}(\th) \, \op{v}_0^+ \, = \, \big(\op{v}_n^{+}\big)^{\op{t}} \wt{\op{S}}^{\, (\ga)}_{n0}(\th) \, 
\Big\{  \op{v}_0^+ \op{v}_n^{+}\big(\op{v}_n^{+}\big)^{\op{t}}  \, + \, \op{v}_0^+ \op{v}_n^{-}\big(\op{v}_n^{-}\big)^{\op{t}}   \Big\}  \\
\, = \,    \op{v}_0^+  \big(\op{v}_n^{+}\big)^{\op{t}}  \; + \; 
\Big[ \wt{\op{S}}^{\, (\ga)}(\th) \Big]^{-+}_{+-}  \op{v}_0^-  \big(\op{v}_n^{-}\big)^{\op{t}}   \, = \, \op{v}_0^+  \big(\op{v}_n^{+}\big)^{\op{t}}  \; + \;  
\wt{c}^{\,(\ga)}(\th) \, \op{v}_0^- \big(\op{v}_n^{-}\big)^{\op{t}} \;. 
\end{multline}
Therefore, 
\beq
\big(\op{v}_n^{+}  \big)^{\op{t}}\,  \wt{\op{C}}^{\, (\ga)}_{2,\dots,n}\big( \bs{\be}_n^{\prime} ; u \big) \, = \,   \wt{\op{C}}^{\, (\ga)}_{2,\dots,n-1}\big( \bs{\be}_{n-1}^{\prime} ; u \big) \, \big(\op{v}_n^{+}  \big)^{\op{t}} 
\, + \, \wt{c}^{\,(\ga)}(\be_n-u) \,  \wt{\op{D}}^{\, (\ga)}_{2,\dots,n-1}\big( \bs{\be}_{n-1}^{\prime} ; u \big) \, \big(\op{v}_n^{-}  \big)^{\op{t}} \;.
\enq
Hence, upon using \eqref{ecriture action transpose vn- sur produit des C}, one gets 
\bem
\big(\op{v}_1^{-} \op{v}_2^{+} \cdots \op{v}_n^{+}  \big)^{\op{t}}\,  \wt{\op{C}}^{\, (\ga)}_{1,\dots,n-1}\big( \bs{\be}_n ;  \bs{u}_m \big) \, = \, 
\big(  \op{v}_2^{+} \cdots \op{v}_{n-1}^{+}  \big)^{\op{t}}\,    \wt{\op{C}}^{\, (\ga)}_{2,\dots,n-1}\big( \bs{\be}_n^{\prime} ;  \bs{u}_m \big) \, \big(\op{v}_1^{-}   \op{v}_n^{+} \big)^{\op{t}}  \\
\hspace{-8mm} \, + \, \sul{k=1}{m} \wt{c}^{\,(\ga)}(\be_n-u_k) \, \pl{s=k+1}{m}\wt{b}(\be_n-u_s)   
\cdot  \big(  \op{v}_2^{+} \cdots \op{v}_{n-1}^{+}  \big)^{\op{t}}\, \wt{\op{C}}^{\, (\ga)}_{2,\dots,n-1}\big( \bs{\be}_n^{\prime} ;  \bs{u}_{k-1}\big)  \cdot \wt{\op{D}}^{\, (\ga)}_{2,\dots,n-1}\big( \bs{\be}_n^{\prime} ; u_k \big) 
\cdot  \wt{\op{C}}^{\, (\ga)}_{2,\dots,n-1}\big( \bs{\be}_n^{\prime} ;  \bs{u}_m^{(k+1)} \big)  \,  \big(\op{v}_1^{-} \op{v}_n^{-} \big)^{\op{t}} \;. 
%
\end{multline}
This entails that the coefficient in front of  $\wt{\op{C}}^{\, (\ga)}_{2,\dots,n-1}\big( \bs{\be}_n^{\prime} ;  \bs{u}_{m-1}\big)$ is
\beq
\wt{c}^{\,(\ga)}(\be_n-u_m) \,  \f{ \pl{p=2}{n-1}\wt{b}(\be_p-u_m)   }{ \pl{s=1}{m-1} \wt{b}(u_s-u_m)  } \;. 
\enq
Then, by symmetry, 
\bem
\big(\op{v}_1^{-} \op{v}_2^{+} \cdots \op{v}_n^{+}  \big)^{\op{t}}\,  \wt{\op{C}}^{\, (\ga)}_{1,\dots,n-1}\big( \bs{\be}_n ;  \bs{u}_m \big) \, = \, 
\big(  \op{v}_2^{+} \cdots \op{v}_{n-1}^{+}  \big)^{\op{t}}\,    \wt{\op{C}}^{\, (\ga)}_{2,\dots,n-1}\big( \bs{\be}_n^{\prime} ;  \bs{u}_m \big) \, \big(\op{v}_1^{-}   \op{v}_n^{+} \big)^{\op{t}}  \\
\, + \, \sul{k=1}{m} \wt{c}^{\,(\ga)}(\be_n-u_k) \,\f{ \pl{p=2}{n-1}\wt{b}(\be_p-u_k)   }{ \pl{ \substack{s=1 \\ \not=k } }{m} \wt{b}(u_s-u_k)  } 
\cdot  \big(  \op{v}_2^{+} \cdots \op{v}_{n-1}^{+}  \big)^{\op{t}}\, \wt{\op{C}}^{\, (\ga)}_{2,\dots,n-1}\big( \bs{\be}_n^{\prime} ;  \wh{\bs{u}}_{m}^{(k)}\big)   \,  \big(\op{v}_1^{-} \op{v}_n^{-} \big)^{\op{t}} \;. 
\end{multline}

\subsection{The vanishing lemma}

\begin{lemme} \cite{BabujianKarowskiExactFFSineGordonBootsstrapII}
\label{Lemme annulation K transformee}
Let $n=2m$, then 
\beq
\bs{\Psi}_{1,\dots,n}\big[ 1 \big]\big( \bs{\be}_n \big) \, = \, \Int{ (\msc{C}_{\bs{\be}_n})^m }{} \hspace{-1mm} \dd^{m} u \; 
\wt{\Cx}_{1,\dots,n}\big(\bs{\be}_n;\bs{u}_m \big)  \cdot  g \big(\bs{\be}_n;\bs{u}_{m} \big)   \, = \, 0 \;. 
\enq

\end{lemme}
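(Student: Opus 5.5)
The plan is to prove the vanishing by combining the off-shell Bethe ansatz structure with a contour-shift argument in a single rapidity. This is the mechanism of \cite{BabujianKarowskiExactFFSineGordonBootsstrapII}, and we adapt it to the twisted monodromy matrix; note that at $\ga=0$ the twist operator $\op{O}$ is trivial. Set $n=2m$. The key observation is that $\bs{\Psi}_{1,\dots,n}[1]$ then has vanishing total charge $q=0$, and that $p\equiv 1$ solves the $p$-function axioms a)--d) with $\op{s}=0$ and $\om=0$. It fails only the normalisation in e), which is exactly why a remainder proportional to $\bs{\Psi}[1]$ appears in the kinematic pole computation.

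\textbf{Step 1: use the Watson monodromy equation for $p\equiv 1$.} We first rerun the argument of Appendix \ref{Appendice SS section eqn Watson ii} with $p=1$. The proof of axioms i) and ii) never used e), so $\bs{\Psi}[1]$ obeys the $\op{S}$-symmetry and the monodromy equation with $\om=0$. Next we compute the behaviour under the full Lorentz shift $\bs{\be}_n\mapsto \bs{\be}_n+2\i\pi\ov{\bs{e}}_n$ in two ways. Iterating ii) $n$ times together with i) gives multiplication by $\pl{a}{}\ex{\i\pi\ga\sg_a^z}$ composed with cyclic products of $\op{S}$-matrices, and these reduce to the identity. Iterating iii) directly gives multiplication by $\ex{2\i\pi(\op{s}+\ga q/2)}=1$ because $q=0$. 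This consistency does not by itself force vanishing, so the actual argument must be different.

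\textbf{Step 2: direct contour shift in the $u$ variables.} The strategy is to exploit the $2\i\pi$ quasi-periodicity of the integrand in each $u_k$. By \eqref{ecriture shift par 2 i pi de a phi} and the shift rule for $\tau$, shifting one $u_m\mapsto u_m+2\i\pi$ produces factors $\wt b$. Combined with the exchange relations of Appendix \ref{Appendice Resultats auxiliaires}, these turn the integrand into itself plus terms in which $\wt{\op{C}}(u_m)$ is replaced by $\wt{\op{A}}$ or $\wt{\op{D}}$ acting on a shorter Bethe vector. We would write
\[
\int_{\msc{C}_{\bs{\be}_n}} - \int_{\msc{C}_{\bs{\be}_n}+2\i\pi}
\]
for the $u_m$-integral. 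The difference of contours is evaluated by the homotopy bookkeeping already performed in Appendices \ref{Appendice SS section eqn Watson ii} and \ref{subsection axiom iv}. It captures residues only at the points $\be_a-\i\pi$ (from $\wt{\op{C}}$) and $\be_a$, $\be_a-2\i\pi$ (from $a\phi$). These residues are exactly the three types $\bs{\Phi}^{(1)},\bs{\Phi}^{(2)},\bs{\Phi}^{(3)}$ computed there, now summed over all $a$.

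\textbf{Step 3: the remaining integral must vanish.} Since $q=0$, the $m$ integrals each carry the asymptotic factor coming from \eqref{asymptotic aphi}, namely $n$ copies of $a\phi$, which decay like $\ex{\mp n(1+\pi/\xi)u/2}$, against $m-1$ copies of $\tau$ growing like $\ex{\pm(1+\pi/\xi)u}$. The net balance is exactly critical. We therefore expect the integral over $u_m$ to reduce to a total-derivative, or quasi-periodic, identity: $\mc{I}=\lambda\,\mc{I}$ with $\lambda\ne1$ for generic $\xi$. This would be the analogue of the scalar vanishing lemma $\int \dd u\,\prod(a\phi)(\be_a-u)\,\prod\tau=0$ valid when $n=2m$. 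Concretely, we will show that the combined integrand $\wt{\Cx}\,g$ satisfies $F(u_m+2\i\pi)=\ex{2\i\pi\pi/\xi\cdot(\dots)}F(u_m)$ modulo the residue terms. We will also show that the residue terms cancel in pairs by the same mechanism as in the reduction $\bs{\Phi}^{(2)}+\bs{\Phi}^{(3)}$, namely through \eqref{ecriture reduction produit tau et al} and \eqref{ecriture simplification produit Fs et phi}.

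The main obstacle is Step 3. Verifying that the boundary and residue contributions cancel exactly, and that the multiplier $\lambda$ is not $1$, requires a careful handling of the non-diagonal action of $\wt{\op{A}},\wt{\op{D}}$ on the off-shell vector. If this fails, the fallback is induction on $m$. Base case $m=1$, $n=2$: compute explicitly using the $(n,m)=(2,1)$ formulas \eqref{diff contour MP}--\eqref{diff contour PM} with the constant $p$ term removed, and close the contour, where the integrand decays like $\ex{-|u|\cdot 0}$ times oscillation, so the shift-by-$2\i\pi$ argument applies. Then reduce general $m$ to $m-1$ via the kinematic residue structure.
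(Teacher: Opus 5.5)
There is a genuine gap: none of your steps actually produces the vanishing. Step 1 is, as you note, inconclusive, and Step 3 is stated only as an expectation.

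Step 2 rests on the wrong periodicity. For generic $\xi$ the entries $\wt{b}(\be_a-u)$, $\wt{c}(\be_a-u)$ of the off-shell Bethe vector are not quasi-periodic under $u\mapsto u+2\i\pi$. Their natural period is $\i\xi$, since $\wt{b}(\th+\i\xi)=\wt{b}(\th)$ and $\wt{c}(\th+\i\xi)=-\wt{c}(\th)$. So $\wt{\op{C}}(u_m+2\i\pi)$ is not simply related to $\wt{\op{C}}(u_m)$. The $2\i\pi$-shifted exchange relations of Appendix~\ref{Appendice Resultats auxiliaires} concern a shift of the inhomogeneity $\be_1$, not of an integration variable. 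The region between $\msc{C}_{\bs{\be}_n}$ and $\msc{C}_{\bs{\be}_n}+2\i\pi$ also contains whole $\xi$-series of poles, not three points. Finally, $\bs{\Phi}^{(1)},\bs{\Phi}^{(2)},\bs{\Phi}^{(3)}$ are double residues taken at $\be_{1n}=\i\pi$, so they are irrelevant for generic $\bs{\be}_n$.

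Your asymptotic count is also off: $2m$ factors of $a\phi$ against $m-1$ factors of $\tau$ leave a decay $\ex{-(1+\pi/\xi)|\Re u_m|}$, so there is no critical balance to exploit. The fallback induction would at best show that $\bs{\Psi}[1]$ has no kinematic poles, which does not force it to vanish.

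Finally, nothing in your plan uses $\ga=0$, yet you announce an adaptation to the twisted case, where the statement is false. For $(n,m)=(2,1)$, the integral with $\wt{\Cx}^{\,(\ga)}_{1,2}\cdot\op{O}^{(\ga)}_{1,2}$ in place of $\wt{\Cx}_{1,2}$ equals $\tfrac{\varkappa}{\i}\bs{\Psi}_{1,2}\big[p^{(\ga)}\big]$. This is non-zero, for instance at $\ga=1$ (Section~\ref{Section FF exp champ}). Any correct argument must therefore break down when a twist is present.

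The missing idea is the one used in the paper. For $n=2m$, set $r(z)=\prod_{a=1}^{2m}\cosh[\tfrac12(\be_a-z)]-\prod_{a=1}^{2m}\sinh[\tfrac12(\be_a-z)]$. The even powers of $\ex{z}$ in $\ex{mz}r(z)$ cancel, so $\ex{(m-1)z}r(z)$ is a polynomial of degree $m-1$ in $\ex{2z}$, with leading coefficient proportional to $\sum_a\ex{\be_a}$. Hence one of the two antisymmetric factors inside $\prod_{a<b}\tau(u_{ab})$, namely $\prod_{a<b}\sinh(u_{ab})$, is proportional to a Vandermonde-type determinant whose last column is $\ex{(m-1)u}r(u)$. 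The rest of the integrand is symmetric times the antisymmetric $\prod_{a<b}\sinh[\tfrac{\pi}{\xi}u_{ab}]$. The determinant may therefore be replaced by its diagonal, after which $u_m$ enters only through $r(u_m)$.

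In the $u_m$-integral you then combine three facts:
$\wt{\op{C}}(\la+\i\xi)=-\wt{\op{C}}(\la)$, which follows from $\wt{\op{S}}_{ab}(\la+\i\xi)=\sg^z_b\wt{\op{S}}_{ab}(\la)\sg^z_b$;
$(a\phi)(\la-\i\xi)=(a\phi)(\la)\,\i\sinh(\tfrac{\la}{2})/\cosh[\tfrac12(\la-\i\xi)]$;
and $\sinh[\tfrac{\pi}{\xi}(u-u_a+\i\xi)]=-\sinh[\tfrac{\pi}{\xi}(u-u_a)]$.
Together these turn the $\prod\sinh$ part of $r$ into the $\i\xi$-translate of the $\prod\cosh$ part. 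The integral becomes $\big(\int_{\msc{C}_{\bs{\be}_n}}-\int_{\msc{C}_{\bs{\be}_n}+\i\xi}\big)$ of a single decaying function. It vanishes because every candidate point between the two contours is either regular or has its pole cancelled by a zero of $\cosh[\tfrac12(\be_a-u)]$.

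This is exactly where $n=2m$ and $\ga=0$ enter. With a twist, the $u_m$-integrand carries an extra $\ex{-\ga u_m}$, which acquires $\ex{\i\ga\xi}$ under the shift and spoils the cancellation.
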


This result was already proven in \cite{BabujianKarowskiExactFFSineGordonBootsstrapII}. Here, we reproduce the proof for completeness.

\Proof 

It is well known that, for any polynomial $P(X) \; = \; \sul{s=0}{m-1}\a_s X^s$, it holds
\beq
\det\left[ \ba{ccccc} 1 & x_1& \cdots & x_{1}^{m-2} & P(x_1) \\   
\vdots & \vdots & & \vdots & \vdots   \\ 
1 & x_m& \cdots & x_{m}^{m-2} & P(x_m)  \ea \right]  \; = \; \a_{m-1} \pl{a<b}{m}(x_b-x_a) \;. 
\enq
Using that 
\beq
r(z) \, = \, \pl{a=1}{2m} \cosh\big[ \tfrac{1}{2}(\be_a-z) \big] \, - \, \pl{a=1}{2m} \sinh\big[ \tfrac{1}{2}(\be_a-z) \big]  \;= \;
2 \pl{a=1}{2m} \Big\{ \tfrac{1}{2} \ex{ \f{\be_a-z}{2} }  \Big\} \sul{p=0}{m-1} \ex{ (2p+1)z } \sg_{2p+1}\big( \ex{-\be_1},\dots, \ex{-\be_{2m}} \big) \;.
\enq
Above, the $\sg_{2p+1}$ stand for the symmetric elementary polynomials in $2m$ variables:
\begin{equation}
	\pl{s=1}{2m}(X+X_s) \, = \, \sul{k=0}{2m} X^k \sg_k\big(X_1,\dots, X_m \big) \;.
\end{equation}
This thus entails that 
\beq
\pl{a<b}{m} \sinh(u_{ab}) \, = \,   4^m   \f{ \ex{-(m-1) \ov{\bs{u}}_m  \, + \, \f{1}{2} \ov{\bs{\be}}_{2m} }  }{ 2 \cdot 2^{ m\f{m-1}{2} } \sul{a=1}{2m} \ex{\be_a}   }
\cdot 
\det\left[ \ba{ccccc} 1 &   \cdots & \ex{ 2 (m-2) u_1 } & \ex{  (m-1) u_1 }  r(u_1) \\   
\vdots   & & \vdots & \vdots   \\ 
1 &   \cdots &  \ex{ 2 (m-2) u_m } & \ex{  (m-1) u_m }  r(u_m)   \ea \right]  \;. 
\enq
Inserting this identity into one of the factors arising in $\pl{a<b}{m} \tau(u_{ab})$ and then using the antisymmetry of the remaining factor so as to replace the determinant by the product of its diagonals, one gets that 
\bem
\bs{\Psi}_{1,\dots,n}\big[ 1 \big]\big( \bs{\be}_n \big) \, = \, \mc{N}_m(\bs{\be}_n) \cdot  \Int{ (\msc{C}_{\bs{\be}_n})^m }{} \hspace{-1mm} \dd^{m} u \; 
\wt{\Cx}_{1,\dots,n}\big(\bs{\be}_n;\bs{u}_m \big)    \cdot \pl{k=1}{n} \pl{\ell=1}{m} \big( a \cdot \phi)(\be_k-u_{\ell}) \\ 
\times \pl{k<\ell}{m}  \sinh \big[ \tfrac{\pi u_{\ell k} }{\xi}  \big] \cdot  r(u_m) \cdot 
\pl{k=1}{m-1} \ex{ (2(k-1)-m+1)u_k } \;,  
\end{multline}
where, recalling that $n=2m$,  
\beq
\mc{N}_m(\bs{\be}_n) \; = \;  \f{  4^m  \, m! \, \ex{ \f{1}{2} \ov{\bs{\be}}_{2m} }  }{ 2 \cdot 2^{ m\f{m-1}{2} } \sul{a=1}{2m} \ex{\be_a}   } \cdot  \pl{k<\ell}{n} \op{F}(\be_{k\ell}) \cdot 
\bigg\{ \op{F} \big( \i \tfrac{\pi}{2} \big)\, \op{F}\big( - \i \tfrac{\pi}{2} \big) \, \varpi\Big(\i \tfrac{\xi + \pi}{2} \, , \,  \i \tfrac{\xi - 3 \pi}{2} \Big)  \bigg\}^{ m (m-1) } \;. 
\enq
The integral may further be reorganised as
\beq
\bs{\Psi}_{1,\dots,n}\big[ 1 \big]\big( \bs{\be}_n \big) \, = \, \mc{N}_m(\bs{\be}_n) \cdot  \Int{ (\msc{C}_{\bs{\be}_n})^{m-1}  }{} \hspace{-1mm} \dd^{m-1} u \; 
\pl{k=1}{n} \pl{\ell=1}{m-1} \big( a \cdot \phi)(\be_k-u_{\ell}) 
\cdot \pl{k<\ell}{m-1}  \sinh \big[ \tfrac{\pi  }{\xi} (u_{ \ell k } ) \big] 
\pl{k=1}{m-1} \ex{ (2(k-1)-m+1)u_k } \cdot \mc{J}\big( \bs{\be}_n , \bs{u}_{m-1} \big)  \;,  
\enq
in which 
\bem
\mc{J}\big( \bs{\be}_n , \bs{u}_{m-1} \big) \; = \;  \Int{ \msc{C}_{\bs{\be}_n} }{} \hspace{-1mm} \dd u \; 
\wt{\Cx}_{1,\dots,n}\Big(\bs{\be}_n; \big( \bs{u}_{m-1},u\big) \Big) 
\cdot \pl{a=1}{m-1}\sinh \big[ \tfrac{\pi }{\xi} (u-u_a) \big] 
\cdot \pl{k=1}{n} \big( a \cdot \phi)(\be_k-u)  \\
\times \bigg\{ \pl{a=1}{2m} \cosh\big[ \tfrac{1}{2}(\be_a-u) \big] \, - \, \pl{a=1}{2m} \sinh\big[ \tfrac{1}{2}(\be_a-u) \big]  \bigg\} \;. 
\end{multline}
By using that 
\beq
\wt{\op{S}}_{ab}(\la + \i\xi) \, = \, \sg^{z}_{b} \wt{\op{S}}_{ab}( \la ) \sg^{z}_{b}
\enq
one infers that $\wt{\op{T}}_{1,\dots,n;0}(\bs{\be}_n;\la \pm \i \xi ) \, = \, \sg^{z}_{0} \wt{\op{T}}_{1,\dots,n;0}(\bs{\be}_n;\la ) \sg^{z}_{0}$,
and in particular, $\wt{\op{C}}_{1,\dots,n}(\bs{\be}_n;\la \pm \i \xi ) \, = \, - \wt{\op{C}}_{1,\dots,n}(\bs{\be}_n;\la )$. 

Further, one also has 
\beq
\big( a  \phi)(\la - \i \xi )  \, = \, \big( a  \phi)(\la ) \cdot \f{ \i \sinh( \tf{\la}{2} )  }{  \cosh\big[ \f{ \la - \i\xi }{ 2 } \big]  } \;. 
\enq
Upon putting these pieces of information together, one gets that
\bem
\mc{J}\big( \bs{\be}_n , \bs{u}_{m-1} \big) \; = \;  \Int{ \msc{C}_{\bs{\be}_n} }{} \hspace{-1mm} \dd u \; 
\wt{\Cx}_{1,\dots,n}\Big(\bs{\be}_n; \big( \bs{u}_{m-1},u\big) \Big) 
\cdot \pl{a=1}{m-1}\sinh \big[ \tfrac{\pi }{\xi} (u-u_a) \big] 
\cdot \pl{k=1}{n} \Big\{ \big( a \cdot \phi)(\be_k-u) \cosh\big[ \tfrac{1}{2}(\be_k-u) \big] \Big\} \\
\; - \; \Int{ \msc{C}_{\bs{\be}_n} }{} \hspace{-1mm} \dd u \; 
\wt{\Cx}_{1,\dots,n}\Big(\bs{\be}_n; \big( \bs{u}_{m-1},u + \i \xi \big) \Big) 
\cdot \pl{a=1}{m-1}\sinh \big[ \tfrac{\pi }{\xi} (u - u_a + \i \xi ) \big] 
\cdot \pl{k=1}{n} \Big\{ \big( a \cdot \phi)(\be_k-u-\i\xi ) \cosh\big[ \tfrac{1}{2}(\be_k-u-\i \xi) \big] \Big\} \\
\; = \; \Bigg( \Int{ \msc{C}_{\bs{\be}_n} }{} \, -  \hspace{-2mm} \Int{ \msc{C}_{\bs{\be}_n}+\i\xi  }{}  \Bigg)   \dd u \; 
\wt{\Cx}_{1,\dots,n}\Big(\bs{\be}_n; \big( \bs{u}_{m-1},u\big) \Big) 
\cdot \pl{a=1}{m-1}\sinh \big[ \tfrac{\pi }{\xi} (u-u_a) \big] 
\cdot \pl{k=1}{n} \Big\{ \big( a \cdot \phi)(\be_k-u) \cosh\big[ \tfrac{1}{2}(\be_k-u) \big] \Big\} \;.
\label{reecriture integrale J}
\end{multline}
First of all, one may check that the integral is convergent at infinity.
By using the asymptotics \eqref{asymptotic aphi} and the fact that $\wt{\Cx}_{1,\dots,n}\Big(\bs{\be}_n; \big( \bs{u}_{m-1},u\big) \Big) $ is bounded in norm
as $\Re(u) \tend \pm \infty$, one has the upper bound
\beq
\big| \big|  \wt{\Cx}_{1,\dots,n}\Big(\bs{\be}_n; \big( \bs{u}_{m-1},u\big) \Big) 
\cdot \pl{a=1}{m-1}\sinh \big[ \tfrac{\pi }{\xi} (u-u_a) \big] 
\cdot \pl{k=1}{n} \Big\{ \big( a \cdot \phi)(\be_k-u) \cosh\big[ \tfrac{1}{2}(\be_a-u) \big] \Big\}   \big| \big| \; \leq \; 
C \ex{ \mp \f{\pi}{\xi} \Re(\mu) } \;, 
\enq
what ensures the convergence of the integral at infinity. \\
Further, one has that, in terms of the potential poles $ \msc{C}_{\bs{\be}_n}\setminus \big\{ \msc{C}_{\bs{\be}_n} + \i \xi \big\} $
is homotopic to 
\bem
\bigcup\limits_{a=1}^{n} \bigcup\limits_{\ell \geq 0} \bigg\{   \bigcup\limits_{p \geq 0}\Big\{  \Dp{}\mc{D}_{\be_a+\i \ell \xi + \i\pi (2p+1),\eps}  \Big\} \bigcup
\bigcup\limits_{p \geq 0}\Big\{  - \Dp{}\mc{D}_{\be_a+\i (\ell+1) \xi + \i\pi (2p+1),\eps}  \Big\}  \bigcup 
\Big\{  \Dp{}\mc{D}_{\be_a+\i \ell \xi - \i\pi ,\eps}  \Big\} \bigcup 
\Big\{  - \Dp{}\mc{D}_{\be_a+\i (\ell+1) \xi - \i\pi ,\eps}  \Big\}  \\ 
\bigcup\limits_{p \geq 0}\Big\{ - \Dp{}\mc{D}_{\be_a - \i \ell \xi - 2 \i\pi p,\eps}  \Big\} \bigcup 
\bigcup\limits_{p \geq 0}\Big\{   \Dp{}\mc{D}_{\be_a - \i (\ell-1) \xi - 2\i\pi p ,\eps}  \Big\}  \bigcup 
\Big\{ - \Dp{}\mc{D}_{\be_a - \i (\ell+1) \xi - \i\pi ,\eps}  \Big\} \bigcup 
\Big\{   \Dp{}\mc{D}_{\be_a - \i \ell \xi - \i\pi ,\eps}  \Big\}   \bigg\} \\
\; = \; \bigcup\limits_{a=1}^{n} \bigg\{   \bigcup\limits_{ p \geq 0}\Big\{  \Dp{}\mc{D}_{\be_a + \i\pi (2p+1),\eps}  \Big\}
\bigcup   \Dp{}\mc{D}_{\be_a   - \i\pi ,\eps}  
\bigcup\limits_{  p \geq 0}\Big\{   \Dp{}\mc{D}_{\be_a + \i \xi - 2\i\pi p,\eps}  \Big\}  \bigcup 
\Dp{}\mc{D}_{\be_a - \i\pi ,\eps} \bigg\} \;.
\end{multline}
However, the poles of $\wt{\Cx}$ are given by \eqref{series de pole covecteur des C} while those of the $a\phi$ factor in can be read off from \eqref{series de pole a phi}. 
Thus, the integrand in \eqref{reecriture integrale J} has no poles at $\be_a + \i \xi - 2\i\pi p$, $p \in \mathbb{N}$, while the simple poles at $\be_a + \i\pi (2p+1)$ are compensated by the 
hyperbolic cosine factor's zeroes. Thus, all-in-all, $\mc{J}\big( \bs{\be}_n , \bs{u}_{m-1} \big) \; = \; 0$ and the claim follows. \qed


\begin{thebibliography}{10}

\bibitem{ArinshteinFateyevZamolodchikovSMatrixTodaChain}
A.E. Arinshtein, V.A. Fateev, and A.B. Zamolodchikov, \emph{{"Quantum S-matrix
  of the (1+1) dimensional Toda chain."}}, Phys. Lett. B \textbf{\bf{87}}
  (1979), 389--392.

\bibitem{BabujianFringKarowskiZapletalExactFFSineGordonBootsstrapI}
H.~Babujian, A.~Fring, M.~Karowski, and A.~Zapletal, \emph{{"Exact form factors
  in integrable quantum field theories: the sine-Gordon model."}}, Nucl. Phys.
  B \textbf{\bf{538}} (1999), 535--586.

\bibitem{BabujianKarowskiExactFFSineGordonBootsstrapII}
H.~Babujian and M.~Karowski, \emph{{"Exact form factors in integrable quantum
  field theories: the sine-Gordon model (II)."}}, Nucl. Phys. B
  \textbf{\bf{620}} (2002), 407--455.

\bibitem{BabujianKarowskiBreatherFFSineGordon}
\bysame, \emph{{"Sine-Gordon breather form factors and quantum field
  equations."}}, J.Phys.A \textbf{\bf{35}} (2002), 9081--9104.

\bibitem{BabujianKarowskiZapletalSomeDvpmtofOffShellBA}
H.~Babujian, M.~Karowski, and A.~Zapletal, \emph{{"Matrix difference equations
  and a nested Bethe ansatz."}}, J. Phys. A: Math. Gen. \textbf{\bf{30}}
  (1997), 6425--6450.

\bibitem{BaxterPartitionfunction8Vertex-FreeEnergy}
R.J. Baxter, \emph{{"Partition function of the eight vertex lattice model."}},
  Ann. Phys. \textbf{\bf 70} (1972), 193--228.

\bibitem{BernardLeclairDiffEqnForPIII}
D.~Bernard and A.~Leclair, \emph{{"Differential equations for sine-Gordon
  correlation functions at the free fermion point."}}, Nucl. Phys. B
  \textbf{\bf 426} (1994), 534--558.

\bibitem{BostelmannCadamuroFFEqnsInIQFTsWithScalarS}
H.~Bostelmann and D.~Cadamuro, \emph{{"Characterization of local observables in
  integrable quantum field theories."}}, Comm. Math. Phys. \textbf{\bf 337}
  (2015), 1199--1240.

\bibitem{BrazhnikovLukyanovFreeFieldRepMassiveFFIntegrable}
V.~Brazhnikov and S.~Lukyanov, \emph{{"Angular quantization and form factors in
  massive integrable models."}}, Nucl. Phys. B \textbf{\bf{512}} (1998),
  616--636.

\bibitem{DashenHasslacherNeveuIdentificationFromPertThFullPartSPectrSineGordon}
R.F. Dashen, B.~Hasslacher, and A.~Neveu, \emph{{"Particle spectrum in model
  field theories from semiclassical functional integral techniques."}}, Phys.
  Rev. D \textbf{\bf 11} (1975), 3424--3450.

\bibitem{FaddeevIrreducibiliteModularDouble}
L.D. Faddeev, \emph{{"Discete Heisenberg-Weyl group and modular group."}},
  Lett. Math. Phys. \textbf{\bf{34}} (1995), 249--254.

\bibitem{FringMussardoSimonettiFFSOmeLocalObsSinhGordon}
A.~Fring, G.~Mussardo, and P.~Simonetti, \emph{{"Form factors for integrable
  lagrangian field theories, the sinh-Gordon model."}}, Nucl. Phys. B
  \textbf{\bf 393} (1993), 413--441.

\bibitem{GryanikVergelesSMatrixAndOtherStuffForSinhGordon}
V.M. Gryanik and S.N. Vergeles, \emph{{"Two-dimensional quantum field theories
  having exact solutions."}}, J. Nucl. Phys. \textbf{\bf 23} (1976),
  1324--1334.

\bibitem{HeisenbergSomeAspectsofSMatrixIdeas}
W.~Heisenberg, \emph{{"Der mathematische Rahmen der Quantentheorie der
  Wellenfelder."}}, Zeit. f\"{u}r Naturforschung \textbf{\bf{1}} (1946),
  608--622.

\bibitem{JimboMiwaSmirnovFormFactorsSineGNewCOnstructionViaHiddenGrassmann}
M.~Jimbo, T.~Miwa, and F.~Smirnov, \emph{{"Fermionic structure in the
  sine-Gordon model: form factors and null-vectors ."}}, Nucl. Phys. B
  \textbf{852} (2011), 390--440.

\bibitem{KarowskiThunCompleteSMatrixThirring}
M.~Karowski and H.J. Thun, \emph{{"Complete S-matrix of the massive Thirring
  model."}}, Nucl. Phys. B \textbf{\bf 130} (1978), 295--308.

\bibitem{KarowskiWeiszFormFactorsFromSymetryAndSMatrices}
M.~Karowski and P.~Weisz, \emph{{"Exact form factors in (1+1)-dimensional field
  theoretic models with soliton behaviour."}}, Nucl. Phys. B \textbf{\bf 139}
  (1978), 455--476.

\bibitem{Kashaev3termIntegralRelationDfcts}
R.~Kashaev, \emph{{"The Quantum Dilogarithm and Dehn Twists in Quantum
  Teichm\"{u}ller Theory."}}, Integrable Structures of Exactly Solvable
  Two-Dimensional Models of Quantum Field Theory. NATO Science Series (Series
  II: Mathematics, Physics and Chemistry), vol 35. Springer, Dordrecht. Edts:
  Pakuliak S., von Gehlen G. (2001), 211--221.

\bibitem{KirillovSmirnovFirstCompleteSetBootstrapAxiomsForQIFT}
A.N. Kirillov and F.A. Smirnov, \emph{{"A representation of the current algebra
  connected with the SU (2)-invariant Thirring model."}}, Phys. Rev. B
  \textbf{\bf 198} (1987), 506--510.

\bibitem{KirillovSmirnovUseOfBootstrapAxiomsForQIFTToGetMassiveThirringFF}
\bysame, \emph{{"Form-factors in the SU(2)-invariant Thirring model."}}, J.
  Soviet. Math. \textbf{\bf 47} (1989), 2423--2450.

\bibitem{KorepinFaddeevQuantisationOfSolitions}
V.E. Korepin and L.D. Faddeev, \emph{{"Quantisation of solitons."}}, Theor.
  Math. Phys. \textbf{\bf 25} (1975), 1039--1049.

\bibitem{KoubekMussardoFFForMoreOpInSinhGordon}
A.~Koubek and G.~Mussardo, \emph{{"On the operator content of the sinh-Gordon
  model."}}, Phys. Lett. B \textbf{311} (1993), 193--201.

\bibitem{KozConvergenceFFSeriesSinhGordon2ptFcts}
K.K. Kozlowski, \emph{{"On convergence of form factor expansions in the
  infinite volume quantum Sinh-Gordon model in 1+1 dimensions."}}, Inventiones
  mathematicae \textbf{233} (2023), 725--827.

\bibitem{KurokawaDoubleSineIntro}
N.~Kurokawa, \emph{{"Multiple Sine Functions and Selberg Zeta Functions."}},
  Proc. Japan Acad. Ser. A Math. Sci. \textbf{67} (1991), 61--64.

\bibitem{LechnerAlgebraicConstructionIQFTSscalarS}
G.~Lechner, \emph{{"Construction of Quantum Field Theories with Factorizing
  S-Matrices."}}, Comm. Math. Phys. \textbf{\bf{277}} (2008), 821--860.

\bibitem{LukyanovFirstIntroFreeField}
S.~Lukyanov, \emph{{"Free field representation for massive integrable
  models."}}, Comm. Math. Phys. \textbf{\bf{167}} (1995), 183--226.

\bibitem{LukyanovConjectureFFExponentialFieldSineGSolASolAndBReather}
\bysame, \emph{{"Form-factors of exponential fields in the sine-Gordon
  model."}}, Mod. Phys. Lett. \textbf{A \bf{12}} (1997), 2543--2550.

\bibitem{LukyanovZamolodchikovSolitonCreatingOpsSineG}
S.~Lukyanov and Al.B. Zamolodchikov, \emph{{"Form factors of soliton-creating
  operators in the sine-Gordon model."}}, Nucl. Phys. B \textbf{{ \bf 607}}
  (2001), 437--455.

\bibitem{PalmaiMultiSolitionFFSineGField}
T.~P\'{a}lmai, \emph{{"Regularization of multi-soliton form factors in
  sine-Gordon model."}}, Comp. Phys. Comm. \textbf{\bf 183} (2012), 1813--1821.

\bibitem{ReshetikhinOffShellBAforKZ}
N.Yu. Reshetikhin, \emph{{"Jackson-Type Integrals, Bethe Vectors, and Solutions
  to a Difference Analog of the Knizhnik-Zamolodchikov System."}}, Lett. Math.
  Phys. \textbf{\bf 26} (1992), 153--165.

\bibitem{RuijsenaarsFirstIntroQuantumRelatToda}
S.~Ruijsenaars, \emph{{"The relativistic Toda systems."}}, Commun.Math.Phys.
  \textbf{\bf 133} (1990), 217--247.

\bibitem{SchroeTruongFFExpPhyInFreeFermionPtSineG}
B.~Schroer and T.T. Truong, \emph{{"The order/disorder quantum field operators
  associated with the two-dimensional Ising model in the continuum limit."}},
  Nucl. Phys. B \textbf{144} (1978), 80--122.

\bibitem{SmirnovUseGLMEqnsTocomputeSineGordonFF}
F.A. Smirnov, \emph{{"Quantum Gelfand-Levitan-Marchenko equations and form
  factors in the sine-Gordon model."}}, J. Phys. A: Math. Gen. \textbf{\bf 17}
  (1984), L873--L878.

\bibitem{SmirnovGLMEqnsDerivationForSineGordon}
\bysame, \emph{{"Quantum Gelfand-Levitan-Marchenko equations for the
  sine-Gordon model."}}, Theor. Math. Phys. \textbf{\bf 60} (1984), 871--880.

\bibitem{SmirnovUseGLMEqnsForSineGordonFF}
\bysame, \emph{{"Solution of quantum Gel'fand-Levitan-Marchenko equations for
  the sine-gordon model in the soliton sector for $\gamma=\pi /\nu$."}}, Theor.
  Math. Phys. \textbf{\bf 67} (1986), 344--351.

\bibitem{SmirnovIntegralRepSolitonFFSineGordonBootstrap}
\bysame, \emph{{"The general formula for solitons form factors in sine-Gordon
  model."}}, J. Phys. A \textbf{\bf 19} (1986), L575--578.

\bibitem{SmirnovProofIntegralRepSolitonFFSineGordonBootstrapFollowUpJPhysAPaper}
\bysame, \emph{{"Proof of some identities which arise in calculating
  form-factors in the sine-Gordon model."}}, J. Sov. Math. \textbf{\bf 46}
  (1989), 2111--2125.

\bibitem{SmirnovReductionsAndClusterPropertyInSineGordonPlusSomeDiscussionsUVLimit}
\bysame, \emph{{"Reductions of the sine-Gordon model as a perturbation of
  minimal models of conformal field theory."}}, Nucl. Phys. B \textbf{\bf 337}
  (1990), 156--180.

\bibitem{SmirnovFormFactors}
\bysame, \emph{{"Form factors in completely integrable models of quantum field
  theory."}}, Advanced Series in Mathematical Physics, vol.~14, World
  Scientific, 1992.

\bibitem{WheelerFirstIntroConceptSmatrix}
J.A. Wheeler, \emph{{"On the mathematical description of light nuclei by the
  method of resonating group structure."}}, Phys. Rev. B \textbf{\bf 52}
  (1937), 1107--1106.

\bibitem{WoronowiczQuantumExpFcts}
S.L. Woronowicz, \emph{{"Quantum exponential function"}}, Rev. Math. Phys.
  \textbf{\bf{12}} (2000), 873--920.

\bibitem{YangFactorizingDiffusionWithPermutations}
C.N. Yang, \emph{{"Some exact results for the many-body problem in one
  dimension with repulsive delta-function interaction."}}, Phys. Rev. Lett.
  \textbf{\bf 19} (1967), 1312--1315.

\bibitem{YurovZalmolodchikovFirstIntroMutualLocalityIndex}
V.P. Yurov and Al.B. Zamolodchikov, \emph{{"Correlation functions of integrable
  2D models of the relativistic field theory; Ising model."}}, Intl. J. Mod.
  Phys. A \textbf{\bf 6} (1991), 3419--3440.

\bibitem{ZalZalBrosFactorizedSMatricesIn(1+1)QFT}
A.B. Zamolodchikov and Al.B. Zamolodchikov, \emph{{"Factorized S-matrices in
  two dimensions as the exact solutions of certain relativistic quantum field
  theory models."}}, Ann. of Phys. \textbf{\bf 120} (1979), 253--291.

\bibitem{ZalmolodchikovSMatrixSolitonAntiSolitonSineGordon}
Al.B. Zamolodchikov, \emph{{"Exact Two-Particle S-Matrix of Quantum Sine-Gordon
  Solitons."}}, Comm. math. Phys. \textbf{\bf 55} (1977), 183--186.

\bibitem{ZalmolodchikovTwoPointFctsLeeYang}
\bysame, \emph{{"Two-point correlation functions in scaling Lee-Yang model."}},
  Nucl. Phys. B \textbf{\bf 348} (1991), 619--641.

\end{thebibliography}
\end{document}